\newcommand\tsup[2][2]{%
 \def\useanchorwidth{T}%
  \ifnum#1>1%
    \stackon[-.5pt]{\tsup[\numexpr#1-1\relax]{#2}}{\scriptscriptstyle\sim}%
  \else%
    \stackon[.5pt]{#2}{\scriptscriptstyle\sim}%
  \fi%
}
\definecolor{backcolour}{rgb}{0.95,0.95,0.92}
\tiny\color{Gray},
\newcommand{\eqr}[1]{Eq.\thinspace(\ref{eq:#1})}
\newcommand{\eqrp}[1]{(Eq.\thinspace\ref{eq:#1})}
\newcommand{\fgr}[1]{Fig.\thinspace\ref{fig:#1}}
\newcommand{\tbr}[1]{Tab.\thinspace\ref{tab:#1}}
\newcommand{\ser}[1]{Sec.\thinspace\ref{sec:#1}}
\newcommand{\pfrac}[2]{\frac{\partial #1}{\partial #2}}
\newcommand{\pfracc}[2]{\frac{\partial^2 #1}{\partial #2^2}}
\newcommand{\pfraca}[1]{\frac{\partial}{\partial #1}}
\newcommand{\pfracb}[2]{\partial #1/\partial #2}
\newcommand\hcancel[2][black]{\setbox0=\hbox{$#2$}%
  \rlap{\raisebox{.35\ht0}{\textcolor{#1}{\rule{\wd0}{0.5pt}}}}#2}
\newtheorem{prop}{Proposition}
\DeclareMathOperator*{\spn}{span} 
\begin{document}

\thispagestyle{empty}
\pagenumbering{roman}
\begin{center}

% TITLE
{\Large 
Continuum Kinetic Simulations of Plasma Sheaths and Instabilities
}

\vfill

Petr Cagas

\vfill

Dissertation submitted to the Faculty of the \\
Virginia Polytechnic Institute and State University \\
in partial fulfillment of the requirements for the degree of

\vfill

Doctor of Philosophy  \\
in \\
Aerospace Engineering

\vfill

Bhuvana Srinivasan, Chair \\
Colin S. Adams \\
Wayne A. Scales \\
Timothy Warburton

\vfill

30$^{\text{th}}$ July, 2018 \\
Blacksburg, Virginia

\vfill

Keywords: Plasma Sheath, Discontinuous Galerkin, Continuum Kinetic
Method, Plasma Instabilities, Plasma-Material Interactions \\ Copyright
2018, Petr Cagas

\end{center}

\pagebreak

\thispagestyle{empty}
\begin{center}

{\large Continuum Kinetic Simulations of Plasma Sheaths and Instabilities}

\vfill

Petr Cagas

\vfill

(ABSTRACT)

\vfill

\end{center}

A careful study of plasma-material interactions is
essential to understand and improve the operation of devices where
plasma contacts a wall such as plasma thrusters, fusion devices,
spacecraft-environment interactions, to name a few.  This work aims to advance our understanding of fundamental plasma
processes pertaining to plasma-material interactions, sheath physics,
and kinetic instabilities through theory and novel numerical
simulations.   Key contributions
of this work include (i) novel continuum kinetic algorithms with novel
boundary conditions that directly discretize the Vlasov/Boltzmann
equation using the discontinuous Galerkin method, (ii) fundamental
studies of plasma sheath physics with collisions, ionization, and
physics-based wall emission, and (iii) theoretical and numerical
studies of the linear growth and nonlinear saturation of the kinetic
Weibel instability, including its role in plasma sheaths.

The continuum kinetic algorithm has been shown to compare well with theoretical predictions of
Landau damping of Langmuir waves and the two-stream instability.
Benchmarks are also performed using the electromagnetic Weibel
instability and excellent agreement is found between theory and
simulation.  The role of the electric field is
significant during nonlinear saturation of the Weibel instability,
something that was not noted in previous studies of the Weibel
instability.  For some plasma parameters, the electric field energy
can approach magnitudes of the magnetic field energy during the
nonlinear phase of the Weibel instability.

A significant focus is put on understanding plasma sheath
physics which is essential for studying plasma-material
interactions. Initial simulations are performed using a baseline
collisionless kinetic model to match classical sheath theory and the
Bohm criterion. Following this, a collision operator and volumetric
physics-based source terms are introduced and effects of heat flux are
briefly discussed.  Novel boundary conditions are developed and
included in a general manner with the continuum kinetic algorithm
for bounded plasma simulations.  A physics-based wall emission model
based on first principles from quantum mechanics is self-consistently
implemented and demonstrated to significantly impact sheath physics.
These are the first continuum kinetic simulations using
self-consistent, wall emission boundary conditions with broad
applicability across a variety of regimes.

\vfill

This research was supported by the Air Force Office of Scientific
Research under grant number FA9550-15-1-0193.

The author acknowledges Advanced Research Computing at Virginia Tech
for providing computational resources and technical support that have
contributed to the results reported within this work. URL:
http://www.arc.vt.edu

\pagebreak

\thispagestyle{empty}
\begin{center}

{\large Continuum Kinetic Simulations of Plasma Sheaths and Instabilities}

\vfill

Petr Cagas

\vfill

(GENERAL AUDIENCE ABSTRACT)

\vfill

\end{center}

An understanding of plasma physics is vital for problems on a wide
range of scales: from large astrophysical scales relevant to the
formation of intergalactic magnetic fields, to scales relevant to
solar wind and space weather, which poses a significant risk to
Earth's power grid, to design of fusion devices, which have the
potential to meet terrestrial energy needs perpetually, and electric
space propulsion for human deep space exploration. This work aims to
further our fundamental understanding of plasma dynamics for
applications with bounded plasmas.  A
comprehensive understanding of theory coupled with high-fidelity
numerical simulations of fundamental plasma processes is necessary, this then can be used to improve improve the
operation of plasma devices.

There are two main thrusts of this work.  The first thrust involves
advancing the state-of-the-art in numerical modeling.  Presently,
numerical simulations in plasma physics are typically performed either
using kinetic models such as particle-in-cell, where individual
particles are tracked through a phase-space grid, or using fluid
models, where reductions are performed from kinetic physics to arrive
at continuum models that can be solved using well-developed numerical
methods.  The novelty of the numerical modeling is
the ability to perform a complete kinetic calculation using a
continuum description and evolving a complete distribution function in
phase-space, thus resolving kinetic physics with continuum numerics.

The second thrust, which is the main focus of this work, aims to
advance our fundamental understanding of plasma-wall interactions as
applicable to real engineering problems.  The continuum kinetic
numerical simulations are used to study plasma-material interactions
and their effects on plasma sheaths.  Plasma sheaths are regions of
positive space charge formed everywhere that a plasma comes into
contact with a solid surface; the charge inequality is created because
mobile electrons can quickly exit the domain. A local electric field
is self-consistently created which accelerates ions and retards
electrons so the ion and electron fluxes are equalized. Even though
sheath physics occurs on micro-scales, sheaths can have global
consequences. The electric field accelerates ions towards the wall
which can cause erosion of the material.  Another consequence of
plasma-wall interaction is the emission of electrons.  Emitted
electrons are accelerated back into the domain and can contribute to
anomalous transport.  The novel numerical method coupled with a unique
implementation of electron emission from the wall is used to study
plasma-wall interactions.

While motivated by Hall thrusters, the applicability of the algorithms developed here
extends to a number of other disciplines such as semiconductors,
fusion research, and spacecraft-environment interactions.

\pagebreak

\textit{To my wife Krist{\'y}na and parents Kamila and Pavel}

\pagebreak

\chapter*{Acknowledgments}

I consider myself extremely lucky to have many great people
contributing to this work.  The least I can do is to thank them all
here. 

First of all, I need to thank Dr. Kate\v{r}ina Falk for pushing me out
of my comfort zone and enabling all of this. I will be forever
grateful to her for putting me in touch with my adviser and mentor,
Professor Bhuvana Srinivasan.

And I could not have hoped for a better adviser.  I think that her
philosophy, that students need to be actively seeking help but then
get all they want, works great for me.  Her mentoring style resulted
in me having a poster at APS DPP conference three months after joining
the group.  One of the things I will remember the most will be our evening discussions in front of the white-board in our
office, which always reminded me why I love Physics.  The only
downside is that I was unable to grasp a concept of ``sending email to
the adviser to schedule a meeting'' other students keep talking about.

And it still gets better because I did not have just one but two great
advisers.  Shortly after joining the group at VT, I met Dr. Ammar
Hakim through a teleconference and he quickly became my second
adviser in all but official title.  When I first met him in person, he
introduced himself to me and my colleagues as the person whom I was
going to hate soon.  Did not happen yet, though he got close when we
were discussing color maps...  Since my first week, we have been in
contact on a daily basis discussing not just physics and programming but
also politics and mad Richard Stallman.  Ammar is also the reason why
I ended up working on software engineering projects like making
\textit{Conda} builds of a code written mostly in \textit{Lua},
compiled with \textit{LuaJIT}, and relying heavily on automatically
generated \textit{C} code from \textit{Maxima}.

Listening to other students, one can reach a conclusion that the sole
purpose of the Graduate Committee is to make students cry during various
meetings.  My meetings were quite different because I always got an
impression that the Committee members are genuinely interested in my
work.  What is more, I was interacting with them beyond the mandatory
meetings.  I was meeting Professor Colin Adams every Friday at our
plasma physics Journal Club, which he organized.  Using the expertise of
Professor Wayne Scales about plasma waves resulted in him being a
co-author of one of our publications.  And last but not least,
Professor Timothy Warburton provided a lot of valuable advice about
the numerical aspects and his discontinuous Galerkin course has
been in many aspects the best class I have ever taken.

Of course, I cannot forget my parents Kamila and Pavel who always
supported me as much as they could.  They made my growing up worriless,
nurtured my curiosity, and shaped a great deal of who I am today.  They
only, for some mysterious reason, did not want me to pursue my musical
talents but more on that later.  When I was offered the opportunity
to come to the States for the Ph.D., they gave me their unconditional
support even though it must have been difficult for them.  Without
this support and the head start I got, I would never be able to
achieve any of this work whatever my skills might be.

There are many other great people at VT who helped me stay steadily on
top of all the necessary administrative requirements and were always
very friendly to me.  I would like to namely thank Rachel Hall
Smith, Amy Burchett, Kelsey Wall, Jama Green, Cory Thompson, and 
Erin Wilson.

During my time at VT, many brilliant fellow students made their
imprint on me.  On the professional level, no other student had a bigger
influence than Jimmy Juno.  His report from a summer internship at PPPL
helped me immensely to jump-start my research with \texttt{Gkeyll}.
Since then, he has been always willing to discuss anything I needed
help with (quite often it was the US comic book culture).  When I
joined the research group at VT, there were two other graduate students, Colin Glesner
and Yang Song.  I do not think I was suffering particularly bad from
the culture shock but Colin Glesner was probably the one who helped me
the most with the problems I had.  Yang became the person sitting next
to me in our office and that resulted in an interesting special
relativity phenomenon.  On many occasions, our casual office
conversation ended after we realized it was suddenly many hours later
and it was time to go home.  As a senior grad student, Yang helped me
professionally but also significantly broadened my cultural views.  I
have participated in the recruitment of Robert Masti, the Chair of our
Beer Committee.  After he joined our group, it did not take long to
get to know him better and, nowadays, I like to think about him as
'murican brother.  Special thanks also go to Jimmy and Chirag Rathod
for reading this work and providing valuable comments.  The rest of
my group members which I did not name specifically, I would like to
thank about staying cool with me during our group meeting.  I hope
they know my comments were always motivated by just the best
intentions.

I also made some friends outside the department.  Ashley Gates and
Loren Brown became such good friends that we eventually moved right
next to them. The process of moving was not simple and they went to great
lengths to help us.  A very special place belongs to Kiaya and Matthew
Vincent who forced on us (without any pressure) one of the kittens they
were fostering.  Since then, our furry monster Binks walks over us at
night, bites everything that sticks out over the edge of our bed, and
preferentially sits on pages with some of my notes or derivations.
Long story short, I have never regretted our decision to adopt him.

It is often said that grad students should have something to do apart
from research in order to keep their sanity.  I fell in love with Jethro
Tull and that motivated me to start playing the flute.\footnote{For
  people knowing me for a long time, this is probably the most
  surprising piece of information in this whole work.}  Originally, I was
trying to learn just on my own using the \textit{Flute for Dummies}
book but then I met Ms. Elizabeth Crone from the Virginia Tech's School of
Performing Arts.  With her, my learning pace drastically improved and
I started discovering a world I barely knew there is.  But more
importantly, she made me smile every week.  I vividly remember walking
from one lesson smiling to myself even though our work was just
rejected from the Physical Review Letters.

No, I did not forget about my wife,
Krist{\'y}na.\footnote{\url{http://phdcomics.com/comics/archive.php?comicid=870}}
One thing, I will say about her, is that she likes to make long-term
plans.  And still, she dropped everything and in three months went
with me through the admission process, which usually takes foreigners
at least a year.  The fact that she was accepted to VT as well is one
of the most amazing things that ever happened to me. She loves me,
stays with me, and supports me as much as I need anywhere we are.  Her
I thank the most.

\tableofcontents
\pagebreak

\listoffigures
\pagebreak

\listoftables
\pagebreak

\pagenumbering{arabic}
\pagestyle{myheadings}

\chapter{Introduction}\label{chap:intro}

\epigraph{All models are wrong, but some are useful.}{\textit{George
    E. P. Box}}

In the past few decades, numerical simulations have undergone rapid
development and established themselves firmly as the third pillar of
physics along with theory and experiment.  To develop high-fidelity,
carefully benchmarked numerical simulations, models need to be built
from the bottom up, with each part tested rigorously.

This is especially important for complex problems requiring rich
physics.  One such example is the Hall thruster.  These electrostatic
plasma thrusters have been known for many decades and have been
successfully flown on spacecraft; however, there are still gaps in our
understanding of the underlying physics.  Presently there are no
global models with truly predictive capabilities because details of
the electron transport inside the Hall thruster channel are not fully
understood.  This challenge extends to plasma devices and applications
beyond Hall thrusters as well, where microscale physics can
significantly affect macroscale phenomena and the subtle interplay
remains an open research question.

The aim of this project is to leverage recent progress made in
mathematics and software engineering to develop a new,
self-consistent, physically-relevant model to advance our fundamental
understanding of plasma physics for a number of applications.  As a
result, this work constitutes a complementary blend of physics,
mathematics, and software engineering.

%---------------------------------------------------------------------
\section{Plasma and Hall Thrusters}
Hall thrusters (HT; also called Stationary Plasma Thrusters or
deceivingly Hall Effect Thrusters\footnote{When a linear conductor is
  put into magnetic field perpendicular to current flowing inside,
  Lorentz force deflects the flow and eventually leads to charging of
  its edges.  Created electric field then accelerates particles in the
  opposite direction, effectively countering the effect of the magnetic
  field.  This is called the Hall effect and, since magnetic field
  plays a crucial role in Hall thrusters, Hall effect needs to be
  avoided; this is the reason why Hall thrusters are circular rather
  than linear \citep{Boeuf2017}.}) are a type of electric propulsion
which was invented in the 1960s and first flown on the Soviet
satellite \textit{Meteor-18} on 29 December 1971 \citep{Morozov2003}.
\textit{Meteor-18} was a \SI{15}{t} satellite and the HT, developed at
the Kurchatov Institute of Atomic Energy, managed to lift the
spacecraft by \SI{15}{km} in a week, orient it, and maintain the
orbital altitude.  In the following years, Soviets and later Russians
launched many more satellites with HT on board, while in the U.S., the
development focused on an alternative electric propulsion concept:
gridded ion thrusters.  Hall thrusters were first used in Europe and
the States in the 1990s.

Nowadays, HTs are still mainly used for altitude keeping, but are also
considered as a type of propulsion for deep space journeys.  In other
words, they are used for purposes where classical chemical propulsion
is lacking.  While chemical propulsion is essential for high thrust
missions, like launches into orbit, it presents serious constrains for
traveling beyond Mars.  An unavoidable problem for most propulsion
concepts\footnote{Concepts like solar sails or experimental work like
  \cite{White2016} are not discussed in this work.} is the need to
accelerate the remaining propellant together with the useful payload.
This is demonstrated by the ideal rocket equation,
\begin{align}\label{eq:intro:rocket}
  \Delta v = c_e \mathrm{ln}\,\left(\frac{m_0}{m_f}\right),
\end{align}
where $\Delta v$ is the measure of an impulse needed for the mission,
$c_e$ is the effective exhaust velocity, $m_0$ and $m_f$ are the
initial and final masses of the spacecraft, respectively.  Instead of
the exhaust velocity, it is common to use specific impulse, $I_{sp} =
c_e/g_0$, where $g_0$ is Earth's gravitational acceleration, which can
be seen as a measure of fuel efficiency.  The mass ratio then follows
\begin{align*}
  \frac{m_0}{m_f} = \exp\left(\frac{\Delta v}{I_{sp}g_0}\right).
\end{align*}
For example, a mission defined by $\Delta v = \SI{11186}{m.s^{-1}}$
(Earth escape $\Delta v$) and a propulsion system with $I_{sp} \approx
\SI{400}{s}$, which is a reasonable $I_{sp}$ for chemical propulsion,
has the mass ratio around 18. The ratio gets even worse for more
extreme missions.  Getting \SI{10}{kg} of useful payload to the
nearest star from the Sun (around 5~light years) in \SI{5000}{years}
using a propulsion system with $I_{SP} = \SI{500}{s}$ requires fuel
which is on the order of the Earth mass (\SI{5.972e24}{kg}
\citep{Luzum2011}).  Clearly, the mass ratio can be improved by
increasing specific impulse, $I_{sp}$.  For chemical propulsion, the
source of the energy are chemical bonds in the propellant.  This
energy is used to increase the enthalpy of the propellant which is
subsequently converted into kinetic energy with a nozzle.  Therefore,
the specific impulse of chemical propulsion is limited, unless a
radically new propellant is found.  On the other hand, electrostatic
propulsion\footnote{Other types of the electric propulsion (e.g.,
  arc-jets and resistojets) heat the propellant electrically and then
  follow same principles like the chemical propulsion.
  Electromagnetic propulsion concepts use the Lorentz force.  However,
  these are not considered for this work.}  uses an electrostatic
field to accelerate ionized propellant; therefore, the exhaust
velocity is variable and can be calculated with
\begin{align*}
  c_e = \sqrt{\frac{2q\phi}{m}},
\end{align*}
where $q$ and $m$ are the propellant charge and mass, and $\phi$ is
the applied electrostatic potential. In theory, the propellant could
be accelerated to an indiscriminately high velocity. However, new
constraints posed by electric propulsion, due to needing a power
source, increase the overall mass of the system and prohibit an
unlimited acceleration.  Despite the aforementioned limitation, up to
ten-fold higher $I_{sp}$ of electric propulsion systems allows them to
maintain a better fuel efficiency compared to chemical propulsion.

Hall thrusters span a wide part of the propulsion parameter space.
Currently, their power ranges from \SI{100}{W} to \SI{30}{kW}, across
a variety of sizes, with $I_{sp}$ up to \SI{3000}{s}, and thrust
between mN and N \citep{Boeuf2017}.  Additional concepts, like nested
HTs or arrays of HTs are constitute current research \citep{Hall2017},
as the amount of available power on satellites increases.

The basic principle of HTs is simple.  Neutral gas gets ionized and is
then accelerated by an electrostatic field.  However, maintaining the
external electric field is challenging due to the complexity of the
plasma\footnote{It is not without interest, that the name ``plasma''
  was first used by Langmuir around 1927 because it reminded him how
  blood plasma (electron fluid) carries corpuscles (ions).}  created
inside.

In popular literature, plasma is referred to as a fourth state of
matter, obtained when electrons leave their atoms.  This
classification is, however, questionable since the ionization ratio
can vary and there is no clear distinction between plasma and neutral
gas, like it is with the other states of matter.  A proper definition
is more complicated:
\begin{quote}
  \textit{Plasma is a quasi-neutral mixture of electrons, ions, and
    neutral atoms and molecules in various quantum states which
    exhibit collective behavior.}
\end{quote}
Particularly, the last point of the definition is very important.  The
electrons in a plasma have much higher mobility that the other species
due to their low mass (in the simplest hydrogen plasma, the mass ratio
between electrons and ions is $\approx$1836) and can rearrange
themselves to shield an external electric field.  Considering a
simplified case of a 1D hydrogen plasma with a potential $\phi$,
Poisson's equation gives
\begin{align}\label{eq:intro:poisson}
  \nabla^2 \phi =
  -\frac{q_in_i-q_en_e}{\varepsilon_0},
\end{align}
where $n_e$ and $n_i$ are electron and ion number densities and
$\varepsilon_0$ is the vacuum permitivity.  It can be assumed that on
time-scales relevant to electrons, ions retain the density unperturbed
by the electric field, i.e., $n_i = n_0$.  The particle distribution
function of electrons is (see Chapter~\ref{chap:model} for more
details on distribution functions)
\begin{align}\label{eq:intro:fe}
  f_e(x,\,v_x) = A \exp\left(-\frac{\frac{1}{2}m_ev_x^2 + q_e\phi}{T_e}\right),
\end{align}
where $A$ is some normalization constant and $T_e$ is the electron
temperature (in eV).  Integrating \eqr{intro:fe} over $v_x$ leads to
\begin{align}
  n_e(x) = n_0 \exp\left(-\frac{q_e\phi(x)}{T_e}\right).
\end{align}
Substitution into the Poisson's equation \eqrp{intro:poisson}
gives
\begin{align*}
  \frac{d^2 \phi}{dx^2} =
  \frac{\mathrm{e}n_0}{\varepsilon_0}\left[\exp\left(-\frac{q_e\phi}{T_e}\right)
    -1\right].
\end{align*}
The small argument Taylor series expansion is then used to obtain
\begin{align*}
  \frac{d^2 \phi}{dx^2} =
  \frac{q_en_0}{\varepsilon_0}\frac{q_e\phi}{T_e}.
\end{align*}
The solution to this differential equation is
\begin{align*}
  \phi = \phi_0 \exp\left(-\frac{|x|}{\lambda_D}\right),
\end{align*}
where
\begin{align}
  \lambda_D=\sqrt{\frac{\varepsilon_0T_e}{n_eq_e^2}}
\end{align}
is a characteristic scale length called the Debye length.  For a
special case of a particle source with charge $q$, the potential as a
function of the distance $r$ follows
\begin{align*}
  \phi(r) =
  \frac{1}{4\pi\varepsilon_0}\frac{q}{r}\exp\left(-\frac{r}{\lambda_D}\right).
\end{align*}
This brings an important insight.  In a plasma, electrons collectively
rearrange themselves so the potential caused by a test particle is
shielded and drops exponentially with a scale length of $\lambda_D$,
rather than as $\sim\frac{1}{r}$.  Getting back to the definition,
quasi-neutrality can now be better described.  In a plasma, the sum of
the charges over a region much bigger than the volume corresponding to
the Debye length is zero.

Therefore, an electric field cannot be simply used to accelerate
particles in plasma, because electrons would simply rearrange
themselves to shield the external field.  There are a couple of
solutions.  Gridded ion thrusters, which were mentioned at the
beginning, have a (sometimes called screening) grid, which is adjacent
to the region where the plasma is generated, extracting ions.  These
ions are then accelerated by an electric field formed between the
extractor grid and additional accelerator grid.  The accelerating
electric field, therefore, lies entirely outside of the plasma.

Hall thrusters implement a different approach.  The plasma is located
within an annulus with radial magnetic field formed by inner and outer
magnetic coils and a magnetic circuit.  The radial magnetic field
decreases the mobility of electrons which follow the axial field.
Consequently, the axial electric field does not get shielded even
though it lies directly in the plasma region.

Even though HTs clearly work and have been flown in the real space
environments, many of their aspects are not yet fully understood.
\cite{Boeuf2017} lists these three main reasons:
\begin{enumerate}
  \item The magnetic barrier perpendicular to the cathode-anode flow
    can be subject to a variety of instabilities which can
    significantly decrease the electron confinement.
  \item Electron interactions with the wall result in electron
    emissions which alter electron transport and plasma in general.
  \item Neutral gas needs to by highly ionized for good extraction.
    Ionization also introduces additional oscillatory modes like the
    breathing mode.
\end{enumerate}

A wide range of numerical simulations exist for HTs, however, first
principles computations to self-consistently evolve all aspects of a
Hall thruster have yet to be performed.  A common practice is to
implement empirically-determined estimates of electron mobility inside
the channel into the simulation \citep{Koo2006}.

%---------------------------------------------------------------------
\section{Plasma Simulations}

There are two classical approaches to plasma simulation: particle
methods and fluid methods.

The first approach directly evolves positions and velocities using
equations of motion with electromagnetic forces. A natural way to
obtain the forces is to apply the principle of superposition on  interactions between the
individual particles.  For example in an electrostatic case, the
interactions are given by the  Coulomb force,
\begin{align}
  \bm{F}_{AB} =
  \frac{1}{4\pi\varepsilon_0}\frac{q_Aq_B}{r_{AB}^3}\bm{r}_{AB}.
\end{align}
The infinite range of electromagnetic forces requires all the particle
pairs to be accounted for; which makes for an expensive algorithm,
$\mathcal{O}(N^2)$, where $N$ is the number of particles.  A more
efficient method is to interpolate all the particles on a mesh of grid
points and calculate the electromagnetic forces on the mesh using
Poisson's equation and Ampere's law. The forces are then interpolated
back onto particle positions in neighboring grid cells.  In comparison
to the naive approach, this algorithm significantly decreases the
computational cost to $\mathcal{O}(N\mathrm{log}N)$.  The strength of
this method is that no assumptions about the particle dynamics are
made and, therefore, a wide variety of kinetic phenomena is
intrinsically included in the system.  This makes particle methods
particularly well suited for simulation of weakly collisional plasmas
where particle distributions are far from the equilibrium.

On the other hand, particle methods theoretically require simulating
unfeasible amount of particles.\footnote{\cite{Munroe2014} provides a
  charming description of the magnitude of the numbers involve.  One
  mol of a gas at standard condition has a volume of \SI{22.4}{dm^3},
  which is an imaginable amount.  Such a volume contains the
  Avogadro's number, $\mathrm{N}_A = 6.022\times10^{23}$, of
  particles.  \cite{Munroe2014} asks a question, ``How big would a mol
  of moles be?'', by which he means the Avogadro's number of moles.
  He estimates that the amount would cover the Earth surface up to
  couple tens of kilometers or form a compact body of a size of the
  Moon.}  Instead, individual particles are grouped into a fewer
number of ``macro-particles,'' which decreases computational cost but
also introduces statistical noise.  The noise can be decreased by
increasing the number of ``macro-particles,'' however, the
signal-to-noise ratio improves only as $\sqrt{N}$, where $N$ is the
number of particles per grid cell, which makes this an inefficient
proposition.

In a situation where temporal and spatial scales of
gyromotion\footnote{Circular motion of charged particles around the
  magnetic field lines.} are much shorter than the scale of interest,
the system can be reduced by integrating over one velocity component.
This approach is called gyro-kinetics.

The system of equations can be reduced even more by integration over
the two remaining velocity components, which leads to the fluid
description of the plasma.  In the fluid model, the individual particle
positions and velocities are lost and only the macroscopic quantities
like density, $n$ or bulk velocity, $\bm{u}$, are resolved.  These are
resolved by evolving the conservation equations; for example the
continuity equation (see \ser{model:fluids} for more information),
\begin{align*}
  \pfrac{n}{t} + \nabla\cdot(n\bm{u}) = 0.
\end{align*}
Fluid models are excellent tools for macroscopic plasma simulation,
because they are usually a couple of orders of magnitude faster than
kinetic models and are not affected by statistical noise.  However,
any kinetic effects that may be potentially relevant for a given
situation need to be artificially included.

This work focuses on an alternative approach -- a continuum kinetic
method which relies on directly discretizing the Vlasov equation (see
Chapter\thinspace\ref{chap:model} for more details),
\begin{align*}
  \pfrac{f}{t} +
  \bm{v}\cdot\nabla_{\bm{x}}f +
  \frac{q}{m}\left(\bm{E} +
  \bm{v}\times\bm{B}\right)\cdot\nabla_{\bm{v}}f = 0,
\end{align*}
where $f$ is the particle distribution.  Having a similar mathematical
form as other conservation equations, the same approach can be used to
solve it, e.g., finite-elements methods.  However, since the
distribution function $f$ is directly discretized, no assumptions over
the individual particle velocities are necessary.  In other words,
continuum kinetic methods provide noise-free solutions with kinetic
effects intrinsically included in the system.

%---------------------------------------------------------------------
\section{Objectives}

The goal of this work is to build and test model pieces necessary to
create a truly predictive Hall thruster model and provide a better
understanding of plasma-material interactions.  The model is intended
to be applicable to a variety of plasma configurations.  The
objectives can be further specified as
\begin{enumerate}
  \item Develop a continuum kinetic framework to study plasma sheath
    physics and benchmark this framework to studies of plasma
    instabilities.
  \item Increase the fidelity of classical sheath simulations by
    accounting for additional physics brought about through
    appropriate collision operators and ionization.
  \item Understand plasma-material interactions in Hall thrusters by
    developing a secondary electron emission boundary condition based
    on phenomenological models to study its effect on sheaths.
    Particularly test possible changes to the shape of the sheath
    potential based on the emission and collisions.  Also, an
    understanding of secondary electron emission (SEE) using a kinetic
    model can provide insight into fluid boundary conditions to
    appropriately model SEE.
\end{enumerate}

%---------------------------------------------------------------------
\section{Notes on Conventions Used}

The International System of Units (SI; Syst\`{e}me international
d'unit\'{e}s) is used throughout the work with the exception of
temperature, which is given in terms of energy, i.e., it is
assumed to by multiplied by the Boltzmann constant, $k_\mathrm{B} =
\SI{1.38064852(79)e-23}{J/K}$ \citep{mohr2016}. What is more, it is
common in plasma physics to use electron-volts as a unit of energy
instead of Joules, $\SI{1}{eV}=\SI{1.60217662e-19}{J}$.

Einstein's summation convention is used for the tensor indices. For
example, if $\mathbf{a} = \left(a_1,a_2,a_3\right)$ and $\mathbf{b} =
\left(b_1,b_2,b_3\right)$,
\begin{align*}
  a_ib_i = \sum_i^3a_ib_i = a_1b_1 + a_2b_2 + a_3b_3.
\end{align*}
Another good example is with the Levi-Civita symbol,
$\varepsilon_{ijk}$,
\begin{align*}
  \varepsilon_{1jk}a_jb_k = a_2b_3 - a_3b_2.
\end{align*}

%---------------------------------------------------------------------
\section{Notes on the Simulations}

The numerical development, simulations, and post-processing in this
work are performed using the \texttt{Gkeyll} framework and this work
has contributed to core development of
\texttt{Gkeyll}.\footnote{\url{http://gkeyll.readthedocs.io/en/latest/}}
\texttt{Gkeyll 2.0} is used as the simulation tool of choice for this
work.

In order to provide the maximum reproducibility of the presented
results, all simulation initialization files are available for the
reader together with the details of some postprocessing techniques.
\texttt{Gkeyll 2.0} build is available in the Anaconda cloud and can
be conveniently installed using the \texttt{conda} package
manager\footnote{Installing \texttt{Gkeyll 2.0} through \texttt{conda}
  most likely results in suboptimal performance, however, it is useful
  for experimentation on a new machine.  Production level run should
  use properly built code utilizing local message parsing interface
  (MPI).}
\begin{lstlisting}[language=Bash]
conda install -c gkyl gkyl
\end{lstlisting}
Assuming Anaconda is already in the \texttt{PATH}, the simulations can
then be run, for example with
\begin{lstlisting}[language=Bash]
gkyl two-stream.lua
\end{lstlisting}

\chapter{Numerical Model and Implementation}\label{chap:model}

\epigraph{An approximate answer to the right problem is worth a good
  deal more than an exact answer to an approximate
  problem.}{\textit{John Turkey}}

This chapter describes kinetic plasma equations; both the analytic
derivation of the governing Vlasov/Boltzmann equations and its
numerical discretization.  The fluid approximation is discussed as
well.

%=====================================================================
\section{Kinetic Plasma Equation}

The word \emph{kinetic} originates from the ancient Greek \emph{kinein}
which means \emph{to move}.  Nowadays, Merriam-Webster dictionary
defines it as \emph{of or relating to the motion of material bodies
  and the forces and energy associated therewith}.  In physics,
\emph{kinetic} means that the motions of individual particles or
macro-particles are taken into account and no assumption on the
velocity distribution is done \textit{a priori}.

%---------------------------------------------------------------------
\subsection{Klimontovich Equation}

\cite{Nicholson1983} starts the derivation of the full kinetic theory
with a definition of a density of a single particle, $i$,
\begin{align*}
  N_i(t,\bm{x},\bm{v}) =
  \delta\big(\bm{x}-\bm{X}_i(t)\big)
  \delta\big(\bm{v}-\bm{V}_i(t)\big),
\end{align*}
where $\delta$ is the Dirac delta function and $\bm{X}$ and $\bm{V}$
are the Lagrangian coordinates of the particle.  Note that, even
though the function is nonzero only at the position of the particle,
it is defined over the whole phase space, i.e., the 6-dimensional
space which is a combination of the 3-dimensional configuration space
(parameterized with $\bm{x}$) and 3-dimensional velocity space
(parameterized by $\bm{v}$).  In other words, the location in phase
space provides not only the information about the physical position
but also the vector of velocity.  Consequently, the units of $N$ are
\si{m^{-6}s^{3}} rather than \si{m^{-3}} used for the classical
density.

The extension for multiple particles is then obtained as a summation
over the individual densities
\begin{align}\label{eq:model:density1}
  N_s(t,\bm{x},\bm{v}) = \sum_{i} 
  \delta\big(\bm{x}-\bm{X}_i(t)\big)
  \delta\big(\bm{v}-\bm{V}_i(t)\big).
\end{align}
From now on, the index $s$ will denote the type of the particles,
i.e., electrons, ions, etc.

The description of this distribution is not of
particular interest.  For the purposes of the kinetic theory, the
information about the evolution of the system based on the current
state is more intriguing.  Therefore, we proceed with taking the
time derivative of \eqr{model:density1},
\begin{align}\label{eq:model:density2}
\begin{aligned}
  \frac{\partial N_s(t,\bm{x},\bm{v})}{\partial t} = 
  &- \sum_{i} \bm{\dot{X}}_i\cdot\nabla_{\bm{x}}
  \delta\big(\bm{x}-\bm{X}_i(t)\big)
  \delta\big(\bm{v}-\bm{V}_i(t)\big) \\
  &- \sum_{i} \bm{\dot{V}}_i\cdot\nabla_{\bm{v}}
  \delta\big(\bm{x}-\bm{X}_i(t)\big)
  \delta\big(\bm{v}-\bm{V}_i(t)\big), 
\end{aligned}\end{align}
where $\nabla_{\bm{x}} = (\partial_{x}, \partial_{y}, \partial_{z})$
and $\nabla_{\bm{v}} = (\partial_{v_x}, \partial_{v_y},
\partial_{v_z})$.

Up until this point, the whole description was purely mathematical.
Now it is required to include physics; specifically the relation
\begin{align}\label{eq:model:motion1}
  \bm{\dot{X}}_i(t) = \bm{V}_i(t)
\end{align}
and the Lorentz force equation
\begin{align}\label{eq:model:motion2}
  m_s\bm{\dot{V}}_i =
  q_s\bm{E}^m\big(t, \bm{X}_i(t)\big) + q_s \bm{V}_i(t)
  \times \bm{B}^m\big(t,\bm{X}_i(t)\big),
\end{align}
where $m_s$ and $q_s$ are mass and charge respectively of
particle $s$. $\bm{E}^m$ and $\bm{B}^m$ represent microscopic
electric and magnetic fields from other particles (fields from the
particle itself are neglected) together with the external macroscopic
fields.  The microscopic fields satisfy Maxwell's equations,
\begin{align}
  \nabla\cdot\bm{E}^m(t,\bm{x}) &=
  \frac{\rho^m(t,\bm{x})}{\varepsilon_0},\label{eq:model:gauss} \\
  \nabla\cdot\bm{B}^m(t,\bm{x}) &= 0, \\
  \nabla\times\bm{E}^m(t,\bm{x}) &=
  -\frac{\partial \bm{B}^m(t,\bm{x})}{\partial t}, \\
  \nabla\times\bm{B}^m(t,\bm{x}) &=
  \mu_0 \bm{j}^m(t,\bm{x}) + \mu_0\varepsilon_0 \frac{\partial
    \bm{E}^m(t,\bm{x})}{\partial t},
\end{align}
where $\rho^m$ is microscopic charge density
\begin{align*}
  \rho^m(t,\bm{x}) = \sum_s q_s \int N_s(\bm{x},\bm{v},t)
  \,d\bm{v} 
\end{align*}
and $\bm{j}^m$ is microscopic current density (surface
density, $[j] = \si{A.m^{-2}}$)
\begin{align*}
  \bm{j}^m(t,\bm{x}) = \sum_s q_s \int \bm{v}
  N_s(\bm{x},\bm{v},t) \,d\bm{v}.
\end{align*}

Substituting \eqr{model:motion1} and \eqr{model:motion2} into
the evolution equation \eqrp{model:density2} gives
\begin{align*}
  \frac{\partial N_s(t,\bm{x},\bm{v})}{\partial t} = 
  &- \sum_{i} \bm{V}_i\cdot\nabla_{\bm{x}}
  \delta\big(\bm{x}-\bm{X}_i(t)\big)
  \delta\big(\bm{v}-\bm{V}_i(t)\big) \\
  &- \sum_{i}
  \frac{q_s}{m_s}\left[\bm{E}^m\big(t,\bm{X}_i(t)\big) +
  \bm{V}_i(t) 
  \times \bm{B}^m\big(t,\bm{X}_i(t)\big)\right]
  \cdot\nabla_{\bm{v} }
  \delta\big(\bm{x}-\bm{X}_i(t)\big)
  \delta\big(\bm{v}-\bm{V}_i(t)\big).
\end{align*}

Using the property of the Dirac delta function, $a\delta(a-b) =
b\delta(a-b)$, $\bm{X}$ and $\bm{V}$ can be replaced with $\bm{x}$ and
$\bm{v}$ and then the order of the summations and the gradients can be
switched,
\begin{align*}
  \frac{\partial N_s(t,\bm{x},\bm{v})}{\partial t} = 
  &- \bm{v}\cdot\nabla_{\bm{x}} \sum_{i}
  \delta\big(\bm{x}-\bm{X}_i(t)\big)
  \delta\big(\bm{v}-\bm{V}_i(t)\big) \\
  &- 
  \frac{q_s}{m_s}\left[\bm{E}^m\big(t,\bm{x}\big) +
  \bm{v}
  \times \bm{B}^m\big(t,\bm{x}\big)\right]
  \cdot\nabla_{\bm{v}}   \sum_{i}
  \delta\big(\bm{x}-\bm{X}_i(t)\big)
  \delta\big(\bm{v}-\bm{V}_i(t)\big).
\end{align*}

Finally, density \eqr{model:density1} can be back-substituted to
obtain the Klimontovich equation,
\begin{align}\label{eq:model:klimontovich}
  \frac{\partial N_s(t,\bm{x},\bm{v})}{\partial t} +
  \bm{v}\cdot\nabla_{\bm{x}} N_s +
  \frac{q_s}{m_s}\left(\bm{E}^m + \bm{v} \times
  \bm{B}^m\right) \cdot\nabla_{\bm{v}} N_s = 0.
\end{align}

Knowing the initial positions, $\bm{X}_i(t=0)$, and velocities,
$\bm{V}_i(t=0)$, of all the particles, the Klimontovich equation
\eqrp{model:klimontovich} together with Maxwell's equations
provides the full, exact description of the evolution of the plasma.

%---------------------------------------------------------------------
\subsection{Vlasov/Boltzmann Equation}

While the Klimontovich equation \eqrp{model:klimontovich} captures
the discrete nature of individual particles exactly, the collection of
Dirac Delta functions is not well suited for practical use.
Therefore, we introduce a new smooth distribution function
$f_s(t,\bm{x},\bm{v})$, which is defined as
\begin{align*}
 f_s(\bm{x},\bm{v},t) := \left\langle
 N_s(\bm{x},\bm{v},t) \right\rangle,
\end{align*}
where $\langle\cdot\rangle$ denotes the ensemble average, i.e., an
average over the all possible microstates realizing the given
macro-state.  This new distribution is the key variable of the
continuum kinetic method and represents the phase space particle
density integrated over the small volume $\Delta\bm{x}\Delta\bm{v}$
with center at $(\bm{x},\bm{v})$.  Analogously to the distribution
$f$, the ensemble averages can be defined for the fields as $\bm{E} :=
\langle\bm{E}^m\rangle$ and $\bm{B} := \langle\bm{B}^m\rangle$.

When the ensemble averages are substituted into the Klimontovich
equation \eqrp{model:klimontovich}, the Boltzmann equation is obtained,
\begin{align}\label{eq:model:boltzmann}
  \pfrac{f_s}{t} +
  \bm{v}\cdot\nabla_{\bm{x}}f_s +
  \frac{q_s}{m_s}\left(\bm{E} +
  \bm{v}\times\bm{B}\right)\cdot\nabla_{\bm{v}}f_s = \underbracket{
  -\frac{q_s}{m_s}\big\langle\left(\delta\bm{E} +
  \bm{v}\times\delta\bm{B}\right)\cdot\nabla_{\bm{v}}\delta
  N_s\big\rangle}_{\sum\left(\frac{\delta f_s}{\delta t}\right)},
\end{align}
where the residuals are defined as
\begin{align*}
  \delta N_s(\bm{x},\bm{v},t) &= N_s(\bm{x},\bm{v},t)
  - f_s(\bm{x},\bm{v},t), \\
  \delta \bm{E}(\bm{x},\bm{v},t) &=
  \bm{E}^m(\bm{x},\bm{v},t) -
  \bm{E}(\bm{x},\bm{v},t),
\end{align*}
and
\begin{align*}
  \delta \bm{B}(\bm{x},\bm{v},t) =
  \bm{B}^m(\bm{x},\bm{v},t) -
  \bm{B}(\bm{x},\bm{v},t).
\end{align*}

Note that in this form, the Boltzmann equation
\eqrp{model:boltzmann} is exact and the term on the right-hand-side
of the equation captures the intrinsically discrete effects, such as
collisions, but also effects like photo-ionization.

In a regime where discrete particle effects are negligible,
\eqr{model:boltzmann} is reduced to the Vlasov equation, which is the
center-piece of this work,
\begin{align}\label{eq:model:vlasov}
  \pfrac{f_s}{t} +
  \bm{v}\cdot\nabla_{\bm{x}}f_s +
  \frac{q_s}{m_s}\left(\bm{E} +
  \bm{v}\times\bm{B}\right)\cdot\nabla_{\bm{v}}f_s = 0.
\end{align}

It is worth noting that the Vlasov equation \eqrp{model:vlasov} is
not relativistic. The relativistic extension can be obtained by adding
appropriate Lorentz factors, $\gamma$,
\begin{align}\label{eq:model:rvlasov}
  \pfrac{f_s}{t} +
  \nabla_{\bm{x}}\cdot\left(\frac{\bm{p}}{m_s\gamma}f_s\right) +
  \nabla_{\bm{p}}\cdot\left[q_s \left(\bm{E} +
  \frac{\bm{p}}{m_s\gamma}\times\bm{B}\right)f_s\right] = 0,
\end{align}
where
\begin{align*}
  \gamma = \frac{1}{\sqrt{1-\bm{p}^2/m_s^2c^2}}.
\end{align*}
This work neglects relativistic effects.

%---------------------------------------------------------------------
\subsection{Particle Distribution Function}
\label{sec:model:distf}

An insight into what the distribution function represents is key to
understanding many figures presented in this work.  Configuration
space plots are much more common in the scientific literature than
phase space plots, hence a more in-depth discussion of distribution
functions and phase space is warranted.

First, let us illustrate the behavior of the distribution function in
phase space on a toy problem of collision-less neutral gas bouncing in
a bounded domain, which is depicted in \fgr{model:distf}.  The top
left panel captures the initial condition -- a collection of particles
distributed around $x=0$ with a bulk (average) velocity of $u_x=1$.
Since the vast majority of particles have a positive
velocity\footnote{The thermal spread of particles follows the Gauss
  distribution so there are technically some particles with negative
  velocity, however, in this case the distribution has $\sigma=0.1$.
  The region of negative velocities is farther than $10\,\sigma$ from
  the bulk velocity and is, therefore, negligible.} the particles will
propagate to the right (positive $x$).  However, due to the thermal
velocity spread, the distribution also becomes skewed.  When the
particles reach the wall at the right edge of the domain they are
elastically reflected, i.e., the magnitude of the velocity is
conserved but the sign (direction) is flipped.  Noting that the
normalized bulk velocity is 1 and the domain size is 10, it should be
expected that exactly at $t=20$ the center of the distribution returns
to the original position, which is seen in the bottom right panel of
\fgr{model:distf}.

\begin{figure}[!htb]
  \centering
  \includegraphics[width=.8\textwidth]{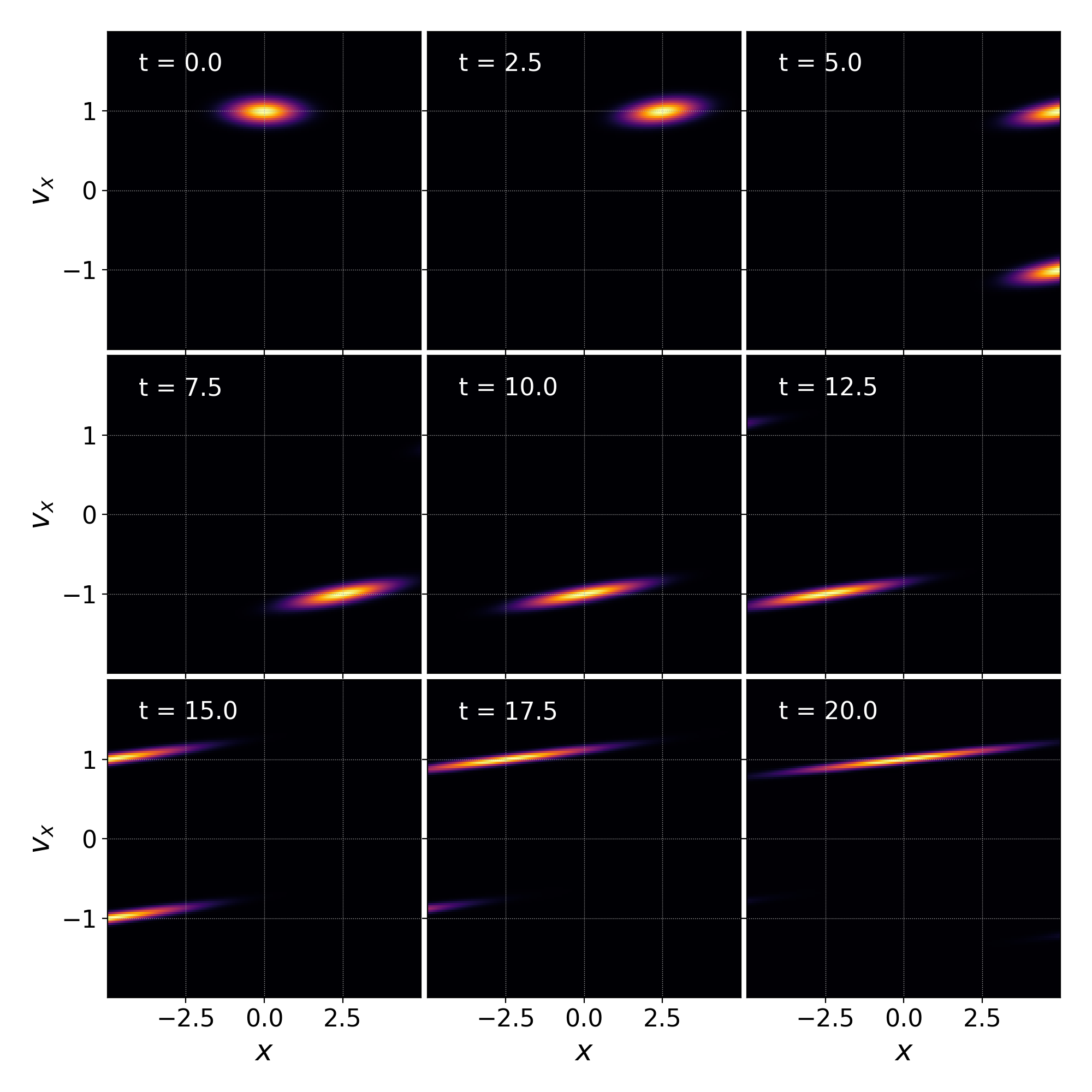}
  \caption[Example of a particle distribution
    function]{Demonstration of the evolution of the particle
    distribution function in 2D phase space (1X1V).  The top left
    panel captures the initial collection of particles with center at
    $x=0$ and with velocities thermally spread around $u_x=1$.  The
    other panels show evolution and elastic reflections
    from the walls at the edges of the domain.  At the $t=20$ (bottom
    right panel) the center of the distribution returns to the
    original position.  [Simulation input file:
      \ref{list:input:bounce}]}
  \label{fig:model:distf}
\end{figure}

Sometimes, it can be advantageous to integrate out the kinetic
information in order get the macroscopic quantities -- moments of the
distribution.  The moments have the physical meaning of density, flux,
energy, etc.\footnote{The distribution function used in this work is
  not multiplied by the mass and, therefore, the moments give the
  number density and flux rather than the mass density and
  momentum.} The particle number density is obtained as the zeroth
moment,
\begin{align}\label{eq:model:m0}
  n(\bm{x}) = \int_{\mathcal{V}} f(\bm{x},\bm{v}) \,d\bm{v},
\end{align}
where the integration is performed over the entire velocity space,
${\mathcal{V}}$. The first moment gives the particle flux,
\begin{align}\label{eq:model:m1}
  n(\bm{x})\bm{u}(\bm{x}) = \int_{\mathcal{V}} \bm{v} f(\bm{x},\bm{v}) \,d\bm{v}.
\end{align}
Note that the moment gives the conserved variable, flux, rather than
the primitive variable, bulk velocity.  In order to obtain the
primitive variables, the division by density is required,
\begin{align*}
  \bm{u}(\bm{x}) = \frac{\int_{\mathcal{V}} \bm{v} f(\bm{x},\bm{v})
    \,d\bm{v}}{\int_{\mathcal{V}} f(\bm{x},\bm{v}) \,d\bm{v}}.
\end{align*}
Note the similarity of the moment calculation to definition of an
average value in statistical math; the average value of a variable $a$
with a probability distribution $p(a)$ is calculated as $\langle a
\rangle = \int a p(a) \,da$. However, while a probability is
normalized to 1 and the distribution function here is normalized to
$n$.\footnote{While this choice is common, there are works which
  normalize particle velocity distribution to 1 as well.}

Finally, the second moment can be scaled to provide energy per unit
mass.  It can either be calculated as a scalar value
of the total energy or as the energy tensor,
\begin{align}
  \mathcal{E}(\bm{x}) &= \frac{1}{2} m\int_{\mathcal{V}} |\bm{v}|^2
  f(\bm{x},\bm{v})
  \,d\bm{v},\label{eq:model:m2}\\ \bm{\mathcal{E}}(\bm{x}) &=
  \frac{1}{2} m \int_{\mathcal{V}} \bm{vv} f(\bm{x},\bm{v}) \,d\bm{v},
\end{align}
where $\bm{vv}$ is a dyadic tensor.  The connection between the
expressions is then $\mathcal{E} =
\mathrm{Trace}(\bm{\mathcal{E}})$.

While discussing the distribution functions, it is worth mentioning
one which stands out in particular -- the Maxwellian distribution.
Many textbooks provide the following derivation,\footnote{This
  derivation is, apart from its simplicity, also of a historical
  interest because it is the argument originally given by
  \cite{Maxwell1890}.} which assumes that the probability
$f_{1D}(v_x)dv_x$, i.e. the probability of finding the particle in the
interval $\langle v_x,\, v_x + dv_x\rangle$, is independent of $v_y$
and $v_z$.  Then
\begin{align}\label{eq:model:factor}
  f(v_x,v_y, v_z)dv_x dv_y dv_z = n f_{x}(v_x) f_{y}(v_y) f_{z}(v_z) dv_x dv_y
  dv_z.
\end{align}
In a situation with no external forces, there is no preferred velocity
direction and, therefore, the distribution must depend on the velocity
only through its magnitude, $v_x^2 + v_y^2 + v_z^2$.  That means that
\begin{align*}
  n f_{x}(v_x) f_{y}(v_y) f_{z}(v_z) = f(v_x,v_y, v_z) = F(v_x^2 + v_y^2 +
  v_z^2),
\end{align*}
where $F$ is some unknown function.  Solving the equation gives
\begin{align*}
  f_{x}(v_x) = A \exp(Bv_x^2), \quad f(v_x,v_y, v_z) = n A^3
  \exp\big(B(v_x^2 + v_y^2 + v_z^2)\big).
\end{align*}
Then we can use the definitions of the moments to tie the integration
constants $A$ and $B$ with the macroscopic physical quantities:
\begin{align*}
  n = \int f \,d\bm{v} = \int n A^3 \exp\big(-B(v_x^2 + v_y^2 +
  v_z^2)\big) \,d\bm{v} \quad\Rightarrow\quad A =
  \sqrt{\frac{B}{\pi}}.
\end{align*}
From the definition of the thermal velocity:\footnote{The thermal
  velocity can be tied to the temperature through the Equipartition
  theorem, $NT/2 = 0.5 m v_{th}^2$, where $N$ is the number of degrees
  of freedom. It is worth noting, that this definition is not
  unique. For example, \cite{Chen1985} ties thermal velocity and
  temperature for $N=1$ as $mv_{th}^2 = 2T$, with gives Maxwellian
  distribution proportional to $\exp(-v_x^2/v_{th}^2)$.  The advantage
  of the definition used in this work is (apart from satisfying the
  Equipartition theorem) that the Maxwellian distribution has the
  mathematical form of the normal distribution with thermal velocity
  being the variance, $\sigma$.}
\begin{align*}
  v_{th}^2 = \frac{1}{n}\int \bm{v}^2 f \,d\bm{v} = \int (v_x^2 +
  v_y^2 + v_z^2) \sqrt{\frac{B}{\pi}}^3 \exp\big(-B(v_x^2 + v_y^2 + v_z^2)\big)
  \,d\bm{v} \quad\Rightarrow\quad B = \frac{1}{2v_{th}^2}.
\end{align*}

All together, we get the Maxwellian distribution of particles with
zero bulk velocity,
\begin{align}
  f(v_x,v_y, v_z) = \frac{n}{\sqrt{2\pi v_{th}^2}^3} \exp\left(-\frac{v_x^2 +
        v_y^2 + v_z^2}{2v_{th}^2}\right).
\end{align}
For the nonzero bulk velocity, $\bm{u}=(u_x,u_y,u_z)$,
\begin{align}\label{eq:model:maxwellian}
  f(v_x,v_y, v_z) = \frac{n}{\sqrt{2\pi v_{th}^2}^3}
  \exp\left(-\frac{(v_x-u_x)^2 + (v_y-u_y)^2 +
    (v_z-u_z)^2}{2v_{th}^2}\right).
\end{align}

However, phenomena such as inter-particle collisions would lead to the
breakdown of the assumption of independent velocity components.  The
first satisfactory derivation was performed by Boltzmann using his
H-theorem. While a brief description is provided here, the full
process is available in Chapters 3 and 4 of \cite{Chapman1970}.  The
derivation is based on more detailed description of collisions.

First we define the $H$-function,
\begin{align}\label{eq:model:H}
  H := \int f \mathrm{ln} f \,d\bm{v},
\end{align}
and its derivative,
\begin{align}\label{eq:model:derH}
  \pfrac{H}{t} = \int (1 + \mathrm{ln} f) \pfrac{f}{t} \,d\bm{v}.
\end{align}
In the absence of external forces ($\bm{F}=0$) and for uniform plasma
($\nabla_{\bm{x}}f=0$), the time derivative of the distribution
function is given only through collisions.  The process of a binary
collision can be seen as a removal of particles from phase space at
velocities $\bm{v}_1$ and $\bm{v}_2$ and a creation of ``new''
particles at $\bm{v}_1'$ and $\bm{v}_2'$.  During the process the
amount of ``lost'' particles is
\begin{align*}
  f(\bm{v}_1)f(\bm{v}_2)w_{12}\alpha_{12}
  d\bm{e}'d\bm{v}_1d\bm{v}_2d\bm{r}dt.
\end{align*}
Symmetrically,
\begin{align*}
  f(\bm{v}_1')f(\bm{v}_2')w_{12}\alpha_{12}
  d\bm{e}'d\bm{v}_1d\bm{v}_2d\bm{r}dt
\end{align*}
particles are ``created''.  The total amounts are found through
integration over the whole velocity space.  $w_{12}$ is the magnitude
of the relative velocity, $\bm{w}_{12} = w_{12}\bm{e} = \bm{v}_2
-\bm{v}_1$ and $\alpha_{12}$ is the geometric factor describing the
collision \citep{Chapman1970}.  The collision term in
\eqr{model:boltzmann} can then be described as
\begin{align}
  \left(\frac{\delta f}{\delta t}\right) = \iint \left[f(\bm{v}')
    f(\bm{v}_1') - f(\bm{v})
    f(\bm{v}_1)\right]w_{12}\alpha_{12} \,d\bm{e}'d\bm{v}_1,
\end{align}
and we can substitute into \eqr{model:derH},
\begin{align*}
  \pfrac{H}{t} = \iiint \left[1 + \mathrm{ln} f(\bm{v})\right]
  \left[f(\bm{v}') f(\bm{v}_1') - f(\bm{v})
    f(\bm{v}_1)\right]w_{12}\alpha_{12} \,d\bm{e}'d\bm{v}_1 d\bm{v}.
\end{align*}

The collision integrals have an interesting property -- since we
require $d\bm{e}'d\bm{v}_1d\bm{v}_2 = d\bm{e}d\bm{v}_1'd\bm{v}_2'$,
variables of integration can be interchanged, giving
\begin{align*}
  \iiint
  \phi(\bm{v}_1)f(\bm{v}_1')f(\bm{v}_2')w_{12}\alpha_{12}
  \,d\bm{e}'d\bm{v}_1d\bm{v}_2 = \iiint
  \phi(\bm{v}_1)f(\bm{v}_1')f(\bm{v}_2')w_{12}\alpha_{12}
  \,d\bm{e}d\bm{v}_1'd\bm{v}_2',
\end{align*}
where $\phi$ is a test function.  This corresponds to integration over
all the inverse processes.  Since the forward and inverse binary
collisions uniquely match, we can now interchange all the variables
\citep{Chapman1970},
\begin{align}\label{eq:model:itransform}
  \iiint
  \phi(\bm{v}_1)f(\bm{v}_1')f(\bm{v}_2')w_{12}\alpha_{12}
  \,d\bm{e}'d\bm{v}_1d\bm{v}_2 = \iiint
  \phi(\bm{v}_1')f(\bm{v}_1)f(\bm{v}_2)w_{12}\alpha_{12}
  \,d\bm{e}'d\bm{v}_1d\bm{v}_2.
\end{align}
Adopting the shorter notation, e.g., $f(\bm{v}_1') = f_1'$, the time
derivative of the $H$-function can be rewritten as
\begin{align*}
  \pfrac{H}{t} &= \frac{1}{4}\iiint \left(1 + \mathrm{ln}f + 1 +
    \mathrm{ln}f_1 - 1 - \mathrm{ln}f' - 1 - \mathrm{ln}f_1'\right)
  \left(f'f_1' - ff_1\right)
  w_{12}\alpha_{12} \,d\bm{e}'d\bm{v}_1 d\bm{v}\nonumber
  \\ &=\frac{1}{4}\iiint \mathrm{ln}(ff_1/f'f_1')
  \left(f'f_1' - ff_1\right)
  w_{12}\alpha_{12} \,d\bm{e}'d\bm{v}_1 d\bm{v}.
\end{align*}
Noticing that $\mathrm{ln}(ff_1/f'f_1') \left(f'f_1' - ff_1\right)
\leq 0$, we immediately obtain the Boltzmann $H$-theorem,
\begin{align}\label{eq:model:htheorem}
  \pfrac{H}{t} \leq 0.
\end{align}
What is more, $H$ is bounded below because $H = -\infty $ only if the
integral diverges. The minimal state must be given by
\begin{align}\label{eq:model:balancing}
  f'f_1' - ff_1 = 0.
\end{align}
This result is also called the Principle of absolute balancing and was
introduced by Maxwell in 1867. Taking the logarithm of
\eqr{model:balancing}, $\mathrm{ln}f'+\mathrm{ln}f_1' -
\mathrm{ln}f-\mathrm{ln}f_1 = 0$ shows that $\mathrm{ln}f$ is a
summation invariant, i.e. a quantity which sum over all the particles
is unaltered by the collisions. Another examples of summation
invariants are
\begin{align}
  \psi^{(1)} = 1, \quad \bm{\psi}^{(2)} = m\bm{v}, \quad \psi^{(3)} =
  \frac{1}{2}mv^2.
\end{align}
They correspond to the conservation of particles, momentum, and
energy, respectively, during the elastic collisions.  Any linear
combination of summation invariants is a summation invariant as well.
What is more, every summation invariant can be described as a linear
combination of the three invariants above
\citep{Chapman1970}.\footnote{Each collision is fully defined by six
  relations for six variables (twice three velocity components), but
  two of them, like the two polar angles of the line of centers at
  collision, are disposable.  Therefore, four relations (conservation
  of momentum and energy) should fully describe the encounter.}  In
other words, $\mathrm{ln}f$ can be expressed as
\begin{align*}
  \mathrm{ln}f &= \sum_i\alpha^{(i)}\psi^{(i)} \\ &= \alpha^{(1)} +
  m\left(\alpha_x^{(2)}v_x + \alpha_y^{(2)}v_y +
  \alpha_z^{(2)}v_z\right) + \frac{1}{2}m\left(v_x^2 + v_y^2 +
  v_z^2\right) \nonumber\\ &= \underbracket{\alpha^{(1)} -
    \left(\frac{\alpha_x^{(2)}}{\alpha^{(3)}}\right)^2
    -\left(\frac{\alpha_x^{(2)}}{\alpha^{(3)}}\right)^2
    -\left(\frac{\alpha_x^{(2)}}{\alpha^{(3)}}\right)^2}_{\mathrm{ln}A}
  -\nonumber\\&~~~- \underbracket{\frac{1}{2}\alpha^{(3)}m}_{B} \Bigg[
    \bigg(v_x-\underbracket{\frac{\alpha_x^{(2)}}{\alpha^{(3)}}}_{u_x}\bigg)^2
    +
    \bigg(v_y-\underbracket{\frac{\alpha_y^{(2)}}{\alpha^{(3)}}}_{u_y}\bigg)^2
    +
    \bigg(v_z-\underbracket{\frac{\alpha_z^{(2)}}{\alpha^{(3)}}}_{u_z}\bigg)^2
    \Bigg] \nonumber\\ f &= A \exp\big(-B\left[(v_x-u_x)^2 +
    (v_y-u_y)^2 + (v_z-u_z)^2\right]\big),
\end{align*}
where the constants $A$ and $B$ are obtained the same way as in the
discussion above.

To sum up, the derivation using $H$-theorem and Principle of
balancing, not only shows the form of Maxwellian distribution without
assuming the independence of velocities but also demonstrates that any
non-equilibrium mixture of particles relaxes towards it in time
through collisions!

%=====================================================================
\section{Discontinuous Galerkin Continuum Kinetic Model}
\label{sec:model:dgck}

Now that the governing equation is defined, it needs to be discretized
for computer simulations.  However, the Vlasov equation
\eqrp{model:vlasov} is not sufficient on its own because it only
describes the evolution of a single species.  Multiple species
are coupled together using fields and collisions and, in order
to self-consistently evolve the fields, additional equations are
required.  It is either Poisson's equation,
\begin{align}\label{eq:model:poisson}
  \nabla^2 \phi = - \frac{\rho_c}{\varepsilon_0},
\end{align}
for electrostatic cases or Maxwell's equations,
\begin{align}
  \pfrac{\bm{B}}{t} + \nabla\times\bm{E} &=
  0, \label{eq:model:faraday}\\ 
  \varepsilon_0\mu_0\pfrac{\bm{E}}{t}
  - \nabla\times\bm{B} &=
  -\mu_0\bm{j}, \label{eq:model:ampere}
\end{align}
for electromagnetic problems.  In Poisson's equation, $\phi$ is the
electrostatic potential, $\rho_c=\sum_s q_sn_s$ is the charge density,
and $\varepsilon_0$ is the vacuum permitivity.  The electric field,
which needs to be fed into the Vlasov equation \eqrp{model:vlasov}, is
then by definition $\bm{E}=-\nabla\phi$.  In Maxwell's equations,
$\mu_0$ is the vacuum permeability and $\bm{j} = \sum_s q_sn_su_s$ is
the current density.

This section describes the construction of the discrete Vlasov-Maxwell
system and its implementation into the
\texttt{Gkeyll}\footnote{\url{http://gkeyll.readthedocs.io}}
simulation framework \citep{Juno2017}.  While traditional methods like
finite difference method (FDM) or finite volume method (FVM) can be
used, the discontinuous Galerkin (DG) method
\citep{Reed1973,Cockburn2001} is the choice for \texttt{Gkeyll}.  DG
belongs to the family of finite-elements methods (FEM), known for
high-order accuracy and ability to handle complex geometries; however,
they also possess some traits typical for finite-volume methods like
data locality and the possibility to use limiters.  Increasing the
order of the polynomial approximation provides a sub-cell accuracy in
a similar manner as grid refinement, but usually at a fraction of the
cost \citep{Hesthaven2007}, and data locality allows for efficient
parallelization.  Merging these traits together makes DG a powerful
tool for high-performance computing.  Furthermore, certain DG methods
allow conservation of energy and with suitable modifications,
positivity and entropy decay. Most FV methods do not possess these
properties.

%---------------------------------------------------------------------
\subsection{Discrete Vlasov Equation}
\label{sec:model:discvlasov}

While it makes sense from the physics point of view to distinguish
between the position, $\bm{x}$, and the velocity, $\bm{v}$, for the
purpose of deriving the discrete form of the Vlasov equation
\eqrp{model:vlasov}, it is advantageous to rewrite it as function
of a single phase-space variable, $\bm{z}$,\footnote{It might not be
  obvious how the Vlasov equation \eqrp{model:vlasov} can be
  written in the conservative form because the Lorentz force is a
  function of $\bm{v}$. However, since
  $\left(\bm{v}\times\bm{B}\right)_i = \varepsilon_{ijk}v_jB_k$, where
  $\varepsilon_{ijk}$ is the Levi-Civita symbol,
  $$\frac{\partial\left(\bm{v}\times\bm{B}\right)_if}{\partial
    v_i} = \frac{\partial \varepsilon_{ijk}v_jB_kf}{\partial v_i} =
  \varepsilon_{ijk}v_jB_k \frac{\partial f}{\partial v_i} =
  \left(\bm{v}\times\bm{B}\right)_i \frac{\partial f}{\partial
    v_i}.$$}
\begin{align}\label{eq:model:vlasovz}
  \pfrac{f}{t} + \nabla_{\bm{z}}\cdot(\bm{\alpha}f) = 0,
\end{align}
where
\begin{align*}
  \nabla_{\bm{z}} = \left(\nabla_{\bm{x}},\nabla_{\bm{v}}\right),
\end{align*}
and
\begin{align*}
  \bm{\alpha} = \left(\bm{v},\,\frac{q}{m}\left(\bm{E} +
  \bm{v}\times\bm{B}\right)\right).
\end{align*}
For clarity, index $s$ denoting species is omitted in the
derivation.

Using the DG method, the exact distribution function,
$f(t,\bm{z})$, is discretized by a piecewise polynomial from the
space
\begin{align}
  \mathcal{S}_h^p := \left\{\psi:\psi|_{K^j} \in \mathcal{P}^p,\,\forall K^j
  \in\mathcal{T}\right\},
\end{align}
where $K^j, j=1,\ldots,N_c$ are the $N_c$ cells of the phase space
$\mathcal{T}$. In a 1D case, $\mathcal{P}^p$ is the space of the polynomials
with the polynomial order at most $p$; see the subsection
\ref{sec:model:choice} for details and for the definition of
$\mathcal{P}^p$ for higher dimensions.  The global solution is then defined
using the direct sum
\begin{align}
  f(t,\bm{z}) \approx f_h(t,\bm{z}) = \bigoplus_{j=1}^{N_c}f_h^j(t,\bm{z}).
\end{align}

There are generally two ways to describe the approximate solution,
$f_h^j(t,\bm{z})$.  The distribution function is discretized either
using the modal expression,
\begin{align}\label{eq:model:modal}
  f_h^j(t,\bm{z}) := \sum_{n=0}^{N_p-1} \widehat{f}^j_n(t)\psi_n(\bm{z}), \quad
  \bm{z} \in K^j, \psi_n \in \mathcal{S}_h^p,
\end{align}
where $N_p$ is the number of basis functions, or the nodal expression,
\begin{align}\label{eq:model:nodal}
  f_h^j(t,\bm{z}) := \sum_{n=0}^{N_p-1}
  f_n(t,\bm{z}_n^j)l^j_n(\bm{z}), \quad \bm{z} \in K^j,
\end{align}
where $l^j_n(\bm{z})$ is the Lagrange interpolation polynomial. In
1D they can be simply defined as
\begin{align}
  l^j_n(x) := \prod_{\substack{0\leq m\leq p \\m\neq n}} \frac{x -
    x_m^j}{x_n^j - x_m^j},
\end{align}
i.e., polynomial that is 1 at the $n$-node and 0 at every other node
$m$.  An example for three nodes ($p=2$; $x_0=-1$, $x_1=0$, and $x_2=1$)
is in \fgr{model:lagrange}.  The corresponding
polynomials are
\begin{align*}
  l_0(x) = x(x-1)/2, \quad l_1(x) = (1-x)(1+x), \quad l_2(x) =
  x(x+1)/2.
\end{align*}
\begin{figure}[!htb]
  \centering
  \includegraphics[width=.7\textwidth]{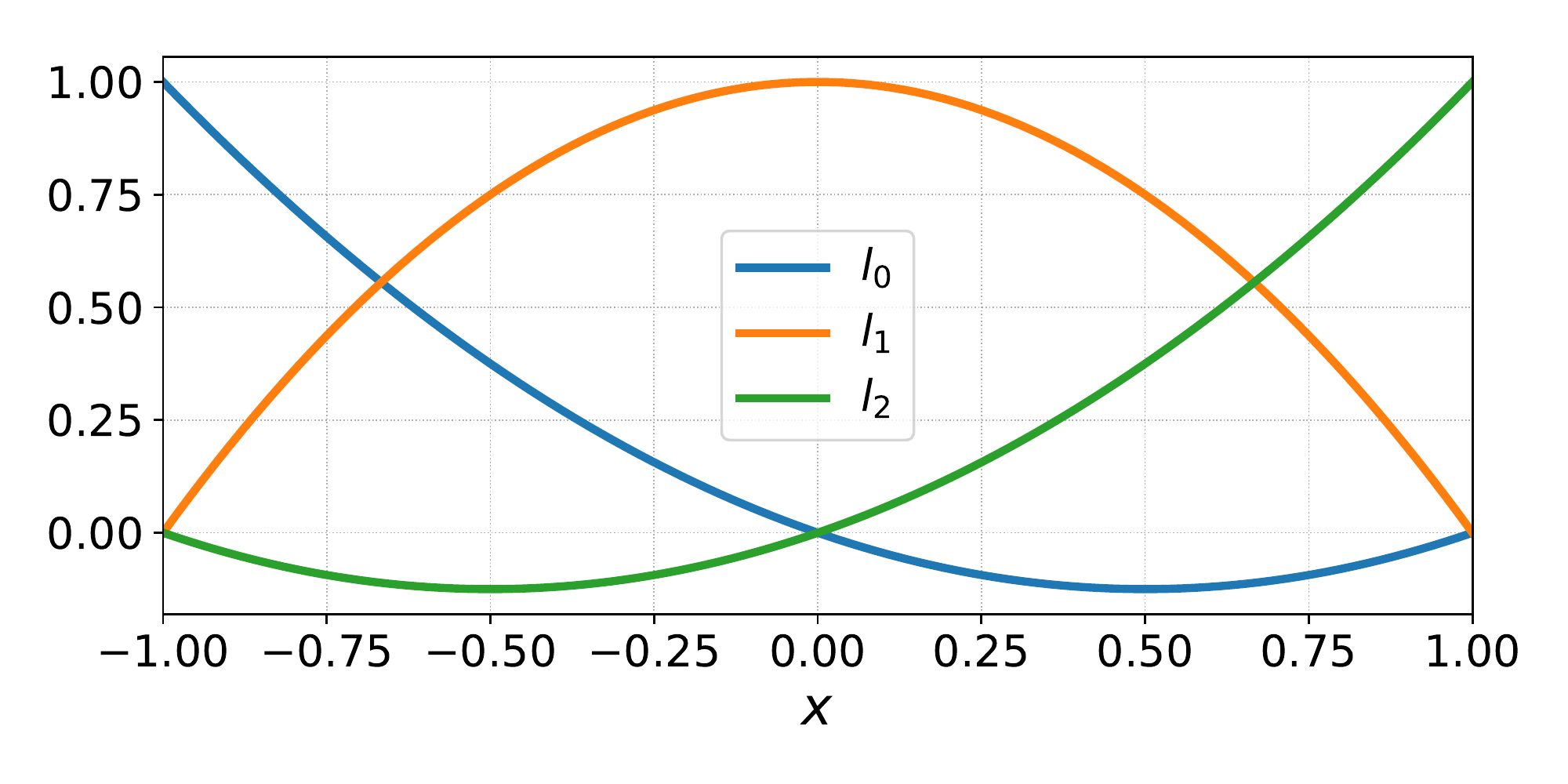}
  \caption[Example of Lagrange polynomials]{Example of Lagrange
    polynomials for three nodes at -1, 0, and 1.  Note that the
    polynomials are 1 at their corresponding node and zero at the
    other nodes. Still, they are defined over the whole interval.}
  \label{fig:model:lagrange}
\end{figure}
A good way to understand the difference between modal and nodal
description is to look at how to describe a straight line, $f(x) =
ax+b$. We can either define the modes, $a$ and $b$, or set values at
two nodes, $f(x_{0,1})$. However, it is important to note that these
two descriptions are equivalent and can be connected with the
Vandermonde matrix, $\mathcal{V}_{mn}$,
\begin{align}
  \mathcal{V}^j_{mn} \widehat{f}^j_n = f_m(\bm{z}^j_m), \quad \mathcal{V}^j_{mn} =
  \psi_n(\bm{z}^j_m).
\end{align}
Still, the two descriptions lead to different algorithms. This is one
of the big differences between \texttt{Gkeyll 1.0}, which uses the
nodal description, and the modal \texttt{Gkeyll 2.0}.

To solve the governing equation then means finding the $N_p$ unknowns
in each cell, $K^j$, that represent either nodal values of the
solution or its expansion coefficients.  In order to do this, the
approximate solution $f_h$ is required to satisfy the PDE
(Eq.\thinspace\ref{eq:model:vlasovz}) in the weak sense,
\begin{align}
  \pfrac{f_h^j}{t} + \nabla_{\bm{z}}\cdot(\bm{\alpha}^j_hf_h^j) \circeq 0.
\end{align}
This means that the equality holds when the expression is multiplied
by a test function, $\psi_t$,\footnote{The special choice of using
  basis functions as test functions separates Galerkin methods from
  other finite-element methods.} and integrated over the cell,
\begin{align*}
  \int_{K^j}\left[\pfrac{f_h^j(t,\bm{z})}{t} +
    \nabla_{\bm{z}}\cdot
    \left(\bm{\alpha}_h^j(\bm{z})f_h^j(t,\bm{z})\right)\right] \psi_t(\bm{z})
  \,d\bm{z} = 0, \quad \psi_t \in \mathcal{S}_h^p, \quad t=0,\dots,N_p-1.
\end{align*}
This give $N_p$ equations for $N_p$ unknowns.  Using the integration
\textit{per partes}, the second term can be split,
\begin{align*}
  \int_{K^j} \nabla_{\bm{z}}\cdot
  \left(\bm{\alpha}_h^j(\bm{z})f_h^j(t,\bm{z})\right) \psi_t(\bm{z})
  \,d\bm{z} =& \oint_{\partial K^j}
  \bm{\alpha}_h^j(\bm{z})f_h^j(t,\bm{z}) \psi_t(\bm{z}) \cdot d\bm{A} \\&-
  \int_{K^j} \left(\bm{\alpha}_h^j(\bm{z})f_h^j(t,\bm{z})\right)\cdot
  \nabla_{\bm{z}}\psi_t(\bm{z})\,d\bm{z},
\end{align*}
where $d\bm{A}=\bm{n}dA$ is the differential area element pointing out
of the $\partial K^j$ surface.\footnote{Note that the surface term is
  a generalization of the more common 1D form of the integration
  \textit{per partes}, $$ \int_a^b u'v dx =
  \left[uv\right]_a^b-\int_a^buv'dx, $$ which uses the Gauss theorem.}
The equation than takes the form
\begin{align*}
  \int_{K^j}\left[\pfrac{f_h^j(t,\bm{z})}{t}\psi_t(\bm{z}) -
    \left(\bm{\alpha}_h^j(\bm{z})f_h^j(t,\bm{z})\right) \cdot
    \nabla_{\bm{z}}\psi_t(\bm{z})\right] \,d\bm{z}
  = -\oint_{\partial K^j}
  \bm{\alpha}_h^j(\bm{z})f_h^j(t,\bm{z}) \psi_t(\bm{z}) \cdot d\bm{A}.
\end{align*}
However, since the solution $f_h$ is piecewise polynomial, it is
multiply defined at the interface $\partial K^j$ and a single
solution, $\bm{F}^j$, must be chosen.  This solution, known as the
numerical flux, is generally a function of the solution at both
sides,
\begin{align*}
  \bm{F}^j = \bm{F}^j(\bm{\alpha}_h^{j-}f_h^{j-},\bm{\alpha}_h^{j+}f_h^{j+}),
\end{align*}
where the $\pm$ notation indicates the evaluation just inside (-) or
outside (+) of the interface.  There are many options for the flux
function.  \texttt{Gkeyll} uses penalty fluxes defined as 
\begin{align}
  \bm{n}\cdot\bm{F}^j(\bm{\alpha}_h^{j-}f_h^{j-},\bm{\alpha}_h^{j+}f_h^{j+})
  = \frac{1}{2}\bm{n}\cdot(\bm{\alpha}_h^{j-}f_h^{j-}
    +\bm{\alpha}_h^{j+}f_h^{j+}) - \frac{A}{2}(f_h^{j+} - f_h^{j-}).
\end{align}
\texttt{Gkeyll 1.0} implements the Lax-Friedrichs flux function for
which $A=\mathrm{max}(|\bm{n}\cdot\bm{\alpha}_h^{j+}|,
|\bm{n}\cdot\bm{\alpha}_h^{j-}|)$, which leads to an up-winding
scheme. This requires the values at the interface, which are readily
available in a nodal DG algorithm, but can be expensive to calculate
in a modal DG.  Instead, in \texttt{Gkeyll 2.0} the maximal
characteristic ``speed'' is used in each direction \citep{Juno2018}.
Note the ``speed'' has units of m/s only for the configuration space
directions.  For the velocity directions, it has the units of
acceleration.

Together, this gives the weak form of the governing equation,
\begin{align}\label{eq:model:weak}
  \int_{K^j}\left[\pfrac{f_h^j(t,\bm{z})}{t}\psi_t(\bm{z}) -
    \left(\bm{\alpha}_h^j(\bm{z})f_h^j(t,\bm{z})\right) \cdot
    \nabla_{\bm{z}}\psi_t(\bm{z})\right] \,d\bm{v} = -\oint_{\partial
    K^j} \bm{F}^j(t,\bm{z}) \psi_t(\bm{z}) \cdot d\bm{A}.
\end{align}

Now we can substitute for the distribution
function,\footnote{Einstein's summation convention is used here, e.g.,
  $f_n\psi_n = \sum_{n=0}^{N_p-1}f_n\psi_n$.}
\begin{align*}
  \int_{K^j}\pfrac{\widehat{f}_{n}^j(t)}{t}\psi_n(\bm{z})\psi_t(\bm{z})
  \,d\bm{z} &= \int_{K^j} \widehat{f}_m^j(t) \psi_m(\bm{z})
  \bm{\alpha}_h^j(\bm{z}) \cdot \nabla_{\bm{z}} \psi_t(\bm{z}) \,d\bm{z}
  -\oint_{\partial K^j} \bm{F}^j(t,\bm{z}) \psi_t(\bm{z}) \cdot
  d\bm{A},\\ \pfrac{\widehat{f}_{n}^j(t)}{t}
  \underbracket{\int_{K^j}\psi_n(\bm{z})\psi_t(\bm{z})
    \,d\bm{z}}_{\mathcal{M}^j_{nt}} &= \widehat{f}_m^j(t) \int_{K^j}
  \bm{\alpha}_h^j(\bm{z}) \cdot \psi_m(\bm{z})\nabla_{\bm{z}}
  \psi_t(\bm{z}) \,d\bm{z} -\oint_{\partial K^j} \bm{F}^j(t,\bm{z})
  \psi_t(\bm{z}) \cdot d\bm{A}.
\end{align*}
The matrix $\mathcal{M}^j_{nt}=\int_{K^j} \psi_n\psi_t\,d\bm{z}$ is
called the mass matrix.  Now we can finally rearrange the equation to
usable form,
\begin{align}\label{eq:model:weak2}
  \pfrac{\widehat{f}_{n}^j(t)}{t} =
  \left(\mathcal{M}^j_{nt}\right)^{-1}\Big[\underbracket{\widehat{f}_m^j(t)
      \int_{K^j} \bm{\alpha}_h^j(\bm{z}) \cdot \big(\psi_m(\bm{z})
      \nabla_{\bm{z}} \psi_t(\bm{z})\big) \,d\bm{z}}_{\text{volume
        term}} -\underbracket{\oint_{\partial K^j} \bm{F}^j(t,\bm{z})
      \psi_t(\bm{z}) \cdot d\bm{A}}_{\text{surface term}}\Big].
\end{align}

The next question is how to evaluate the integrals. Since the basis
functions inside the integrals are polynomials, quadratures can be
used to calculate the integrals exactly. In general, quadratures
replace the definite integral from -1 to 1 with the weighted sum of
the nodal values. In 1D:
\begin{align}\label{eq:model:quadrature}
  \int_{-1}^1 f(\eta) \,d\eta \approx \sum_{i=1}^{N_q} w_if(\eta_i), \quad
  f\in\mathcal{P}^{p'}.
\end{align}
For the Gauss-Legendre (GL) quadrature, the equality is exact for $p'
\leq 2N_q - 1$, i.e.,
\begin{align*}
  \int_{-1}^1 f(\eta) \,d\eta = \sum_{i=1}^{N_q} w_if(\eta_i), \quad
  f\in\mathcal{P}^{p'}, \quad p'\leq 2N_q - 1.
\end{align*}
However, in the DG algorithm \eqrp{model:weak2}
the integrals are the product of basis functions. So, in order
to calculate the mass matrix exactly, we need at least $(2p+1)/2$
quadrature points.  The weights and nodes for the (GL) quadrature are
listed in the Table\thinspace\ref{tab:model:quadrature}.  The
generalization to higher dimensions is done using the tensor
product. For example in 2D:
\begin{align*}
  \int_{-1}^1 \int_{-1}^1 f(\eta_x,\eta_y) \,d\eta_xd\eta_y \approx \sum_{i=1}^{N_q}
  \sum_{j=1}^{N_q} w_iw_j f(\eta_{x,i}, \eta_{y,j}), \quad
  f\in\mathcal{P}^{p'},
\end{align*}
where $w_{i,j}$ are the 1D weights and $\eta_{x,i}$ and $\eta_{y,j}$
are the 1D abscissas.

\begin{table}[!htb]
 \caption[Weights and nodes for the Gauss-Legendre quadrature]{Weights
   and nodes for the Gauss-Legendre quadrature
   \eqrp{model:quadrature}.  The nodes (abscissas) are the roots
   of the Legendre polynomial $P_{N_q}(\eta)$ and the weights
   $w_i=2/[(1-\eta_i^2)(P_{N_q}'(\eta_i))^2]$ \citep{Abramowitz1985}.}
 \label{tab:model:quadrature}
 \begin{center}
   \begin{tabular}{ccc}
     \toprule $N_q$ & $\eta_i$ & $w_i$ \\ \midrule 1 & 0 & 2 \\ \midrule
     2 & $\pm\frac{1}{\sqrt{3}}$ & 1 \\ \midrule \multirow{2}{*}{3} &
     0 & $\frac{8}{9}$ \\ & $\pm\sqrt{\frac{3}{5}}$ &$
     \frac{5}{9}$\\ \midrule \multirow{2}{*}{4} &
     $\pm\sqrt{\frac{3}{7}-\frac{2}{7}\sqrt{\frac{6}{5}}}$ &
     $\frac{18+\sqrt{30}}{36}$ \\ &
     $\pm\sqrt{\frac{3}{7}+\frac{2}{7}\sqrt{\frac{6}{5}}}$&
     $\frac{18-\sqrt{30}}{36}$ \\ \midrule \multirow{3}{*}{5} & 0 &
     $\frac{128}{225}$ \\ &
     $\pm\frac{1}{3}\sqrt{5-2\sqrt{\frac{10}{7}}}$ &
     $\frac{322+13\sqrt{70}}{900}$ \\ &
     $\pm\frac{1}{3}\sqrt{5+2\sqrt{\frac{10}{7}}}$ &
     $\frac{322-13\sqrt{70}}{900}$\\ \bottomrule
   \end{tabular}
 \end{center}
\end{table}

The exact integration gets even more computationally demanding for the
volume term \eqrp{model:weak2} where we have the product of three
basis functions (due to multiplication with force terms) instead of
just two.  However, after splitting the unknowns into the expansion
coefficients (functions of just time) and the basis functions
(functions of just position) the integrals do not change in time and,
therefore, can be precomputed.  Typically, such matrices are evaluated
at the beginning of the simulation and then the matrix multiplication
is performed during the run-time.  But \cite{Juno2018} go one step
further.  Their algorithm utilizes computer algebra systems like
\textit{Maxima} or \textit{Mathematica} to evaluate the integrals
exactly and then directly generate the computational kernels with
expanded matrix multiplication.  This entirely removes the need to
compute quadratures during run-time.\footnote{At least for the volume
  and surface terms of the Vlasov equation \eqrp{model:weak}. Other
  parts of the model like collisions or boundary conditions might
  still require an exact integration which cannot be precomputed.}

Finally, we need to address the integration variables and the
integration bounds.  While the integrals in \eqr{model:weak2} are
performed over the physical coordinates, the quadratures are defined
on $I=[-1,1]^d$.  Therefore, the integrals need to be correctly
transformed.  Starting with the mass matrix, we get
\begin{align}\begin{aligned}
  \mathcal{M}^j_{nt} &= \int_{K^j} \psi_n(\bm{z})\psi_t(\bm{z})
  \,d\bm{z}, \\ &= \int_I \psi_n\big(\bm{z}^j(\bm{\eta})\big)
  \psi_t\big(\bm{z}^j(\bm{\eta})\big)
  \,\left|\frac{d\bm{z}^j}{d\bm{\eta}}\right| d\bm{\eta}, \\
  &= \int_I \widehat{\psi}_n(\bm{\eta})
  \widehat{\psi}_t(\bm{\eta})
  \,\left|\frac{d\bm{z}^j}{d\bm{\eta}}\right| d\bm{\eta}
\end{aligned}\end{align}
where
\begin{align*}
  \left|\frac{d\bm{z}^j}{d\bm{\eta}}\right| = \left| \begin{matrix} 
    \pfrac{x^j}{\eta_x} & \pfrac{x^j}{\eta_y} & \cdots \\
    \pfrac{y^j}{\eta_x} & \pfrac{y^j}{\eta_y} & \cdots \\
    \vdots & \vdots & \ddots
  \end{matrix} \right|
\end{align*}
is the Jacobian of the transformation.  Note that on uniform Cartesian
meshes, the transformation is, for example for $x$, given as
\begin{align*} 
  x^j(\eta_x) = \eta_x\frac{\Delta x}{2} + x_c^j,
\end{align*}
where $\Delta x$ is the uniform cell dimension and $x_c^j$ is the
position of the center of cell $j$. All the cross-terms are zero.
The Jacobian then simplifies to
\begin{align*}
  \left|\frac{d\bm{z}^j}{d\bm{\eta}}\right| = \frac{1}{2^d} \prod_{i=1}^d\Delta z_i
\end{align*}
Then, for the mass matrix, we can write
\begin{align}\begin{aligned}
    \mathcal{M}^j_{nt} &= \int_I \widehat{\psi}_n(\bm{\eta})
    \widehat{\psi_t}(\bm{\eta})
    \,\left|\frac{d\bm{z}^j}{d\bm{\eta}}\right| d\bm{\eta} \\ &=
    \frac{\prod_{i=1}^d\Delta z_i}{2^d} \int_I
    \widehat{\psi}_n(\bm{\eta}) \widehat{\psi_t}(\bm{\eta}) \,d\bm{\eta} \\ &=
    \frac{\prod_{i=1}^d\Delta z_i}{2^d} \widehat{\mathcal{M}}_{nt}
\end{aligned}\end{align}

For clarity, we split the volume term,
\begin{align*}
  \int_{K^j} \left[ \bm{v} \cdot \big(\psi_m(\bm{z}) \nabla_{\bm{x}}
    \psi_t(\bm{z})\big) + \frac{q}{m} \big(\bm{E}_h^j(t,\bm{x}) +
    \bm{v}\times\bm{B}_h^j(t,\bm{x})\big) \cdot \big(\psi_m(\bm{z})
    \nabla_{\bm{v}} \psi_t(\bm{z})\big)\right] \,d\bm{z}
\end{align*}
and then replace $\bm{v}^j$ by $\bm{v} - \bm{v}_c^j + \bm{v}_c^j$.
Focusing only on the first part and again assuming Cartesian mesh, we
get
\begin{align}\begin{aligned}
  \int_{K^j} \bm{v} \cdot \big(\psi_m(\bm{z}) \nabla_{\bm{x}}
  \psi_t(\bm{z})\big) \,d\bm{z} =& \int_{K^j} \big(\bm{v} - \bm{v}_c^j
  + \bm{v}_c^j\big) \cdot \big(\psi_m(\bm{z}) \nabla_{\bm{x}}
  \psi_t(\bm{z})\big) \,d\bm{z},\\ =& \int_{K^j} \big(\bm{v} -
  \bm{v}_c^j\big) \cdot \big(\psi_m(\bm{z}) \nabla_{\bm{x}}
  \psi_t(\bm{z})\big) \,d\bm{z}\\ &+ \bm{v}_c^j \cdot \int_{K^j}
  \big(\psi_m(\bm{z}) \nabla_{\bm{x}} \psi_t(\bm{z})\big)
  \,d\bm{z},\\ =& \sum_{i=1}^3 \int_{I} \frac{\eta_{i+3} \Delta
    v_{i+3}}{2} \pfrac{\widehat{\psi}_t(\bm{\eta})}{\eta_i}\frac{2}{\Delta
    v_{i+3}} \widehat{\psi}_m(\bm{\eta}) \,
  \left|\frac{d\bm{z}^j}{d\bm{\eta}}\right|d\bm{\eta}\\ &+
  \sum_{i=1}^3 (v_c^j)_i \int_{I}
  \pfrac{\widehat{\psi}_t(\bm{\eta})}{\eta_i}\frac{2}{\Delta v_{i+3}}
  \widehat{\psi}_m(\bm{\eta}) \,
  \left|\frac{d\bm{z}^j}{d\bm{\eta}}\right|d\bm{\eta}, \\ =&
  \frac{\prod_{i=1}^d\Delta z_i}{2^d} \Big[ \int_{I} \bm{\eta}_{\bm{v}}
  \cdot \big(\nabla_{\bm{\eta}_x}\widehat{\psi}_t(\bm{\eta})
  \big)\widehat{\psi}_m(\bm{\eta}) \,d\bm{\eta} \\ &+ \frac{2\bm{v}_c^j}{\Delta \bm{v}} \cdot
  \int_{I} \big(\nabla_{\bm{\eta}_x}\widehat{\psi}_t(\bm{\eta})
  \big)\widehat{\psi}_m(\bm{\eta}) \, d\bm{\eta}\Big],
\end{aligned}\end{align}
where we used index notation for clarity.  Note that the same factor
appears in front of this integral as it does in front of the mass
matrix.  The second part with the Lorentz force behaves in similar
manner after the $\bm{E}_h$ and $\bm{B}_h$ are expanded into
configuration space basis functions.

%---------------------------------------------------------------------
\subsection{Discrete Maxwell's Equations}

Unlike the Vlasov equation \eqrp{model:vlasov} Maxwell's equations are
only defined in the configuration space, $\Omega$, so the polynomial
space needs to be contracted,
\begin{align}
  \mathcal{X}_h^p := \mathcal{S}_h^p\backslash \Omega.
\end{align}
Apart from that, Faraday's \eqrp{model:faraday} and Amp\`{e}re's
\eqrp{model:ampere} laws can be rewritten in the weak form in a
similar manner to the derivation above;
\begin{align}\label{eq:model:faraday_weak}
  \int_{\Omega_j}\pfrac{\bm{B}_h^j(t,\bm{x})}{t}\varphi_t(\bm{x})
  \,d\bm{x} = \oint_{\partial \Omega_j} \left(
  \bm{E}^*_h(t,\bm{x})\varphi_t^{-}(\bm{x}) \right) \times d\bm{A} -
  \int_{\Omega_j} \bm{E}_h^j(t,\bm{x}) \times \nabla_{\bm{x}}
  \varphi_t(\bm{x}) \,d\bm{x}
\end{align}
and
\begin{align}\label{eq:model:ampere_weak}
  \varepsilon_0\mu_0 \int_{\Omega_j} \pfrac{\bm{E}_h^j(t,\bm{x})}{t}
  \varphi_t(\bm{x}) \,d\bm{x} =& \oint_{\partial \Omega_j}
  \left(\bm{B}^*_h(t,\bm{x}) \varphi^{-}_t(\bm{x}) \right) \times
  d\bm{A} + \int_{\Omega_j} \bm{B}_h^j(t,\bm{x}) \times
  \nabla_{\bm{x}}\varphi_t(\bm{x}) \,d\bm{x} \nonumber\\ &- \mu_0
  \int_{\Omega_j} \bm{j}_h^j(t,\bm{x})\varphi_t(\bm{x}) \,d\bm{x},
\end{align}
where $\varphi_t(\bm{x}) \in\mathcal{X}_h^p$ is a test function.
Similar to the discretization of the Vlasov equation, $\bm{E}_h$ and
$\bm{B}_h$ are not defined at the cell interfaces.  \cite{Juno2017}
consider two options for the flux function -- central fluxes and upwind
fluxes.  The central are simply defined as
\begin{align}\begin{aligned}
  \bm{E}_h^* &= \llbracket\bm{E_h}\rrbracket, \\
  \bm{B}_h^* &= \llbracket\bm{B_h}\rrbracket,
\end{aligned}\end{align}
where $\llbracket\cdot\rrbracket$ is the averaging operator, $\llbracket
g \rrbracket = (g^{+} + g^{-})/2$.  Upwind fluxes in the local face
coordinate system\footnote{The direction 1 is perpendicular to the
  interface and 2 and 3 are tangential to it.  Note that since the
  $\bm{E}_h$ and $\bm{B}_h$ appear in the vector product with the
  normal to the interface, only the components 2 and 3 are needed.}
are
\begin{align}\begin{aligned}
  E_{h,2}^* &= \llbracket E_{h,2}\rrbracket - \mathrm{c} \{B_{h,3}\}, \\
  E_{h,3}^* &= \llbracket E_{h,3}\rrbracket + \mathrm{c} \{B_{h,2}\}, \\
  B_{h,2}^* &= \llbracket B_{h,2}\rrbracket + \{E_{h,3}\}/\mathrm{c}, \\
  B_{h,3}^* &= \llbracket B_{h,3}\rrbracket - \{E_{h,2}\}/\mathrm{c},
\end{aligned}\end{align}
where $\mathrm{c}$ is the speed of light, and $\{\cdot\}$ is the jump
operator, $\{ g \} = (g^{+} - g^{-})/2$, \citep{Juno2017}.  While the
central fluxes are required for the Vlasov-Maxwell scheme to conserve
energy exactly (see \ser{model:conserve}), their usage can lead to
numerical instabilities \citep{Hesthaven2004}.  Therefore, only the
upwind fluxes are used through this work.

Note that evolving the fields with discrete Faraday's
\eqrp{model:faraday_weak} and and Amp\`{e}re's
\eqrp{model:ampere_weak} laws does not enforce either Gauss law,
$\nabla\cdot\bm{E} = \rho_c/\varepsilon_0,$ nor the
$\nabla\cdot\bm{B}=0$ condition. This can lead to errors, which need
to be cleaned, for example, with perfectly-hyperbolic cleaning methods
\citep{Munz2000}.  The methods work well for the $\nabla\cdot\bm{B}$
cleaning but are problematic for the electric field because the
divergence errors in $\bm{E}$ need to be self-consistently fixed in both the
Vlasov as well as Maxwell solvers.  $\nabla\cdot\bm{E}$ error cleaning
is a topic of current research for continuum kinetic methods.
However, \cite{Juno2017} looked into the evolution of
$\nabla\cdot\bm{E}-\rho_c/\varepsilon_0$ in \texttt{Gkeyll}
simulations and came to the conclusion that the correction is not
required for most problems.  Still, it is important that
$\nabla\cdot\bm{E}$ errors are not introduced by inconsistent initial
conditions.  See \ser{benchmark:landau_sim} for an example of
a simulation with initial conditions violating Gauss law.

%---------------------------------------------------------------------
\subsection{Choice of the Basis Functions and the Polynomial Space}
\label{sec:model:choice}

The choice of the basis functions, $\widehat{\psi}_n$ and
$\widehat{\varphi}_n$, is crucial for the numerical method.  As it was
discussed above \eqrp{model:weak2}, the DG algorithm requires
inverting the mass matrix.  Its condition number can therefore be used
as a good metric to assess the choice of base.

Let us start with the intuitive choice -- monomials,
\begin{align}
  \widehat{\psi}_k(\eta) &= \eta^k, \quad \forall k = 0, ..., p, \quad
  \eta \in [-1,1].
\end{align}
For example, for $p=4$ we get the following mass matrix and condition
number,
\begin{align*}
  \widehat{\mathcal{M}}_{kl} = \int_{-1}^1 \eta^k \eta^l \,d\eta
  = \begin{pmatrix} 
    2 & 0 & \frac{2}{3} & 0 & \frac{2}{5} \\ 0 &
    \frac{2}{3} & 0 & \frac{2}{5} & 0 \\ \frac{2}{3} & 0 & \frac{2}{5}
    & 0 & \frac{2}{7} \\ 0 & \frac{2}{5} & 0 &\frac{2}{7} & 0
    \\ \frac{2}{5} & 0 & \frac{2}{7} & 0 & \frac{2}{9}
  \end{pmatrix}, \quad
  \kappa^\infty(\widehat{\mathcal{M}}) = \frac{8211}{16}.
\end{align*}
The logarithm of condition number gives an estimate of how many digits
are lost in solving the linear system. For this case,
$\mathrm{log}_{10}(8211/16) \approx 2.7$.  The plot for $p=1,...,10$
is in \fgr{model:cond_number}.  Note that as the
polynomial order approaches 10 the loss in precision becomes comparable
to the single float point number precision.  This is particularly
important for simulations running on Graphical Processing Units (GPUs)
which typically run much faster with just single precision.

\begin{figure}[!htb]
  \centering
  \includegraphics[width=.9\textwidth]{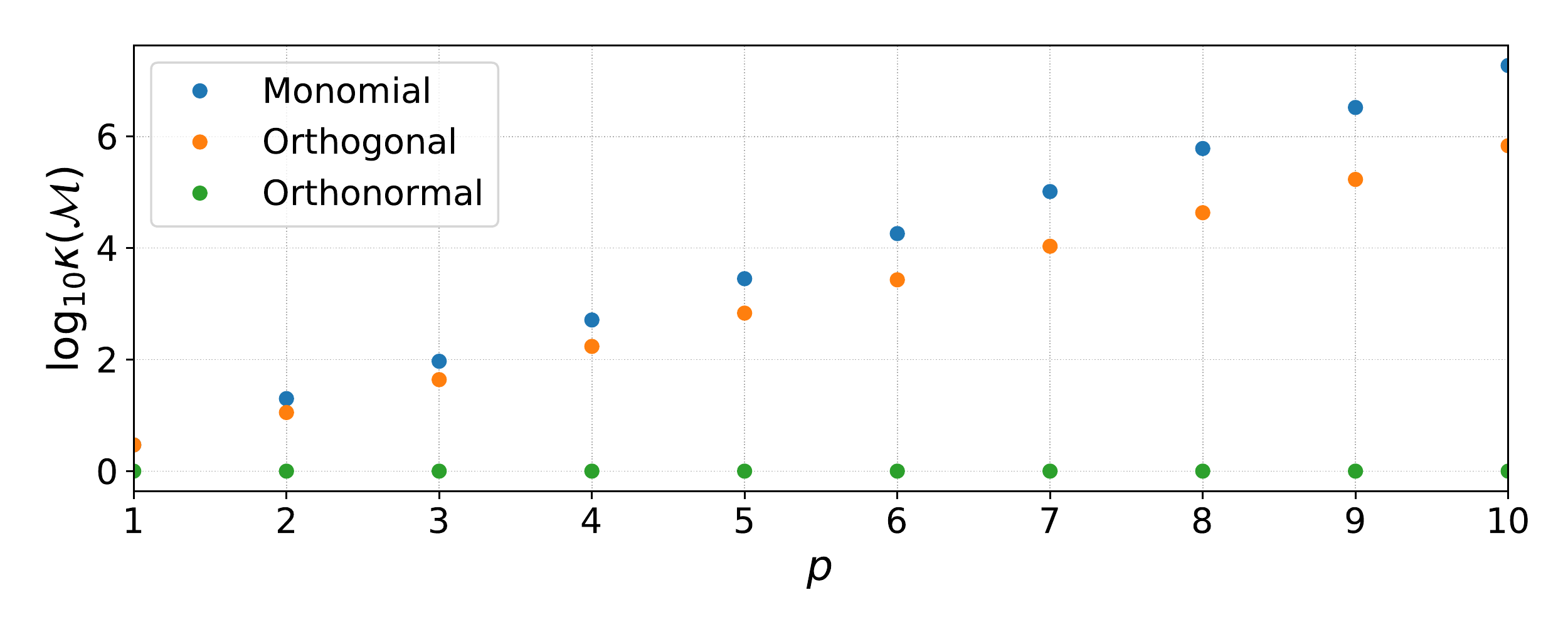}
  \caption[Condition number for mass matrices]{Logarithm of the
    condition number for mass matrices of the monomial basis,
    $\widehat{\psi}_k(\eta) = \eta^k$, and the orthogonal basis
    constructed from it using GS orthogonalization. The logarithm
    provides an estimate of digits lost in solving the linear
    system.}
  \label{fig:model:cond_number}
\end{figure}

The conditioning can be improved by constructing an orthonormal
basis.  First we need to define the inner product,
\begin{align}
  \langle f,g\rangle = \int_{-1}^1 f(\eta)g(\eta) \,d\eta,
\end{align}
and then we can use the Gram-Schmidt orthogonalization to construct an
orthogonal basis. By orthogonalization of the monomials, we
obtain:\footnote{Note that these polynomials are similar to the
  Legendre polynomials, which are typically defined using the
  recursion, $$P_0(\eta) =1,\quad P_1(\eta)=\eta,$$$$(n+1)P_{n+1}(\eta) =
  (2n+1)\eta P_n(\eta)-nP_{n-1}(\eta).$$ However, there is a difference in
  the normalization -- the Legendre polynomials are normalized so they
  are equal to $\pm1$ at the edges of $[-1,1]$, which is
  not the case for the polynomials obtained simply using the GS
  orthogonalization.}
\begin{align*}
  \widehat{\psi}_0^{OG} (\eta) &= 1 \\
  \widehat{\psi}_1^{OG} (\eta) &= \eta \\
  \widehat{\psi}_2^{OG} (\eta) &= \frac{3\eta^2-1}{3} \\
  \widehat{\psi}_3^{OG} (\eta) &= \frac{\eta(5\eta^2-3)}{5} \\
  \widehat{\psi}_4^{OG} (\eta) &= \frac{35\eta^4-30\eta^2+3}{35}
\end{align*}

\fgr{model:cond_number} shows the significant
improvement in the condition number. However, we can go one step
further and construct the orthonormal basis using
\begin{align}
  \widehat{\psi}_k^{ON}(\eta) = \frac{\widehat{\psi}_k^{OG}}{\sqrt{\langle
      \widehat{\psi}_k^{OG}, \widehat{\psi}_k^{OG}\rangle}},
\end{align}
which gives:
\begin{align*}
  \widehat{\psi}_0^{ON} (\eta) &= \frac{1}{\sqrt{2}} \\
  \widehat{\psi}_1^{ON} (\eta) &= \frac{\sqrt{3}\eta}{\sqrt{2}} \\
  \widehat{\psi}_2^{ON} (\eta) &= \frac{\sqrt{5}(3\eta^2-1)}{\sqrt{2}^3} \\
  \widehat{\psi}_3^{ON} (\eta) &= \frac{\sqrt{7}\eta(5\eta^2-3)}{\sqrt{2}^3} \\
  \widehat{\psi}_4^{ON} (\eta) &= \frac{3(35\eta^4-30\eta^2+3)}{\sqrt{2}^7}
\end{align*}
Having the orthonormal basis by definition not only guarantees the
condition number of 1 but also allows for efficient pre-generation of
computational kernels.  The efficiency comes from the fact the the
matrices constructed with the orthonormal basis are generally sparse.
This is particularly important with the precomputed machine-generated
code discussed in \ser{model:dgck}.  As are the integrals
precalculated, expanded matrix multiplications can be limited only to
non-zero elements, which significantly decreases the computational
costs.

An interesting insight is obtained by plotting the basis on the
$[-1,1]$ interval (see
\fgr{model:basis}). While monomials start merging
for higher polynomial orders, the orthonormal basis is clearly more
linearly independent and provides ``better coverage'' of the space.
\begin{figure}[!htb]
  \centering
  \includegraphics[width=.9\textwidth]{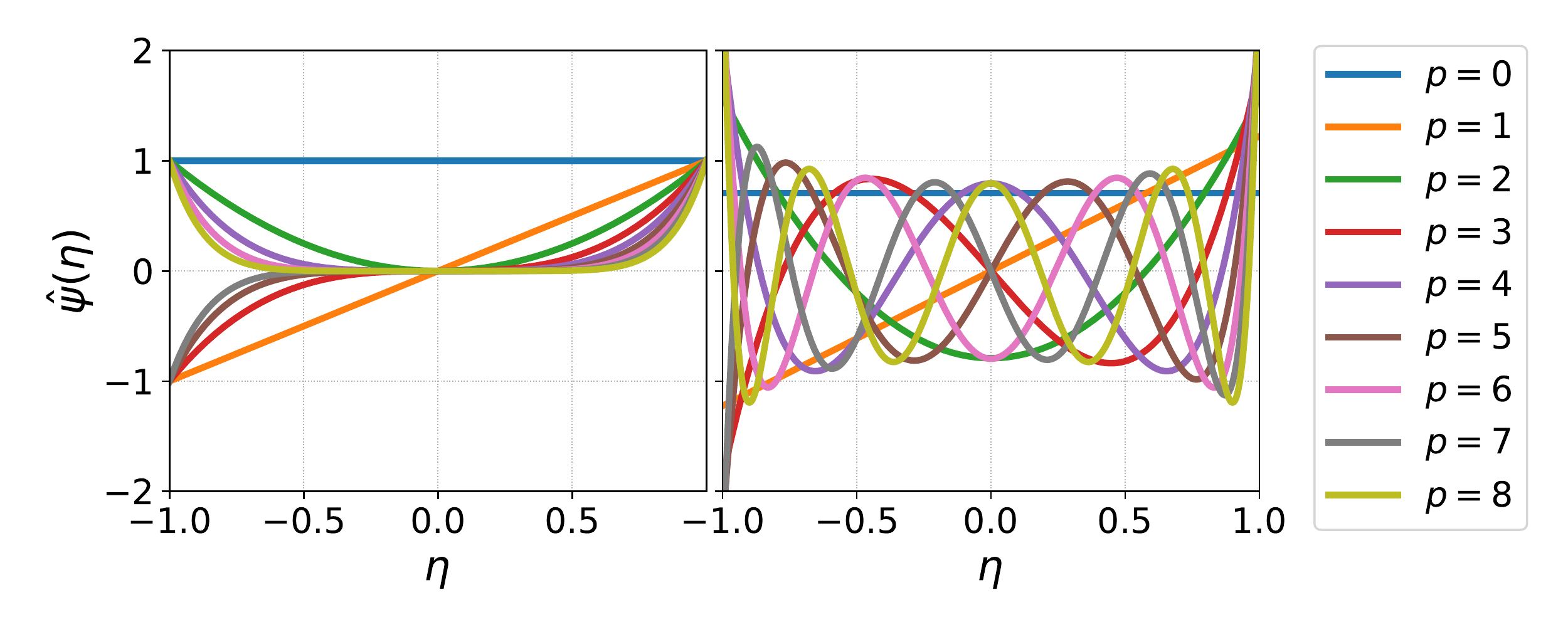}
  \caption[Comparison of the monomial and the orthonormal
    basis]{Comparison of the monomial basis (left) and the orthonormal
    basis constructed out of it using the GS orthonormalization
    (right). While monomials start merging for higher polynomial
    orders, the orthonormal basis is clearly more linearly independent
    and provides ``better coverage'' of the space.}
  \label{fig:model:basis}
\end{figure}

Finally, the generalization of the polynomial space $\mathcal{S}_h^p$
to higher dimensions needs to be discussed.  The standard approach on
a Cartesian mesh is to use the tensor product of the 1D polynomials.
For example for a 2D simulation with one configuration space
component, $\eta_x$, and one velocity component, $\eta_{v_x}$, we can
construct the space as
\begin{align}
  \mathcal{S}_h^p := \spn_{0<k,l<p} \{\eta_x^k \eta_{v_x}^l\}.
\end{align}
However, we need to keep in mind that the number of degrees of freedom
grows for the tensor product as $(p+1)^d$, where $d$ is the number of
dimensions (see Table\thinspace\ref{tab:model:tensor}).  This
exponential scaling is commonly known as the ``curse of
dimensionality''.  It is particularly problematic for the continuum
kinetic method, considering it requires discretizing up to six
dimension, and makes the tensor product difficult to use.
\begin{table}[!htb]
 \caption[Number of degree of freedom for the tensor product Lagrange
   polynomial space]{Number of degree of freedom for the tensor
   product Lagrange polynomial space, $(p+1)^d$, where $p$ is the
   polynomial order and $d$ is the number of dimensions.}
 \label{tab:model:tensor}
 \begin{center}
  \begin{tabular}{c|ccccccc}
    \toprule
    $d/p$ & 1 & 2 & 3 & 4 & 5 & 6 & 7\\
    \midrule
    2 & 4 & 9 & 16 & 25 & 36 & 49 & 64\\ 
    3 & 8 & 27 & 64 & 125  & 216 & 343 & 512\\
    4 & 16 & 81 & 256 & 625 & 1\,296 & 2\,401 & 4\,096\\
    5 & 32 & 343 & 1\,024 & 3\,125 & 7\,776 & 16\,807 & 32\,768\\
    6 & 64 & 729 & 4\,096 & 15\,625 & 46\,656 & 117\,649 & 262\,144 \\
    \bottomrule
  \end{tabular}
 \end{center}
\end{table}

For that reason, \texttt{Gkeyll} implements two reduced sets. The
first one is the Serendipity polynomial space, which is
constructed from the tensor product space by removing terms with
``super-linear'' order bigger than $p$.  The ``super-linear'' order of
a term is calculated by adding the order of each indeterminate bigger
than 1, i.e., the polynomial order of $\eta_x^3\eta_{v_x}\eta_{v_y}^2$
is 5. The 1X1V second order polynomial space can be then defined as
\begin{align*}
  \mathcal{S}_h^2 := \spn \{1,\, \eta_x,\, \eta_{v_x},\,
  \eta_x\eta_{v_x},\, \eta_x^2,\, \eta_{v_x}^2,\, \eta_x^2\eta_{v_x},
  \,\eta_x\eta_{v_x}^2,\, \hcancel{\eta_x^2\eta_{v_x}^2}\}.
\end{align*}
Still, nodal Serendipity retains the number of degrees of freedom at
the faces of each cell \citep{Arnold2011}.\footnote{Note that this is
  true only for the Cartesian grids (structured quadrilaterals).}  The
total number of degrees of freedom is
\begin{align}
  N_p = \sum_{i=0}^{\mathrm{min}(d,p/2)}
  2^{n-i}\binom{d}{i}\binom{p-i}{i},
\end{align}
which gives $N_p=8$ for $p=2$ and 1X1V \citep{Juno2017}. While this is
not a big difference in comparison to the nine degrees of freedom of
the tensor product, the scaling is much better for higher polynomial
orders and dimensions (see
Table\thinspace\ref{tab:model:serendipity}).
\begin{table}[!htb]
 \caption[Number of degree of freedom for the Serendipity polynomial
   space]{Number of degree of freedom for the Serendipity polynomial
   space, $\sum_{i=0}^{\mathrm{min}(d,p/2)}
   2^{n-i}\binom{d}{i}\binom{p-i}{i}$, where $p$ is the polynomial
   order and $d$ is the number of dimensions.}
 \label{tab:model:serendipity}
 \begin{center}
    \begin{tabular}{c|ccccccc}
      \toprule
      $d/p$ & 1 & 2 & 3 & 4 & 5 & 6 & 7\\
      \midrule
      2 & 4 & 8 & 12 & 17 & 23 & 30 & 38 \\
      3 & 8 & 20 & 32 & 50 & 74 & 105 & 144\\
      4 & 16 & 48 & 80 & 136 & 216 & 328 & 480 \\
      5 & 32 & 112 & 192 & 352 & 592 & 952 & 1\,472\\
      6 & 64 & 256 & 448 & 880 & 1\,552 & 2\,624 & 4\,256\\
      \bottomrule
    \end{tabular}  
 \end{center}
\end{table}

Apart from the Serendipity polynomial space, \texttt{Gkeyll 2.0} also
implements even less computationally expensive space -- maximal order
space. Again, for 1X1V:
\begin{align}
  \mathcal{S}_h^p = \spn_{0<k+l<p} \{\eta_x^k \eta_{v_x}^l\}.
\end{align}

After the polynomial space is chosen, the 1D process listed above can
be generalized to obtain an orthonormal basis set.  For the 1X1V
example, the inner product needs to be redefined,
\begin{align*}
  \langle f,g\rangle = \int_{-1}^1\int_{-1}^1
  f(\eta_x,\eta_{v_x})g(\eta_x,\eta_{v_x}) \,d\eta_x d\eta_{v_x}.
\end{align*}
Then, the $p=2$ 1X1V Serendipity basis\footnote{This particular
  basis set is used for many simulations through this work.}
implemented in \texttt{Gkeyll 2.0} can be calculated using the GS
orthogonalization and subsequent normalization,
\begin{align}\begin{aligned}\label{eq:model:1x1v}
  \widehat{\psi}_0(\eta_x,\eta_{v_x}) &= \frac{1}{2} \\
  \widehat{\psi}_1(\eta_x,\eta_{v_x}) &= \frac{\sqrt{3}\eta_x}{2} \\
  \widehat{\psi}_2(\eta_x,\eta_{v_x}) &= \frac{\sqrt{3}\eta_{v_x}}{2} \\
  \widehat{\psi}_3(\eta_x,\eta_{v_x}) &= \frac{3\eta_x\eta_{v_x}}{2} \\
  \widehat{\psi}_4(\eta_x,\eta_{v_x}) &= \frac{\sqrt{5}(3\eta_x^2-1)}{4} \\
  \widehat{\psi}_5(\eta_x,\eta_{v_x}) &= \frac{\sqrt{5}(3\eta_{v_x}^2-1)}{4} \\
  \widehat{\psi}_6(\eta_x,\eta_{v_x}) &= \frac{\sqrt{15}(3\eta_x^2-1)\eta_{v_x}}{4} \\
  \widehat{\psi}_7(\eta_x,\eta_{v_x}) &= \frac{\sqrt{15}\eta_x(3\eta_{v_x}^2-1)}{4} \\
\end{aligned}\end{align}

%---------------------------------------------------------------------
\subsection{Conservation Properties}\label{sec:model:conserve}

The conservation properties of the discretized Vlasov-Maxwell system
can now be assessed.  Here, only the propositions are listed with
brief comments and the reader is referred to \cite{Juno2017} for the
rigorous proofs.
\begin{prop}
  The Vlasov-Maxwell discrete scheme conserves the total number of
  particles.
\end{prop}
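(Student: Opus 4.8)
The plan is to show that the time derivative of the total particle number vanishes by exploiting the fact that constant functions belong to the test space $\mathcal{S}_h^p$. In the discrete system the total number of particles is
\begin{align*}
  N = \sum_s \sum_{j=1}^{N_c} \int_{K^j} f_{h,s}^j(t,\bm{z}) \, d\bm{z},
\end{align*}
so it suffices to show $dN/dt = 0$. Since the Vlasov equation \eqrp{model:vlasov} carries no source term and the argument below is identical for every species, I suppress the index $s$ and work species-by-species.

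First I would specialize the weak form \eqrp{model:weak} to the constant test function $\psi_t \equiv 1$, which is admissible because $\mathcal{P}^p$ contains the constants. As $\nabla_{\bm{z}}\psi_t = 0$, the volume term drops out entirely and the weak form collapses to
\begin{align*}
  \frac{d}{dt}\int_{K^j} f_h^j(t,\bm{z}) \, d\bm{z}
  = -\oint_{\partial K^j} \bm{F}^j(t,\bm{z}) \cdot d\bm{A}.
\end{align*}
Summing this identity over all cells $j=1,\dots,N_c$ then expresses $dN/dt$ as a sum of surface-flux integrals over every cell interface in the phase-space mesh.

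The crucial step is the pairwise cancellation of the interior surface terms. Each interior face is shared by two neighboring cells whose outward normals are equal and opposite, $\bm{n}$ and $-\bm{n}$. Because the penalty numerical flux is constructed so that the normal flux $\bm{n}\cdot\bm{F}$ evaluated from one cell equals minus the value evaluated from its neighbor — what leaves one cell enters the other — and because the test function takes the common value $1$ on both sides of the face, the two surface contributions exactly cancel. Hence only the integrals over the boundary of the global phase-space domain survive.

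Finally, I would dispose of these remaining boundary terms. Along the velocity-space boundaries the distribution function, and therefore the flux $\bm{\alpha}f_h$, is zero, since $f\to 0$ as $|\bm{v}|\to\infty$ and the computational velocity domain is chosen large enough that $f_h$ is negligible there; these contributions vanish. For the configuration-space boundaries the remaining term either cancels under periodic boundary conditions or represents the genuine physical particle flux through the walls, so that in a closed or periodic system $dN/dt = 0$, which establishes the proposition. The main obstacle is the careful bookkeeping of the surface terms: one must verify that the numerical flux really is orientation-consistent at each interface, so that interior faces cancel, and that the chosen boundary conditions inject or remove particles only through the intended physical wall flux and not through numerical artifacts.
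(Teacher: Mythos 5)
Your proposal is correct and follows essentially the same route as the paper's proof: choose the constant test function so the volume term vanishes, then use the single-valuedness (symmetry) of the numerical flux at each cell interface so the surface contributions cancel pairwise in the sum over cells. Your additional treatment of the global velocity-space and configuration-space boundary terms is a sensible completion of the sketch the paper gives.
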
 
\begin{proof}
  \eqr{model:weak2} holds for all the test functions, $\psi_t$.  The
  volume integral vanishes for special choice of $\psi_t$ and the
  surface integral is symmetric with the respect to the cell
  interface.
\end{proof}

\begin{prop}
  The phase-space incompressibility holds for the discrete system.
  \begin{align}
    \nabla_\mathrm{z} \cdot \bm{\alpha}_h = 0
  \end{align}
\end{prop}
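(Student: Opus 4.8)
The plan is to verify the identity by direct computation, exploiting the block structure of the phase-space advection velocity together with the fact that the discretization keeps the streaming velocity and the fields confined to their respective subspaces. Writing $\bm{z}=(\bm{x},\bm{v})$ and recalling the definition $\bm{\alpha}=(\bm{v},\,\tfrac{q}{m}(\bm{E}+\bm{v}\times\bm{B}))$ from just below \eqrp{model:vlasovz}, I would first split the divergence into its configuration-space and velocity-space parts,
\begin{align*}
  \nabla_{\bm{z}}\cdot\bm{\alpha}_h
  = \nabla_{\bm{x}}\cdot\bm{v}
  + \frac{q}{m}\,\nabla_{\bm{v}}\cdot\big(\bm{E}_h + \bm{v}\times\bm{B}_h\big),
\end{align*}
and then show that each of the three contributions vanishes separately.

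The first term is immediate: $\bm{v}$ is the velocity coordinate and carries no explicit dependence on $\bm{x}$, so $\nabla_{\bm{x}}\cdot\bm{v}=0$. The second term vanishes because the discrete fields $\bm{E}_h,\bm{B}_h\in\mathcal{X}_h^p$ are expanded purely in configuration-space basis functions; hence $\bm{E}_h$ is independent of $\bm{v}$ and $\nabla_{\bm{v}}\cdot\bm{E}_h=0$. For the third term I would reuse the Levi-Civita identity already invoked in the footnote to \eqrp{model:vlasovz}: writing $(\bm{v}\times\bm{B}_h)_i=\varepsilon_{ijk}v_jB_{h,k}$ and differentiating,
\begin{align*}
  \nabla_{\bm{v}}\cdot(\bm{v}\times\bm{B}_h)
  = \pfrac{}{v_i}\big(\varepsilon_{ijk}v_jB_{h,k}\big)
  = \varepsilon_{ijk}\,\delta_{ij}\,B_{h,k}
  = \varepsilon_{iik}B_{h,k} = 0,
\end{align*}
where $B_{h,k}$ is pulled out of the velocity derivative precisely because $\bm{B}_h$ has no $\bm{v}$-dependence, and the surviving contraction $\varepsilon_{iik}$ vanishes by the antisymmetry of the Levi-Civita symbol.

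The main point to argue carefully --- and the only place where \emph{discrete} really matters --- is that these structural independences survive the discretization. That is, I must confirm that $\bm{\alpha}_h$ is assembled using the exact velocity coordinate together with the configuration-space projections $\bm{E}_h,\bm{B}_h\in\mathcal{X}_h^p$, and is \emph{not} re-projected as a whole onto a truncated phase-space basis in a manner that could couple the $\bm{v}$ and $\bm{x}$ variables and spoil the cancellations. Under the natural assembly used in \ser{model:discvlasov} (velocity treated as a coordinate, fields expanded only over $\Omega$), the three cancellations above are exact and pointwise on each cell, so $\nabla_{\bm{z}}\cdot\bm{\alpha}_h=0$ holds identically. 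I expect no genuine analytic difficulty here: the computation mirrors the continuum argument exactly, and the whole content of the proposition is that the chosen representation of $\bm{\alpha}_h$ respects the splitting of phase space, so that the continuum incompressibility carries over verbatim.
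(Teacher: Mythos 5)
Your proof is correct and follows essentially the same route the paper relies on: the paper defers the rigorous argument to \cite{Juno2017}, but the key computation --- splitting $\nabla_{\bm{z}}\cdot\bm{\alpha}_h$ into the $\bm{x}$- and $\bm{v}$-divergences and killing the Lorentz-force term via $\varepsilon_{iik}B_{h,k}=0$ with $\bm{B}_h$ independent of $\bm{v}$ --- is exactly the one already invoked in the footnote to \eqr{model:vlasovz} when writing the Vlasov equation in conservative form. Your added remark that the discrete $\bm{E}_h,\bm{B}_h\in\mathcal{X}_h^p$ carry no velocity dependence is precisely the point that makes the continuum cancellation survive discretization, so nothing is missing.
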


\begin{prop}
  Electromagnetic energy is conserved exactly for central fluxes
  and bounded for upwind fluxes
  \begin{align}
    \sum_j\frac{d}{dt}
    \int_{\Omega_j}\left(\frac{\varepsilon_0}{2}|\bm{E}_h|^2 +
    \frac{1}{2\mu_0}|\bm{B}_h|^2\right) \,d\bm{x} \leq - \sum_i
    \int_{\Omega_j} \bm{j}_h\cdot \bm{E}_h \,d\bm{x}
  \end{align}
\end{prop}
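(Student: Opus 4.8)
The plan is to reproduce the continuous Poynting-theorem argument at the discrete level, exploiting the defining feature of a Galerkin method: since each Cartesian component of $\bm{E}_h^j$ and $\bm{B}_h^j$ lies in the test space $\mathcal{X}_h^p$, the discrete fields are themselves admissible test functions. First I would differentiate the cell energy and rewrite the two time derivatives using the discrete evolution equations. Concretely, in the weak Faraday law \eqrp{model:faraday_weak} I would take $\varphi_t$ equal to each component of $\bm{B}_h^j$ and sum, and in the weak Amp\`ere law \eqrp{model:ampere_weak} take $\varphi_t$ equal to each component of $\bm{E}_h^j$ and sum. After scaling Faraday by $1/\mu_0$ and Amp\`ere by $1/\mu_0$ and adding, the left-hand sides assemble into exactly $\frac{d}{dt}\int_{\Omega_j}\big(\frac{\varepsilon_0}{2}|\bm{E}_h|^2 + \frac{1}{2\mu_0}|\bm{B}_h|^2\big)\,d\bm{x}$, while the $-\mu_0\int\bm{j}_h\varphi_t$ term yields the Joule source $-\int_{\Omega_j}\bm{j}_h\cdot\bm{E}_h\,d\bm{x}$ on the right.

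Second, I would treat the two volume integrals together. The curl-type volume contributions combine, via the identity $\bm{B}\cdot(\nabla\times\bm{E}) - \bm{E}\cdot(\nabla\times\bm{B}) = \nabla\cdot(\bm{E}\times\bm{B})$, into $\frac{1}{\mu_0}\int_{\Omega_j}\nabla\cdot(\bm{E}_h\times\bm{B}_h)\,d\bm{x}$, which the divergence theorem turns into a surface integral of the interior-trace Poynting flux $\frac{1}{\mu_0}\oint_{\partial\Omega_j}(\bm{E}_h^-\times\bm{B}_h^-)\cdot\bm{n}\,dA$; this step is exact because the fields are cell-wise polynomials. At this point the whole per-cell energy rate is expressed as $-\int_{\Omega_j}\bm{j}_h\cdot\bm{E}_h$ plus one boundary integral combining the two numerical-flux surface terms (carrying $\bm{E}_h^*$ and $\bm{B}_h^*$) with this interior-trace Poynting flux.

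Third, I would sum over all cells and reorganize the result face by face. Every interior face is shared by two cells whose outward normals are opposite, so the contributions pair up. For the central fluxes $\bm{E}_h^* = \llbracket\bm{E}_h\rrbracket$, $\bm{B}_h^* = \llbracket\bm{B}_h\rrbracket$, I expect the two paired contributions to cancel identically once the averages are expanded and the scalar triple-product identities are applied, leaving only boundary-face terms; assuming the boundary conditions carry no net energy into the domain, this gives the exact equality. For the upwind fluxes I would write each flux as its central value plus the penalty corrections (the $\mathrm{c}\{B\}$ and $\{E\}/\mathrm{c}$ terms); the central part cancels as before, and the surviving penalty part should collect into a quadratic form in the jumps $\{E_2\},\{E_3\},\{B_2\},\{B_3\}$ at each interior face that is manifestly non-positive, of the form $-\oint[\mathrm{c}(\{E_2\}^2+\{E_3\}^2) + \mathrm{c}^{-1}(\{B_2\}^2+\{B_3\}^2)]$ up to positive constants, delivering the inequality.

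The hard part will be the face bookkeeping in the third step: getting every sign and orientation right when pairing the two sides of an interior face, and in particular checking that the interior-trace Poynting flux from the recombined volume term merges correctly with the numerical-flux surface terms so that the central-flux contributions cancel exactly rather than leaving a spurious residual. For the upwind case the same care is needed to confirm the surviving penalty terms truly form a sign-definite quadratic; this is cleanest in the local face frame (normal direction $1$, tangentials $2,3$) in which the upwind fluxes are defined, tracking how the tangential jumps enter. The boundary-face treatment --- showing the physical boundary conditions inject no energy --- is a separate hypothesis I would state explicitly rather than establish here.
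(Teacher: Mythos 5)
Your proposal is sound: the discrete Poynting-theorem argument --- testing the weak Faraday and Amp\`ere laws with $\bm{B}_h$ and $\bm{E}_h$ themselves, recombining the volume terms via $\bm{B}\cdot(\nabla\times\bm{E})-\bm{E}\cdot(\nabla\times\bm{B})=\nabla\cdot(\bm{E}\times\bm{B})$, and then pairing interior faces so that central fluxes cancel while the upwind penalty terms collect into a non-positive quadratic form in the tangential jumps --- is exactly the standard route. Note that the dissertation itself does not prove this proposition; it only states it and defers to \cite{Juno2017}, whose proof proceeds along the same lines you outline, so there is no divergence of method to report. The caveats you flag (face-orientation bookkeeping and the assumption that physical boundary faces inject no energy) are the right ones to be explicit about, and do not constitute gaps in the argument.
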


\begin{prop}
  If $|\bm{v}|^2 \in \mathcal{S}_h^p$, Vlasov-Maxwell scheme conserves
  energy exactly for central fluxes,
  \begin{align}
    \frac{d}{dt}\sum_j\sum_s \int_{K^j}
    \frac{1}{2}m_s|\bm{v}|^2f_{h,s} \,d\bm{z} + 
    \frac{d}{dt}\sum_j \int_{\Omega_j} 
    \left(\frac{\varepsilon_0}{2}|\bm{E}_h|^2 +
    \frac{1}{2\mu_0}|\bm{B}_h|^2\right) \,d\bm{x} = 0.
  \end{align}
\end{prop}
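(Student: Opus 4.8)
The plan is to split the total energy into its kinetic and electromagnetic parts, differentiate each in time, and show that the only surviving couplings cancel against one another. Writing $W_{\mathrm{kin}} = \sum_j\sum_s\int_{K^j}\tfrac{1}{2}m_s|\bm{v}|^2 f_{h,s}\,d\bm{z}$ and $W_{\mathrm{EM}} = \sum_j\int_{\Omega_j}(\tfrac{\varepsilon_0}{2}|\bm{E}_h|^2 + \tfrac{1}{2\mu_0}|\bm{B}_h|^2)\,d\bm{x}$, the central idea is that the weak forms \eqrp{model:weak}, \eqrp{model:faraday_weak}, and \eqrp{model:ampere_weak} each hold for every test function in the relevant discrete space, so I am free to insert the solution itself (or a simple function of it) as the test function. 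The hypothesis $|\bm{v}|^2\in\mathcal{S}_h^p$ is exactly what licenses using $\psi_t = \tfrac{1}{2}m_s|\bm{v}|^2$ in the Vlasov weak form, while $\bm{E}_h,\bm{B}_h\in\mathcal{X}_h^p$ lets me use them as test functions in Amp\`{e}re's and Faraday's laws.

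First I would handle $\tfrac{d}{dt}W_{\mathrm{kin}}$. Differentiating and substituting $\psi_t = \tfrac{1}{2}m_s|\bm{v}|^2$ into \eqrp{model:weak2}, the volume term contains $\nabla_{\bm{z}}\psi_t = (\bm{0},\,m_s\bm{v})$, so only the velocity-space part of $\bm{\alpha}$ contributes. Since $(\bm{v}\times\bm{B}_h)\cdot\bm{v}=0$, the magnetic force does no work and the volume term collapses to $\sum_s\int q_s\,\bm{E}_h\cdot\bm{v}\,f_{h,s}\,d\bm{z}$; integrating over velocity identifies $\sum_s q_s\int\bm{v}f_{h,s}\,d\bm{v}=\bm{j}_h$ via \eqrp{model:m1}, giving a net $\int_{\Omega_j}\bm{j}_h\cdot\bm{E}_h\,d\bm{x}$. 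The surface term vanishes on summation: because $|\bm{v}|^2$ is a single globally smooth polynomial, $\psi_t$ is continuous across every interface, and the numerical flux $\bm{F}^j$ is single-valued there, so the two contributions at each interior face (carrying opposite outward normals) cancel, with the domain-boundary faces removed by compact velocity support and the configuration-space boundary conditions. Hence $\tfrac{d}{dt}W_{\mathrm{kin}} = \sum_j\int_{\Omega_j}\bm{j}_h\cdot\bm{E}_h\,d\bm{x}$.

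Next I would compute $\tfrac{d}{dt}W_{\mathrm{EM}}$ by inserting $\varphi_t=\bm{E}_h$ into Amp\`{e}re's weak form \eqrp{model:ampere_weak} and $\varphi_t=\bm{B}_h$ into Faraday's weak form \eqrp{model:faraday_weak}, component-wise and then summed. The factor $\varepsilon_0\mu_0$ in Amp\`{e}re's law pairs with $\tfrac{1}{2\mu_0}$ so that the two time-derivative terms reproduce exactly $\tfrac{d}{dt}W_{\mathrm{EM}}$. The source term $-\mu_0\int\bm{j}_h\cdot\bm{E}_h$ produces precisely $-\int_{\Omega_j}\bm{j}_h\cdot\bm{E}_h\,d\bm{x}$, which cancels the kinetic coupling found above. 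The interior volume terms, of the form $\bm{B}_h\times\nabla\bm{E}_h$ and $-\bm{E}_h\times\nabla\bm{B}_h$, combine through the identity $\nabla\cdot(\bm{E}_h\times\bm{B}_h)=\bm{B}_h\cdot(\nabla\times\bm{E}_h)-\bm{E}_h\cdot(\nabla\times\bm{B}_h)$ into a Poynting divergence that must be matched against the surface integrals.

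The hard part will be the bookkeeping of these electromagnetic surface terms. With the central fluxes $\bm{E}_h^*=\llbracket\bm{E}_h\rrbracket$ and $\bm{B}_h^*=\llbracket\bm{B}_h\rrbracket$, I would show that at each interior interface the averaged numerical flux paired with the one-sided test functions from the two adjacent cells, together with the Poynting contribution from the volume curl terms, sums to zero once the opposite face normals are accounted for; this is the step where the central-flux choice is essential, since the upwind corrections $\{\cdot\}/\mathrm{c}$ would instead leave a sign-definite residual responsible for the mere boundedness in the upwind case. Collecting everything, the $\pm\int\bm{j}_h\cdot\bm{E}_h$ terms annihilate and all interface terms cancel, yielding $\tfrac{d}{dt}(W_{\mathrm{kin}}+W_{\mathrm{EM}})=0$.
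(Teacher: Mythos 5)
Your sketch is correct and follows the standard energy-conservation argument for the DG Vlasov--Maxwell scheme. Note that the paper itself does not prove this proposition --- it only states it with a brief remark and defers to \cite{Juno2017} for the rigorous proof --- and your approach (testing the Vlasov weak form against $\tfrac{1}{2}m_s|\bm{v}|^2$, which is admissible exactly because $|\bm{v}|^2\in\mathcal{S}_h^p$, testing Amp\`ere's and Faraday's weak forms against $\bm{E}_h$ and $\bm{B}_h$, cancelling the two $\bm{j}_h\cdot\bm{E}_h$ couplings, and using the continuity of $|\bm{v}|^2$ together with the single-valuedness of the central fluxes to annihilate the interface terms) is essentially the proof given in that reference.
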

In Remark 2, \citep{Juno2017} point out that at least piecewise
quadratic basis functions are required for $|\bm{v}|^2 \in
\mathcal{S}_h^p$.  However, they add that $|\bm{v}|^2$ can be
projected on linear basis set and the scheme will then conserve the
projected energy.

Note that the scheme does not conserve the momentum. However, the
error is often negligible \citep{Juno2017}.

\begin{prop}
  The scheme grows the discrete entropy monotonically, assuming $f_h$
  remains positive definite
  \begin{align}
    \sum_j\frac{d}{dt}\int_{K^j} - f_h \mathrm{ln}(f_h)\,d\bm{z} \geq
    0.
  \end{align}
\end{prop}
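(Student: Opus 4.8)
The plan is to reproduce the continuous $H$-theorem argument at the discrete level and to show that, thanks to the phase-space incompressibility $\nabla_{\bm z}\cdot\bm\alpha_h=0$, all the entropy production is pushed onto the cell interfaces, where the penalty flux fixes its sign. Writing $U(f):=-f\ln f$ for the entropy density and using that $f_h\in\mathcal{S}_h^p$ at every time (so $\partial_t f_h\in\mathcal{S}_h^p$ as well), I would first differentiate to obtain
\[
  \frac{dS}{dt}=\sum_j\int_{K^j}U'(f_h)\,\partial_t f_h\,d\bm z
  =-\sum_j\int_{K^j}\big(1+\ln f_h\big)\,\partial_t f_h\,d\bm z .
\]
The standing positivity assumption on $f_h$ is exactly what makes the entropy variable $w_h:=U'(f_h)=-(1+\ln f_h)$ well defined, so the integrand is $w_h\,\partial_t f_h$.

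Next I would use the incompressibility (Proposition~2) to isolate the entropy flux. For the exact dynamics $\nabla_{\bm z}\cdot\bm\alpha_h=0$ turns the Vlasov equation into pure advection of any entropy, $\partial_t U(f)+\nabla_{\bm z}\cdot(\bm\alpha_h U(f))=0$, i.e.\ the entropy flux is simply $\bm\alpha_h U(f)$. The discrete counterpart follows by substituting the weak form \eqr{model:weak} for $\partial_t f_h$: the volume term is constructed to be entropy-conservative and, after integrating by parts and using incompressibility, telescopes across cells into a single interface entropy flux, while the surface term carries the numerical flux $\bm F^j$. Because $\bm F^j$ is single-valued at each face, the contributions of the two adjacent cells combine, leaving only jump terms at the interfaces.

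The hard part is that $w_h=-(1+\ln f_h)$ is not a polynomial, so it is \emph{not} an admissible test function in \eqr{model:weak}; one cannot simply set $\psi_t=w_h$. I would deal with this by noting that, since $\partial_t f_h\in\mathcal{S}_h^p$, only the $L^2$ projection of $w_h$ onto $\mathcal{S}_h^p$ survives in $\int_{K^j}w_h\,\partial_t f_h\,d\bm z$, and by insisting that every integral be evaluated with quadratures exact for the relevant polynomial products so that the nonlinear $\ln f_h$ can be manipulated pointwise without aliasing. Controlling this projection/consistency step, and showing that the interior (volume) production vanishes identically while only the interface terms remain, is the delicate point; this is precisely where the rigorous treatment of \cite{Juno2017} is required.

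Finally, collecting the surviving interface contributions, each interior face yields a term proportional to $-(f_h^{j+}-f_h^{j-})(w_h^{j+}-w_h^{j-})$ with the nonnegative penalty coefficient $A\ge 0$. Since $w=U'(f)=-(1+\ln f)$ is strictly decreasing in $f$, the product $(f_h^{j+}-f_h^{j-})(w_h^{j+}-w_h^{j-})\le 0$, so every face contributes a nonnegative amount and $dS/dt\ge 0$. I would close by remarking that with central fluxes ($A=0$) all interface terms vanish and the discrete entropy is conserved, mirroring the continuous Vlasov equation; the monotone growth is therefore an upwinding-induced numerical dissipation, consistent with the conservation-versus-dissipation dichotomy already seen for the field energy in Proposition~3.
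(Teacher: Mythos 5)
First, note that the paper does not actually prove this proposition: the surrounding text states that ``only the propositions are listed with brief comments and the reader is referred to \cite{Juno2017} for the rigorous proofs,'' and no argument is given for the entropy statement. Your sketch therefore cannot be measured against an in-paper proof; judged on its own, it assembles the right ingredients --- positivity of $f_h$ so that $\ln f_h$ is defined, the incompressibility $\nabla_{\bm z}\cdot\bm\alpha_h=0$ so that in the continuous limit the entropy density is purely advected, and the monotone decrease of $U'(f)=-(1+\ln f)$ combined with $A\ge 0$ to make the penalty contribution at each face sign-definite. That is indeed the skeleton of the argument in \cite{Juno2017}.

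The genuine gap is the one you flag but do not close: the volume term. You correctly observe that since $\partial_t f_h\in\mathcal{S}_h^p$, only the $L^2$ projection $P w_h$ of $w_h=-(1+\ln f_h)$ enters $\int_{K^j} w_h\,\partial_t f_h\,d\bm z$, and that $P w_h$ is the only admissible test function. But your subsequent claim that the volume term ``telescopes across cells into a single interface entropy flux'' so that ``the interior (volume) production vanishes identically'' relies on the chain-rule identity $\bm\alpha_h f_h\cdot\nabla_{\bm z}(1+\ln f_h)=\bm\alpha_h\cdot\nabla_{\bm z}f_h=\nabla_{\bm z}\cdot(\bm\alpha_h f_h)$, which holds for $\ln f_h$ itself but \emph{not} for its projection. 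With $P w_h$ in place of $w_h$ the volume integral leaves a residual whose sign is not controlled by anything you have written down, and controlling it is precisely the content of the cited proof --- so as written your proposal reduces, at its crucial step, to the same appeal to \cite{Juno2017} that the paper makes. A secondary slip: the closing remark that central fluxes ($A=0$) conserve the discrete entropy does not follow. The consistency (averaged) part of the face term contributes $\bm n\cdot\bm\alpha_h\bigl(\llbracket f_h\rrbracket\,\{w_h\}\text{-type}\bigr)$ combinations that do not cancel against the jump of the entropy potential for $U(f)=-f\ln f$, because the arithmetic average is not the entropy-conservative (Tadmor) flux for this entropy pair; energy conservation with central fluxes (Propositions 3--4) works because $|\bm v|^2$ is a polynomial moment, not because central fluxes conserve arbitrary convex entropies.
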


%---------------------------------------------------------------------
\FloatBarrier
\subsection{Time-stepping}

The discussion in the previous sections leads to the construction of
the governing equation in the following form \eqrp{model:weak2}
\begin{align*}
  \pfrac{f}{t} = \mathcal{L}(f,t),
\end{align*}
where $\mathcal{L}$ is the DG spatial discretization operator on the
right-hand-side of the equation.  In order to discretize this equation
in time and evolve the solution, \texttt{Gkeyll} uses the Strong
Stability Preserving (SSP) Runge-Kutta (RK) schemes
\citep{Shu2002,Durran2010}, which we describe in terms of the
first-order Euler update,\footnote{This description accurately
  captures the core structure of \texttt{Gkeyll 2.0} where the
  internal parts are written in terms of the single forward Euler
  steps and the outer control loops calls them with time steps and
  coefficients appropriate for each RK method.}
\begin{align*}
  \mathcal{F}(f,t) = f + \Delta t \mathcal{L}(f,t).
\end{align*}
The second order SSP-RK:
\begin{align}\begin{aligned}
  f^{(1)} &= \mathcal{F}\left(f^{n},t^{n}\right), \\ f^{n+1} &=
  \frac{1}{2}f^{n} + \frac{1}{2}\mathcal{F}\left(f^{(1)},t^n+\Delta
  t\right).
\end{aligned}\end{align}
The third order SSP-RK:
\begin{align}\begin{aligned}
  f^{(1)} &= \mathcal{F}\left(f^{n},t^{n}\right), \\ f^{(2)} &=
  \frac{3}{4}f^{n} + \frac{1}{4}\mathcal{F}\left(f^{(1)},t^n+\Delta
  t\right),\\ f^{n+1} &= \frac{1}{3}f^{n} +
  \frac{2}{3}\mathcal{F}\left(f^{(2)},t^n+\Delta t/2\right).
\end{aligned}\end{align}
The four stage third order SSP-RK:
\begin{align}\begin{aligned}
  f^{(1)} &= \mathcal{F}\left(f^{n},t^{n}\right), \\ f^{(2)} &=
  \frac{1}{2}f^{(1)} + \frac{1}{2}\mathcal{F}\left(f^{(1)},t^n+\Delta
  t/2\right),\\ f^{(3)} &= \frac{2}{3}f^{n} + \frac{1}{6}f^{(2)} +
  \frac{1}{6}\mathcal{F}\left(f^{(2)},t^n+\Delta t\right),\\ f^{n+1}
  &= \frac{1}{2}f^{(3)} +
  \frac{1}{2}\mathcal{F}\left(f^{(3)},t^n+\Delta t/2\right).
\end{aligned}\end{align}

The difference in between the three stage and four stage RK3 is in
their stability condition know as the Courant-Friedrich-Lewy (CFL)
condition.  \cite{Juno2018} defines the condition using so called CFL
frequency,
\begin{align*}
    \omega_i = \frac{\alpha_i}{\Delta z_i}, \quad i = 1, \ldots, d.
\end{align*}
Note that the ``velocity" has the physical meaning of velocity only
for the first three phase space dimensions (configuration space) and,
for the Vlasov-Maxwell problems, it is always the speed of light. The
other three ``velocities'' correspond to the acceleration in the
velocity space caused by the Lorentz force \eqrp{model:motion2}.
With the CFL frequency, we can write the CFL condition as
\begin{align}
    \Delta t(2p+1) \sum_{i=1}^{d}\omega_i \leq CFL, 
\end{align}
where the $CFL$ on the right-hand-side represents the so-called CFL
number.  It is 1.0 for the three stage RK3 and 2.0 for the four stage
RK3.  In other words, the four stage method is stable for twice as big
time steps in comparison to the three stage RK3.  Therefore, even
though the four stage method requires $33\%$ more work, it allows for
roughly $1.5\times$ increase in the computation speed.

Finally, it is worth noting that in practice, it is common to add a
safety margin to the CFL numbers and run the three stage RK3 with the
CFL number of 0.9 and four stage with the CFL number of 1.8.  In most
of the simulations in this work, the four stage RK3 is used.

%---------------------------------------------------------------------
\subsection{Moment Calculation}
\label{sec:model:moments}

As a final point of the discrete Vlasov-Maxwell section, the numerical
integration of the moment of the distribution function needs to be
addressed.  Since either the charge density, $\rho_c = q_s \int_{\mathcal{V}}
f_sd\bm{v}$, or the current density, $\bm{j} = q_s \int
\bm{v}f_sd\bm{v}$, are required as the sources of the Poisson's
\eqrp{model:poisson} or Amp\`{e}re's \eqrp{model:ampere}
equations respectively, the moments are required each RK step.
Therefore, they need to be evaluated both efficiently and precisely.

Starting with the number density, \eqr{model:m0} needs to be
rewritten in the discrete weak sense,
\begin{align*}
  n_h^i(t,\bm{x}) \circeq \sum_j \int_{\mathcal{V}_{ij}}
  f_h^{ij}(t,\bm{x},\bm{v}) \,d\bm{v},
\end{align*}
where the integral is performed over the velocity space of cell $ij$;
the general cell index is split into the configuration space index,
$i$ and the velocity space index, $j$.  Expanding the density and the
distribution function together with the definition of the weak
equality leads to
\begin{align*}
  \widehat{n}_n^i(t) \int_{\Omega^i} \varphi_n(\bm{x})\varphi_t(\bm{x}) \,d\bm{x}
  = \sum_j \widehat{f}_m^{ij}(t) \int_{K^{ij}} \psi_m(\bm{x},\bm{v})
  \varphi_t(\bm{x}) \,d\bm{x}d\bm{v}.
\end{align*}
Note that the number density, $n$, is expanded in terms of the same
basis functions as $\bm{E}$ and $\bm{B}$. As the next step, the
integral is converted to the logical space, similar to
\ser{model:discvlasov},
\begin{align*}
  \widehat{n}_n^i(t) \int_{I_x} \widehat{\varphi}_n(\bm{\eta}_{\bm{x}})
  \widehat{\varphi}_t(\bm{\eta}_{\bm{x}})
  \,\left|\frac{d\bm{x}^i}{d\bm{\eta}_{\bm{x}}}\right|
  d\bm{\eta}_{\bm{x}} = \sum_j \widehat{f}_m^{ij}(t) \int_{I_p}
  \widehat{\psi}_m(\bm{\eta}_{\bm{x}},\bm{\eta}_{\bm{v}})
  \widehat{\varphi}_t(\bm{\eta}_{\bm{x}})
  \,\left|\frac{d\bm{x}^i}{d\bm{\eta}_{\bm{x}}}\right|
  \left|\frac{d\bm{v}^j}{d\bm{\eta}_{\bm{v}}}\right|
  d\bm{\eta}_{\bm{x}}d\bm{\eta}_{\bm{v}},
\end{align*}
where we need to distinguish between the configuration space unit cube
$I_x$ and the phase space hyper cube $I_p$.  Assuming Cartesian mesh
and orthonormal basis function set, $\varphi$, defined in
\ser{model:choice}, the expression simplifies into
\begin{align}
  \widehat{n}_n^i(t) = \frac{\prod_{i=1}^{d_v}\Delta v_i}{2^{d_v}} \sum_j
  \widehat{\mathcal{A}}_{mn}^0 \widehat{f}_m^{ij}(t),
\end{align}
where $\widehat{\mathcal{A}}_{mt}^0$ is so-called zeroth order matrix,
\begin{align*}
  \widehat{\mathcal{A}}_{mn}^0 = \int_{I_p} 
  \widehat{\psi}_m(\bm{\eta}_{\bm{x}},\bm{\eta}_{\bm{v}})
  \widehat{\varphi}_n(\bm{\eta}_{\bm{x}})
  \,d\bm{\eta}_{\bm{x}}d\bm{\eta}_{\bm{v}}.
\end{align*}
Note that $\widehat{\mathcal{A}}_{mt}^0$ can be easily pre-computed.

The situation for the first moment is analogous to the volume term in
\ser{model:discvlasov},
\begin{align*}
  (n\bm{u})_h^i(t,\bm{x}) \circeq \sum_j \int_{\mathcal{V}^{ij}}
  \bm{v} f_h^{ij}(t,\bm{x},\bm{v}) \,d\bm{v},
\end{align*}
where $\bm{v}$ can be again replaced by $\bm{v} + \bm{v}_c^j - \bm{v}_c^j$,
\begin{align*}
  (n\bm{u})_h^i(t,\bm{x}) &\circeq \sum_j \int_{\mathcal{V}^{ij}}
  \left( \bm{v} + \bm{v}_{c}^j-\bm{v}_{c}^j\right)
  f_h^{ij}(t,\bm{x},\bm{v}) \,d\bm{v}, \\ &\circeq \sum_j
  \int_{\mathcal{V}^{ij}} \left(\bm{v} -\bm{v}_{c}^j\right)
  f_h^{ij}(t,\bm{x},\bm{v}) \,d\bm{v} + \sum_j
  \bm{v}_{c}^j\int_{K_{ij}} f_h^{ij}(t,\bm{x},\bm{v}) \,d\bm{v}.
\end{align*}
The first moment can be then expressed as
\begin{align}
  \widehat{n\bm{u}}_n^i(t) = \frac{\prod_{i=1}^{d_v}\Delta v_i}{2^{d_v}}
  \left(\frac{\Delta \bm{v}}{2} \sum_j \widehat{\bm{\mathcal{A}}}_{mn}^1
  \widehat{f}_m^{ij}(t) + \sum_j \bm{v}_{c}^j
  \widehat{\mathcal{A}}_{mn}^0 \widehat{f}_m^{ij}(t) \right),
\end{align}
where there is point-wise product between $\Delta\bm{v}$ and the
tensor $\widehat{\bm{\mathcal{A}}}_{mn}^1$,
\begin{align*}
  \widehat{\bm{\mathcal{A}}}_{mn}^1 = \int_{I_p} \bm{\eta}_{\bm{v}}
  \widehat{\psi}_m(\bm{\eta}_{\bm{x}},\bm{\eta}_{\bm{v}})
  \widehat{\varphi}_n(\bm{\eta}_{\bm{x}})
  \,d\bm{\eta}_{\bm{x}}d\bm{\eta}_{\bm{v}}.
\end{align*}

%=====================================================================
\section{Five-moment Two-fluid Model}\label{sec:model:fluids}

Even though the fluid simulations are not the focus of this work, they
are at times used for comparison. Their derivation also nicely rounds
up the discussion about the distribution function and the Vlasov
equation \eqrp{model:vlasov}.

In Section\thinspace\ref{sec:model:distf}, it was shown that taking
the moments of the distribution function leads to the macroscopic
conserved quantities.  Similarly, taking the moments of the Vlasov
equation gives the fluid conservation equations.

Starting with the zeroth moment,
\begin{align*}
  \int_{\mathcal{V}} \Big(
  \underbracket{\pfrac{f_s}{t}}_{I} +
  \underbracket{\bm{v}\cdot\nabla_{\bm{x}}f_s}_{II} +
  \underbracket{\frac{q_s}{m_s}\left(\bm{E} +
  \bm{v}\times\bm{B}\right)\cdot\nabla_{\bm{v}}f_s}_{III} 
  \Big) \,d\bm{v} = 0,
\end{align*}
we can evaluate the three terms individually.

Since the distribution function is assumed to be continuous in both
velocity space and time, the order of integration and time derivation
in the first term can be switched,
\begin{align*}
  \int_{\mathcal{V}} \pfrac{f_s}{t} \,d\bm{v} = \pfraca{t}
  \int_{\mathcal{V}} f_s \,d\bm{v} = \pfrac{n_s}{t}.
\end{align*}
The velocity and position are treated as independent variables,
therefore, $v$ can be moved into the differential operator,
\begin{align*}
  \int_{\mathcal{V}} \bm{v}\cdot\nabla_{\bm{x}}f_s \,d\bm{v} =
  \int_{\mathcal{V}} \nabla_{\bm{x}} \cdot(\bm{v}f_s) \,d\bm{v}
  =\nabla_{\bm{x}} \cdot \int_{\mathcal{V}} \bm{v}f_s \,d\bm{v} =
  \nabla_{\bm{x}} \cdot n_s\bm{u}_s.
\end{align*}
Finally, as it was discussed in \ser{model:discvlasov}, the Lorentz
force can be included in the differential operator as well,
\begin{align*}
  \int_{\mathcal{V}} \left(\bm{E} +
  \bm{v}\times\bm{B}\right)\cdot\nabla_{\bm{v}}f_s \,d\bm{v} &= \int_{\mathcal{V}}
  \nabla_{\bm{v}} \cdot \left(\bm{E} + \bm{v}\times\bm{B}\right) f_s
  \,d\bm{v} \\&= \int_{\partial \mathcal{V}} \left(\bm{E} +
  \bm{v}\times\bm{B}\right) f_s \cdot d\bm{A}
  \stackrel{\lim_{v\rightarrow\infty}f_s=0}{=}0.
\end{align*}

When the terms are put together, the zeroth moment of the Vlasov
equation \eqrp{model:vlasov} leads to the well known continuity
equation,
\begin{align}\label{eq:model:continuity}
  \pfrac{n_s}{t} + \nabla_{\bm{x}} \cdot n_s\bm{u}_s = 0.
\end{align}

Taking the first moment of the Vlasov equation gives
\begin{align*}
  \int_{\mathcal{V}} \Big(
  \underbracket{\bm{v}\pfrac{f_s}{t}}_{I} +
  \underbracket{\bm{v}\left(\bm{v}\cdot\nabla_{\bm{x}}f_s\right)}_{II} +
  \underbracket{\bm{v}\frac{q_s}{m_s}\left(\bm{E} +
  \bm{v}\times\bm{B}\right)\cdot\nabla_{\bm{v}}f_s}_{III} 
  \Big) \,d\bm{v} = 0
\end{align*}

Similar to the zeroth moment, the first term can be evaluated as
\begin{align*}
  \int_{\mathcal{V}} \bm{v}\pfrac{f_s}{t} \,d\bm{v} = \pfraca{t}
  \int_{\mathcal{V}} \bm{v}f_s \,d\bm{v} = \pfrac{n_s\bm{u}_s}{t}.
\end{align*}
The second term gives
\begin{align*}
  \int_{\mathcal{V}} \bm{v}\left(\bm{v}\cdot\nabla_{\bm{x}}f_s\right)
  \,d\bm{v} = \int_{\mathcal{V}} \bm{v}\nabla_{\bm{x}}\cdot
  (\bm{v}f_s) \,d\bm{v} = \nabla_{\bm{x}}\cdot\int_{\mathcal{V}}
  (\bm{v}\bm{v}f_s) \,d\bm{v} = \nabla_{\bm{x}}\cdot (n_s
  \langle\bm{v}\bm{v}\rangle),
\end{align*}
where the $\langle\cdot\rangle$ denotes the average value and
$\bm{v}\bm{v}$ is a dyadic tensor.  Now we can split the velocity into
the bulk velocity, $\bm{u}$, and the thermal component, $\bm{v}_{th} =
\bm{v}-\bm{u}$. Note that $\langle\bm{v}\rangle =
\langle\bm{u}+\bm{v}_{th}\rangle =
\langle\bm{u}\rangle+\langle\bm{v}_{th}\rangle = \bm{u}$. Analogously,
\begin{align*}
  \nabla_{\bm{x}}\cdot (n_s \langle\bm{v}\bm{v}\rangle) =
  \nabla_{\bm{x}}\cdot (n_s \bm{u}_s\bm{u}_s) +
  \underbrace{2\nabla_{\bm{x}}\cdot (n_s
    \bm{u}_s\langle\bm{v}_{th,s}\rangle)}_{0} + \nabla_{\bm{x}}\cdot
  (n_s \langle\bm{v}_{th,s}\bm{v}_{th,s}\rangle).
\end{align*}
After multiplying with the mass, the last part of the expression can
be identified as the pressure tensor, $\bm{P}= m_sn_s
\langle\bm{v}_{th,s}\bm{v}_{th,s}\rangle$.  The third term requires us
to use the vector identity $\nabla\cdot(\bm{AB}) =
(\nabla\cdot\bm{A})\bm{B} + (\bm{A}\cdot\nabla)\bm{B}$,
\begin{align*}
  \int_{\mathcal{V}} \bm{v}\frac{q_s}{m_s}\left(\bm{E} +
  \bm{v}\times\bm{B}\right)\cdot\nabla_{\bm{v}}f_s \,d\bm{v} =&
  \int_{\mathcal{V}}
  \bm{v}\frac{q_s}{m_s}\nabla_{\bm{v}}\cdot\left[\left(\bm{E} +
    \bm{v}\times\bm{B}\right)f_s \right] \,d\bm{v}\\ =&
  \int_{\mathcal{V}} \nabla_{\bm{v}}\cdot
  \left[\bm{v}\frac{q_s}{m_s}\left(\bm{E} + \bm{v}\times\bm{B}\right)
    f_s \right] \,d\bm{v} \\&- \int_{\mathcal{V}}
  \left[\frac{q_s}{m_s}\left(\bm{E} + \bm{v}\times\bm{B}\right) f_s
    \cdot\nabla_{\bm{v}} \right]\bm{v} \,d\bm{v} \\ =& \int_{\partial
    \mathcal{V}} \left[\bm{v}\frac{q_s}{m_s}\left(\bm{E} +
    \bm{v}\times\bm{B}\right) f_s \right] \cdot d\bm{A} \\ &-
  \int_{\mathcal{V}} \frac{q_s}{m_s}\left(\bm{E} +
  \bm{v}\times\bm{B}\right) f_s \,d\bm{v} \\ =& -
  \frac{q_sn_s}{m_s}\left(\bm{E} + \bm{u}_s\times\bm{B}\right)
\end{align*}

When everything is put together and multiplied by the mass, we obtain
the law of conservation of momentum,
\begin{align}\label{eq:model:momentum}
  m_s\pfrac{n_s\bm{u}_s}{t} +
  \nabla_{\bm{x}}\cdot\left(m_sn_s\bm{u}_s\bm{u}_s + \bm{P}\right) =
  q_sn_s\left(\bm{E} + \bm{u}_s\times\bm{B}\right).
\end{align}

Careful examination of the first two conservation equations leads to
an interesting observation -- evolution equation for a moment of the
distribution function depends on a higher moment. E.g., the evolution
of density \eqrp{model:continuity} depends on the flux and the
evolution of flux depends on the energy \eqrp{model:m2}, etc.
Therefore, in order to be useful, the system of the equations needs to
be truncated, which introduces additional approximation into the
system.  On the other hand, the fact that the equations are 3D instead
of 6D, makes solving the system significantly less expensive (even
though it consists of more equations).

The fluid model used for comparisons to kinetics in this work, the
two-fluid five-moment model,\footnote{The electron and ion equations
  are solved separately rather than merged into a single fluid as it
  is in the case of Magneto-hydrodynamic (MHD) models. The name ``five
  moment'' is in my opinion a bit misleading.  It refers to the five
  equations -- conservation of density, three equations for the
  conservation of momentum, and the conservation of energy -- and not
  to taking five moments.}  is described by \cite{Hakim2006}.
Assuming no heat flow and a scalar fluid pressure, the model is
defined as
\begin{align}\begin{aligned}
    \pfrac{n}{t} + \pfraca{x_j}\left(n u_j\right) &= 0,
    \\ m\pfrac{nu_k}{t} + \pfraca{x_j}\left(p\delta_{kj} +
    mnu_ku_j\right) &=
    nq\left(E_k+\varepsilon_{kij}u_iB_j\right),\\
    \pfrac{\mathcal{E}}{t}
    + \pfraca{x_j}\left(u_jp + u_j \mathcal{E}\right) &= qnu_jE_j,
\end{aligned}\end{align}
where $\mathcal{E} = \frac{p}{\gamma-1} + \frac{1}{2}mnu_iu_i$ and
$\gamma = \frac{5}{3}$.

\chapter{Benchmarks \& Plasma Instabilities}

\epigraph{The first principle is that you must not fool yourself and
  you are the easiest person to fool.}{\textit{Richard Feynman}}

Even though it is quite common to apply a new model directly onto the
problem of interest, it is essential to rigorously test it first.  In
the ideal case, testing is done by comparison to exact
solutions during code verification and later by validating the
simulation with experimental results \citep{Oberkampf2010}.  However,
in plasma physics, both exact analytical solutions and suitable
experimental results are rare.  Therefore, it is quite common to test
the simulation on a set of benchmark problems.  These are, typically,
simple and well understood plasma waves and instabilities.  In this
chapter, we take a look at the Landau damping of the Langmuir waves
and the two-stream instability.

%=====================================================================
\FloatBarrier
\section{Landau Damping}\label{sec:benchmark:landau}

Landau damping is responsible for collisionless wave energy
dissipation in plasmas.  During the process, electromagnetic wave
interacts with the particles, altering their velocity distribution.
Landau damping is, therefore, an intrinsically kinetic process,
which makes a good benchmark for any kinetic code.

Propagating electromagnetic wave interacts with particles in plasma by
accelerating the ones moving at lower velocities than its phase
velocity, $v_{ph}$, and slowing the faster ones; wave loses energy
from the first interaction and gains energy from the later.  Since the
particles in equilibrium plasma follow the Maxwellian distribution
(see Sec.\thinspace\ref{sec:model:distf}),
\begin{align*}
  f_M(v_x) = \frac{1}{\sqrt{2\pi v_{th}^2}}
  \exp\left(-\frac{v_x^2}{2v_{th}^2}\right),
\end{align*}
there are always more particles with lower velocity (in the absolute
value sense; see \fgr{benchmark:landau_sketch}).  Therefore, the number of
particles getting accelerated is always higher than the number of
particles getting slowed and the total electromagnetic energy of the
wave is decreasing.  This results in ``flattening'' of the
distribution function around the wave phase velocity.
\begin{figure}[!htb]
  \centering
  \includegraphics[width=0.8\linewidth]{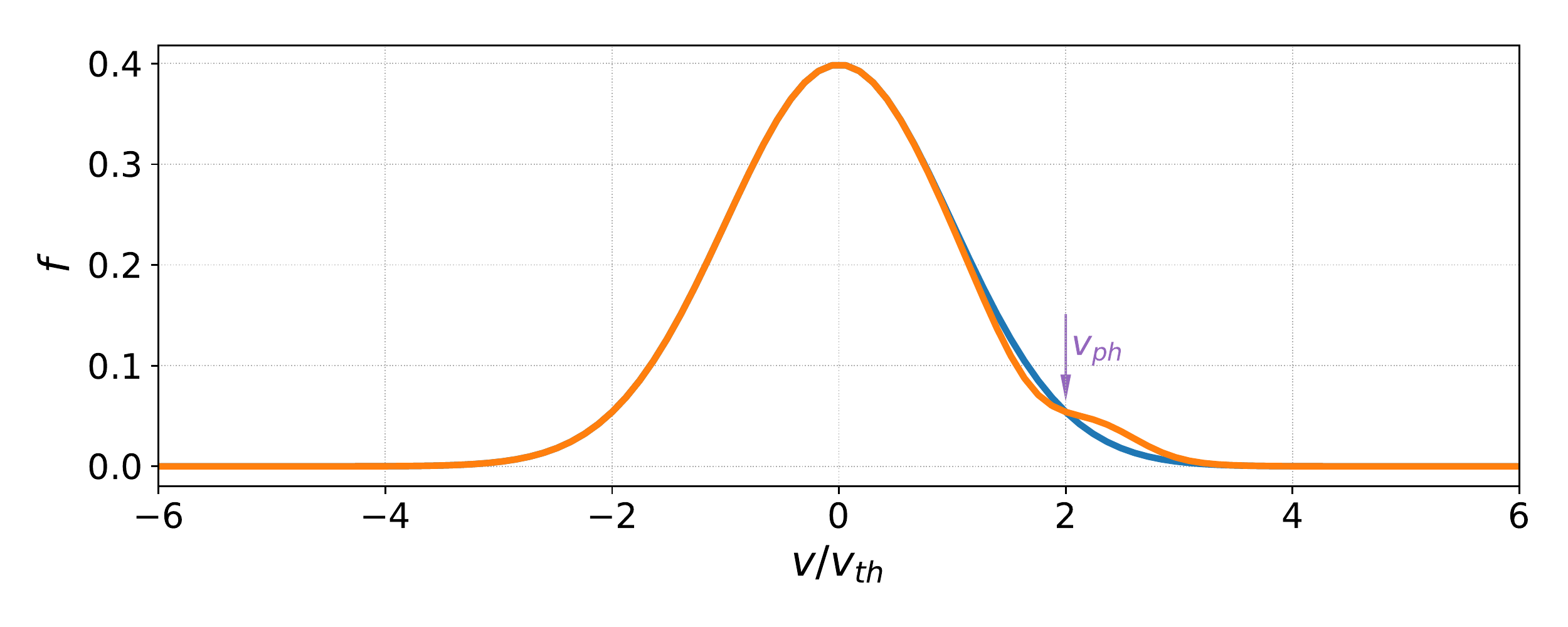}
  \caption[Ilustration of the Landau damping]{Ilustration of the
    Landau damping. A wave with phase velocity $v_{ph}=2v_{th}$ is
    interacting with an equilibrium particle distribution (blue) which
    results in the ``flattening'' of the distribution around the phase
    velocity (orange).}
  \label{fig:benchmark:landau_sketch}
\end{figure}

%---------------------------------------------------------------------
\subsection{Linear Theory}\label{sec:benchmark:landau:lintheory}

The Vlasov equation \eqrp{model:vlasov} has no exact analytical
solution, which makes it difficult to compare simulation results with
theory.  Therefore, it is usually simplified using the linear theory.
The key idea of the linear theory is to expand all variables into the
equilibrium terms and perturbations, neglecting higher order terms
(HOT),
\begin{align*}
    f = f_0 + f_1 + \text{HOT}.
\end{align*}

This approach can be then used to show the Landau damping
quantitatively in the dispersion relation of 1D electrostatic Langmuir
waves.  In this case, electrons are govern by the 1D Vlasov equation,
\begin{align}\label{eq:benchmark:elc_vlasov1D}
  \pfrac{f_e}{t} + v_x\partial_x f_e + \frac{q_e}{m_e}E_x\partial_{v_x}
  f_e = 0,
\end{align}
and ions are assumed stationary. Noting that the equilibrium part is
not a function of time and position, the equation simplifies
to\footnote{$E_{x,1}\partial_{v_x} f_1$ is a second order term and is,
  therefore, neglected.}
\begin{align*}
  \pfrac{f_{e,1}}{t} + v_x\partial_x f_{e,1} +
  \frac{q_e}{m_e}E_{x,1}\partial_{v_x} f_{e,0} = 0.
\end{align*}
Using the Fourier series, all the perturbations are assumed to be
in the form of $f_1\propto \exp\big(i(kx-\omega t)\big)$, where $k$ is a
wave-number and $\omega$ is frequency, which allows to replace the
derivatives with algebraic terms,
\begin{align*}
  -i\omega f_{e,1} + ikv_x f_{e,1} + \frac{q_e}{m_e}E_x \partial_{v_x}
  f_{e,0} = 0 \quad \Rightarrow \quad f_{e,1} =
  -i\frac{q_eE_{x,1}}{m_e}\frac{\partial_{v_x} f_{e,0}}{\omega -
    kv_x}.
\end{align*}

This distribution function perturbation is then substituted into the
linearized Poisson's equation
\eqrp{model:poisson},\footnote{Written in terms of the electric
  field.}
\begin{align}\begin{aligned}\label{eq:benchmarking:langmuir_disp0}
  ik\varepsilon_0E_{x,1} &= q_in_{i,0} + q_en_{e,0} + q_en_{e,1},
  \\ &= q_e\int_{\mathcal{V}} f_{e,1} \,d\bm{v}, \\ &=
  -i\frac{q_e^2E_{x,1}}{m_e} \int_{\mathcal{V}}\frac{\partial_{v_x}
    f_{e,0}}{\omega - kv_x} \,d\bm{v},
\end{aligned}\end{align}
where we assume that the ions are stationary and not perturbed,
$n_{i,1}=0$, the equilibrium plasma is quasi-neutral, $q_in_{i,0} +
q_en_{e,0} = 0$, and the perturbation of the density is the zeroth
moment of the perturbation of the distribution function
\eqrp{model:m0}, $n_{e,1}=\int_{\mathcal{V}} f_{e,1} \,d\bm{v}$.  Since only
electrons are evolved, the electron subscript, $e$, will be omitted
from now for clarity.

\eqr{benchmarking:langmuir_disp0} represents a form of a dispersion
relation.  Initial conditions of the system can be defined with $f_0$
and then the only remaining unknowns are $k$ and
$\omega$.\footnote{The electric field cancels out.}  In other words,
the dispersion relation is an equation describing which modes of an
initial perturbation, $f_1$, satisfy the plasma equations in a system
defined by initial conditions $f_0$.  However,
\eqr{benchmarking:langmuir_disp0} needs to be further simplified for
the practical use.

Assuming the distribution function can be factorized, the 3D integral
can be rewritten,
\begin{align*}
  \int\frac{\partial_{v_x} f_{0}}{\omega - kv_x} \,d\bm{v} = n
  \int_{-\infty}^\infty \frac{\partial_{v_x} f_{x,0}}{\omega - kv_x}
  \,dv_x \underbracket{\int_{-\infty}^\infty f_{y,0}\,dv_y}_{=1}
  \underbracket{\int_{-\infty}^\infty f_{z,0}\,dv_z}_{=1}.
\end{align*}

The dispersion relation \eqrp{benchmarking:langmuir_disp0} then
simplifies to\footnote{The sign flip is due to
  the reversed order in the denominator.}
\begin{align}\label{eq:benchmark:langmuir_disp}
  1 = \frac{1}{k^2}
  \underbracket{\frac{nq^2}{m\varepsilon_0}}_{\omega_{pe}^2}
  \int_{-\infty}^{\infty} \frac{\partial_{v_x}f_{x,0}}{v_x -
    \omega/k}\,dv_x,
\end{align}
where $\omega_{pe}$ is the plasma oscillation frequency.

However, the integral in \eqr{benchmark:langmuir_disp} is not
straightforward to evaluate analytically.  Even thought $v_x$ is real
and $\omega$ is, typically, imaginary,\footnote{Assuming $\omega =
  \omega_r + i\gamma$, all variables are $\propto \exp\big(i(kx -
  \omega t)\big) = \exp\big(i(kx - \omega_r t)\big)\exp\big(\gamma
  t\big)$. Therefore, the real part of the frequency describes the
  oscillatory behavior while the imaginary component corresponds to
  either damping or growth.} the singularity $v_x=\omega/k$ affects
the solution.  Landau was the first to point out this integral needs
to be treated as a contour integral in the complex $v_x$ plane
\citep{Dawson1961}.  A standard approach is to use the residue
theorem,
\begin{align*}
  \int_{C_1} \frac{\partial_{v_x}f_{x,0}}{v_x -
    \omega/k}dv_x + \int_{C_2}
  \frac{\partial_{v_x}f_{x,0}}{v_x - \omega/k} \,dv_x = 2\pi
  iR(\omega/k),
\end{align*}
where $C_1$ and $C_2$ are integration curves depicted in
Fig.\thinspace\ref{fig:benchmark:contourint}, and $R(\omega/k)$ is
the residuum.  However, this cannot be applied because
\begin{align*}
  \lim_{v_x \to\pm i\infty} \exp\left(-\frac{v_x^2}{2v_{th}^2}\right) \neq 0
\end{align*}
and the integral over $C_2$ does not vanish.  The only options are the
numerical integration, which will be discussed at the end of this section, or tables for the
Maxwellian distribution, for example \cite{Fried1961}.
\begin{figure}[!htb]
  \centering
  \includegraphics[width=0.7\linewidth]{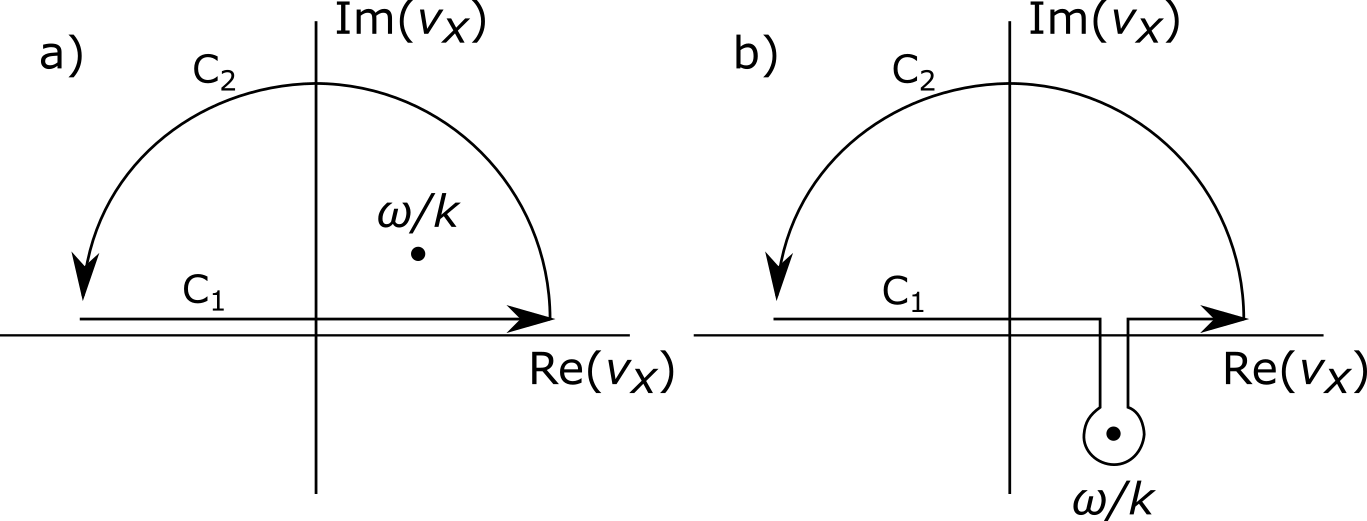}
  \caption[Integration curves for contour integrals]{Integration curves
    for the contour integrals with a singularity.  Classical approach for
    $\mathrm{Im}(\omega)>0$ a) and  $\mathrm{Im}(\omega)<0$ b).}
  \label{fig:benchmark:contourint}
\end{figure}

An approximate solution can be found for a special case of weak
growth/damping and a wave phase velocity much bigger than the thermal
velocity of the distribution, $v_{ph} \gg v_{th}$.
\eqr{benchmark:langmuir_disp} then becomes
\begin{align}\label{eq:benchmark:langmuir_disp2}
  1 = \frac{\omega_{pe}^2}{k^2}\left( PV\int_{-\infty}^\infty
  \frac{\partial_{v_x}f_{x,0}}{v_x - \omega/k} \,dv_x + i\pi
  \left.\partial_{v_x}f_{x,0}\right|_{v_x=\omega/k}\right),
\end{align}
where $PV$ stands for the Cauchy principal value.  Since $v_{ph} \gg
v_{th}$, both $f_{x,0}(v_{ph})$ and
$\partial_{v_x}f_{x,0}|_{v_{ph}}$ are small,
\begin{align*}
   PV\int_{-\infty}^\infty
   \frac{\partial_{v_x}f_{x,0}}{v_x-\omega/k}\,dv_x &\approx
   \int_{-\infty}^\infty
   \frac{\partial_{v_x}f_{x,0}}{v_x-v_{ph}}\,dv_x \\&=
   \left[\frac{f_{x,0}}{v_x-v_{ph}}\right]_{-\infty}^\infty -
   \int_{-\infty}^\infty \frac{-f_{x,0}}{(v_x-v_{ph})^2} \,dv_x \\ &=
   \int_{-\infty}^\infty \frac{f_{x,0}}{(v_x-v_{ph})^2} \,dv_x,
\end{align*}
where the integration \textit{per partes} is performed.  Noting the
definition of an average $\langle g(x) \rangle = \int f(x) g(x) dx$,
the real part of the dispersion relation
\eqr{benchmark:langmuir_disp2} becomes
\begin{align*}
  1 &= \frac{\omega_{pe}^2}{k^2}\left\langle (v_x-v_{ph})^{-2}
  \right\rangle = \frac{\omega_{pe}^2}{k^2}\left\langle v_{ph}^{-2}
  \left(1-\frac{v_x}{v_{ph}}\right)^{-2} \right\rangle =
  \frac{\omega_{pe}^2}{k^2}\left\langle v_{ph}^{-2}
  \left(1+\frac{2v_x}{v_{ph}}+\frac{3v_x^2}{v_{ph}^2}+\ldots\right)
  \right\rangle\\ &\approx \frac{\omega_{pe}^2}{k^2} v_{ph}^{-2}
  \left(1+\frac{3\left\langle v_x^2\right\rangle}{v_{ph}^2}\right)
\end{align*}
Using the Equipartition theorem, $\frac{1}{2}m_e\left\langle
v_x^2\right\rangle = \frac{1}{2}\mathrm{k}_BT_e$,
\begin{align}\label{eq:benchmark:langmuir}
  \omega_r^2 = \omega_{pe}^2 +
  \frac{\omega_{pe}^2}{\omega_r^2}\frac{3\mathrm{k}_BT_e}{m_e}k^2,
\end{align}
which is consistent with the fluid theory results of the electrostatic
electron waves \citep{Chen1985}.

If the thermal correction is small, i.e., $\omega_r \approx
\omega_{ph}$, real and imaginary terms can be simply combined,
\begin{align}
  \omega = \omega_{pe}\left(1 +
  i\frac{\pi}{2}\frac{\omega_{pe}^2}{k^2}
  \left[\pfrac{f_{x,0}}{v_x}\right]_{v_x=v_{ph}}\right).
\end{align}
Evaluating the derivative term gives for the imaginary part
\begin{align}
  \mathrm{Im}\left(\frac{\omega}{\omega_{pe}}\right) =
  \frac{\gamma}{\omega_{pe}} = -0.22\sqrt{\pi}
  \left(\frac{\omega_{pe}}{kv_{th}}\right)^3
  \exp\left(-\frac{1}{2k^2\lambda_D^2}\right).
\end{align}

As mentioned above, if the condition $v_{ph} \gg v_{th}$ is not
satisfied, the numerical integration is required.  First, it is useful
to rewrite the dispersion relation \eqrp{benchmark:langmuir_disp} in
terms of the plasma dispersion function, $Z(\zeta)$,
\begin{align}\label{eq:benchmark:dispf}
    Z(\zeta) = \frac{1}{\sqrt{\pi}}\int_{-\infty}^{\infty}
    \frac{\exp(-x^2)}{x-\zeta} \,dx.
\end{align}
Or more precisely, in terms of its derivative\footnote{Note that
  $\partial_\zeta(x-\zeta)^{-1} =
  -(x-\zeta)^{-2}\partial_\zeta(-\zeta)=(x-\zeta)^{-2}$.}
\begin{align*}
    Z'(\zeta) = \pfrac{Z(\zeta)}{\zeta} &= \frac{1}{\sqrt{\pi}}
    \int_{-\infty}^{\infty} \frac{\exp(-x^2)}{(x-\zeta)^2} \,dx, \\ &=
    -\frac{2}{\sqrt{\pi}}\int_{-\infty}^{\infty} \frac{x
      \exp(-x^2)}{x-\zeta} \,dx,
\end{align*}
where the integration \textit{per partes} is performed in the last
step.  A useful trick of adding and subtracting $\zeta$ can be used
to tie the functions together,\footnote{Note that
  $\int_{-\infty}^\infty \exp(-x^2)=\sqrt{\pi}$.}
\begin{align}\begin{aligned}\label{eq:benchmark:derdispf}
    Z'(\zeta) &= -\frac{2}{\sqrt{\pi}}\int_{-\infty}^{\infty}
    \frac{(x+\zeta-\zeta) \exp(-x^2)}{x-\zeta} \,dx, \\ &=
    -\frac{2}{\sqrt{\pi}}\int_{-\infty}^{\infty} \frac{(x-\zeta)
      \exp(-x^2)}{x-\zeta} \,dx -
    \frac{2\zeta}{\sqrt{\pi}}\int_{-\infty}^{\infty}
    \frac{\exp(-x^2)}{x-\zeta} \,dx, \\ &= -2\big(1 + \zeta
    Z(\zeta)\big).
\end{aligned}\end{align}

Substituting the derivative of the Maxwellian distribution function,
\begin{align*}
    \partial_{v_x}f_{x,0} = \frac{1}{\sqrt{2\pi v_{th}^2}}
    \left(-\frac{2 v_x}{2 v_{th}^2} \right)
    \exp\left(-\frac{v_x^2}{2v_{th}^2}\right),
\end{align*}
into the dispersion relation \eqrp{benchmark:langmuir_disp} gives
\begin{align*}
    1 + \frac{\omega_{pe}^2}{k^2}\frac{1}{\sqrt{2v_{th}^2}}
    \frac{2}{\sqrt{\pi}} \int_{-\infty}^\infty \frac{\frac{v_x}{2v_{th}^2}
      \exp\left(-\frac{v_x^2}{2v_{th}^2}\right)}{v_x - \omega/k} \,dx = 0,
\end{align*}
or, in terms of the substitution variables $s_1=v_x/\sqrt{2v_{th}^2}$
and $s_2 = \omega/k/\sqrt{2v_{th}^2}$,\footnote{Note that $ds_1 = dx /
  \sqrt{2v_{th}^2}$.}
\begin{align*}
    1 + \frac{\omega_{pe}^2}{k^2}\frac{1}{2v_{th}^2} \frac{2}{\sqrt{\pi}}
    \int_{-\infty}^\infty \frac{s_1 \exp(s_1^2)}{s_1 - s_2} \,ds_1 = 0.
\end{align*}

The Langmuir wave dispersion relation in terms of $Z$ is
then\footnote{Note that $$ \frac{\omega_{pe}^2}{v_{th}^2} =
  \frac{n_eq_e^2}{m_ev_{th}^2\epsilon_0} =
  \frac{n_eq_e^2}{T_e\epsilon_0} = \frac{1}{\lambda_D^2}$$}
\begin{align}\label{eq:benchmark:langmuir_disp3}
    1 - \frac{1}{2k^2\lambda_D^2}Z'\left(
    \frac{\omega/k}{\sqrt{2v_{th}^2}} \right) = 0.
\end{align}

The next step is to find the roots. However, since the $\omega$ is
imaginary, a good initial guess is required.  A useful way to obtain
it is to visualize the dispersion relation as a function of complex
$\omega$ and plot the zero contours of the real and imaginary parts,
see \fgr{benchmark:langmuir_disp}.  The crossings of the contours then
correspond to the roots of the dispersion relation.  However, they can
also be simply numerical artifacts.  In order to exclude these, it is
useful to add the plot of the absolute value of the dispersion
relation with values less than a set constant masked out.  Another
benefit of this approach, apart from obtaining the initial guesses, is
to get a better picture of the distribution of the roots.  The
remaining question is how evaluate the plasma dispersion function,
for example in Python.  Fortunately, there is a useful relation
\citep{Huba2004} tying it to the error function, which is included in
most of the postprocessing tools,
\begin{align*}
  Z(\zeta) = i \sqrt{\pi} \exp(-\zeta^2) \big(1+\mathrm{erf}(i\zeta)\big).
\end{align*}
Then in Python:
\begin{lstlisting}[language=Python]
import numpy as np
import scipy.special as spc

def Z(zeta):
    return 1j * np.sqrt(np.pi) * np.exp(-zeta**2) * (1 + spc.erf(1j*zeta))
def derZ(zeta):
    return -2*(1 + zeta*Z(zeta))
\end{lstlisting}

With the initial guess, the exact solution of the dispersion relation
\eqr{benchmark:langmuir_disp3} can be found, for example, using the
Newton's method.  For this method, the function and its derivative are
required,
\begin{align}\label{eq:benchmark:newton}
  F(\omega) = 1 - \frac{1}{2k^2\lambda_D^2}Z'\left(
  \frac{\omega/k}{\sqrt{2v_{th}^2}} \right), \quad
  \pfrac{F(\omega)}{\omega} = -
  \frac{1}{2\sqrt{2}k^3\lambda_D^2v_{th}}Z''\left(
  \frac{\omega/k}{\sqrt{2v_{th}^2}} \right),
\end{align}
where
\begin{align*}
  Z''(\zeta) = -2\big(Z(\zeta) + \zeta Z'(\zeta)\big).
\end{align*}

%---------------------------------------------------------------------
\subsection{Numerical Simulation}\label{sec:benchmark:landau_sim}

This subsection is focused on the \texttt{Gkeyll} tests of Landau
damping.  As it is the case for the rest of this work, a second order
modal Serendipity basis is used.

One option for initializing these simulations is to create
uniform electron and ion populations with Maxwellian velocity
distributions and an electric field following:
\begin{align*}
    E_{x,1}(t=0) = A \frac{q_en_e}{k\varepsilon_0} \sin(kx),
\end{align*}
where $A$ is the amplitude of the initial wave.  The diagnostic
variable, used to assess the damping, is the electric field squared,
summed over the whole domain, $E_x^2$, which is proportional to the
electric field energy.  A reasonable expectation on the evolution is that
periodic electron oscillations are superimposed on the decaying
exponential.  However, the results in
\fgr{benchmark:landau_wrong_IC} look different.

\begin{figure}[!htb]
  \centering
  \includegraphics[width=0.8\linewidth]{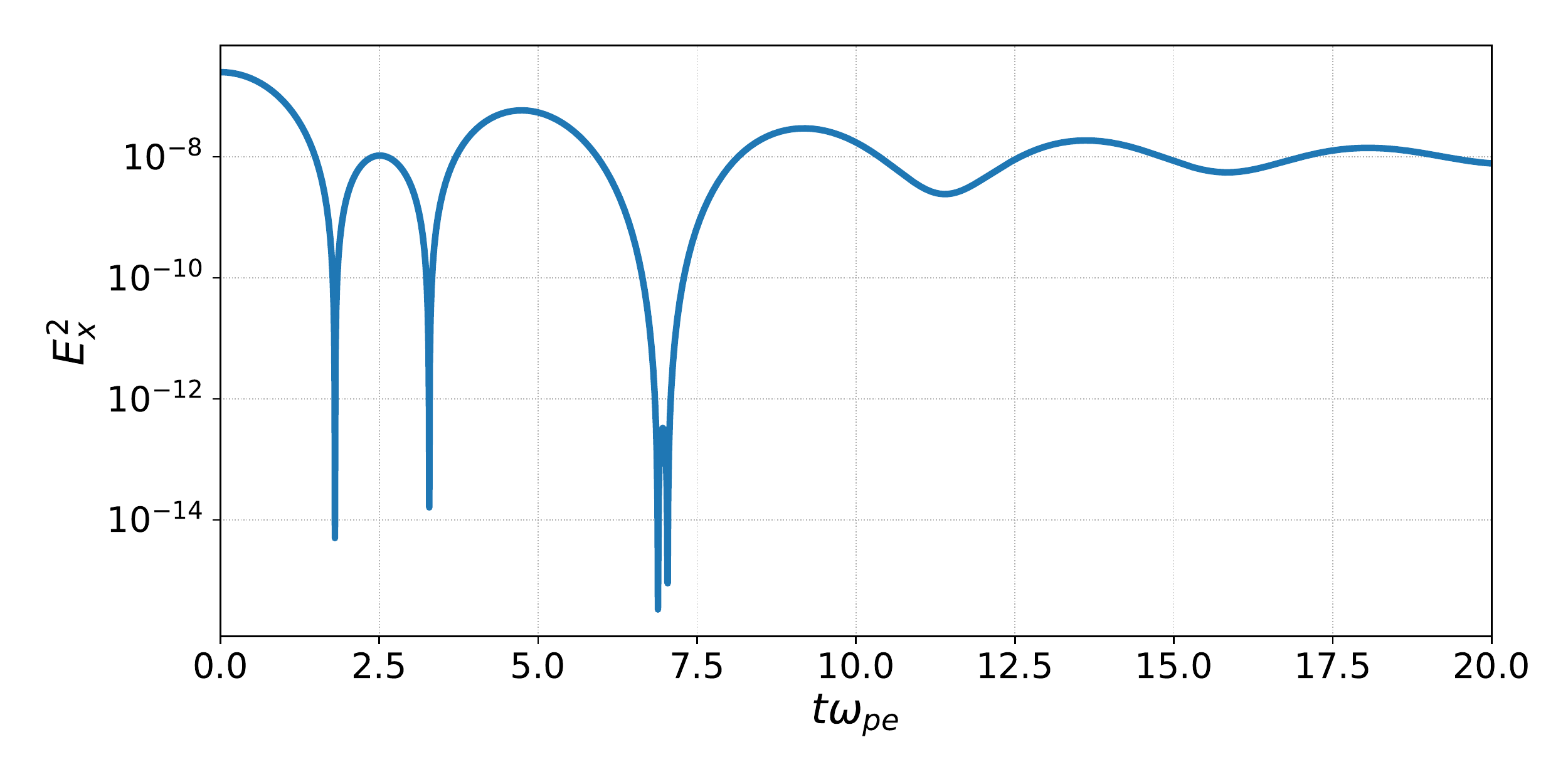}
  \caption[Landau damped electric field energy with an unphysical IC]{Landau
    damped electric field energy, $E_x^2$ with an unphysical initial
    condition -- only electric field with a periodic perturbation,
    $Aq_en_e/(k\varepsilon_0) \sin(kx)$.  Note that the evolution does
    not follow the expected profile of periodic plasma oscillations
    superimposed on the decaying exponential.}
  \label{fig:benchmark:landau_wrong_IC}
\end{figure}

The reason for this is, that the initial conditions violate the Gauss
law \eqrp{model:gauss}.  To fix this, we need to match the
derivative of $E_{x,1}(t=0)$ with respect to $x$ to the initial charge
density,
\begin{align*}
    \rho_c(t=0) = A q_en_e \cos(kx).
\end{align*}
This can be simply implemented in the \texttt{Gkeyll} input file:
\lstinputlisting[language={[5.1]Lua}, firstline=61,
  lastline=65, firstnumber=61]{landau.lua}

With the fixed initial conditions, the simulation (full listing in
\ref{list:benchmark:landau}) produces the expected results; see
\fgr{benchmark:landau_damping}.  The figure also shows the exponential
fit (green dashed line) to the envelope.  The fitting points of the
envelope were calculated as local maxima (marked with orange points in
\fgr{benchmark:landau_damping}).  It is important to note that the fit
was performed correctly as an exponential fit to data rather than a
linear fit to the logarithm.  The latter option is mathematically
questionable and generally overestimates the effects of the machine
precision errors.  One also needs to be careful about the factor of
two since the linear theory gives the growth or damping of just $E$
but $E^2$ is used for the fitting.  Therefore, it is convenient to
define the fitting function as $a \exp(2\gamma t)$.  This can be easily
done using the Python's
\texttt{scipy.optimize.curve\_fit}\footnote{\url{https://docs.scipy.org/doc/scipy/reference/generated/scipy.optimize.curve_fit.html}}
function.
\begin{lstlisting}[language=Python]
import numpy as np
import scipy.optimize as opt
import matplotlib.pyplot as plt

def exp2(x, a, b):
    return a*np.exp(2*b*x)

params, covariance = opt.curve_fit(exp2, t, E2, p0=(1.4, -0.15))
plt.plot(t, exp2(t, *params))
\end{lstlisting}

\begin{figure}[!htb]
  \centering
  \includegraphics[width=0.8\linewidth]{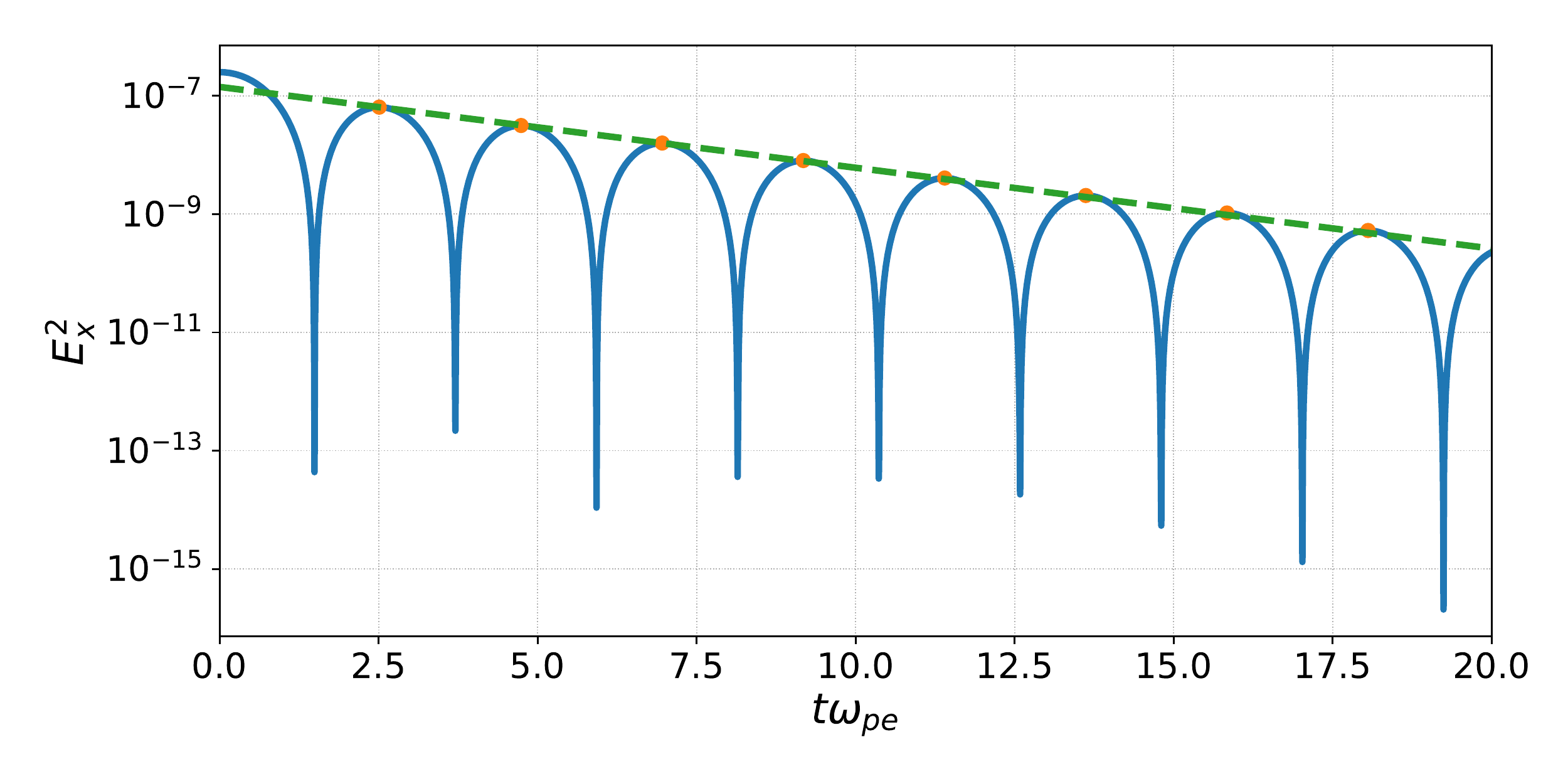}
  \caption[Landau damped electric field energy with a physical
    IC]{Landau damped electric field energy (blue line) from a
    simulation with physical initial condition -- electric field has a
    periodic perturbation, $Aq_en_e/(k\varepsilon_0) \sin(kx)$, and
    the electron density is calculated to satisfy the Gauss law
    \eqrp{model:gauss}.  $k\lambda_D$ for this simulation is
    0.5. The green dashed line is the exponential fit to the envelope.
    [Simulation input file: \ref{list:benchmark:landau}]}
  \label{fig:benchmark:landau_E2}
\end{figure}

As it was described above, it is useful to plot the dispersion
relation \eqrp{benchmark:langmuir_disp3} in order to obtain the
initial guess for the root finding and to get a better idea about the
roots.  The plot is in \fgr{benchmark:langmuir_disp}.  The
figure clearly shows two roots for $\omega_r \approx \pm
1.4\,\omega_{pe}$ which correspond to the left and right propagating
Langmuir waves. For both of these modes $\gamma \approx
-0.15\,\omega_{pe} < 0$ signifies the Landau damping.
\begin{figure}[!htb]
  \centering
  \includegraphics[width=0.9\linewidth]{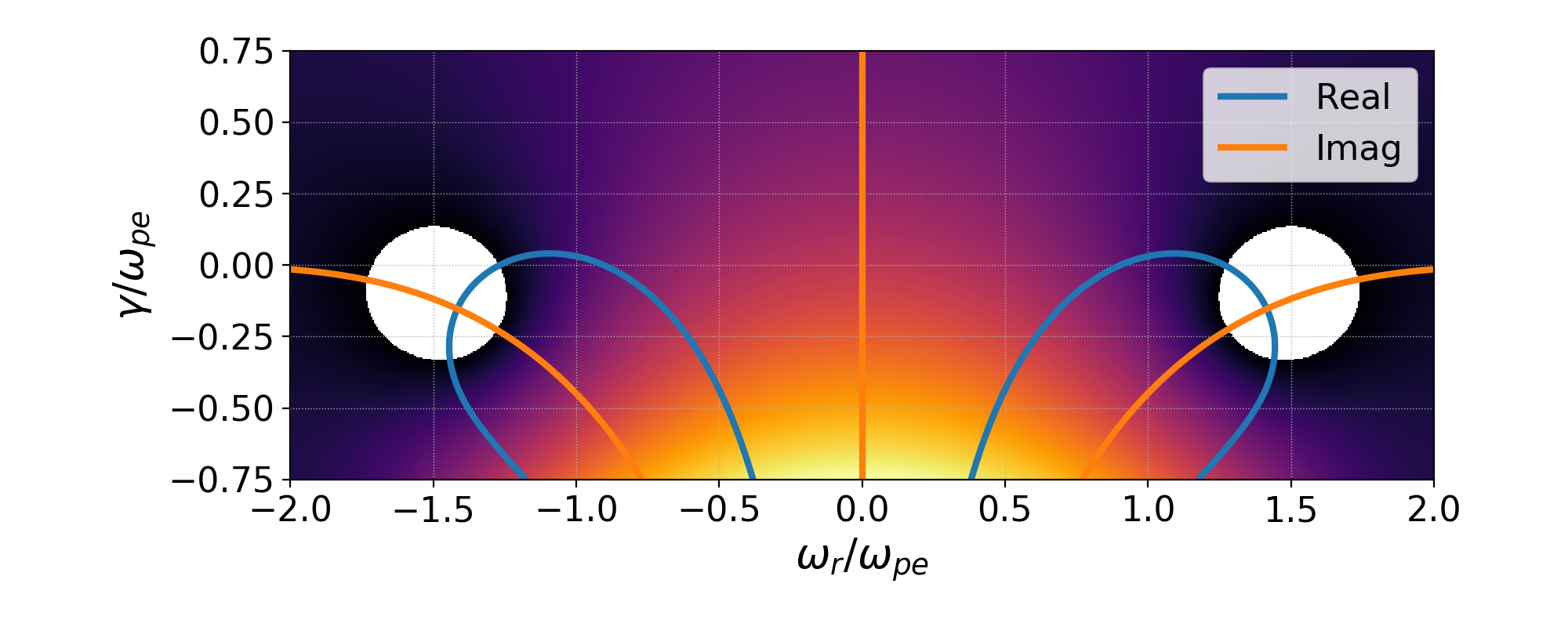}
  \caption[Roots of the Langmuir wave dispersion relation]{The
    absolute value of the left-hand-side of the Langmuir wave
    dispersion relation (logarithm of it; values less than 0.5 are
    masked out) \eqrp{benchmark:langmuir_disp3} together with the
    zero-contours of the its real and imaginary parts (blue and orange
    lines, respectively).  The contour crossings mark the solution of
    the dispersion relation, i.e., the complex frequencies of
    electrostatic waves which satisfy the Vlasov equation
    \eqrp{model:vlasov} with the initial Maxwellian distribution of
    electrons.}
  \label{fig:benchmark:langmuir_disp}
\end{figure}

Using either of these initial guesses, the Newton-Raphson root finding
algorithm\footnote{For example \texttt{scipy.optimize.newton}
  \url{https://docs.scipy.org/doc/scipy/reference/generated/scipy.optimize.newton.html}}
  with the functions in \eqr{benchmark:newton} gives for the set
  parameters the linear theory prediction of
\begin{align*}
  \omega/\omega_{pe} = 1.4157 - 0.1534i
\end{align*}

The comparison of the growth rates obtained from the linear theory and
the growth rates obtained from the simulations is in
\fgr{benchmark:landau_damping}.  Note that the numerical model
is solving the full Vlasov equation \eqrp{model:vlasov} while the
theory is a linear approximation.  Therefore, the differences on the
order of $1\%$ are not surprising.
\begin{figure}[!htb]
  \centering
  \includegraphics[width=0.8\linewidth]{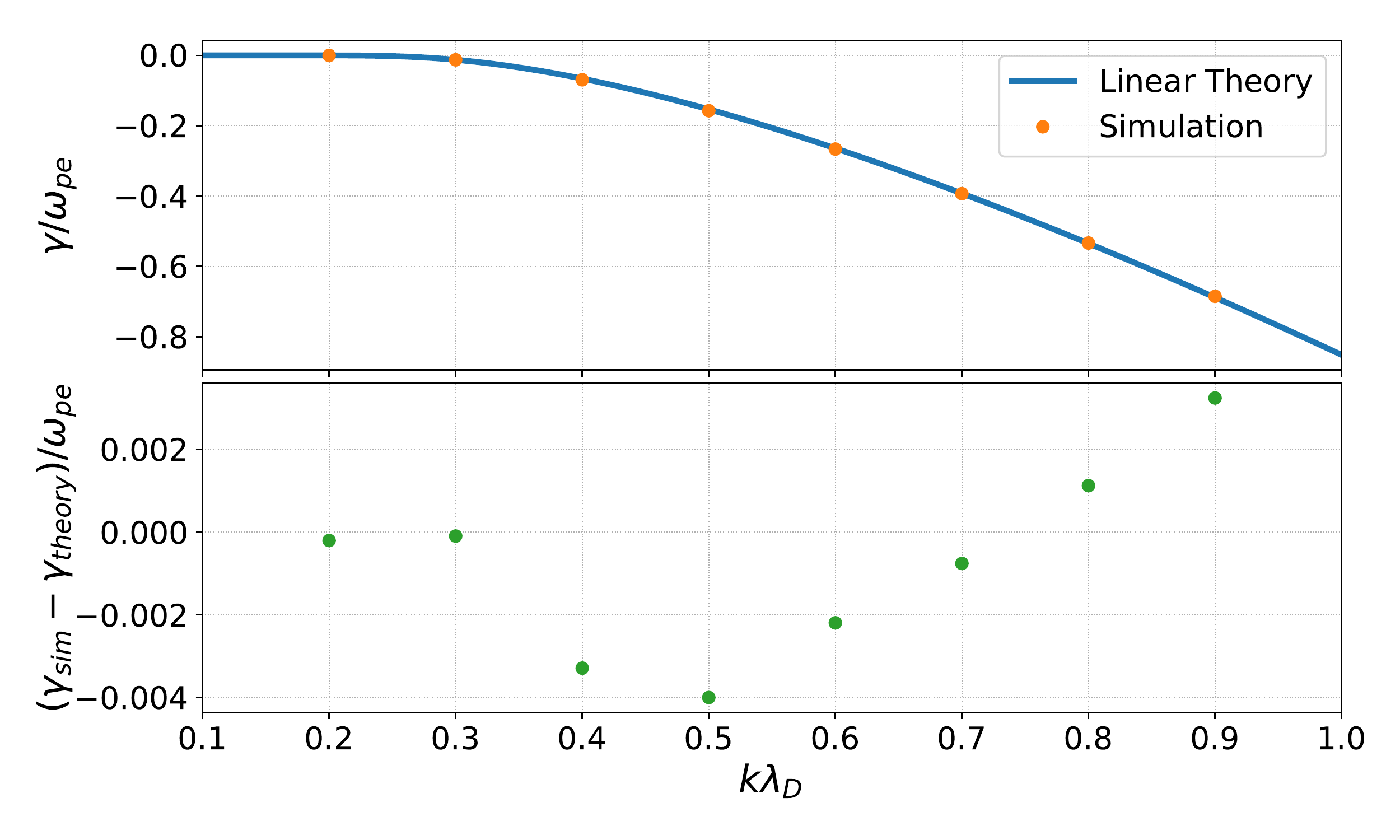}
  \caption[Landau damping rates]{The comparison of the Landau damping
    rates from the linear theory \eqrp{benchmark:langmuir_disp3}
    and the rates fitted to the simulation results.  The rates are
    plotted on top of each other in the top panel and while their
    difference is in the bottom panel.}
  \label{fig:benchmark:landau_damping}
\end{figure}

%---------------------------------------------------------------------
\FloatBarrier
\subsection{Energy Transfer and Field-particle Correlation}

For the comparison with the linear theory, the amplitude of the
Langmuir wave needs to be small ($A = 1\times10^{-4}$ is used for the
demonstration in the previous subsection). Otherwise, the nonlinear
Landau damping might dominate the process.  However, the goal of this
subsection is to investigate the energy transfer between the particles
and fields (waves) and, therefore, the amplitude needs to be
increased.  \fgr{benchmark:langmuir_distf} shows the initial and
final conditions for the Langmuir wave simulation with the amplitude
of $10\time10^{-1}$.  Note, that with such an amplitude, the
correction on the initial density to match the electric field becomes
clearly visible (left panel of \fgr{benchmark:langmuir_distf}).

\begin{figure}[!htb]
  \centering
  \includegraphics[width=0.8\linewidth]{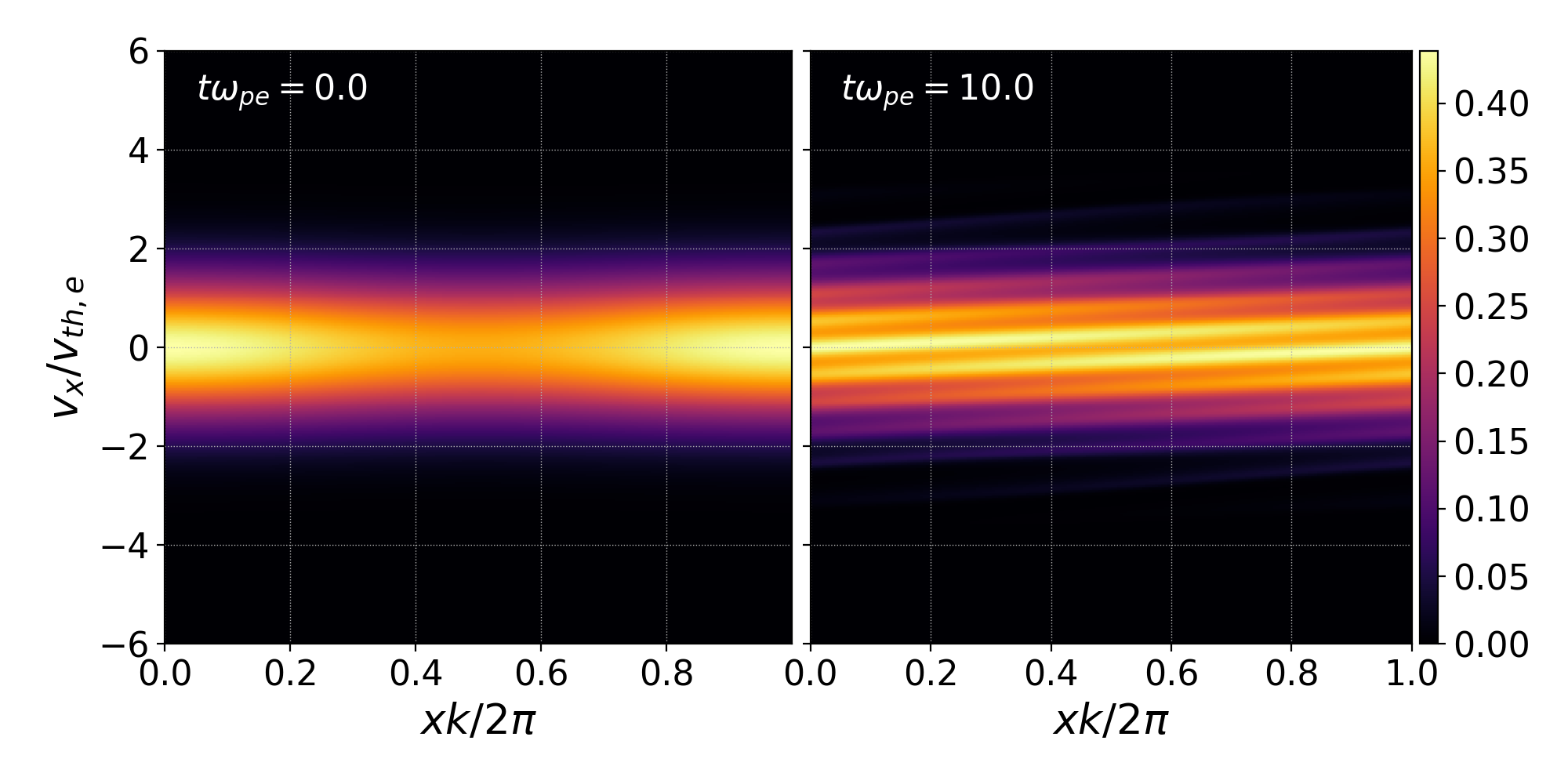}
  \caption[Electron distribution function in the Langmuir wave
    simulation]{Electron distribution functions in the Langmuir wave
    simulation. The left panel shows the initial conditions; note the
    cosine profile which is introduced to satisfy the Gauss law
    \eqrp{model:gauss}.  The right panel depicts the final time of
    $10/\omega_{pe}$.  Instead of the expected flattening of the
    distribution function around the wave phase velocity, the results
    are dominated by the electron waves and the phase space mixing.}
  \label{fig:benchmark:langmuir_distf}
\end{figure}

The wave-particle interaction is governed by the force term in the
Vlasov equation,
\begin{align*}
  \frac{\partial f}{\partial t} +
  \underbracket{v_x\partial_x f}_{\substack{\text{ballistic}\\\text{term}}} + 
  \underbracket{\frac{q}{m}E_x\partial_{v_x}
  f}_{\text{force term}} = 0.
\end{align*}
However, the force term is responsible for both the secular energy
transfer connected with damping and the oscillatory energy
transfer. What is more, the latter is typically dominant (see
\fgr{benchmark:langmuir_distf_cut}a).  It is the periodic energy
transfer connected with the damped linear motion of the waves
\citep{Howes2017}.  One way to recover the signature of the underlying
secular energy transfer between particles and fields is to integrate
the results over the whole spatial domain; see the flattening of the
distribution exactly at the predicted location in
\fgr{benchmark:langmuir_distf_cut}b).

\begin{figure}[!htb]
  \centering
  \includegraphics[width=0.8\linewidth]{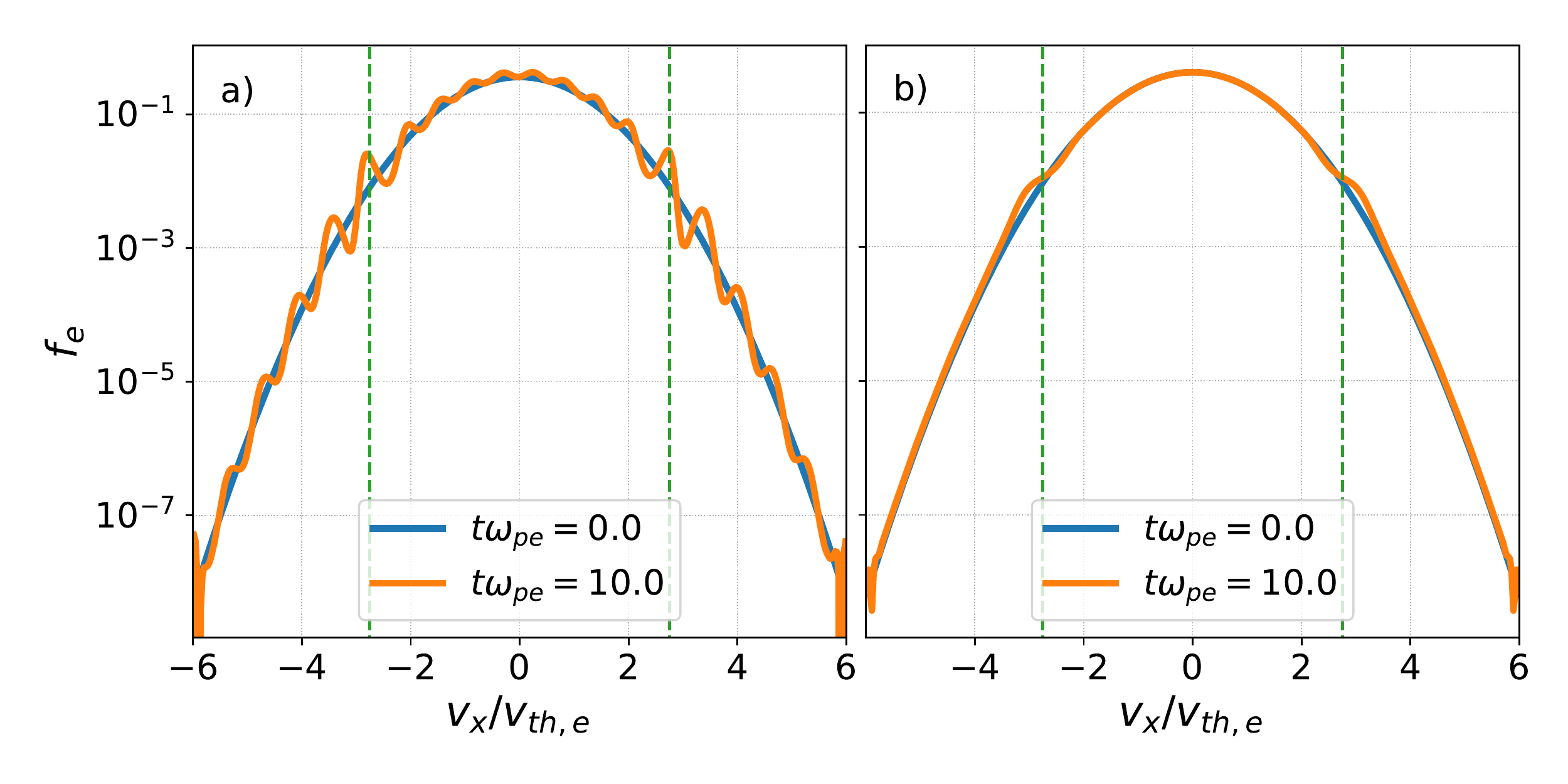}
  \caption[Cut and average of the electron distribution function in
    the Langmuir wave simulation]{Velocity profiles of the
    distribution function in the Langmuir wave simulation.  The left
    panel (a) shows the cut at the middle of the domain,
    $f_e(x=0,v_x)$, while the right panel (b) shows the average over
    $x$, $\langle f_e(x,v_x)\rangle_x$.  The green dashed lines show
    the theoretical location, where the flattening should occur, $v_x
    = (\omega_{pe}^2 + \frac{3}{2}k^2v_{th}^2)/k$ \citep{Chen1985}.
    Note that the cut (a) corresponds exactly to
    \fgr{benchmark:langmuir_distf} while the average (b) clearly
    shows the flattening. }
  \label{fig:benchmark:langmuir_distf_cut}
\end{figure}

To illustrate this, we multiply the 1D electrostatic Vlasov equation
\eqrp{benchmark:elc_vlasov1D} by $\frac{1}{2}mv_x^2$ and
integrate over the full domain,
\begin{align*}
  \underbracket{\pfraca{t}\iint\frac{1}{2}mv_x^2f \,dv_xdx}_{\text{I}}
  + \underbracket{\int \partial_x \left( \int \frac{1}{2}mv_x^3f\,dv_x
    \right)dx}_{\text{II}} + \underbracket{\int E_x \int \frac{1}{2} qv^2
  \partial_{v_x} f \,dvdx}_{\text{III}} = 0.
\end{align*}
The first term represents the change of the microscopic kinetic energy
of the particles,
\begin{align*}
  W := \int_{-L/2}^{L/2}\int_{-\infty}^{\infty}\frac{1}{2}mv_x^2f
  \,dv_xdx.
\end{align*}
The second (ballistic) term is the perfect differential in $x$, and, therefore,
disappears on periodic domains or when the distribution function
approaches zero at infinity.  The third (force) term can be integrated
\textit{per partes},
\begin{align*}
  \int E \int \frac{1}{2} qv^2 \partial_{v_x} f \,dvdx = - \int E_x \int
  qvf \,dvdx = -\int jE\,dx.
\end{align*}
Then using the electrostatic Ampere-Maxwell law we get
\begin{align*}
  \varepsilon_0\pfrac{E_x}{t} = -jE_x \quad \Rightarrow \quad
  \pfraca{t}\left(\varepsilon_0\frac{E_x^2}{2}\right) = -jE_x.
\end{align*}
Putting everything together we see that in the integral sense the
energy is conserved between the fields and particles,
\begin{align}
  \pfraca{t} \left( \int \varepsilon_0\frac{E_x^2}{2}\,dx + W \right) =
  0.
\end{align}

However, global integrations are neither precise nor always possible.
For example in the astrophysical plasma applications, where the energy
transfer is important for understanding of the turbulence in the
heliosphere \citep{Howes2017}, the experimental data come from
spacecrafts and, therefore, represent only single-point measurements.
In the computational plasma simulations, the whole spatial profile is
available, however, the energy transfer might be localized in phase
space and this information is lost during the integration.
\cite{Klein2016} address this problem by introducing a novel
diagnostic techniques -- the field-particle correlation (FPC).  FPC is
then discussed in greater detail by \cite{Howes2017} and
\cite{Klein2017}.

We can take a closer look on the energy balance by splitting the
distribution function into the equilibrium part\footnote{That is the
  Maxwellian distribution as it is shown in the subsection
  \ref{sec:model:distf}.} and the rest,
\begin{align*}
  f(t,x,v_x) = f_M(v_x) + \delta f(t,x,v_x).
\end{align*}
Note that, while this approach seems similar to the linear theory, no
assumption is made about the relative magnitudes of the parts. Then,
\begin{align*}
  \pfrac{W}{t} = -\frac{1}{2}m\iint
  v^2\Big(\underbracket{v\partial_x\delta f}_{\text{I}} +
  \underbracket{\frac{q}{m}E_x \partial_{v_x} f_M}_{\text{II}} +
  \underbracket{\frac{q}{m}E_x \partial_{v_x} \delta f}_{\text{III}}
  \Big)\,dvdx.
\end{align*}
The first term is still the perfect differential and disappears.  The
difference is in the force term.  Since the equilibrium part (II) is
not a function of $x$, it can be rearranged into the form of perfect
differential and it disappears as well.  What is more,
$v_x^2\partial_{v_x}f_M$ is an odd function and evaluates to zero through
the integration.  Only the non-equilibrium term contributes to the
wave-particle interaction.

\begin{figure}[!htb]
  \centering
  \includegraphics[width=0.8\linewidth]{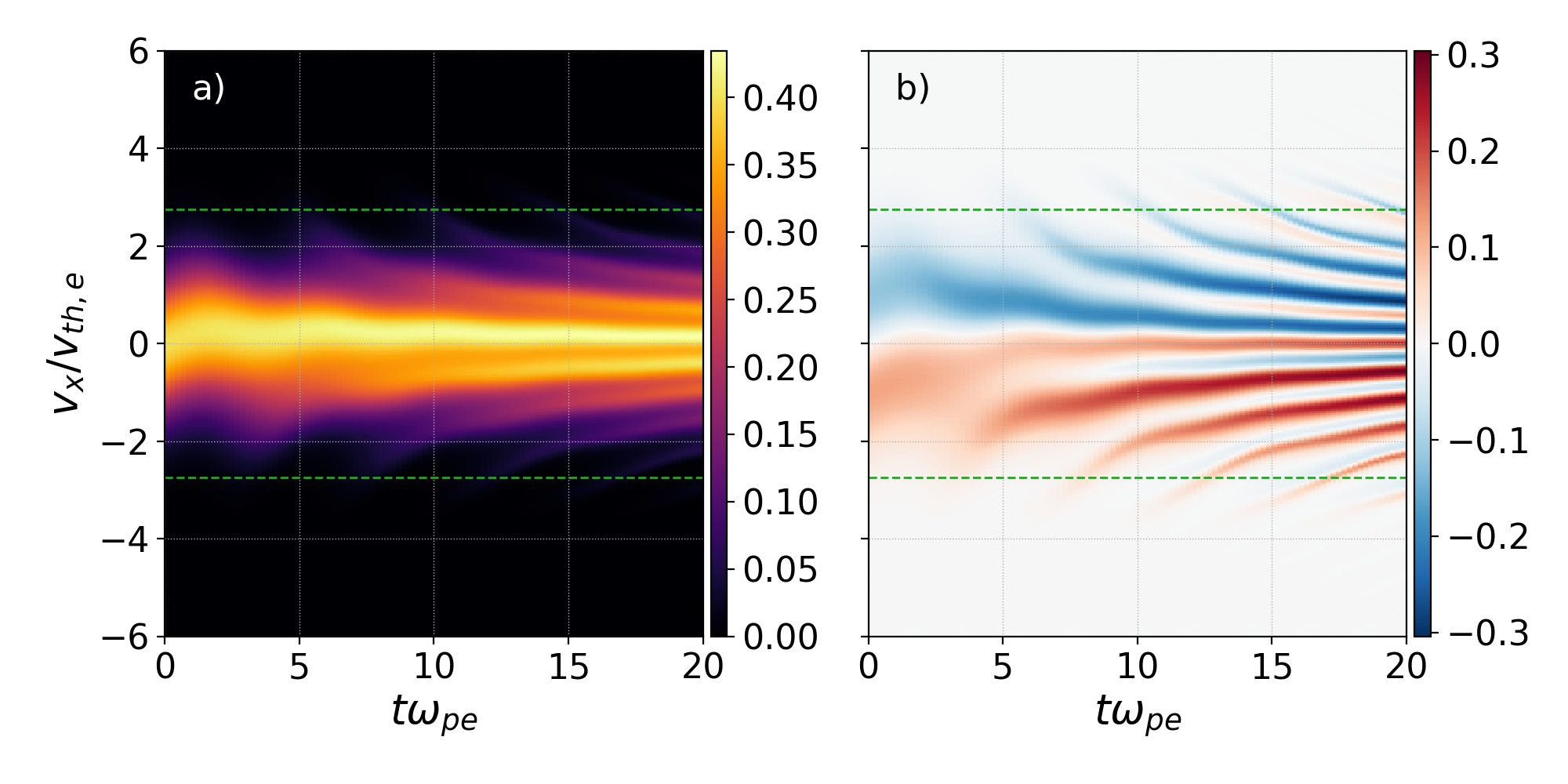}
  \caption[Evolution of a spatial cut of the distribution function and
    its derivative]{Evolution of a spatial cut at $x_0k/2\pi = 0.25$
    of the distribution function (panel a) and its derivative (panel
    b). The theoretical resonant velocity is marked in green dashed lines.}
  \label{fig:benchmark:langmuir_distf_evolution}
\end{figure}

Noticing that, \cite{Klein2016} define the single point field-particle
correlation function, which is the direct measure of the energy
transfer, as
\begin{align}\label{eq:benchmark:fpc}
  C(t, v_x, N) := - \frac{qv_x^2}{2}\frac{1}{N\Delta t}
  \sum_{i=0}^N \pfrac{\delta f(t+i\Delta t, x_0 , v_x)}{v_x} E_x(t+i\Delta
  t, x_0),
\end{align} 
where the $\Delta t$ is the time interval between the individual
frames.  At this point, it is worth stressing out that the continuum
kinetic methods are well suited for the application of the FPC.  When
applied with methods affected by the statistical noise, for example
PIC, calculating the gradient of the distribution function might cause
problems.  What is more, the DG version of the continuum kinetic method
provides a high order gradient.  The time evolution of the
distribution function and its gradient are in
\fgr{benchmark:langmuir_distf_evolution}.  Both are for fixed
$x_0k/2\pi = 0.25$, where there is the maximum of the initial electric
field.

\begin{figure}[!htb]
  \centering
  \includegraphics[width=0.8\linewidth]{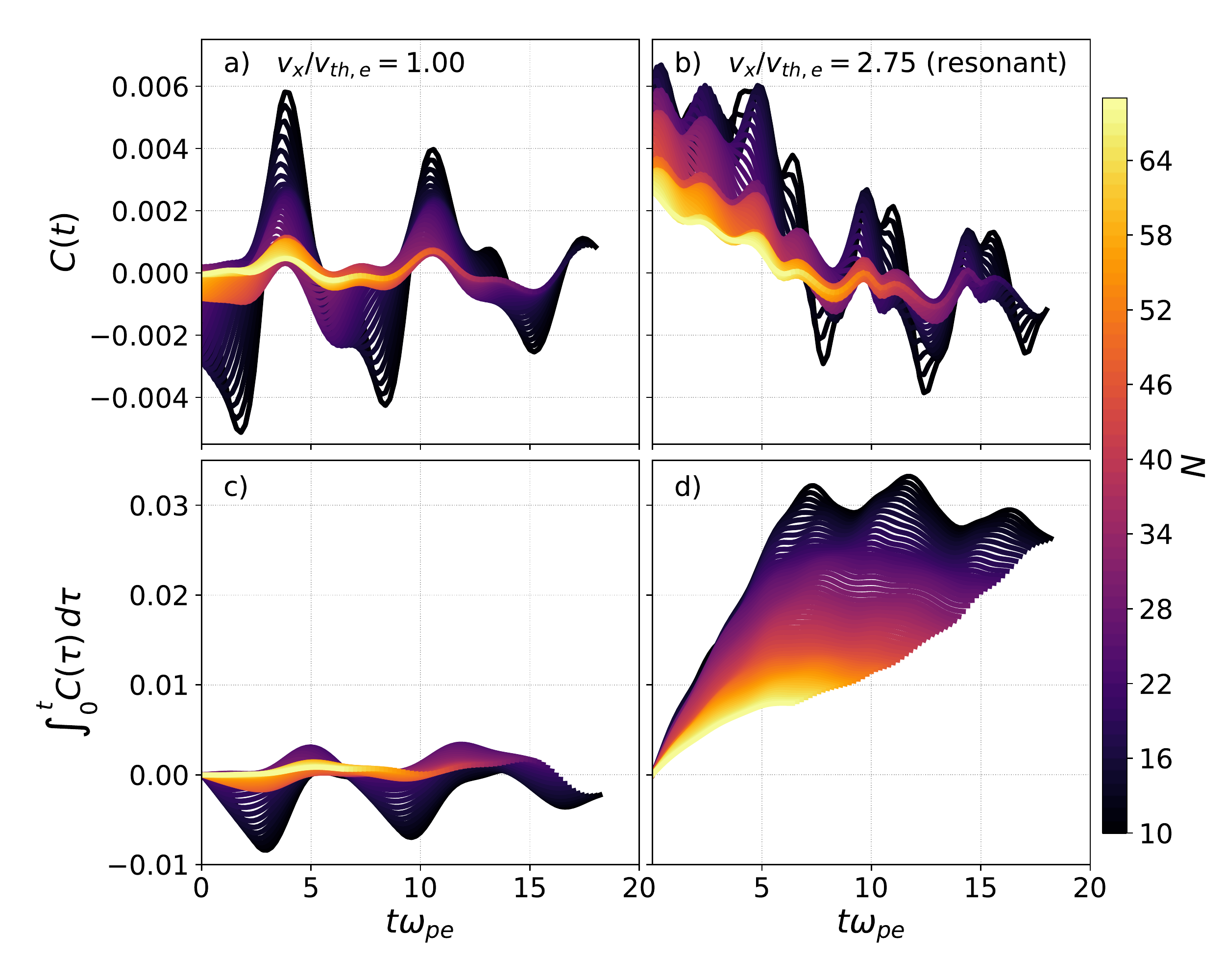}
  \caption[Localization of the energy transfer and the total
    transfered energy]{Field-particle energy transfer at $x_0k/2\pi =
    0.25$ for a non-resonant velocity $v_x/v_{th,e}=1.00$ (first
    column; a and c) and the resonant velocity $v_x/v_{th,e}=2.75$
    (second column; b and d).  Color-coded are the numbers of frames,
    $N$, used for the calculation of $C$ \eqrp{benchmark:fpc}.  The
    bottom row shows the integral $\int_0^t C(\tau)\,d\tau$ which
    represents the total energy transferred during the course of the
    simulation; positive sign signifies the transfer from the field
    energy to the particle energy.}
  \label{fig:benchmark:langmuir_C_interval}
\end{figure}

The key part of FPC is the time averaging; the $N$ needs to be chosen,
so the $N\Delta t$ corresponds to the oscillatory period.  An example
of different averaging intervals is in
\fgr{benchmark:langmuir_C_interval}.  The first row shows the
$C$ directly with the color-coded $N$.  Note that the profiles change
drastically as the $N$ increases but asymptotes as the interval
approaches the oscillation period.  The second row shows the integral
of the FPC function, $\int_0^t C(\tau)\,d\tau$, i.e., the total
accumulated energy.  Here the difference between the FPC for the
resonant and non-resonant velocities becomes clear.  While there is no
net transfer for $v_x/v_{th,e}=1.00$, there is a clear signature of
the electric field damping at the resonant velocity (the positive sign
represents the energy being transferred from the field to particles).
A full Python script highlighting the usage of FPC to create
\fgr{benchmark:langmuir_C_interval} is available in
\ref{list:scripts:fpc}.

Finally, \fgr{benchmark:langmuir_C_full} shows the full velocity
profile of FPC at $x_0k/2\pi = 0.25$ and with $N=50$.  Each single
frame is outputted once per $1/\omega_{pe}$, i.e., the FPC is averaged
over the interval of $50/\omega_{pe}$.  The highlighted energy
transfers happen exactly around the predicted velocity, $v_x =
(\omega_{pe}^2 + \frac{3}{2}k^2v_{th}^2)/k$.  What is more, the change
in the direction of the transfer is a clear indication of a resonant
process, which is captured only in the kinetic theory
\citep{Klein2016}.  Damping processes and instabilities captured just
by the fluid theory, for example the two stream instability discussed
in the next subsection, has different FPC signature.

\begin{figure}[!htb]
  \centering
  \includegraphics[width=0.7\linewidth]{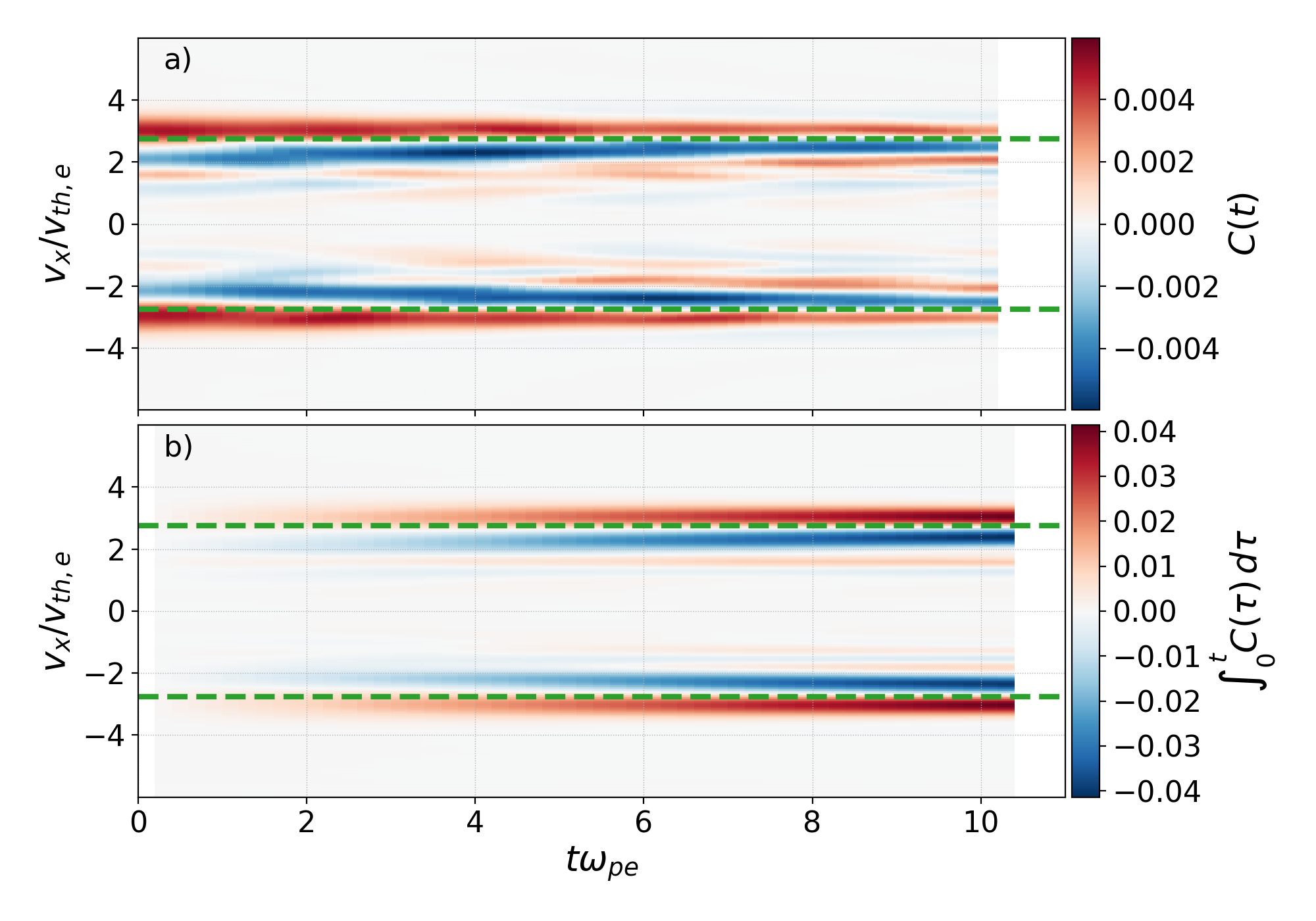}
  \caption[Field-particle correlation for the Langmuir
    waves]{Field-particle correlation for the Langmuir waves.  The
    first row shows directly $C$ \eqrp{benchmark:fpc} at
    $x_0k/2\pi = 0.25$, while the bottom row
    represents the integral $\int_0^t C(\tau)\,d\tau$. Unlike the
    \fgr{benchmark:langmuir_C_interval}, this figure captures
    the full velocity space with the resonant velocities highlighted
    with dashed green lines.}
  \label{fig:benchmark:langmuir_C_full}
\end{figure}

%=====================================================================
\FloatBarrier
\section{Two-stream Instability}\label{sec:benchmark:two-stream}

The two-stream instability is a classical textbook problem (see
\cite{Chen1985} Chapter 6.6) where the kinetic energy of the two
counter-streaming electron beams is converted into the electric field
energy.

Here, it serves as another benchmark problem and as a stepping stone
for the more complex Weibel instability in the next chapter.

%---------------------------------------------------------------------
\subsection{Linear Theory}\label{sec:benchmark:two-stream:lintheory}

The linear theory derivation follows similar steps as in
Sec.\thinspace\ref{sec:benchmark:landau:lintheory} with the exception
of the equilibrium distribution function.  Here, it consists of two
populations with bulk velocities $\pm u$,
\begin{align*}
    f_{x,0} = \frac{1}{2}\frac{1}{\sqrt{2\pi v_{th}^2}}
    \exp\left(-\frac{(v_x+u)^2}{2v_{th}^2}\right) +
    \frac{1}{2}\frac{1}{\sqrt{2\pi v_{th}^2}}
    \exp\left(-\frac{(v_x-u)^2}{2v_{th}^2}\right).
\end{align*}
Note the factors $\frac{1}{2}$ which ensure that the integrated
density is still $n_e$.  The derivative is then
\begin{align*}
    \partial_{v_x}f_{x,0} = \frac{1}{2}\frac{1}{\sqrt{2\pi v_{th}^2}} \left[
    \left(-\frac{2 (v_x+u)}{2 v_{th}^2} \right)
    \exp\left(-\frac{(v_x+u)^2}{2v_{th}^2}\right) +
    \left(-\frac{2 (v_x-u)}{2 v_{th}^2} \right)
    \exp\left(-\frac{(v_x-u)^2}{2v_{th}^2}\right) \right].
\end{align*}

Consequently, the dispersion relation,
\begin{align*}
    1 + \frac{1}{2}\frac{\omega_{pe}^2}{k^2}\frac{1}{\sqrt{2v_{th}^2}}
    \frac{2}{\sqrt{\pi}} \int_{-\infty}^\infty
    \frac{\frac{v_x-u}{2v_{th}^2}
      \exp\left(-\frac{(v_x-u)^2}{2v_{th}^2}\right) +
      \frac{v_x+u}{2v_{th}^2}
      \exp\left(-\frac{(v_x+u)^2}{2v_{th}^2}\right)}{v_x - \omega/k}
    \,dx = 0,
\end{align*}
requires a modified substitution $s_1=(v_x\pm u)/\sqrt{2v_{th}^2}$ and
$s_{2} = (\omega/k\pm u)/\sqrt{2v_{th}^2}$.

The final dispersion relation in terms of the plasma dispersion
function is
\begin{align}\label{eq:benchmark:two-stream_disp}
    1 - \frac{1}{4k^2\lambda_D^2} \left[ Z'\left(
    \frac{\omega/k - u}{\sqrt{2v_{th}^2}} \right) + Z'\left(
    \frac{\omega/k + u}{\sqrt{2v_{th}^2}} \right) \right] = 0.
\end{align}

%---------------------------------------------------------------------
\subsection{Numerical Simulation}
\label{sec:benchmark:two-stream:sims}

The full listing of the baseline two-stream simulation is available in
\ref{list:benchmark:two-stream}.  It is initialized with two electrons
beams with density $n_e=0.5$, bulk velocity $u_e=\pm1.0$, and the
thermal velocity $v_{th,e}=0.2$ in the dimensionless units; see the
top left panel of \fgr{benchmark:two-stream_overview} for the plot of
the distribution function.  \fgr{benchmark:two-stream_overview} also
includes the overview of the distribution function evolution.  Note
that the distribution function remains without noticeable changes for
the majority of the simulation run and then change rapidly together
with the exponential growth of the instability.  It also captures the
decrease of the beam kinetic energy (the bulk shifts to lower $|v_x|$
in the phase space plots) as it is transformed into electromagnetic
energy.

\begin{figure}[!htb]
  \centering
  \includegraphics[width=0.9\linewidth]{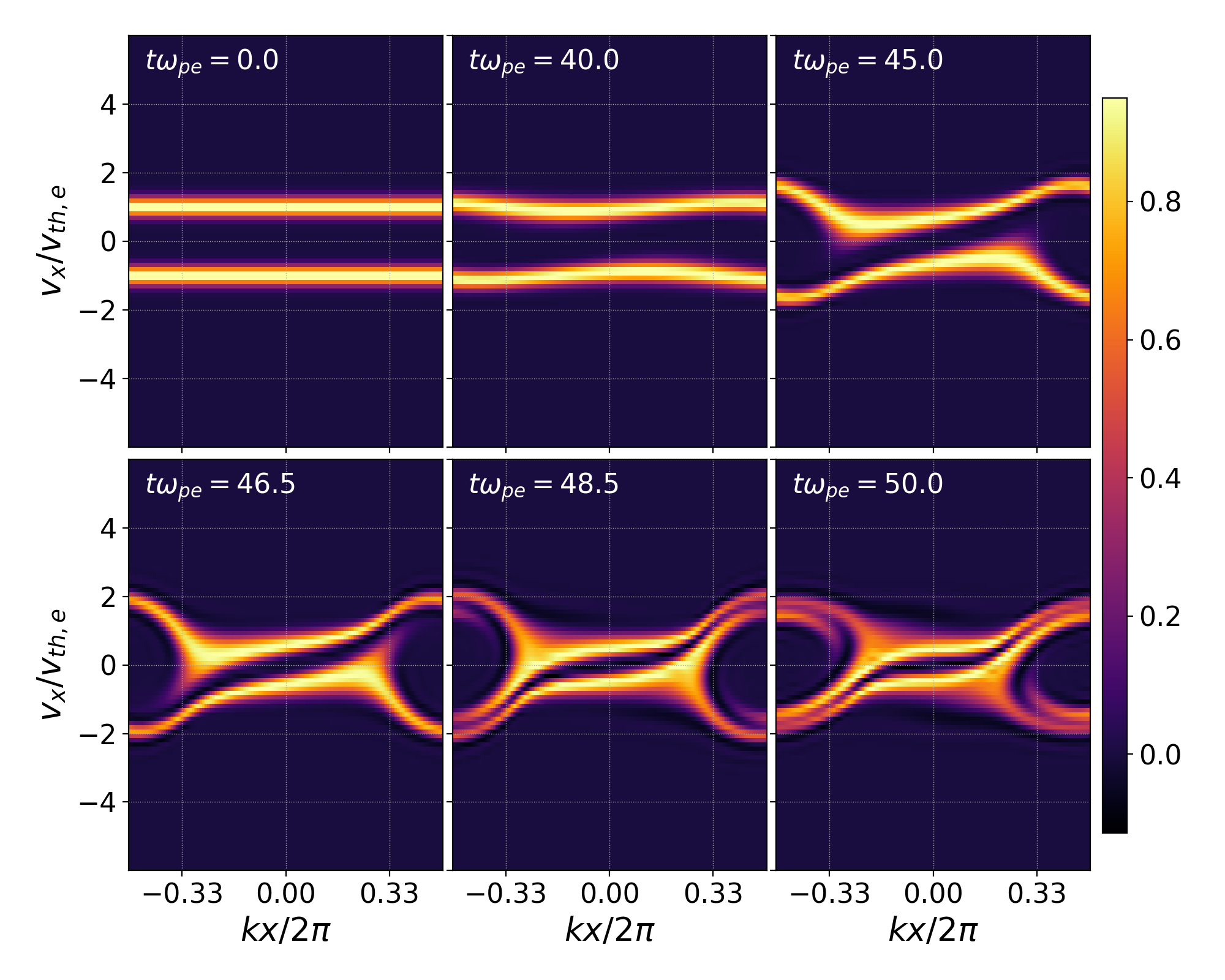}
  \caption[Evolution of the two-stream instability]{Evolution of the
    electron distribution function during the two-stream
    instability. Note the decrease of the beam kinetic energy (the
    bulk shifts to lower $|v_x|$ in the phase space plots) as is
    transformed into the electric field energy.  [Simulation input
      file: \ref{list:benchmark:two-stream}]}
  \label{fig:benchmark:two-stream_overview}
\end{figure}

The evolution of the integrated $E_x^2$, which is a proxy for the
electric field energy, is captured in
\fgr{benchmark:two-stream_energy}; the linear plot in the top
panel and the semi-logarithmic plot in the bottom one.  It
demonstrates a couple of key points.

Firstly, as it was mentioned in the Landau damping section, it is
important to fit an exponential function to linear data rather than a
straight line to a logarithm of the data.  Simulations typically
capture other modes which are quickly damped (see the roots in
\fgr{benchmark:two-stream_disp}) but could cause fluctuations at
the beginning of a simulation.  As seen in
\fgr{benchmark:two-stream_energy} for $0 < t\omega_{pe} < 15$,
these fluctuations can be exaggerated by the logarithm.

\begin{figure}[!htb]
  \centering
  \includegraphics[width=0.9\linewidth]{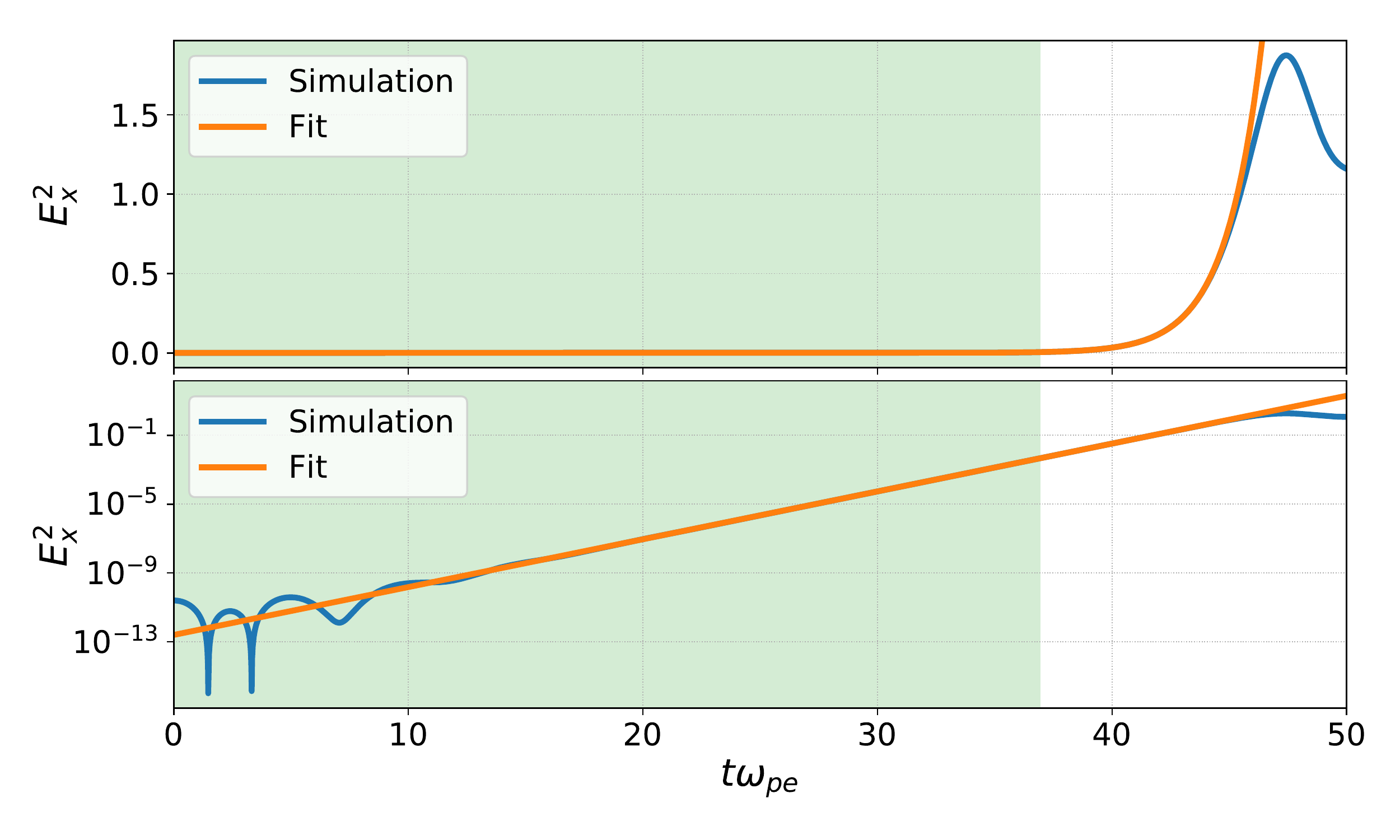}
  \caption[Evolution of field energy in the two-stream
    simulation]{Evolution of the electric field energy proxy, $E_x^2$
    in the two-stream simulation; both the linear (top) and
    semi-logarithmic (bottom) plots.  The blue lines represent the
    simulation data and the orange are the ``best'' exponential fit
    obtained using the sweeping fitting which maximizes the $R^2$
    \eqrp{benchmark:R2}.  The green background denotes the ``best''
    fitting interval.  [Simulation input file:
      \ref{list:benchmark:two-stream}]}
  \label{fig:benchmark:two-stream_energy}
\end{figure}

Secondly, there is the question of the fitting interval.  The
exponential growth prediction is not valid in the non-linear regime.
However, there is no clear boundary between the regimes.  A common
approach is to select the fitting interval by inspection, which causes
issues with the reproducibility of the obtained results.  What is
more, seemingly negligible change of the interval can result in
surprisingly different growth rate (see below).

In order to obtain reasonable and reproducible data, the sweeping
fitting is used here.  The exponential fit is performed on a
continuously increasing interval and the best fit is uniquely chosen
based on the coefficient of determination, $R^2$, which is obtained as
follows,
\begin{align}\label{eq:benchmark:R2}
  R^2 := 1-\frac{SS_{res}}{SS_{tot}} = 1 - \frac{\sum_i
    \big(y_i-y(x_i)\big)^2}{\sum_i (y_i-\overline{y})^2},
\end{align}
where $\overline{y}=\frac{1}{n}\sum_{i=1}^{n}y_i$.  Particularly for
this problem, $y(x_i) = a \,\mathrm{exp}(2\gamma x_i)$.  The green
shaded area in \fgr{benchmark:two-stream_energy} shows the fitting
region with the highest $R^2$ for this particular problem.  The
sweeping fit can be, for example, performed using the following code:
\begin{lstlisting}[language=Python]
import numpy as np
import scipy.optimize as opt
import postgkyl as pg

data = pg.GData('two-stream_fieldEnergy_')
t = data.peakGrid()[0]
Ex2 = data.peakValues()[..., 0]
def exp2(x, a, b):
    return a*np.exp(2*b*x)
bestN = 0.0
bestParams = (1.0, 0.3)
bestR2 = 0.0

for n in range(100, 7000):
    xn = t[:n]
    yn = Ex2[:n]
    params, cov = opt.curve_fit(exp2, xn, yn, bestParams)
    residual = yn - exp2(xn, *params)
    ssRes = np.sum(residual**2)
    ssTot = np.sum((yn - np.mean(yn))**2)
    R2 = 1 - ssRes/ssTot
    if R2 > bestR2:
        bestR2 = R2
        bestParams = params
        bestN = n
\end{lstlisting}

As a counter-example, the fitting region in
\fgr{benchmark:two-stream_energy_guess} is chosen by inspection.
It results in $\gamma/\omega_{pe}=2.86$ instead of
$\gamma/\omega_{pe}=3.20$ obtained from the sweeping fit, which is
approximately $10\%$ difference.
\begin{figure}[!htb]
  \centering
  \includegraphics[width=0.9\linewidth]{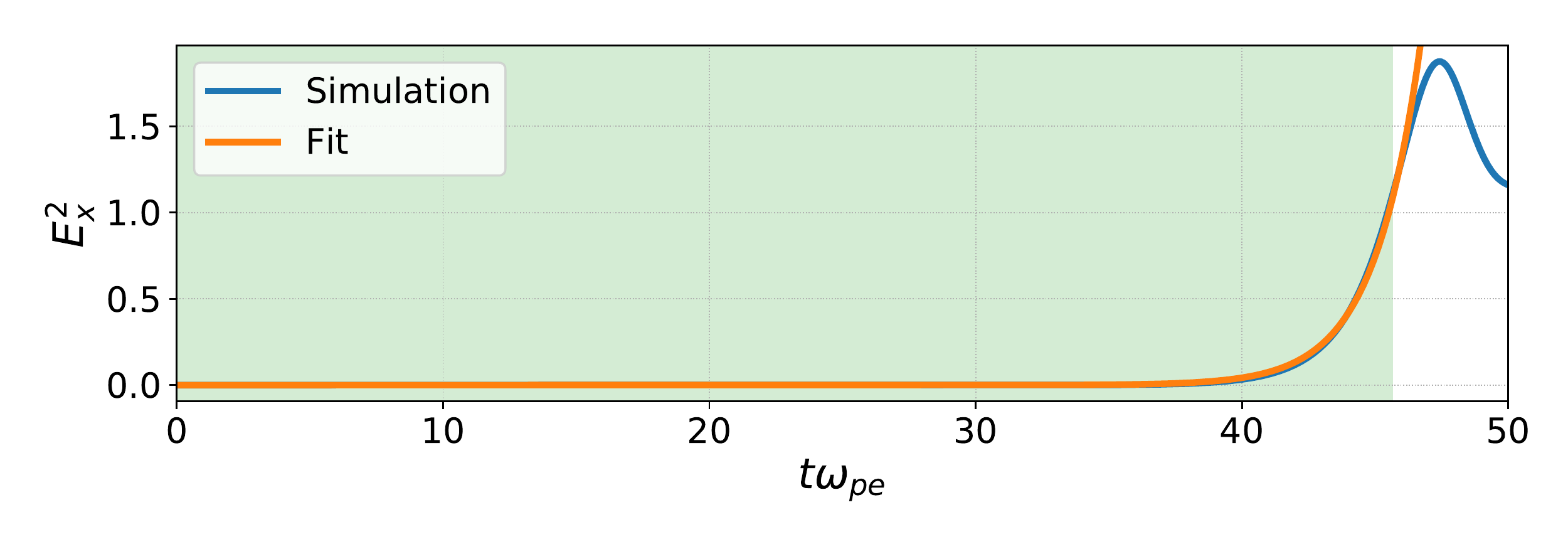}
  \caption[Evolution of field energy in the two-stream simulation
    2]{The same energy evolution as in
    \fgr{benchmark:two-stream_energy} with the difference of the
    fitting region being selected by inspection rather than by
    sweeping fit.  It results in $\approx10\%$ difference in the
    obtained growth rate.  [Simulation input file:
      \ref{list:benchmark:two-stream}]}
  \label{fig:benchmark:two-stream_energy_guess}
\end{figure}

Now that the fitting procedure has been specified, the simulation
results can be compared to the theory.  Similar to Landau damping
section we start with plotting the dispersion relation
\eqrp{benchmark:two-stream_disp}.  \fgr{benchmark:two-stream_disp}
clearly shows the single purely growing mode ($\omega_r=0.0$) of the
electron two-stream instability.  Apart from the growing mode, there
is wide range of damped oscillatory modes that are partially
responsible for the initial oscillation of energy (see
\fgr{benchmark:two-stream_energy}).

\begin{figure}[!htb]
  \centering
  \includegraphics[width=0.8\linewidth]{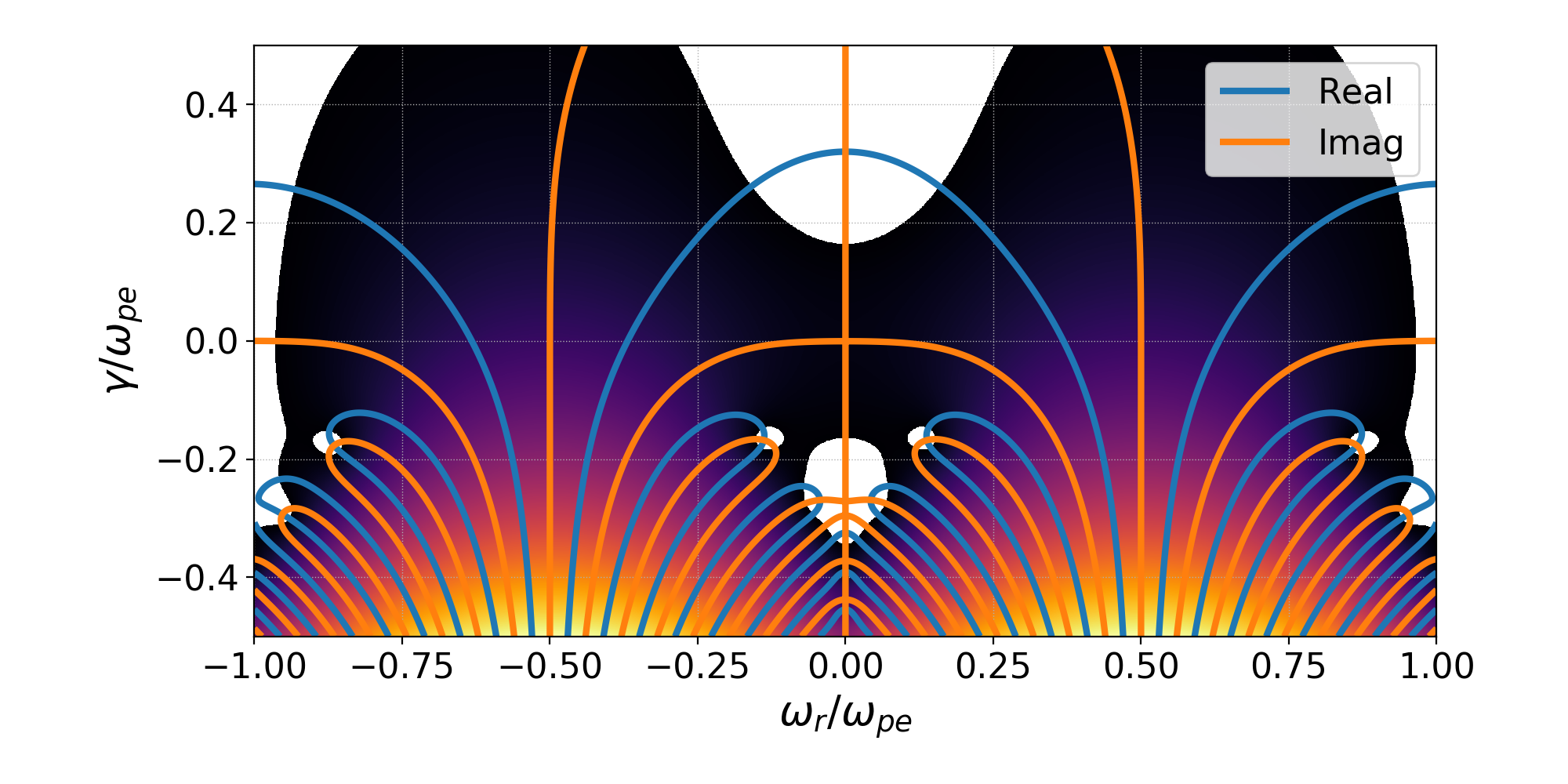}
  \caption[Roots of the two-stream instability dispersion
    relation]{The absolute value of the left-hand-side of the two
    dispersion relation (logarithm of it; values less than 2.0 are
    masked out) \eqrp{benchmark:two-stream_disp} together with the
    zero-contours of its real and imaginary parts (blue and orange
    lines, respectively).  The contour crossings mark the solution of
    the dispersion relation, i.e., the complex frequencies of
    electrostatic waves which satisfy the Vlasov equation
    \eqrp{model:vlasov} with the initial conditions of the simulation
    \ref{list:benchmark:two-stream}.}
  \label{fig:benchmark:two-stream_disp}
\end{figure}

The comparison of the linear theory growth predictions from the dispersion
relation \eqrp{benchmark:two-stream_disp} and the growth rates
obtained by fitting the simulation data is in
\fgr{benchmark:two-stream_gamma}.  The results show a good match
within $0.3\%$ for this range of $k\lambda_D$.

\begin{figure}[!htb]
  \centering
  \includegraphics[width=0.8\linewidth]{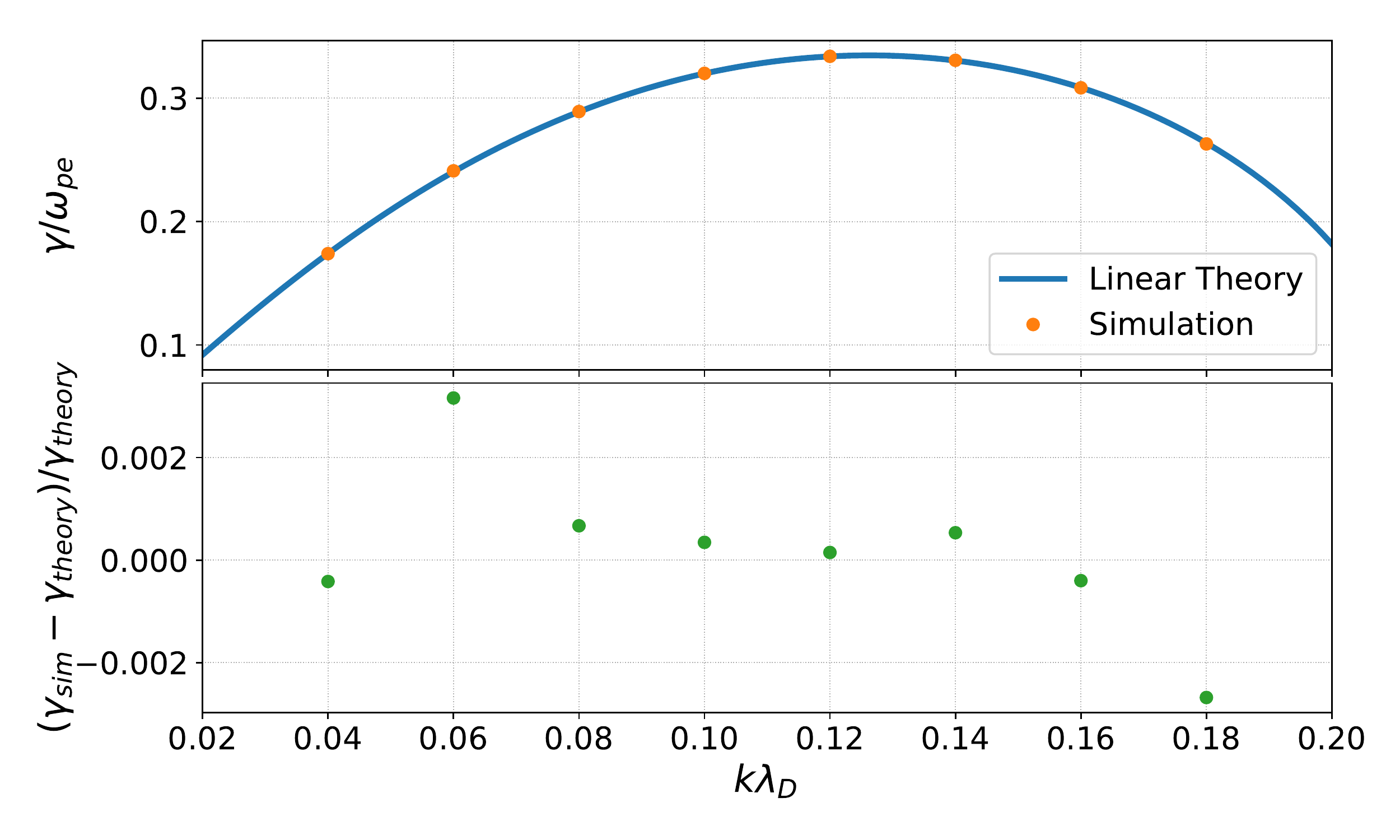}
  \caption[Growth rates of the two-stream instability]{Comparison of
    the two-stream instability growth rates obtained from the linear
    theory \eqrp{benchmark:two-stream_disp} and from the
    simulations (top) for the range of $k\lambda_D$ values. The bottom
    plot shows the relative error is less than $0.3\%$.}
  \label{fig:benchmark:two-stream_gamma}
\end{figure}

%---------------------------------------------------------------------
\subsection{Velocity Space Resolution}

The two-stream instability is an interesting problem for the velocity
space convergence test.  The input file used in the previous
discussion [\ref{list:benchmark:two-stream}] is modified to run with
8, 16, 32, 64, 128, and 256 velocity cells.  Note that 32 velocity
cells were used for the comparison with the linear theory and provided
a good match.  \fgr{benchmark:two-stream_convergence_f} shows the
comparison of the distribution function at $t\omega_{pe} = 50.0$
(analogous to the last panel in
\fgr{benchmark:two-stream_overview}).

\begin{figure}[!htb]
  \centering
  \includegraphics[width=0.9\linewidth]{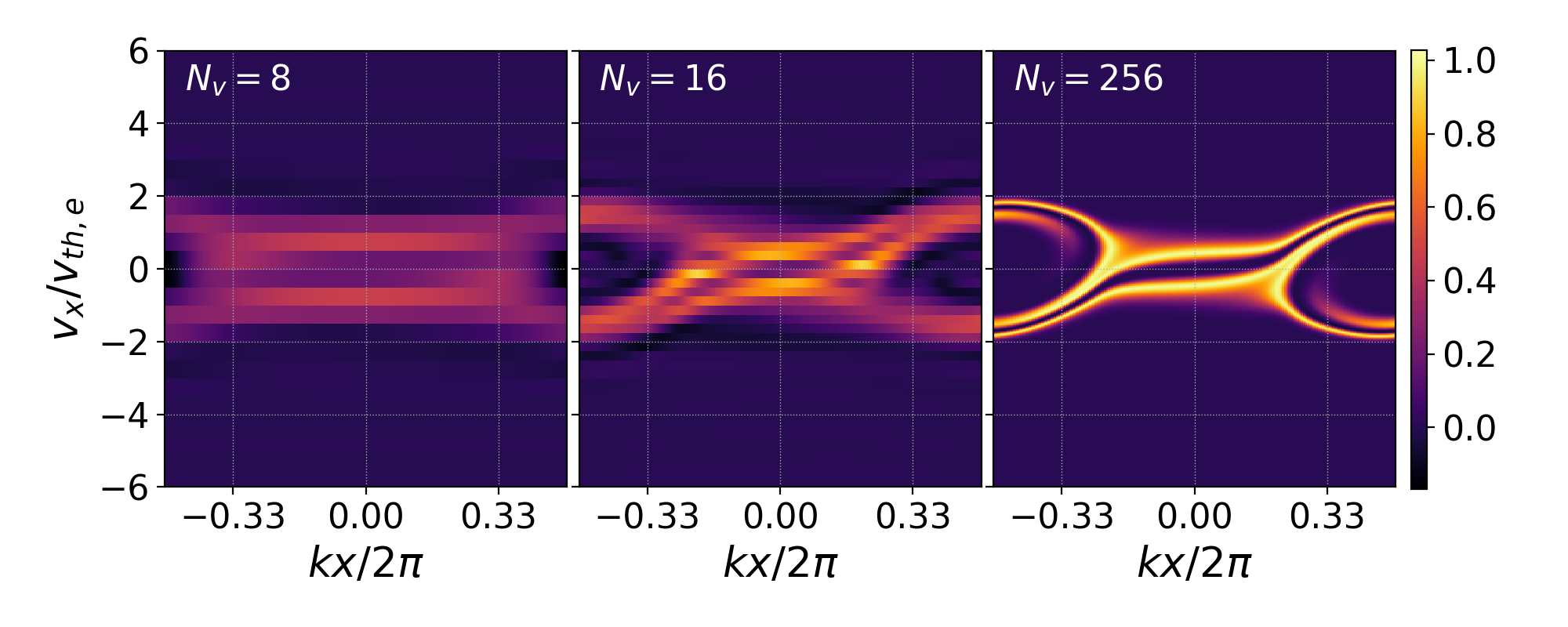}
  \caption[Test of a velocity space resolution required for two-stream
    instability]{Comparison of the electron distribution functions
    from two-stream instability simulation at $t\omega_{pe} = 50.0$.
    The panels capture results for different velocity space
    resolutions, $N_v$ -- 16, 32, and 256 cells.  An analogous plot
    for $N_v=32$ is in \fgr{benchmark:two-stream_overview}.}
  \label{fig:benchmark:two-stream_convergence_f}
\end{figure}

Increasing the velocity resolution clearly improves the problems with
negative values of the distribution function around the sharp
gradients.\footnote{It is worth noting that even though the negative
  values of distribution function are clearly nonphysical, they do not
  result in a crash of the simulation.  This is different from fluid
  simulations where the negative density produces complex sound speed,
  $\sqrt{p/\rho}$, breaking the run.}  In order to quantify the
convergence, \fgr{benchmark:two-stream_convergence} shows the
the electric field growths.  Perhaps surprisingly, the results for just
16 velocity cells appear to be on top the higher resolution data for
the linear growth phase.  The fitted values of $\gamma$ are in
Tab.\thinspace\ref{tab:benchmark:two-stream_growth}.

\begin{figure}[!htb]
  \centering
  \includegraphics[width=0.9\linewidth]{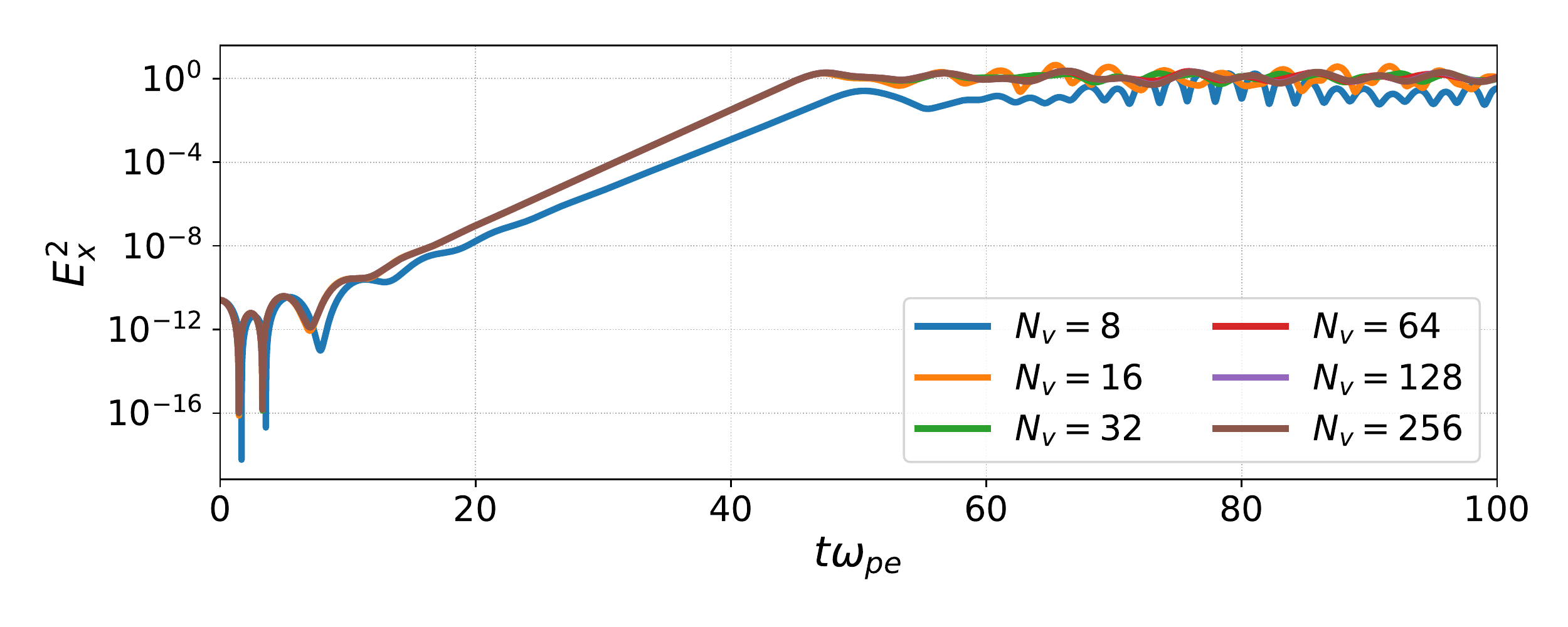}
  \caption[Two-stream instability linear growth phase convergence
    test]{Comparison of the integrated electric field energy growth in
    during two-stream instability for different velocity space
    resolutions, $N_v$ -- 8, 16, 32, 64, 128, and 256 cells.}
  \label{fig:benchmark:two-stream_convergence}
\end{figure}

\begin{table}[!htb]
 \caption[Two-stream instability growth rates based on velocity
   resolution]{Comparison of the two-stream instability growth rates,
   $\gamma$, for $k=0.5$ and $v_{th}=0.2$
   [\ref{list:benchmark:two-stream}] based on the velocity space
   resolution, $N_v$.}
 \label{tab:benchmark:two-stream_growth}
 \begin{center}
   \begin{tabular}{c|cccccc}
     \toprule 
     $N_v$ & 8 & 16 & 32 & 64 & 128 & 256 \\ \midrule
     $\gamma/\omega_{pe}$ & 0.2776 & 0.3193 & 0.3201 & 0.3199 & 0.3199
     & 0.3199 \\
     \bottomrule
   \end{tabular}
 \end{center}
\end{table}

Finally, it should be stressed that the profiles match for $N_v
\geq 16$ only during the linear growth phase but show significant
discrepancies in the non-linear part (note the logarithmic scale in
\fgr{benchmark:two-stream_convergence}).  Therefore, for example
for studies of the non-linear phase, higher velocity cell resolution is
required.

%=====================================================================
\FloatBarrier
\section{Additional Tests}

The electromagnetic benchmark with the Weibel instability ended up
with previously unreported findings and is discussed in more detail in
the next chapter.

For additional benchmarking, the reader is referred to
\cite{Juno2017} where the authors provide studies of electrostatic
shocks and  Orszag-Tang vortex.

\chapter{Weibel Instability}\label{sec:weibel}

\epigraph{This quote was taken out of context.}{\textit{Randall Munroe}}

The Weibel instability\footnote{Note that especially in the regime
  when drift velocities are larger than thermal velocities, this
  instability is also referred to as the current filamentation
  instability (CFI).} (WI) \citep{Fried1959, Weibel1959} has been
studied as a leading mechanism for the origin and growth of magnetic
fields for a number of laboratory \citep{Califano1997, Okada2007,
  Silva2002, Fox2013} and astrophysical plasma \citep{Lazar2009,
  Ghizzo2017} applications.  WI can generate a large magnetic field
from no initial field and can amplify a small existing one by many
orders of magnitude.  Hence, the WI has generated a significant amount
of interest in the laboratory and astrophysics communities in recent
years and a comprehensive study of the growth and nonlinear saturation
of the WI is critical to estimate the saturated magnetic field
magnitudes that may be achieved.

Originally, our study of WI was meant as an electromagnetic benchmark,
since both Landau damping of Langmuir waves and the two-stream
instability are electrostatic problems.  However, while  simulations
of hot plasma confirm
the dominant role of the magnetic trapping during the saturation
\citep{Davidson1972}, simulations of colder beams show new
results in which an electrostatic potential develops and plays a
critical role in saturating the WI along with the magnetic
potential\footnote{Note that in this context, the magnetic potential
  does not refer to the vector potential, $\bm{A}$, but rather to the
  integral of the magnetic part of the Lorentz force, $\int q
  \left(\bm{v}\times\bm{B}\right)_x\,dx$.}  \citep{Cagas2017c}.

Furthermore, the WI is directly relevant to the plasma sheaths
discussed in the next chapter \citep{Tang2011}.

%---------------------------------------------------------------------
\FloatBarrier
\section{Description of the Instability}
\label{sec:weibel:desc}

Here we describe the physics behind WI using two counter-streaming
electron beams, similar to the two-stream instability
(Sec.\thinspace\ref{sec:benchmark:two-stream}).  In order to
excite the two-stream instability, the initial perturbation of the
unstable equilibrium must be parallel to the beam bulk velocities of
the two populations.  For WI, the perturbation is required in the
perpendicular direction. The situation is depicted in
\fgr{weibel:description}.  Initially, the two electron populations are
counter-streaming along the $y$-axis with the bulk velocities $\pm
u_y$. Note that the $\pm$ is used to label these two
populations through out this chapter.\footnote{In the figures, the `+'
  population is depicted in blue and the `-' population in orange.}
With constant uniform density along the $x$-axis, the currents
match perfectly and there are no net currents in the domain.  However,
when the system is perturbed by magnetic field $B_z$, the magnetic
part of the Lorentz force \eqrp{model:motion2}, $q\bm{v}\times\bm{B}$,
starts accelerating the beams in opposite directions.  As a result,
the beams start filamenting and the currents no longer cancel each
other.  Due to this effect, we refer to the magnetic part of the
Lorentz force as the filamentation force, $F_f^{\pm}$.  Net currents
along the $y$-axis then create a magnetic field with the same
orientation as $B_z$ \eqrp{model:ampere} and the cycle repeats.  This
positive feedback loop is the reason for the exponential growth of the
magnetic field.

\begin{figure}[!htb]
  \centering
  \includegraphics[width=0.5\linewidth]{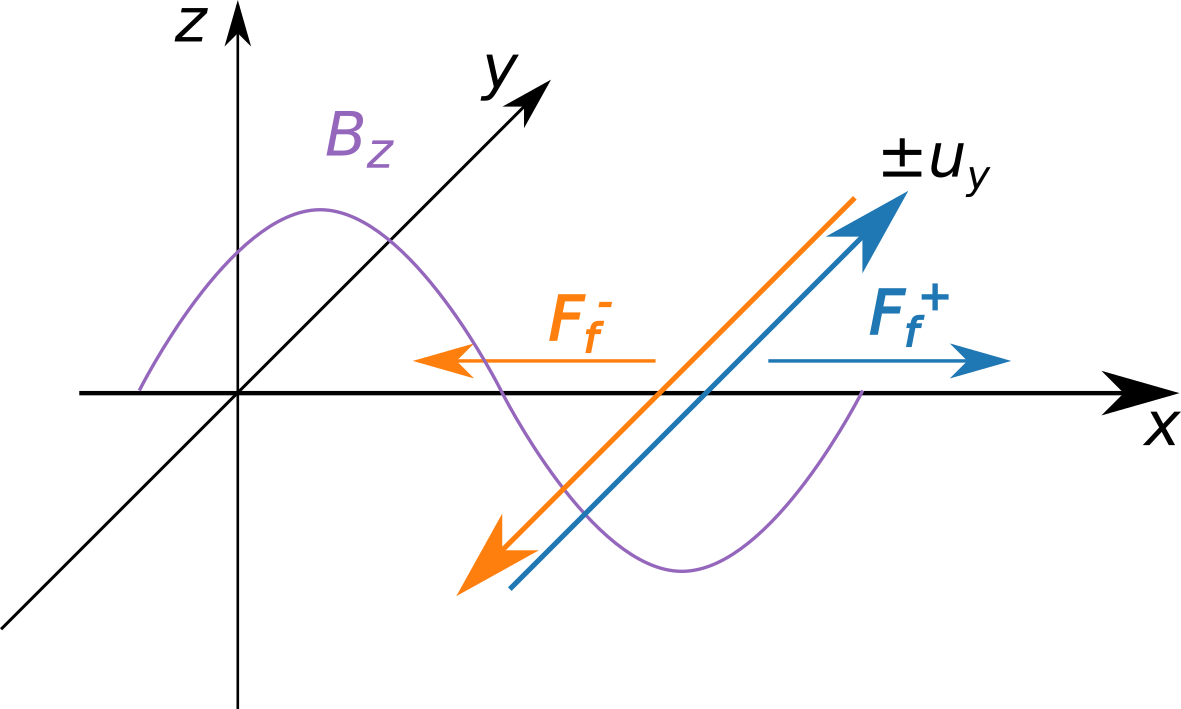}
  \caption[Drawing of the Weibel instability]{Drawing the WI setup and
    mechanism.  The $x$-axis is resolved in the simulation.  The two
    electron beams are initialized with the uniform bulk velocities
    $\pm u_y$.  Note that the colors for the $\pm$-populations are
    consistently used through this chapter.}
  \label{fig:weibel:description}
\end{figure}

It is worth noting that the initial perturbation does not need to be applied to the 
magnetic field.  Perturbing the density or bulk velocity of
one population would result in the net currents that self-consistently create the magnetic field.

As continuum kinetic simulations are noise-free, a
perturbation initialized purely in the perpendicular direction remains
perpendicular.  This allows us to focus the analysis on a single isolated
type of instability but does not reflect the real situation where the
perturbations are combined, i.e., the two-stream instability growing
together with WI.\footnote{Later in this Chapter, a secondary two-stream-like instability is discussed which growths together with WI. This secondary instability is a direct consequence of the flows introduced by the Lorentz force; in other words, it is a consequence of WI.  The two-stream instability mentioned in this paragraph is an additional electrostatic instability, which would require a perturbation along the $y$-direction.}  The analysis of their interplay is discussed, for
example, by \cite{Lazar2009} and is also a topic of current research
in the \texttt{Gkeyll} collaboration.

Finally, it is important to point out that the description of the
instability using counter-streaming beams is equivalent to the original
\cite{Weibel1959} description for a population with anisotropic
temperature.  \fgr{weibel:init} shows the initial conditions of the
simulations described in \ser{weibel:sims}.  All three cases are
initialized with Maxwellian beams, however, in the case with the
highest temperature (\fgr{weibel:init}a), the beams merge together to form a single anisotropic population.
\begin{figure}[!htb]
  \centering
  \includegraphics[width=0.9\linewidth]{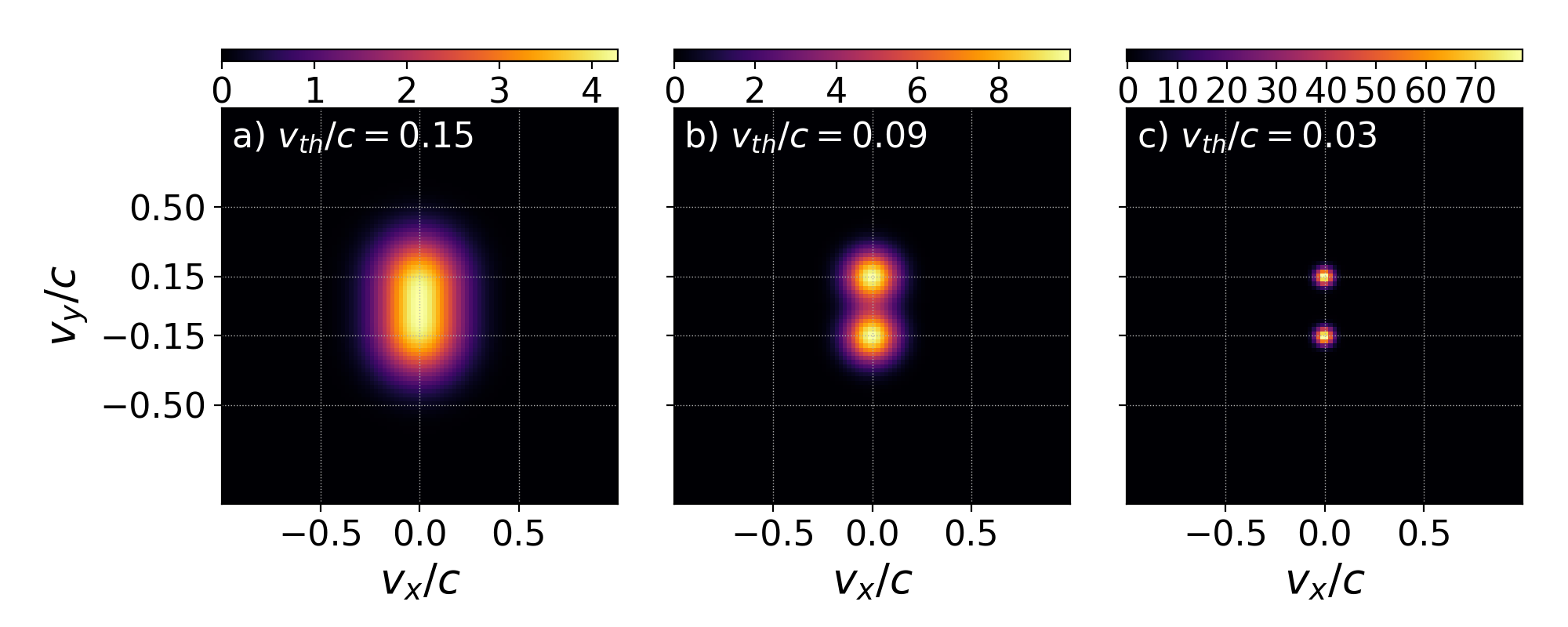}
  \caption[Initial conditions of the Weibel instability]{The initial
    conditions for the WI simulations showing the $v_x-v_y$ cuts of
    the distribution functions (distribution functions are initialized to be 
    uniform in $x$).  The bulk velocities and densities of the beams
    are fixed while the temperatures vary.}
  \label{fig:weibel:init}
\end{figure}

%---------------------------------------------------------------------
\FloatBarrier
\section{Linear Theory}
\label{sec:weibel:lintheory}

In the previous electrostatic cases
(\ser{benchmark:landau:lintheory} and
\ser{benchmark:two-stream:lintheory}), the Vlasov
equation \eqrp{model:vlasov} is linearized and combined with
Poisson's equation \eqrp{model:poisson}.  Here, in order to capture
electromagnetic effects,  \eqr{model:vlasov} is combined with the
linearized  Amp\`{e}re's \eqrp{model:ampere} law,\footnote{By
  definition,
  $$\nabla\times\bm{B}_1 =
  i\begin{pmatrix}k_yB_{z,1}-k_zB_{y,1}\\ k_zB_{x,1}-k_xB_{z,1}\\ k_xB_{y,1}-k_yB_{x,1} \end{pmatrix}.$$
  Assuming $\bm{B}_1=(0,0,B_{z,1})$ and $\bm{k} = (k_x,0,0)$, we can
  limit the discussion only to the $y$-component.}
\begin{align}\label{eq:weibel:linampere}
  -ik_xB_{z,1} = \mu_0q \left(\int v_yf_{1}^{+}\,d\bm{v} +
  \int v_yf_{1}^{-}\, d\bm{v}\right) - \frac{i\omega}{c^2}
  E_{y,1}.
\end{align}
Elimination of the fields\footnote{This requires a few algebraic steps
  and tricks.  The full process is in the Appendix \ref{app:weibel}.}
from \eqr{weibel:linampere} gives the following kinetic dispersion
relation,
\begin{align}\label{eq:weibel:disp}
 \frac{1}{2} - \frac{\omega_{pe}^2}{c^2k_x^2} \left[\zeta
  Z(\zeta) \left(1+\frac{u_y^2}{v_{th}^2}\right)
  + \frac{u_y^2}{v_{th}^2}\right] - \frac{v_{th}^2}{c^2}\zeta^2 = 0,
\end{align}
where
\begin{align*}
  \zeta = \frac{\omega/k_x}{\sqrt{2v_{th}^2}}.
\end{align*}

In order to compare with the literature, the cold fluid limit can be
calculated using the asymptotic expansion of $Z(\zeta)$ for large
$\zeta$, $|\zeta|\gg1$ \citep{Huba2004},
\begin{align*}
  Z(\zeta) = i\sqrt{\pi} \sigma \exp(-\zeta^2) - \zeta^{-1} \left(1 +
  \frac{1}{2\zeta^2} + \frac{3}{4\zeta^4} + \frac{15}{8\zeta^6} +
  \ldots \right), \quad \sigma =
  \begin{cases}
    0 & \gamma > |\omega_r|^{-1}\\
    1 & |\gamma| < |\omega_r|^{-1}\\
    2 & \gamma < -|\omega_r|^{-1}
    \end{cases}.
\end{align*}
Neglecting HOT gives
\begin{align*}
  \frac{1}{2} - \frac{\omega_{pe}^2}{c^2k_x^2} \left[
    \zeta\left(i\sqrt{\pi} \exp(-\zeta^2) -
    \frac{1}{\zeta}-\frac{1}{2\zeta^2}\right) \left(1 +
    \frac{u_y^2}{v_{th}^2}\right) + \frac{u_y^2}{v_{th}^2} \right] -
  \frac{v_{th}^2}{c^2}\zeta^2 &\approx 0, \\
  \frac{1}{2} - \frac{\omega_{pe}^2}{c^2k_x^2} \left(-1 -
  \frac{u_y^2}{v_{th}^2} - \frac{2v_{th}^2}{2\omega^2/k_x^2} -
  \frac{2u_y^2v_{th}^2}{2v_{th}^2\omega^2/k_x^2} +
  \frac{u_y^2}{v_{th}^2} \right) - \frac{\omega^2/k_x^2}{2c^2} &\approx
  0.
\end{align*}
Finally, rearranging the terms leads to
\begin{align}\label{eq:weibel:cold-dispersion}
  \frac{\omega^4}{2\omega_{pe}^2k_x^2c^2} -
    \left(\frac{1}{2\omega_{pe}^2} - \frac{1}{c^2k_x^2}\right)\omega^2
    - \frac{u_y^2}{c^2} \approx 0,
\end{align}
which corresponds exactly to the Eq.\thinspace12 in
\cite{Califano1997}.

\begin{figure}[!htb]
  \centering
  \includegraphics[width=0.8\linewidth]{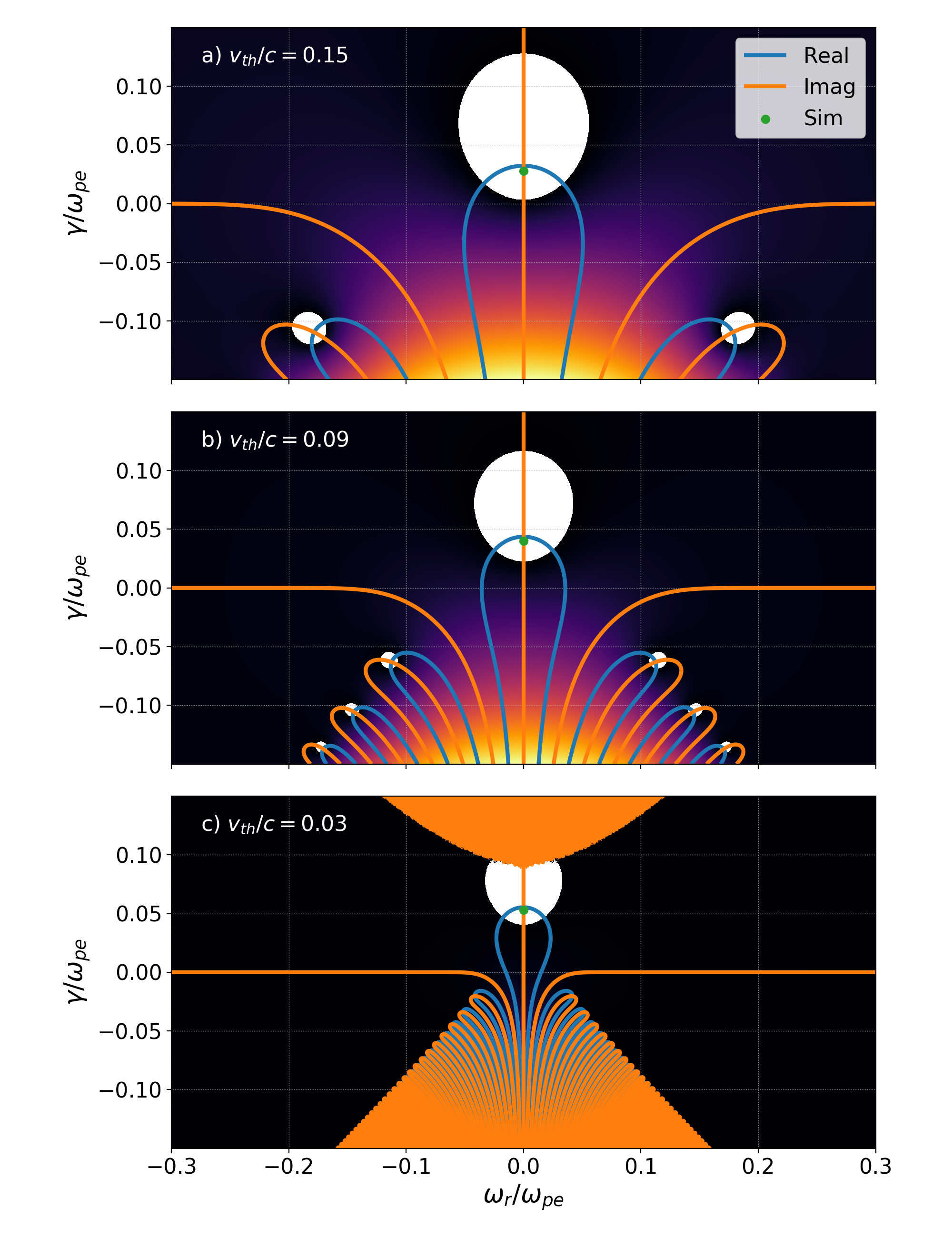}
  \caption[Roots of the Weibel instability dispersion relation]{Plots
    of the WI for three thermal velocities, $v_{th}=$ 0.3, 0.1, and
    0.04.  In all the cases, there is a single growing mode together
    with family of dampened oscillatory modes.  Note that in this
    range, the growth rate is decreasing with the temperature of the
    beams.}
  \label{fig:weibel:disp}
\end{figure}

%---------------------------------------------------------------------
\FloatBarrier
\section{Numerical Simulations}
\label{sec:weibel:sims}

Following the discussion in \ser{weibel:desc}, the WI can be captured
in a 1X2V continuum kinetic simulation where we resolve $x$ and evolve
the $v_x$ and $v_y$ velocity components.  The situation corresponds to
\fgr{weibel:description}.  The full listing of one of the simulations
used in this chapter is in Appendix~\ref{list:weibel:weibel}. As shown
in \fgr{weibel:init}, the the simulations are run with three different
initial thermal velocities, while the other parameters (density, bulk
velocity, and the initial perturbation) are kept the same.  The first
case uses $v_{th} = u_y$, which is relevant to the classical WI
configuration, while the other two feature distinct electron streams
with $v_{th} < u_y$.

The initial uniform but unstable equilibrium is disrupted with a
perturbation in $B_z$, given by
\begin{align*}
  B_z(x) = A \sin(k_xx),
\end{align*}
while all the other quantities are uniform in $x$.  The $A$ used in
this work is $10^{-4}$.  $E_x$, $E_y$, $E_z$, $B_x$, and $B_y$ are all
initialized to zero.  The configuration space is chosen to be
periodic, spanning from 0 to $2\pi/k_x$, therefore, a single period of
the initial perturbation is captured.

\subsection{Linear Growth}

Plots of the kinetic dispersion relation \eqrp{weibel:disp} in
\fgr{weibel:disp} are reminiscent of the two-stream instability
dispersion (\fgr{benchmark:two-stream_disp}).

Similar to \ser{benchmark:two-stream:sims}, a ``sweeping fit''
is required to obtain growth rates from the simulation data in a
reproducible way.  The fits are in \fgr{weibel:energy_h},
\fgr{weibel:energy_m}, and \fgr{weibel:energy_l}. The extracted growth
rates are plotted as a green dots on top of the dispersion relation
plots (\fgr{weibel:disp}) and also listed in \tbr{weibel:growths}.  In
order to compare the rates to the theory quantitatively, the
Newton-Raphson root finding algorithm is used again.  For that we
require the derivatives,
\begin{gather*}
 F(\omega) = \frac{1}{2} - \frac{\omega_{pe}^2}{c^2k_x^2}
 \left[\zeta(\omega) Z\big(\zeta(\omega)\big)
   \left(1+\frac{u_y^2}{v_{th}^2}\right) +
   \frac{u_y^2}{v_{th}^2}\right] -
 \frac{v_{th}^2}{c^2}\zeta^2(\omega), \\ \pfrac{F(\omega)}{\omega} = -
 \frac{\omega_{pe}^2}{\sqrt{2}v_{th}c^2k_x^3}
 \left(1+\frac{u_y^2}{v_{th}^2}\right)
 \left[ Z\big(\zeta(\omega)\big)+\zeta(\omega)Z'\big(\zeta(\omega)\big)\right]
 - \frac{2v_{th}\zeta(\omega)}{\sqrt{2}v_{th}c^2k_x}.
\end{gather*}

\tbr{weibel:growths} summarizes the simulation growth rates and the
theoretical predictions for $k=0.4$ and $v_{th}/c=0.15$, $0.09$, and
$0.03$. Note that the discrepancy between the simulation and theory is
much bigger in comparison to Landau damping and the two-stream
instability. The  explanation for this is offered at the end of this
Chapter in \ser{weibel:phase}.

\begin{table}[!htb]
 \caption[Weibel instability growth rates]{Comparison of the WI growth
   rates, $\gamma$, for $k=0.4$ and $v_{th}/c=0.15$, $0.09$, and
   $0.03$ [\ref{list:weibel:weibel}].  Theoretical values are
   calculated with the Newton method from \eqr{weibel:disp} while the
   simulation results are obtained using the ``sweeping fit''.}
 \label{tab:weibel:growths}
 \begin{center}
   \begin{tabular}{c|ccc}
     \toprule 
     $v_{th}/c$ & 0.15 & 0.09 & 0.03 \\ \midrule
     $\gamma_{sim}/\omega_{pe}$ & 0.0278 & 0.0402 & 0.0529  \\
     $\gamma_{theor}/\omega_{pe}$ & 0.0322 & 0.0436 & 0.0553  \\
     \bottomrule
   \end{tabular}
 \end{center}
\end{table} 

\begin{figure}[!htb]
  \centering
  \includegraphics[width=0.8\linewidth]{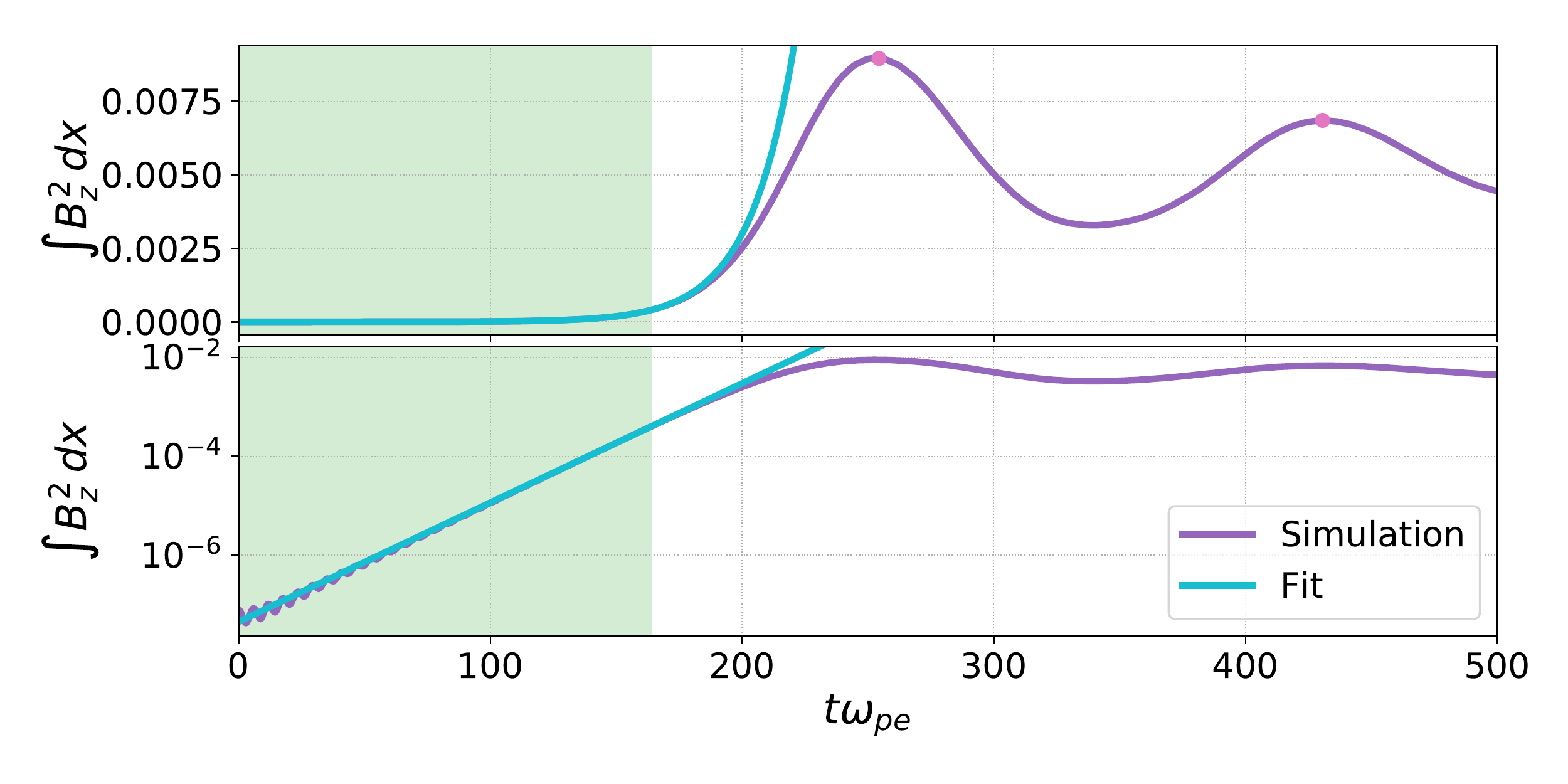}
  \caption[Total energy evolution in WI (high temperature)]{Linear and
    semi-logarithmic plots of the magnetic field energy proxy,
    $B_z^2$, evolution during WI (high temperature; $v_{th}/c=0.15$).
    Figures show the simulation data together with the best result of
    the ``sweeping fit''.  The region of the best fit is highlighted
    with green shading.  In the nonlinear phase, the energy maxima are
    highlighted with pink dots.}
  \label{fig:weibel:energy_h}
\end{figure}
\begin{figure}[!htb]
  \centering
  \includegraphics[width=0.8\linewidth]{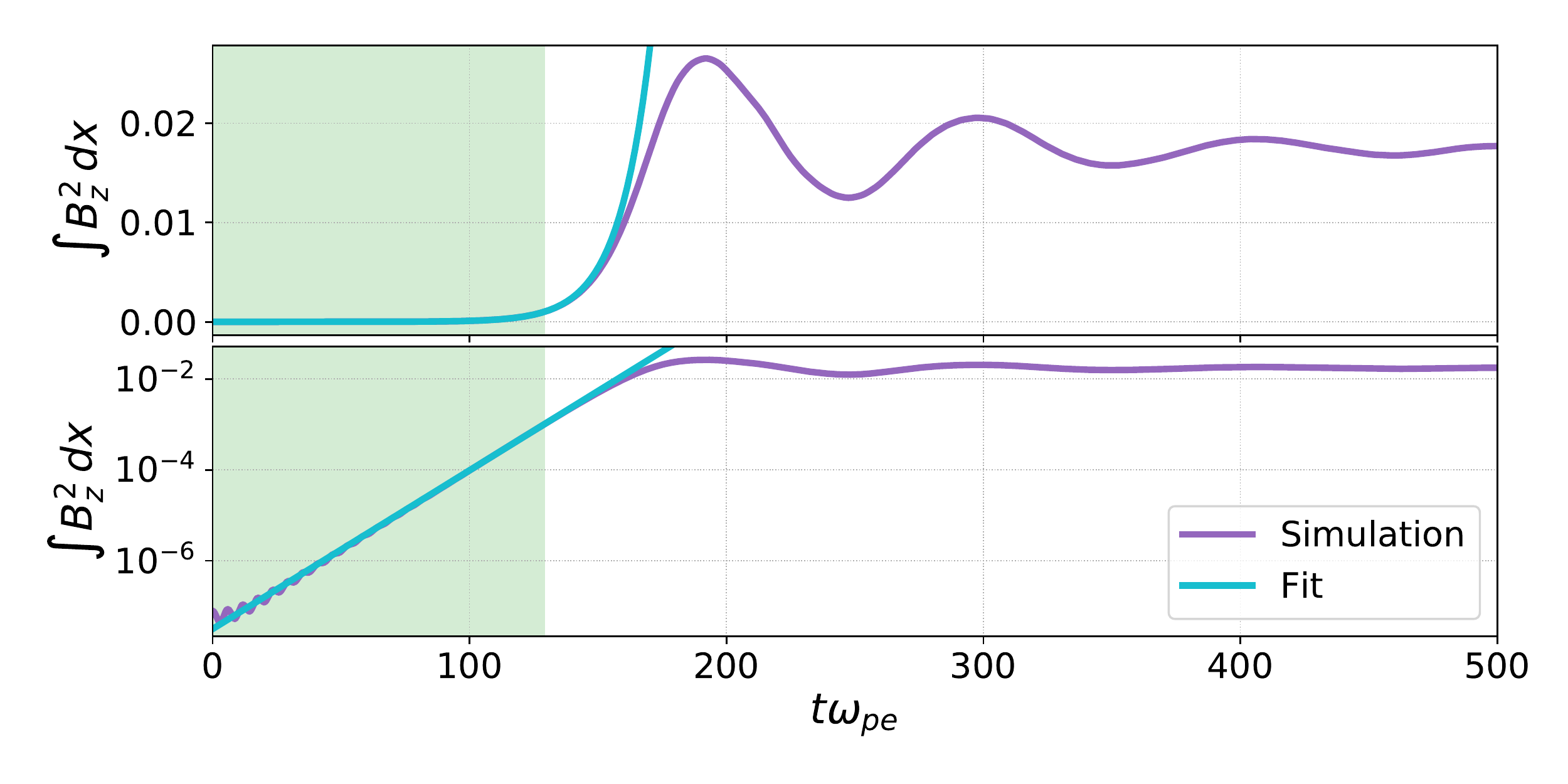}
  \caption[Total energy evolution in WI (intermediate
    temperature)]{Linear and semi-logarithmic plots of the magnetic
    field energy proxy, $B_z^2$, evolution during WI (intermediate
    temperature; $v_{th}/c=0.09$).  Figures show the simulation data
    together with the best result of the ``sweeping fit''.  The region
    of the best fit is highlighted with green shading.}
  \label{fig:weibel:energy_m}
\end{figure}
\begin{figure}[!htb]
  \centering
  \includegraphics[width=0.8\linewidth]{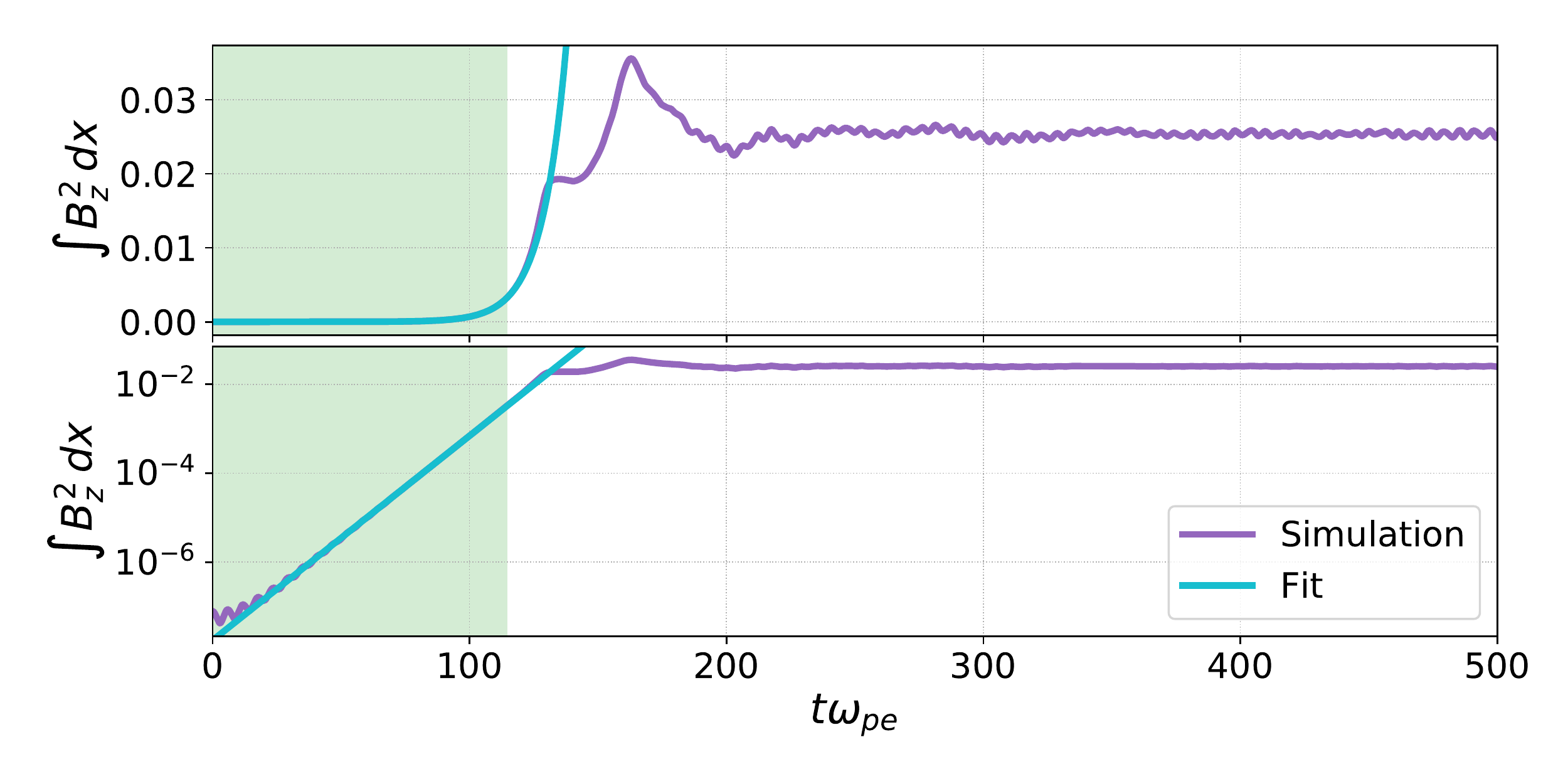}
  \caption[Total energy evolution in WI (low temperature)]{Linear and
    semi-logarithmic plots of the magnetic field energy proxy,
    $B_z^2$, evolution during WI (low temperature; $v_{th}/c=0.03$).
    Figures show the simulation data (blue) together with the best
    result of the ``sweeping fit'' (orange).  The region of the best
    fit is highlighted with green shading. [Simulation input file:
      \ref{list:weibel:weibel}]}
  \label{fig:weibel:energy_l}
\end{figure}

\FloatBarrier
\subsection{Nonlinear Saturation}
\label{sec:weibel:saturation}

In order to understand the saturation, it is necessary to discuss the
evolution of the WI in more detail.  \fgr{weibel:description} in \ser{weibel:sims} shows the
 effects of the filamentation force on each beam; `+'
population is accelerated in positive $x$-direction and `-' population
in the negative.  However, this is only true in the region where
$B_z<0$. \fgr{weibel:description2} extends the description to the region with $B_z>0$.  In this complete picture,\footnote{Note that the
  domain is periodic.} the effects of the filamentation force result in particles with $u_y <0$ building up with one part of the domain and particles with $u_y> 0$ in the other part.

\begin{figure}[!htb]
  \centering
  \includegraphics[width=0.6\linewidth]{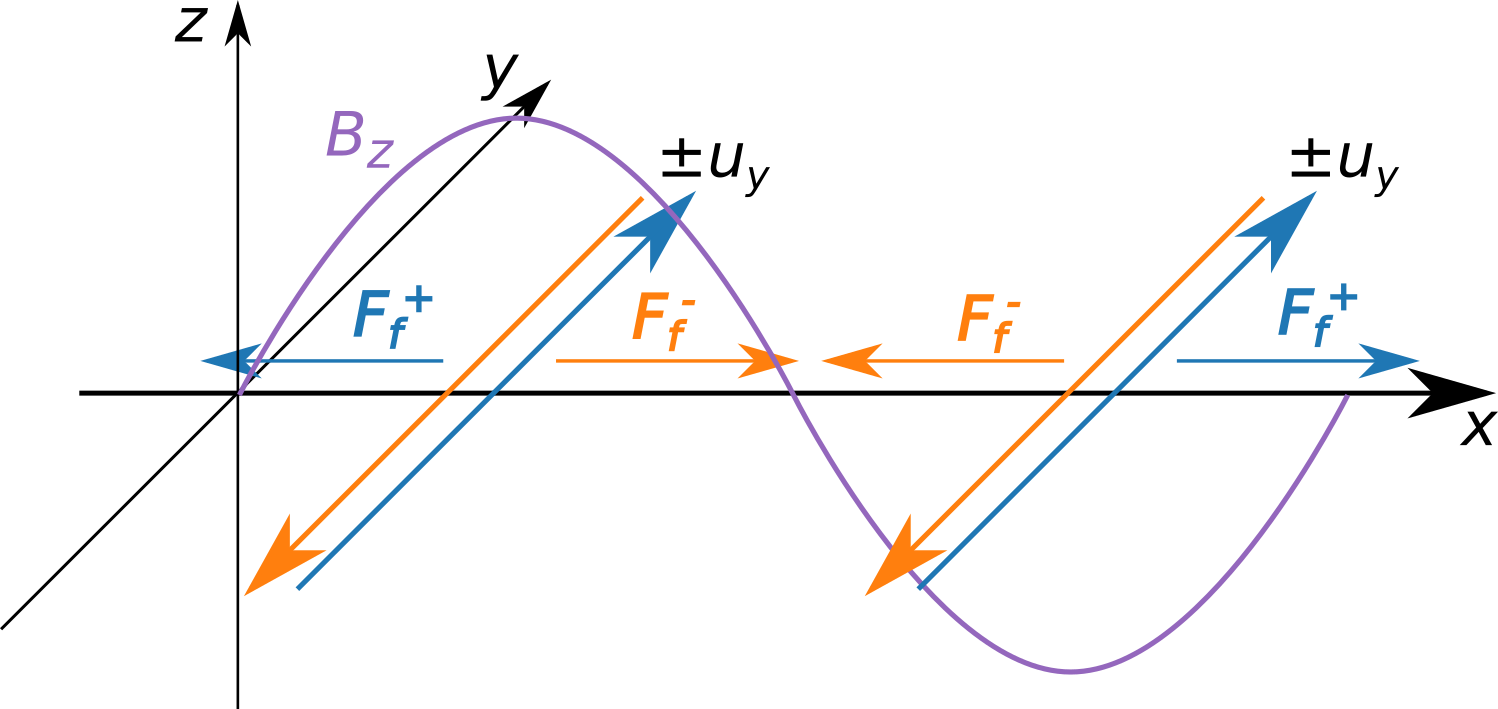}
  \caption[Extended drawing of the Weibel instability]{Extension of
    \fgr{weibel:description} to the full configuration space.  Note
    that the `-' population (orange) is accelerated to the middle
    while the `+' particles are accelerated towards the ``edges'' (the
    domain is periodic in $x$).}
  \label{fig:weibel:description2}
\end{figure}

The filamentation is eventually stopped when the particles become
trapped in the potential wells that form and the instability saturates.
This trapping can be seen in the periodic behavior of the nonlinear
phase of the instability (\fgr{weibel:energy_h}).  After the local
maxima are found (pink dots in \fgr{weibel:energy_h}), the frequency can
be estimated as
\begin{align*}
  \frac{\omega}{\omega_{pe}} \approx \frac{2\pi}{176} \approx 0.036.
\end{align*}
This compares well to the theoretical magnetic bounce frequency
\citep{Davidson1972}
\begin{align}\label{eq:weibel:bounce}
  \frac{\omega_B}{\omega_{pe}} = \sqrt{k\frac{q}{m}u_yB_z} \approx
  0.035,
\end{align}
when the values of $k=0.4$, $u_y=0.15$, and $B_z=0.02$ are used.  Note
that this calculation provides only an estimate, because it is, for
example, unclear what $B_z$ and $u_y$ to use as these are functions of
$x$.  Still, the close agreement with \cite{Davidson1972} provides an
indication that magnetic trapping is the primary mechanism for WI
saturation in this regime.

The spatio-temporal evolution is summarized in
\fgr{weibel:evolution_h}.  The top panel is a copy of
\fgr{weibel:energy_h} to provide a context for time.  The depicted density
is the total density in the simulation, i.e., the sum of both
populations, and does not provide particular insight.  Note that the
magnetic field follows the initialized profile
(\fgr{weibel:description2}) during the evolution.
\begin{figure}[!htb]
  \centering
  \includegraphics[width=0.7\linewidth]{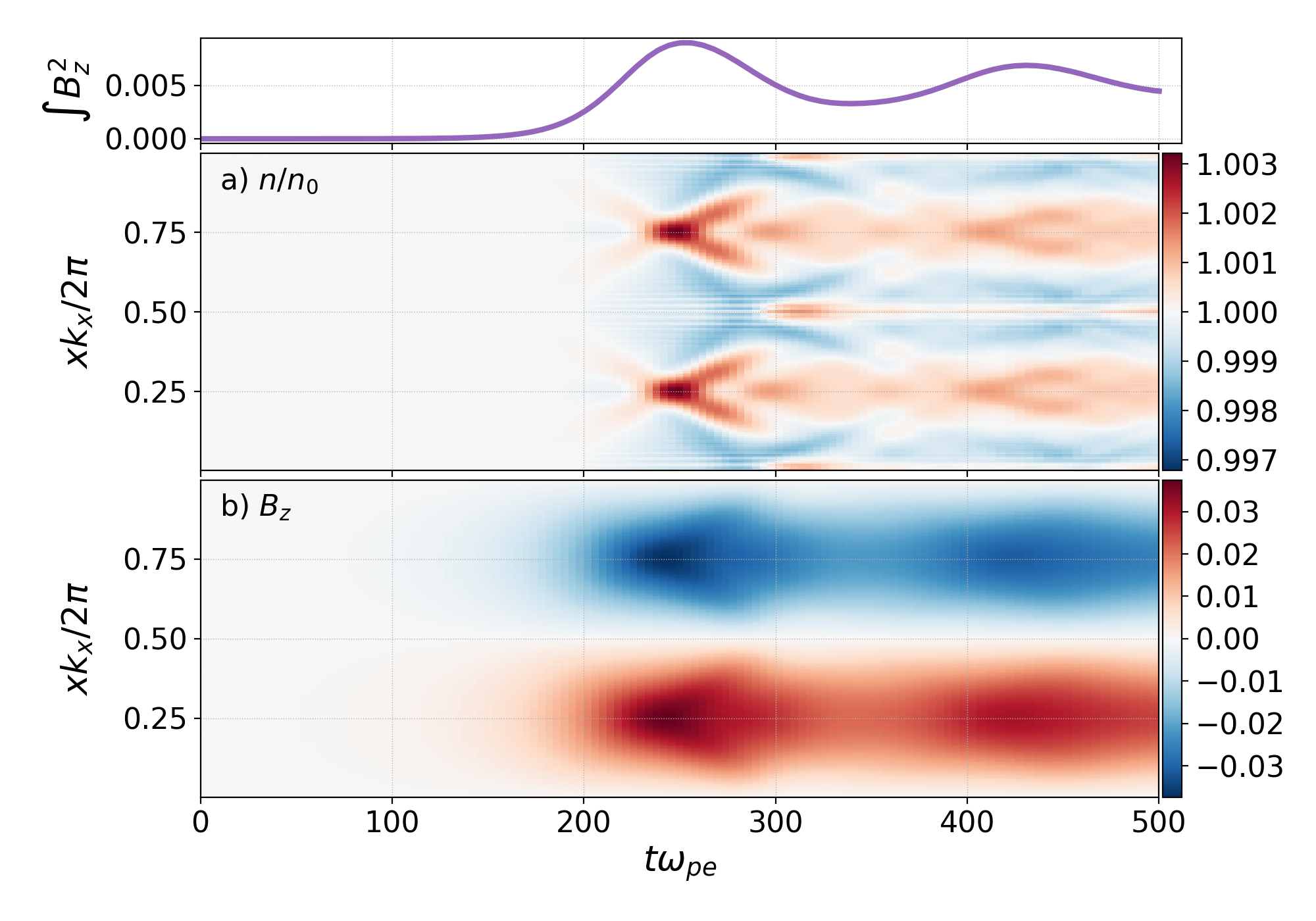}
  \caption[Evolution of density and magnetic field profiles (high
    temperature case)]{Evolution of density and magnetic field
    profiles during WI (high temperature case; $v_{th}/c=0.15$). The
    top panel is a copy of \fgr{weibel:energy_h} to provide time
    context.  Panel (a) is the evolution of density; note that the
    difference from the initial condition is highlighted.  Panel (b)
    depicts the formation of the magnetic field.}
  \label{fig:weibel:evolution_h}
\end{figure}

The low temperature case (see \fgr{weibel:init}c and
\fgr{weibel:energy_l}) behaves differently.  While the magnetic field
energy in \fgr{weibel:energy_h} saturates and oscillates with the
magnetic bounce frequency, the energy in the low temperature case
(\fgr{weibel:energy_l}) undergoes the saturation in two steps.  The
profile flattens around $t\omega_{pe}\approx 130$ and then begins
growing again.  Corresponding to this two step saturation is also the
growth of higher modes in the Fourier spectra\footnote{There are many
  online resources for the Fourier transformation; I would recommend
  the \texttt{scipy.fftpack} tutorial
  \url{https://docs.scipy.org/doc/scipy/reference/tutorial/fftpack.html}.
  Note, however, that unlike in this reference, the spectra presented
  here are not normalized.  Instead, absolute values corresponding to
  the real frequencies are plotted (the zero frequency is disregarded
  as well).}  (see \fgr{weibel:fft}).  Note the negligible
contribution of the higher modes in the high temperature case.  These
results suggest additional processes beyond
magnetic trapping.
\begin{figure}[!htb]
  \centering
  \includegraphics[width=0.7\linewidth]{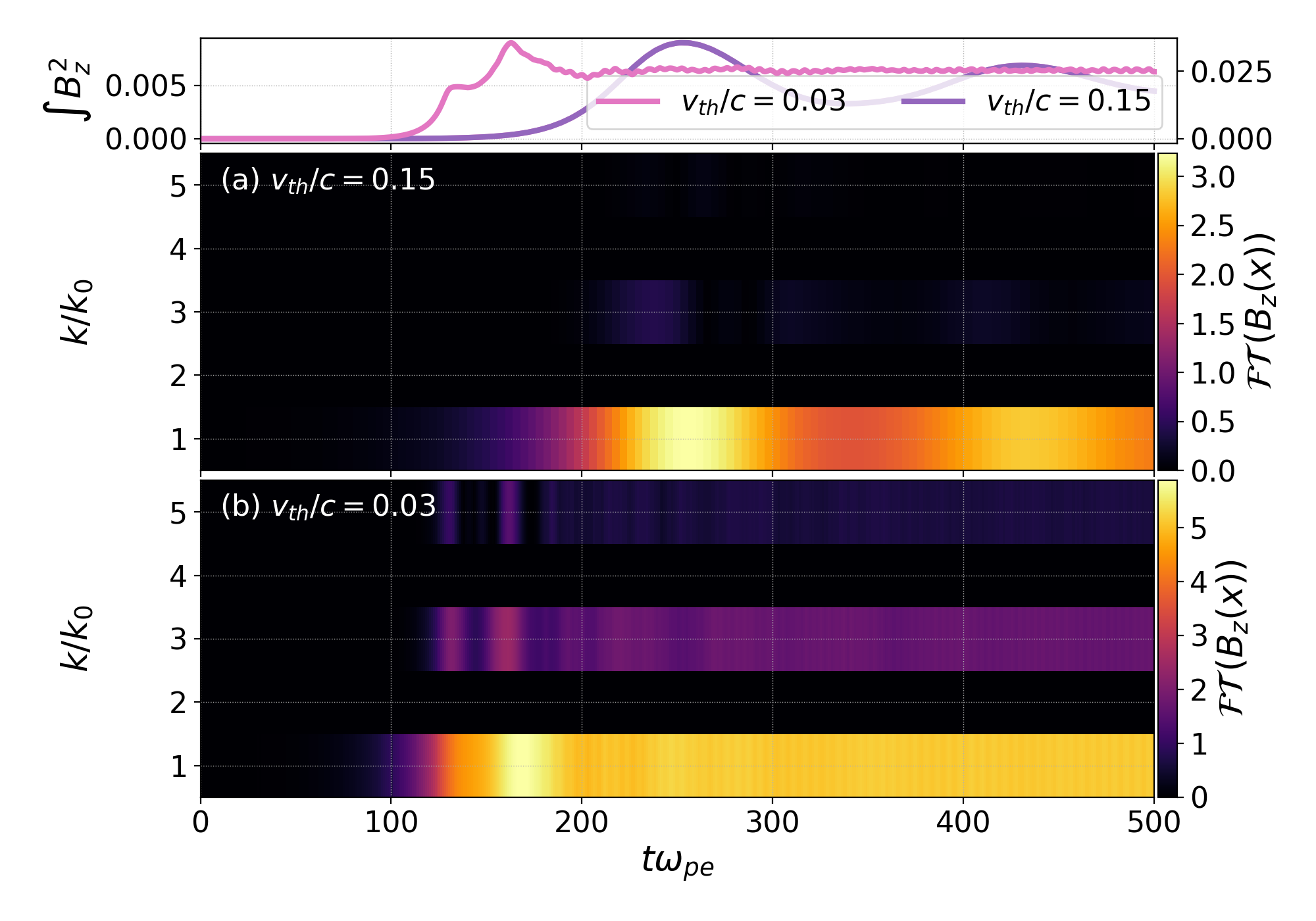}
  \caption[Evolution of the Fourier spectra of the magnetic
    field]{Evolution of the Fourier spectra of the magnetic field for
    the high temperature case (a) and the cold beam case (b).  Note
    that the higher modes, normalized to the initial perturbation, are
    significant only in the later one.}
  \label{fig:weibel:fft}
\end{figure}

The nonlinear phase magnetic
energy for the lower temperature case (\fgr{weibel:energy_l})
consists of small oscillations superimposed over the lower frequency
signal, in contrast to the magnetic bounce frequency of
\fgr{weibel:energy_h}.  This is confirmed by a temporal Fourier
transformation of the energy (\fgr{weibel:nonlinear}) for
$t\omega_{pe}>200$.  The first peak, roughly for frequency
$\omega/\omega_{pe} \approx 0.06$, corresponds well with the bounce
frequency \eqrp{weibel:bounce}.  But more interestingly, there is also
a strong signal around the plasma oscillation frequency (responsible for
the fine structure in \fgr{weibel:energy_l}), which possibly suggests
an effect of the electric field.
\begin{figure}[!htb]
  \centering
  \includegraphics[width=0.7\linewidth]{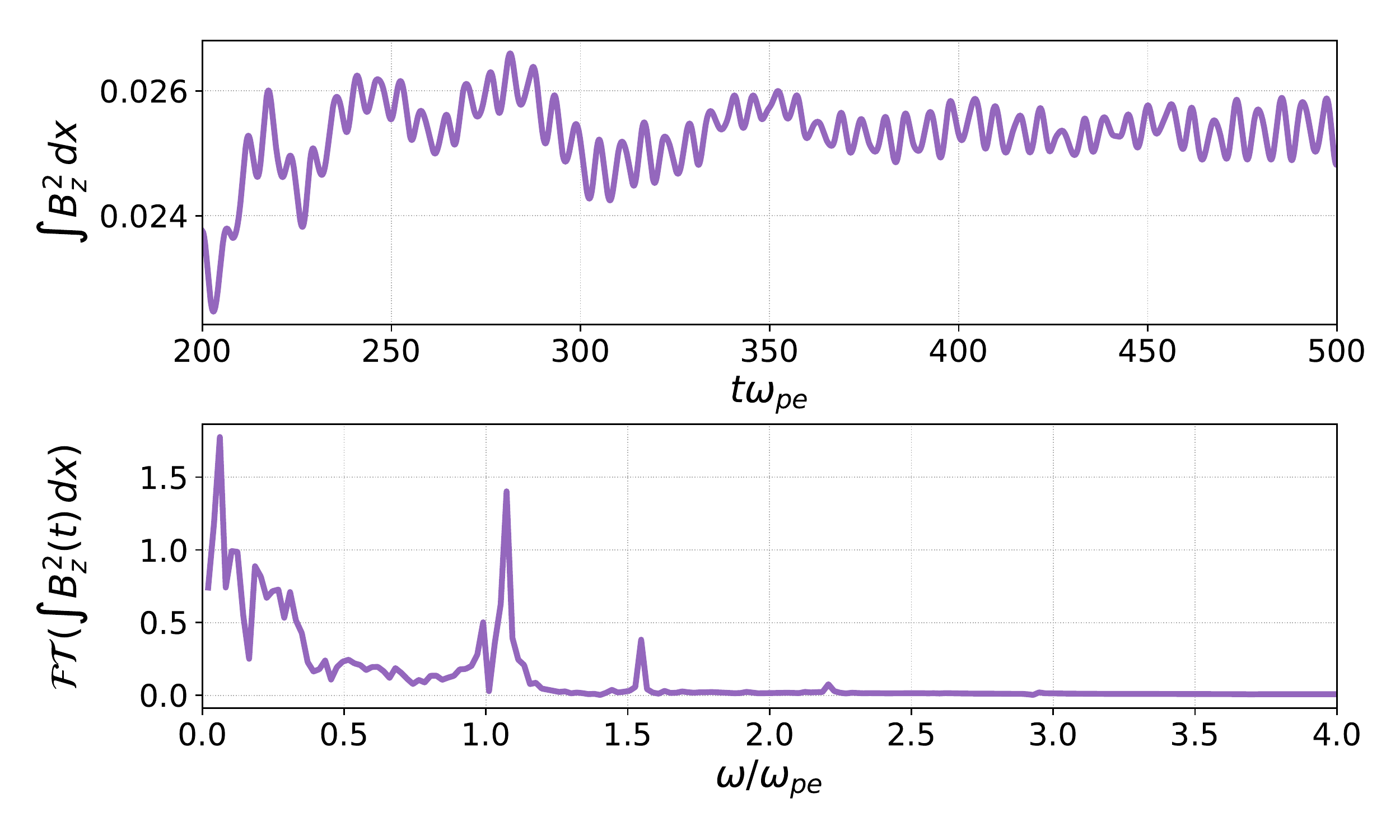}
  \caption[Temporal FFT of nonlinear magnetic field energy (low
    temperature case)]{Magnetic field energy evolution in the
    nonlinear phase ($t\omega_{pe}>200$) for the low temperature case
    ($v_{th}/c=0.15$).  Top panel shows the detail of
    \fgr{weibel:energy_l} for the relevant time interval.  The bottom
    panel shows the corresponding Fourier transformation (absolute
    value) normalized to the plasma frequency.}
  \label{fig:weibel:nonlinear}
\end{figure}

As mentioned previously, calculating the number density
simply as the moment of the distribution \eqrp{model:m0} gives the
total density, which does not provide any particularly interesting insight in this case
(note that the change from the initial uniform density in
\fgr{weibel:evolution_h} is on the order of 0.3\%).  For the next
step, it is necessary to divide the distribution function into the `+'
and `-' populations.  This allows for study of individual densities
and enables calculating the average Lorentz force acting on
each population.  A careful method needs to be constructed to separate the populations.
Taking a lineout along $v_y =0$, i.e.,
defining the `+' populations as $f(x,v_x,v_y>0)$, does not work
because the tail of the population extends into the $v_y < 0$ domain
as the particles slow down (see \fgr{weibel:division}), significantly
distorting the results.  Another option is to initialize the
simulation with two distribution functions rather than one. However,
this increases the computational cost as the Vlasov equation
\eqrp{model:weak2} needs to be solved twice. What is more, as the
populations can undergo mixing, they should be defined by the current
state rather than the initial conditions.  Alternatively, a third
option can be used. First, the $x-v_y$ profile is obtained by
integrating the distribution over $v_x$,\footnote{This is common
  practice for phase space data in order to limit the number of
  dimensions.} $\int f(x,v_x,v_y)\,dv_x$. Then, 1D slices for each $x$
can be fit to a double Maxwellian,
\begin{align*}
  f(v_y) = \frac{n_1}{\sqrt{2\pi v_{th,1}^2}}
  \exp\left(-\frac{(v_y-u_1)^2}{2v_{th,1}^2}\right) + \frac{n_2}{\sqrt{2\pi
      v_{th,2}^2}} \exp\left(-\frac{(v_y-u_2)^2}{2v_{th,2}^2}\right),
\end{align*}
and the boundary between he populations is selected as the $v_y$ between the $u_1$ and $u_2$
for which the $f(v_y)$ has the minimum.\footnote{Note that this
  process is only guaranteed to work for the discrete beam case. While
  the fitting works for the high temperature case as well, the
  distribution function does not necessarily have a minimum in
  $(u_1,u_2)$.}  Then the moments can be calculated using only a
subset of $v_y$ space which provides the densities and velocities for
the two populations separately.  \fgr{weibel:division} shows the
boundary for the distribution function at $t\omega_{pe} = 90$ and the
corresponding densities of the `$\pm$' populations.
        
\begin{figure}[!htb]
  \centering
  \includegraphics[width=0.7\linewidth]{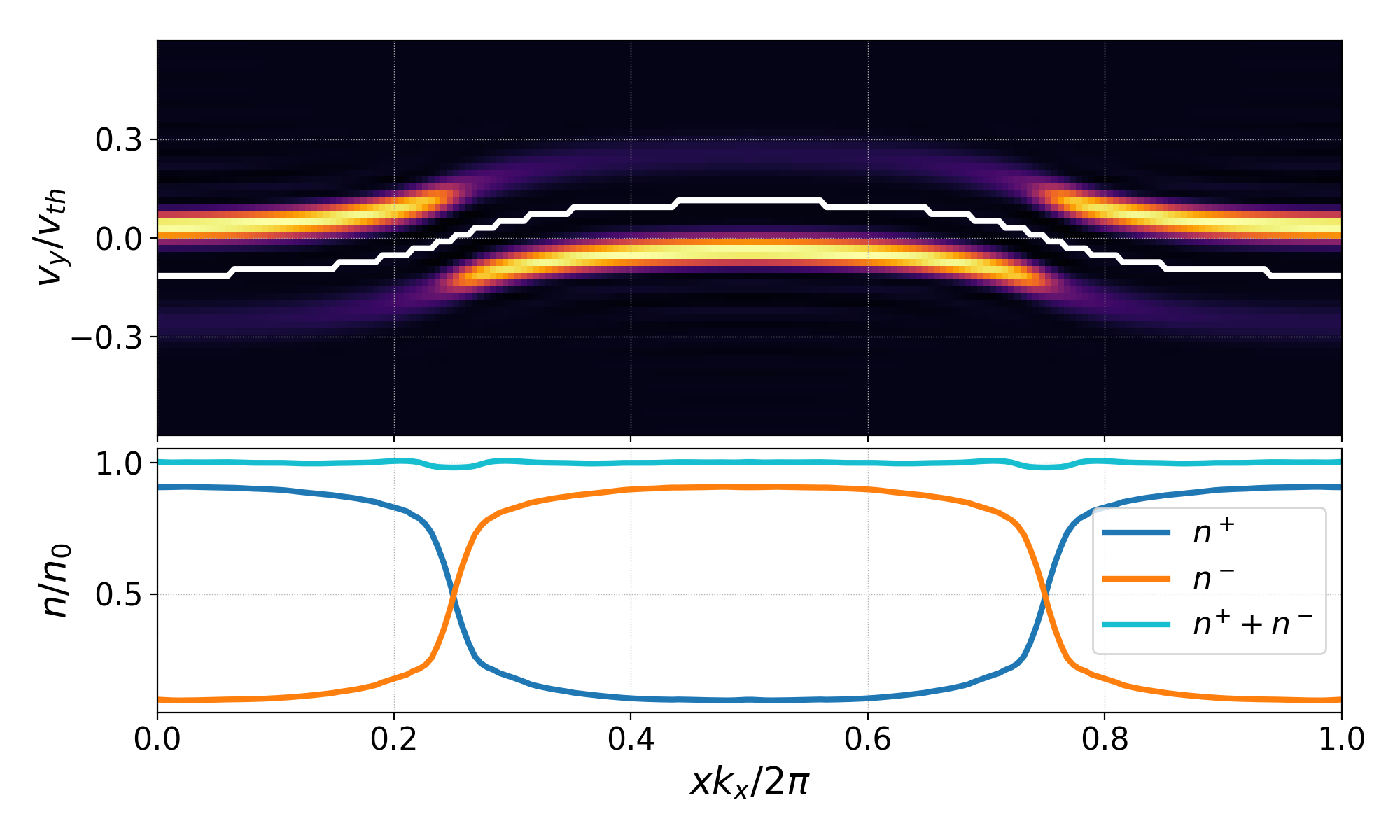}
  \caption[Separation of the $u_y>0$ and $u_y<0$ populations]{Example
    of the distribution function integrated over the $v_x$, $\int
    f(x,v_x,vy)\,dv_x$, i.e., $x-v_y$ profile (top panel;
    $v_{th}/c=0.03$ and $t\omega_{pe}=90$). The profile is fitted by
    double Maxwellian for each $x$ and the local minimum between the
    peaks is selected as the boundary (highlighted in the top
    panel). The bottom panel shows the densities of the two
    populations calculated as moments of the distribution function
    above and below the boundary in $v_y$.  Note that the sum of
    densities does not significantly differ from the initial density,
    $n_0$.}
  \label{fig:weibel:division}
\end{figure}

With information about the moments of the individual populations,
we can take a closer look at the nonlinear saturation process for the
low temperature case ($v_{th}/c=0.03$).  A detailed picture, capturing a
subset of \fgr{weibel:energy_l} limited to $70<t\omega_{pe}<150$, is
in \fgr{weibel:uber}.  As usual, the top panel contains the energy
evolution in order to provide the time reference.  However, unlike in
the previous figures, it also includes the evolution of the electric
field energy proxy, $\int E_x^2\,dx$, which is negligible in the
higher temperature cases.  Note that the electric field energy peaks
at the first saturation point of the magnetic field energy.  Panels
(b) and (c) show the electric, $E_x$, and magnetic, $B_z$, field
profiles, respectively.  Their maximal magnitudes are comparable but,
while the magnetic field roughly keeps the initialized profile, the
electric field is located at the ``boundaries'' between the two
populations (see \fgr{weibel:division}).  The panel (d) shows just the
sign of the $x$-component of the averaged total force acting on the
`+' population, i.e., $F_x(x) = q(E_x+u_y^+B_z)$.  The effects of the
force are then reflected in panel (e) which shows the bulk
$x$-velocity profile.  Panel (f) illustrates the filamentation of the
density, which is initially uniform in $x$ and is later limited to half of
the domain.  Finally, panel (g) shows the evolution of the bulk
$y$ velocity, $u_y^+$.  Note that, similar to the two-stream
instability, the velocity is decreasing (in the regions with
particles) as the kinetic energy is converted into the magnetic field
energy during the course of the instability.

The first part of the evolution corresponds exactly to the description
in \fgr{weibel:description2}.  Due to $B_z$ and $u_y$ the
filamentation force, $F_f^+=qu_y^+B_z$, forms and is positive for
$x\in\left(0.5\frac{2\pi}{k_x},\,\frac{2\pi}{k_x}\right)$ and negative
in the other half. This force accelerates particles along the $x$
direction, causing the filamentation of the density and, consequently,
the increase of the net currents in the domain and the growth of the
magnetic field.  In the high temperature case, this occurs until
the vast majority of particles are trapped in their respective magnetic potential wells.  However, in this case, there is an additional
effect, as the previous analysis suggests.  The strong currents at
the ``boundaries'' between the two populations ($xk_x/2\pi=0.25$ and
0.75) are the source of the $E_x$.  The orientation of $E_x$ is such that
it decreases the flows, and it eventually grows enough to dominate the
filamentation force (\fgr{weibel:uber}d).  The ``filamentation flow''
then quickly stops (\fgr{weibel:uber}e) and the growth of $B_z$
saturates.  However, without the currents the $E_x$ decays at the
boundary and the instability restarts.  Note that right after the
saturation, the electric field does not just stop the flow but
reverses it in some small regions, resulting in a slight decrease of the magnetic field energy while also halting the instability growth.

\begin{figure}[!htb]
  \centering
  \includegraphics[width=0.75\linewidth]{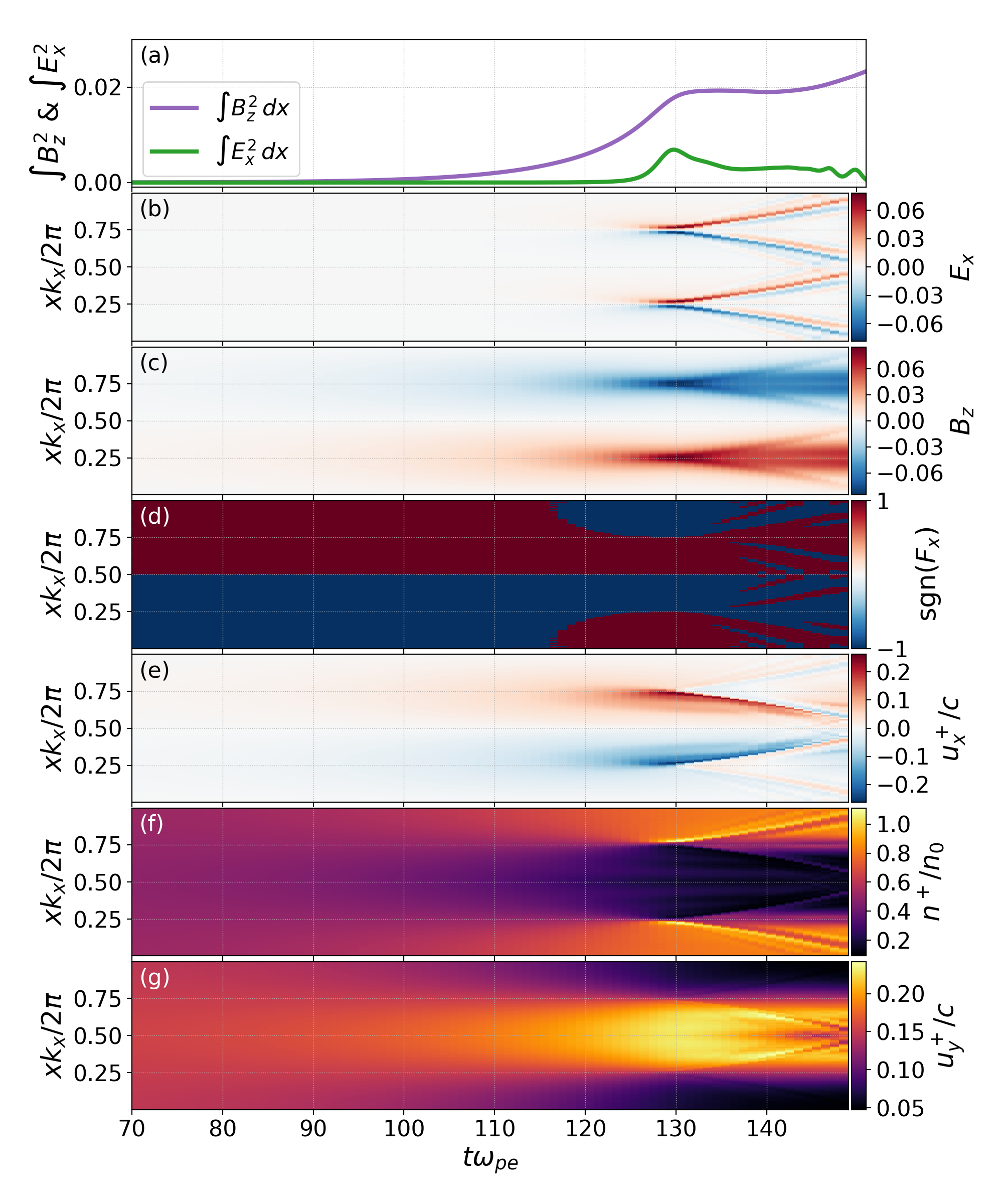}
  \caption[Nonlinear saturation of WI]{Detailed spatio-temporal plots
    of the nonlinear saturation of WI for the low temperature case
    ($v_{th}/c=0.03$); the bottom four panels are limited to the `+'
    population. The top panel (a) shows the energy evolution
    to provide a time reference, however, here it includes the
    electric field energy as well.  Panels (b) and (c) show relevant
    electric and magnetic fields, respectively.  Panel (d) captures
    the sign of the total force acting on the `+' population in the
    $x$-direction.  Panel (e) then depicts the evolution of the
    $u_x^+$ due to this force.  Panel (f) captures the density of the
    `+' population, and, finally, panel (g) shows the decrease of the
    initial bulk velocity, $u_y$.}
  \label{fig:weibel:uber}
\end{figure}

A closer look at the growth of the electric field (\fgr{weibel:energyE_l})
reveals an interesting fact. Preceding the first saturation, there is
a significant increase in the growth rate of $E_x$.  One possible explanation
of this enhancement is a secondary instability.

\begin{figure}[!htb]
  \centering
  \includegraphics[width=0.8\linewidth]{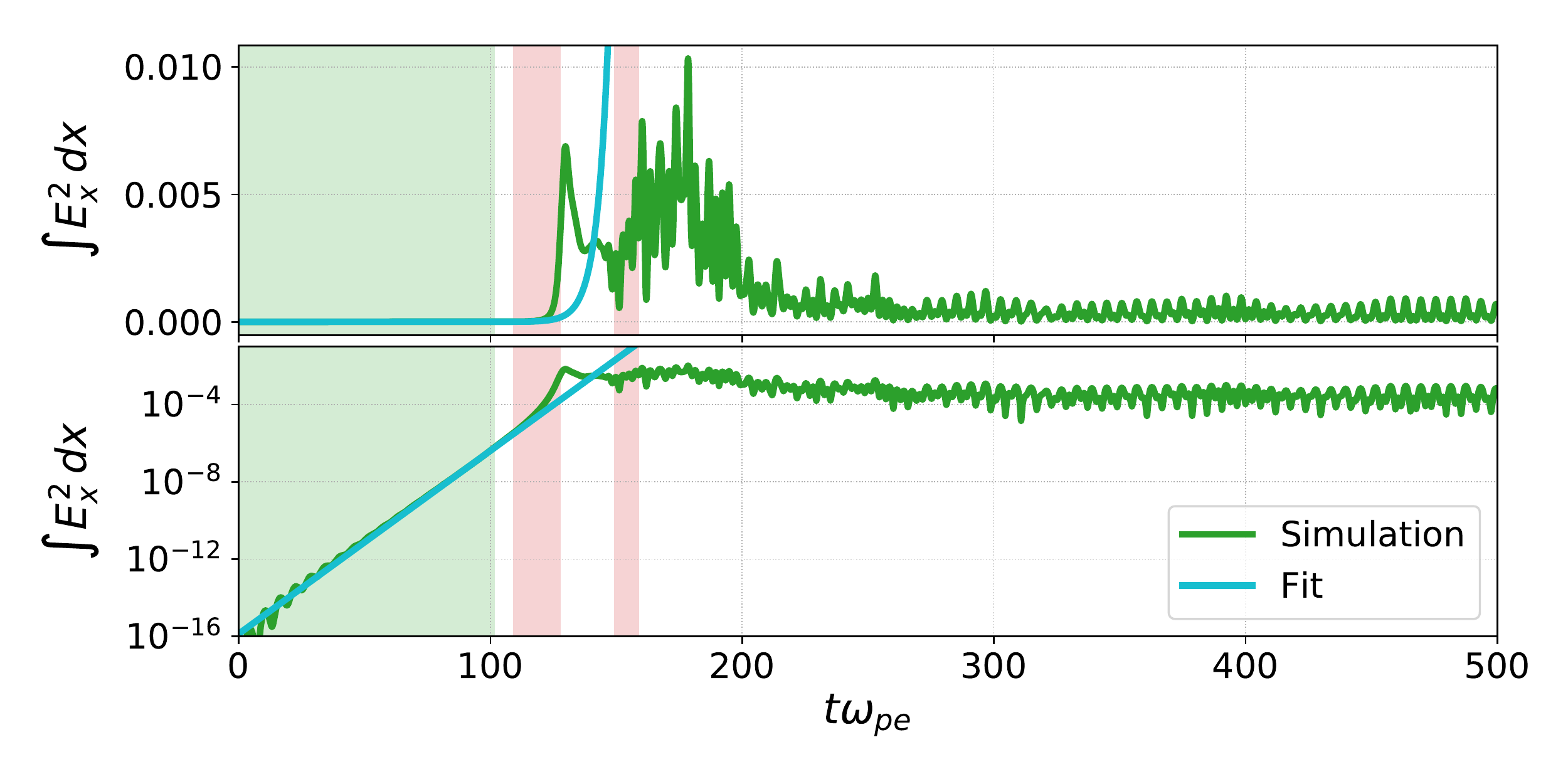}
  \caption[Growth of the electric field energy in the low temperature
    case]{Growth of the electric field energy proxy in the low
    temperature case ($v_{th}/c=0.03$). Note the significant increase
    in the growth rate immediately preceding the first saturation.
    The green background marks the fitting region; the red background
    denotes the region where a two-stream-like instability can
    potentially growth (see \fgr{weibel:reversals} for more details).}
  \label{fig:weibel:energyE_l}
\end{figure}

The generated electric field lies in the $x$-direction and so do the
counter-streaming velocities $u_x^\pm$; therefore, the two-stream or
two-stream-like instability, discussed in \ser{benchmark:two-stream},
is a potential candidate for the secondary instability.  Focusing on
$xk_x/2\pi=0.25$, corresponding to a maximum of $B_z$, a time
evolution of the counter-streaming velocities can be obtained (see
\fgr{weibel:reversals}).  The velocities are then substituted into the
dispersion relation \eqrp{benchmark:two-stream_disp}.  The regions
where the drift velocities, $u_x^\pm$, are high enough for the two-stream
instability to grow are marked in red in both \fgr{weibel:energyE_l}
and \fgr{weibel:reversals}.  The first red region is in good agreement with the region of the enhanced electric field growth.

\begin{figure}[!htb]
  \centering
  \includegraphics[width=0.8\linewidth]{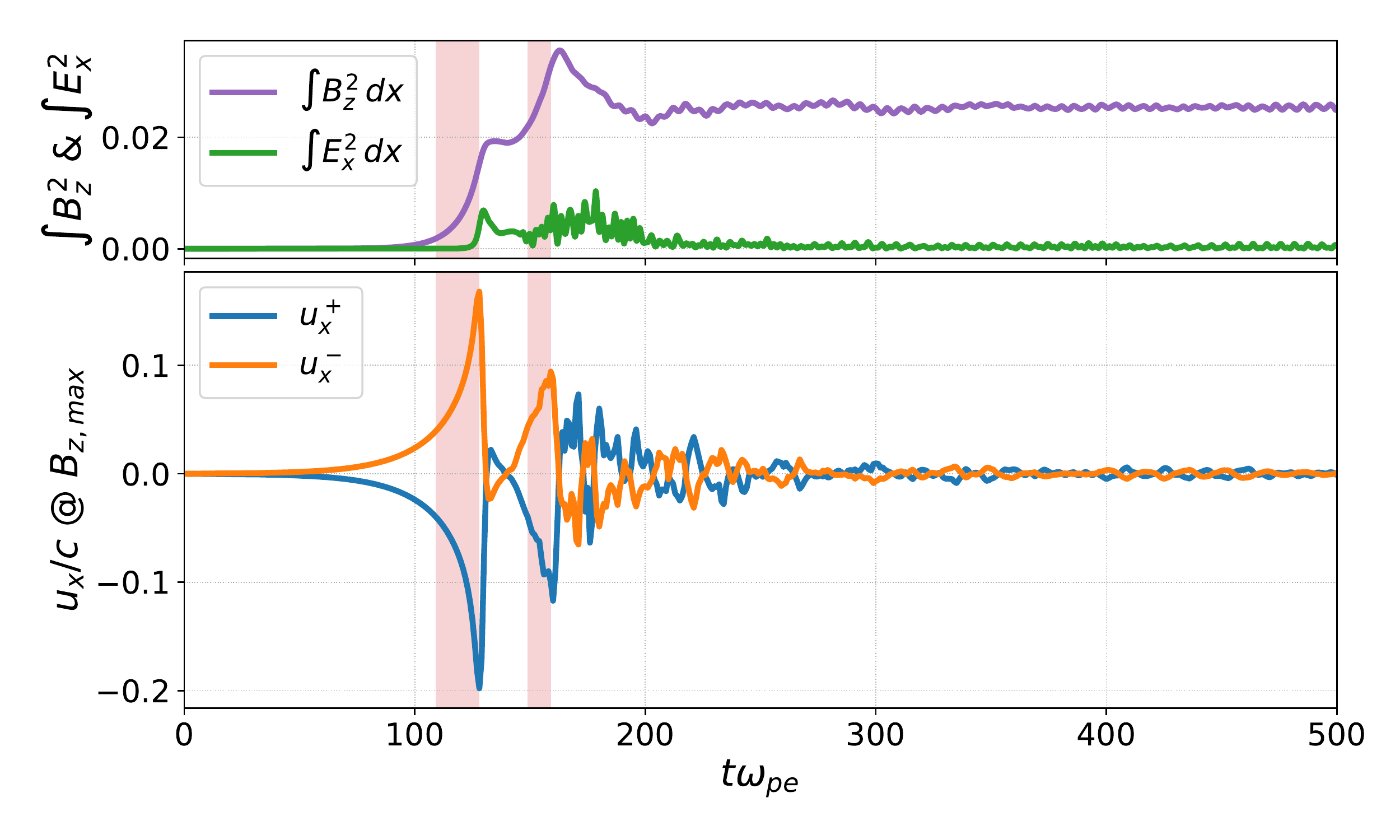}
  \caption[Counter-streaming velocities in the $x$-direction]{Detailed
    look at the $u_x^\pm$ counter-streaming velocities at
    $xk_x/2\pi=0.25$, where there is the maximum of $B_z$.  The
    red-marked area where a two-stream instability can grow based on
    the dispersion relation \eqr{benchmark:two-stream_disp}.  Note
    that for this $x$, the density becomes more filamented when
    $u_x^{-}>0$ and $u_x^{+}<0$ (see \fgr{weibel:uber}) and the
    magnetic field growths.  When the particles stream in the opposite
    direction, the magnetic field decreases.}
  \label{fig:weibel:reversals}
\end{figure}

In summation, the simulations confirm magnetic trapping as the main
mechanism for the saturation in the high temperature case.  However,
in the cold beam case, the self-consistently generated electric field
plays a significant role in saturating the instability
\citep{Cagas2017c}.  Its explosive growth is potentially connected
with a secondary two-stream-like instability.  Based on the kinetic
dispersion relation, the two-stream instability has suitable growing modes in the regions of
maximum magnetic field.  The electric field saturates the instability
before the magnetic trapping becomes dominant and also alters the
evolution in the nonlinear regime.

\FloatBarrier
\subsection{Phase-space and Temperature Evolution}
\label{sec:weibel:phase}

To round up the chapter, we briefly discuss the phase space evolution
of the distribution function and address the discrepancy in the growth
rates.  \fgr{weibel:phase} depicts the distribution function at three
times for the low temperature case ($v_{th}/c=0.03$).  In order to
visualize the 3D distribution, $f(x,v_x,v_y)$ is integrated over $v_y$
and $v_x$ to show the $x-v_x$ (left column) and $x-v_y$ (middle
column) profiles separately.\footnote{In \fgr{weibel:init}, a similar
technique is used to visualize the $v_x-v_y$ profiles.}  The first
row shows the initial conditions with only the thermal spread of the
$x$-velocities and the $\pm u_y$ bulk velocities in the $y$-direction
(see \fgr{weibel:init}c for the complementary $v_x-v_y$
profile).  The second row captures the solution at
$t\omega_{pe}=114$, which is approximately the end of linear growth phase (based
on the ``sweeping fit''; see \fgr{weibel:uber}).  The beams slow down in the $y$-direction, as the kinetic energy is
converted into the magnetic field energy, and the filamentation force
is accelerating particles along $x$.  Finally, the bottom row shows
the last frame of the simulation run, where the kinetic energy is
depleted and the system settles into a stable equilibrium disturbed only
by electron oscillations.

\begin{figure}[!htb]
  \centering
  \includegraphics[width=0.8\linewidth]{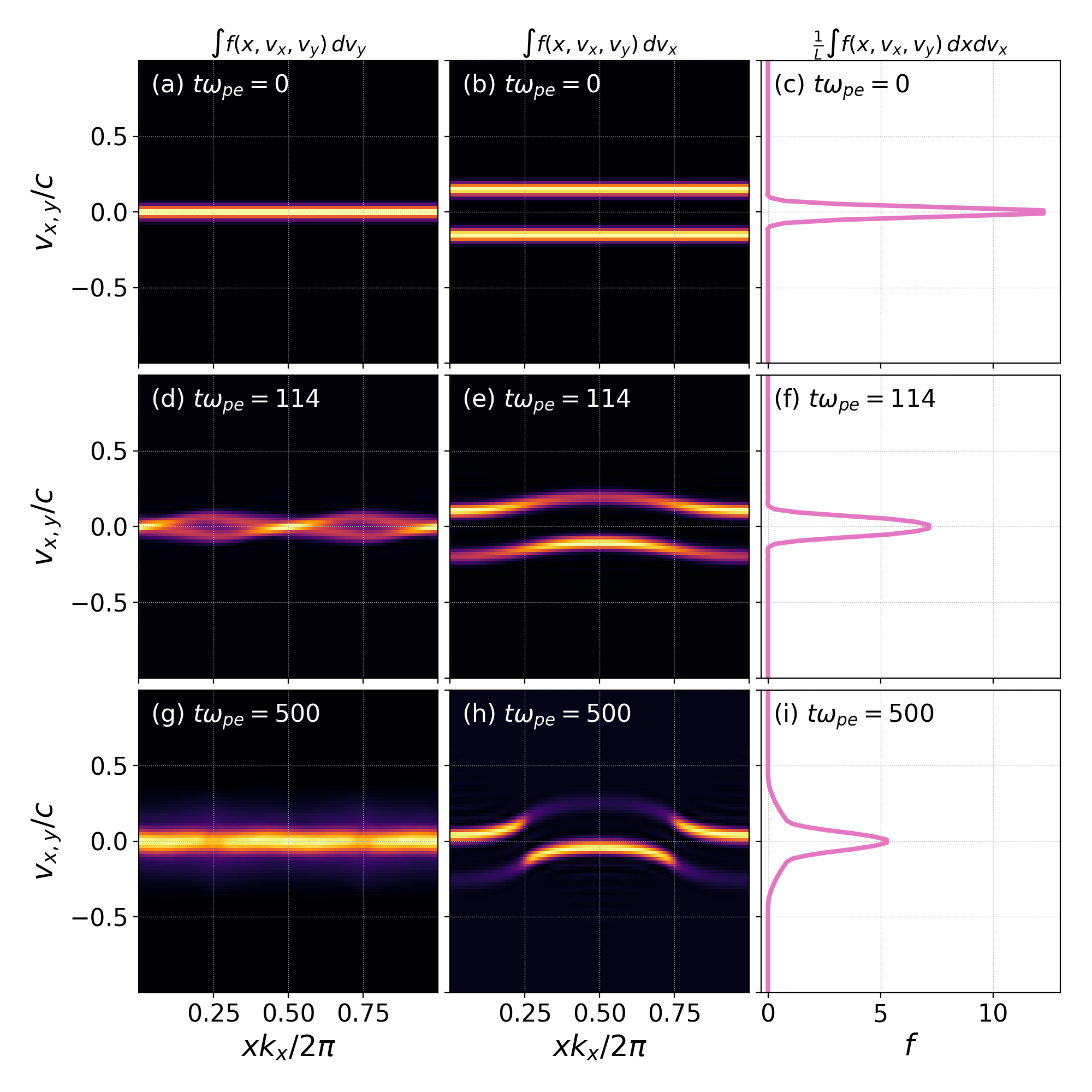}
  \caption[Phase space evolution of the distribution function during
    WI]{Phase space evolution of the distribution function during the
    WI for the low initial temperature ($v_{th}/c=0.03$).  In order to
    visualize 3D distribution $v_y$ and $v_x$ are integrated out
    (first and second column respectively).  The last column is
    integrated over $v_y$ and averaged over $x$ to obtain the average
    $v_x$ profile.  The rows show in order the initial conditions, the
    ``end of the linear growth phase'' at $t\omega_{pe}=114$, and the
    last frame of the run ($t\omega_{pe}=500$).}
  \label{fig:weibel:phase}
\end{figure}

The right column in \fgr{weibel:phase} shows 1D slices of the
distribution function integrated over $v_y$ and averaged
over $x$, i.e., the average $v_x$ profile of the distribution.  As the
instability progresses this profile becomes wider; the temperature
increases.  This temperature increase is noticeable even during the
``linear growth phase'' (the green region in \fgr{weibel:energy_l}).
Detailed comparison of the distribution functions is in
\fgr{weibel:temperatures}.  Fitting the Maxwellian
distribution\footnote{The temperature has a meaning only as the width
  or more precisely the variance, $\sigma$, of the Maxwellian
  distribution.}  (light gray lines in \fgr{weibel:temperatures})
reveals the temperature increase from $v_{th}/c=0.0300$ to
$v_{th}/c=0.0534$.  This has an important consequence for the growth
rate.  As is established in \ser{weibel:lintheory}, the growth rate
of WI decreases with the temperature.  If we compare the growth rate
obtained from the simulation (\fgr{weibel:energy_l}) with the linear
theory predictions for the initial conditions and the end of the
``linear growth'' interval, we get:
\begin{align*}
  \gamma_{theory}(v_{th}/c=0.0300) &= 0.05533\,\omega_{pe} \\
  \gamma_{simulation} &= 0.05296\,\omega_{pe} \\
  \gamma_{theory}(v_{th}/c=0.0534) &= 0.05122\,\omega_{pe}
\end{align*}
The growth rate obtained from the simulation is close to the average
of the two theoretical predictions, which very well explains the
discrepancy seen in \fgr{weibel:disp}.

\begin{figure}[!htb]
  \centering
  \includegraphics[width=0.8\linewidth]{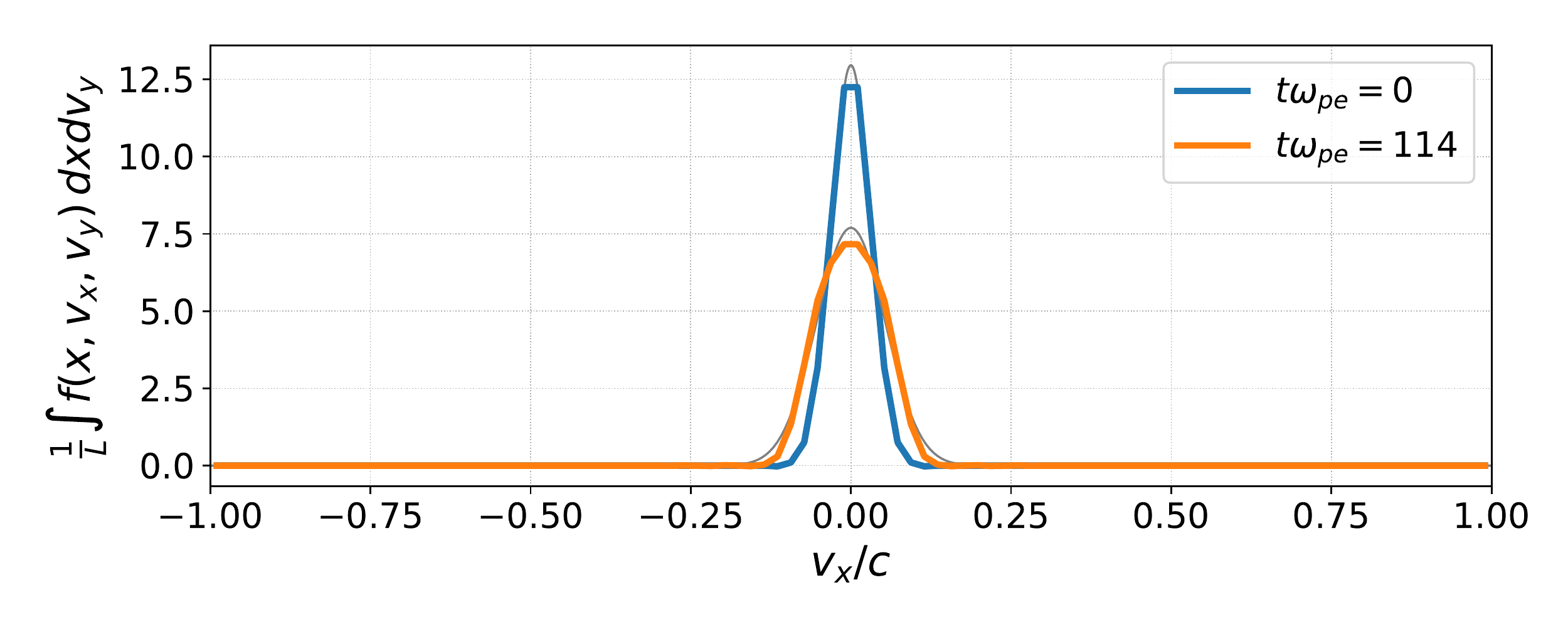}
  \caption[Temperature evolution during the linear phase of
    WI]{Temperature evolution of the linear phase of WI where the 1D
    $v_x$ profiles from \fgr{weibel:phase} for $t\omega_{pe}=0$ and
    $114$ are plotted over each other.  The profiles are on top of the
    Maxwellian fits (light gray) used to obtain the thermal
    velocities.}
  \label{fig:weibel:temperatures}
\end{figure}

\chapter{Bounded Plasma Simulations}\label{sec:bounded}

\epigraph{This quote is often falsely attributed to Mark
  Twain.}{\textit{Randall Munroe}}

Interactions between plasma and solid surfaces are the center-piece of
this work.  This chapter starts with the discussion of the classical
plasma sheath, first in the collision-less regime and then with
collisions and ionization.  The last part deals with plasma-material
interactions (PMI).

%=====================================================================
\FloatBarrier
\section{Classical Plasma Sheaths}
\label{sec:bounded:class}

When plasma is contained by walls, the boundaries behave as sinks.
Electrons, as the lightest species in a plasma, are quickly absorbed
into the wall, which leads to the creation of a typically positive
space charge region called plasma sheath \citep{Robertson2013}.  The
charge then gives rise to a potential barrier, which works to equalize
fluxes to the wall.  Even though sheath physics has been studied since
early the works of \cite{Langmuir1923}, some processes remain to be
fully understood.  Additionally, despite the relatively small width of
the sheath region, which is typically on the order of a Debye length,
\begin{align*}
  \lambda_D = \sqrt{\frac{\varepsilon_0 T_e}{n_e q_e^2}}, 
\end{align*}
sheaths play an important role in particle, momentum, energy, and heat
transfers and surface erosion, which can, in turn, have global effects
on the plasma.  Furthermore, field-accelerated ions and hot electrons
are known to cause an emission from the solid surface that can further
alter the system.  Therefore, the sheath must be self-consistently
included and resolved in numerical simulations to resolve bounded
plasmas.  This significantly affects the computational cost of
simulations, because the scale length of the system is usually several
orders of magnitude higher than the Debye length.  Usually, the effect
of the sheath is mimicked with ``sheath boundary conditions'', often
constructed from very simple flux balance arguments or making
assumptions like cold ions and no surface effects \citep{Loizu2012}.
Hence, first-principle simulations of the sheath are needed to both
validate and further develop the simple models as well as to
understand the global kinetic effects of sheaths on the bulk plasma.

%---------------------------------------------------------------------
\FloatBarrier
\subsection{Brief Introduction to the Plasma Sheath Theory}
\label{sec:bounded:sheath:theory}

In order to derive the plasma sheath equations, we start with rather
strong assumptions \citep{Chen1985}. First of all, with the two-scale
description of \cite{Langmuir1923}, the domain is divided into the
quasi-neutral part where $n_e=n_i=n_0$ and the non-neutral sheath with
monotonically decreasing potential, $\phi$.  The cold ions are assumed
to enter the sheath region with a non-zero velocity $u_{i,0}$.  Then,
from the conservation equations,
\begin{align*}
  \underbracket{n_0u_{i,0} = n_i(x)u_i(x)}_{\text{Conservation of
      mass}}, \quad \underbracket{\frac{1}{2}m_iu_{i,0}^2 =
    \frac{1}{2}m_iu_i(x)^2 + q_i\phi(x)}_{\text{Conservation of
      energy}},
\end{align*}
we get
\begin{align}\label{eq:bounded:ni}
  n_i(x) = n_0 \left( 1 - \frac{2q_i\phi(x)}{m_i
    u_{i,0}^2} \right)^{-\frac{1}{2}},
\end{align}
for $x$ inside the sheath.  Inertia of the electrons is neglected and
they are assumed to instantly follow the electric field,\footnote{This
  assumption is known as the Boltzmann electrons.}
\begin{align}\label{eq:bounded:ne}
  n_e(x) = n_0 \exp\left(-\frac{q_e\phi}{T_e}\right).
\end{align}

These results are substituted into Poisson's equation
\eqrp{model:poisson},
\begin{align}\label{eq:bounded:poisson}
  \pfracc{\phi(x)}{x} = -\frac{n_e(x)q_e + n_i(x)q_i}{\varepsilon_0} =
  -\frac{n_0}{\varepsilon_0} \left[
    q_e\exp\left(-\frac{q_e\phi}{T_e}\right) + q_i\left( 1 -
    \frac{2q_i\phi(x)}{m_i u_{i,0}^2} \right)^{-\frac{1}{2}} \right].
\end{align}
The following substitution,
\begin{align*}
  \chi := \frac{q_e\phi}{T_e}, \quad \xi := \frac{x}{\lambda_D} = x
  \sqrt{\frac{n_0q_e^2}{\varepsilon_0T_e}}, \quad M :=
  \frac{u_{i,0}}{\sqrt{ZT_e/m_i}},
\end{align*}
then simplifies \eqr{bounded:poisson} into\footnote{Note
  that$$\pfracc{\chi}{\xi}= \pfraca{\xi}\pfrac{\chi}{\xi} =
  \pfraca{\xi}\left( \pfrac{\chi}{x}\pfrac{x}{\xi} \right) =
  \pfraca{x}\left( \pfrac{\chi}{x}\pfrac{x}{\xi} \right)\pfrac{x}{\xi}
  = \pfracc{\chi}{x}\left(\pfrac{x}{\xi}\right)^2.$$}
\begin{align}\label{eq:bounded:poisson2}
  \pfracc{\chi}{\xi} =
  Z\left(1+\frac{2\chi}{M^2}\right)^{-\frac{1}{2}}-\exp(-\chi),
\end{align}
where $q_i = -Zq_e$.  \eqr{bounded:poisson2} needs to be integrated
twice to obtain usable profiles.  First, the equation is multiplied by
$\pfracb{\chi}{\xi}$ and then integrated from zero to $\xi$,
\begin{gather*}
  \int_0^\xi \pfracc{\chi}{\xi'} \pfrac{\chi}{\xi'} \,d\xi' =
  \int_0^\xi Z\left(1+\frac{2\chi}{M^2}\right)^{-\frac{1}{2}}
  \pfrac{\chi}{\xi'} \,d\xi' - \int_0^\xi \exp(-\chi) \pfrac{\chi}{\xi'}
  \,d\xi',\\ \frac{1}{2}\left[\left(\pfrac{\chi}{\xi'}\right)^2\right]_0^\xi
  = Z M^2 \left[\left(1+\frac{2\chi}{M^2}\right)^{\frac{1}{2}}
    \right]_0^\xi + \left[\exp(-\chi)\right]_0^\xi.
\end{gather*}
Since $\xi=0$ is at the boundary between the sheath and quasi-neutral
plasma, a natural choice for the potential is $\chi(\xi=0) :=
0$. The assumption of no net fields in the quasi-neutral plasma leads to
$\pfracb{\chi}{\xi'}|_{\xi=0}:=0$.  The equation then simplifies to
\begin{align}\label{eq:bounded:poisson3}
  \frac{1}{2}\left.\left(\pfrac{\chi}{\xi'}\right)^2\right|_\xi = Z
  M^2 \left[\left(1+\frac{2\chi}{M^2}\right)^{\frac{1}{2}}
    -1 \right] + \exp(-\chi) - 1.
\end{align}
Unfortunately, the second integration cannot be done analytically.
However, \eqr{bounded:poisson3} can still provide very interesting
insight.  The left-hand-side of \eqr{bounded:poisson3} is positive for
all $\xi$ and so must be the right-hand-side.  Using the Taylor series
expansion for $\chi \ll 1$ gives
\begin{gather*}
  ZM^2\left[1 + \frac{\chi}{M^2} - \frac{1}{2}\frac{\chi^2}{M^4} +
    \ldots -1 \right] + 1 -\chi + \frac{1}{2}\chi^2 + \ldots -1 \geq 0 \\
  \frac{1}{2}Z\chi^2\left(1-\frac{1}{M^2}\right) \geq 0.
\end{gather*}
The inequality is satisfied for $M^2>1$.  Back-substituting for $M$,
finally gives the well known Bohm sheath criterion \citep{Bohm1949},
\begin{align}\label{eq:bounded:bohm}
    u_{i,0} \geq u_B = \sqrt{\frac{ZT_e}{m_i}}.
\end{align}
Surprisingly, even with the assumptions mentioned above, the Bohm
criterion applies to conditions beyond these assumptions, with errors
within 20-30\% \citep{Bohm1949}.

The Bohm criterion requires ions to be accelerated in the presheath to
the speed of ion acoustic waves \citep{Riemann1990}.  The underlining
physical reason is illustrated in \fgr{bounded:bohm}.  As the ions are
accelerated towards the wall, the ion density decreases.  However, if
the ions entering the sheath are not fast enough, $M<1$, they undergo
relatively high acceleration and the density drops significantly.  For
$M<1$, the density becomes smaller that the electron density, which
would produce a potential with opposite sign, making shielding
impossible \citep{Riemann1990}.
\begin{figure}[!htb]
  \centering
  \includegraphics[width=0.8\linewidth]{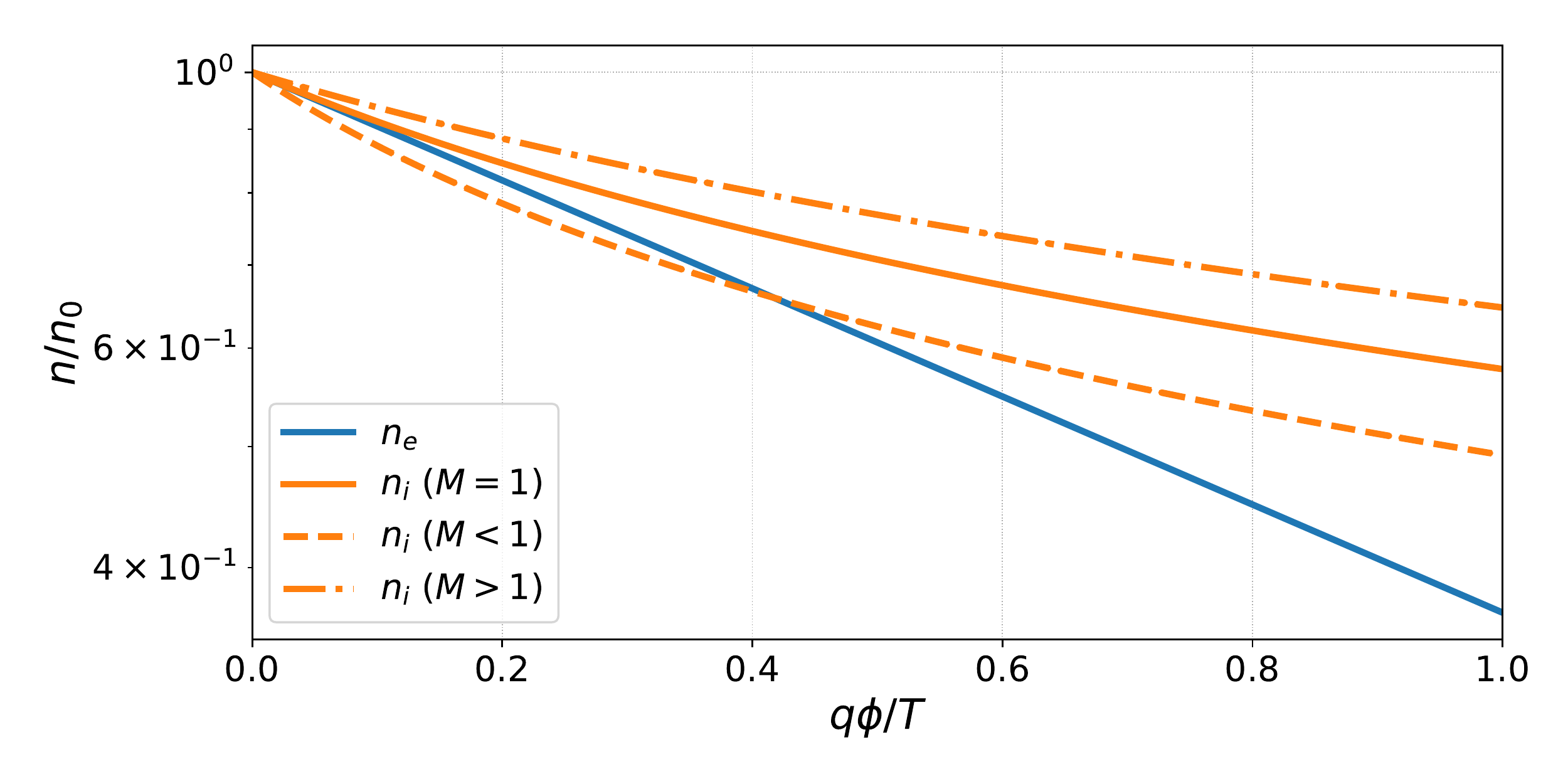}
  \caption[Electron and ion densities in plasma sheath based on
    $M$]{Normalized electron \eqrp{bounded:ne} and ion
    \eqrp{bounded:ni} densities as a function of $\chi=q_e\phi/T_e$.
    The ion density is shown for three values of
    $M=u_{i,0}/\sqrt{ZT_e/m_i}$.  The region of negative space charge between 0 and 0.4 $q\phi/T$ is would prevent the shielding.}
  \label{fig:bounded:bohm}
\end{figure}

The kinetic version of the criterion, relaxing the constrains on both
electron and ion distributions, was first formulated by
\cite{Boyd1959} and later extended by \cite{Riemann1990}.
\cite{Lieberman2005} provide it in the following form,
\begin{align}
  -\frac{q_e}{m_i} \int_0^\infty \frac{1}{v^2} f_i(v) \,dv \leq
  \left.\pfrac{n_e}{\phi}\right|_{\phi=0},
\end{align}
where $v = |v_x|$.  Substituting in the relation for Boltzmann
electrons \eqrp{bounded:ne} and the distribution function for the cold ions, $f(x,v) = n_i(x)\delta\big(v-u_i(x)\big)$, we get
\begin{align*}
  -\frac{q_e}{m_i} \frac{n_i(x)}{u_i(x)^2} \leq -\frac{q_e}{T_e} n_e(x).
\end{align*}
Noting the $n_e=n_i$ at the sheath edge, we recover the original Bohm
criterion (for singly charged ions).\footnote{Keep in mind that
  $q_e<0$.}

This so-called ``kinetic Bohm criterion'' is discussed and generalized
in several papers \citep{Allen1976, Bissell1987, Riemann1990,
  Riemann1995, Fernsler2005, Riemann2006} and its applicability on
different plasma distribution functions is further discussed by
\cite{Baalrud2011, Riemann2012}, and \cite{Baalrud2012}.

\cite{Baalrud2014} present an alternative approach based on the fluid
moment hierarchy.  They emphasize, that the concept of a sheath edge
in Langmuir's description is connected strictly to the charge density
and therefore should be independent of the plasma model.\footnote{An
  example of a description dependent on plasma model is the
  Child-Langmuir formula for the space-charge-limited current, $$j =
  \frac{4}{9} \sqrt{\frac{2|q_e|}{m_i}}\frac{\varepsilon_0|
    \phi_{w}|^{\frac{3}{2}}}{d^2},$$where $\phi_w$ is the potential at
  the wall and $d$ is the sheath width} Instead, the authors suggest
identifying the sheath edge using a threshold for the normalized
charge density $\bar{\rho}_c = (n_i-n_e)/n_i$.  However, in a real
situation where $\lambda_D/L \neq 0$ this transition is not abrupt,
hence, arbitrary values must be chosen.  By taking the expansion of
$\rho$ with respect to $\phi$, the quantitative form of the sheath
condition is derived,
\begin{align}\label{eq:bounded:edge}
  \left|\pfrac{n_i}{x}\right| \leq
  \left|\pfrac{n_e}{x}\right|.
\end{align}
From the steady-state conservation of mass, $\nabla_x(n_su_s) = S_s$,
where $S_s$ is a source or sink term,\footnote{$S_s$ comes from
  the integration of the RHS of \eqr{model:boltzmann}, $S_s = \int
  (\delta f_s/\delta t)\, d\bm{v}$.} we get
\begin{align*}
  \pfrac{n_s}{x} = - \frac{n_s}{u_s} \pfrac{u_s}{x} + S_s.
\end{align*}
Substituting the result into \eqr{bounded:edge}, \cite{Baalrud2014}
obtain an alternative version of the Bohm criterion,
\begin{align}\label{eq:bounded:bohm2}
  u_{i,0} \geq \sqrt{\frac{T_e +
      T_i - m_eu_e^2}{m_i}}.
\end{align}

%---------------------------------------------------------------------
\FloatBarrier
\subsection{Baseline Numerical Simulations}
\label{sec:bounded:sheath:sims}

In this section, the simplest continuum kinetic simulations are
presented to demonstrate the above-described features of the plasma
sheath theory.  These results were also published by
\cite{Cagas2017s}.

Unlike in the previous chapters, the simulations are run with SI
units,\footnote{SI units are required for the plasma material
  interactions, which are often based on empirical data; see
  \ser{bounded:pmi:r} for more details.} but the result are still
presented in normalized form.  For the first set of simulations (full
listing is available in \ref{list:bounded:cls}), we set $n_{i,0} =
n_{e,0} = \SI{1e17}{m^{-3}}$ and $T_e = 10\,T_i = \SI{10}{eV}$.  This
gives following plasma parameters,
\begin{align*}
  v_{th,e0} &= \sqrt{\frac{T_e}{m_e}} \approx \SI{1.33e6}{m s^{-1}}
  \\ v_{th,i0} &= \sqrt{\frac{T_i}{m_i}} \approx \SI{9.79e3}{m s^{-1}}
  \\ u_B &= \sqrt{\frac{T_e}{m_i}} \approx \SI{3.09e4}{m s^{-1}}
  \\ \omega_{pe} &= \sqrt{\frac{n_eq_e^2}{\varepsilon_0m_e}} \approx
  \SI{1.78e10}{s^{-1}} \\ \lambda_D &=
  \sqrt{\frac{\varepsilon_0T_e}{n_eq_e^2}} \approx \SI{7.43e-5}{m}
\end{align*}

\fgr{bounded:cls_distf} shows the initial and final distribution
functions of electrons and ions.  The simulation is initialized with
Maxwellian distributions for both species with the thermal velocities
listed above and no drifts. For electrons, the velocity space is set
to span $\langle -6v_{th,e0},\, 6v_{th,e0}\rangle$. Since the ions
undergo acceleration, their velocity space is chosen as $\langle
-6u_B,\, 6u_B\rangle$\footnote{This is probably unnecessarily
  conservative.}  rather than $\langle -6v_{th,i0},\,
6v_{th,i0}\rangle$.  The configuration space spans from
$\pm128\,\lambda_D$ and ends with ideally absorbing walls on both
sides, i.e., the distribution is set to zero at the
boundary.\footnote{Technically, only the part incoming from the wall
  should be set to zero but since \texttt{Gkeyll} uses upwinding
  fluxes (\ser{model:discvlasov}), the outgoing part of the ghost cell
  does not play any role.}

\begin{figure}[!htb]
  \centering
  \includegraphics[width=0.9\linewidth]{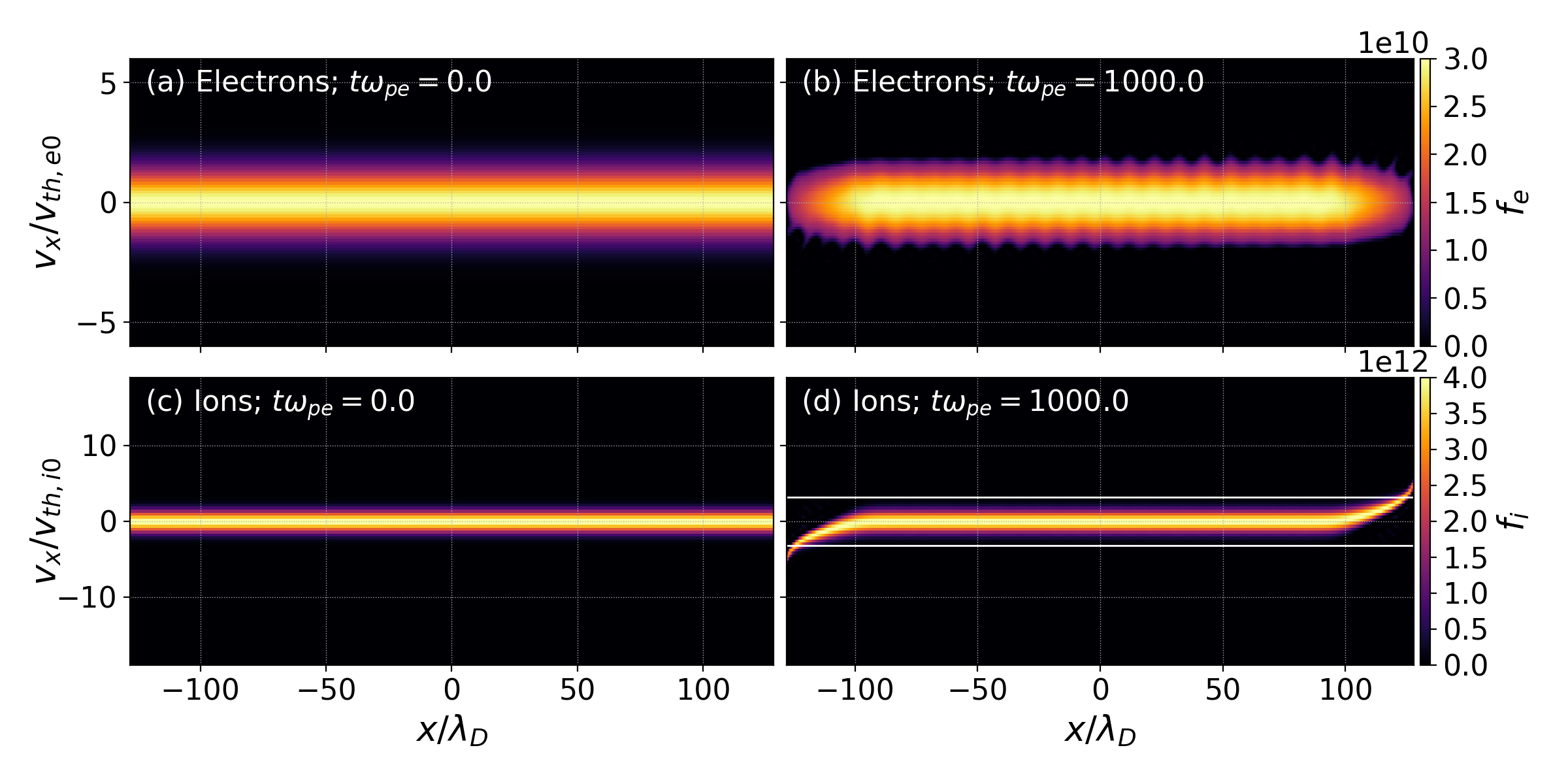}
  \caption[Initial and final distributions in a collisionless sheath
    simulation]{Initial and final distribution functions for electrons
    (top) and ions (bottom) in a collision-less sheath simulation
    [\ref{list:bounded:cls}].  The first column (a and c) captures the
    initial conditions while the right column shows the solution at
    $t\omega_{pe} = 1000$.  The white lines in panel (d) mark the Bohm
    velocity \eqrp{bounded:bohm}.}
  \label{fig:bounded:cls_distf}
\end{figure}

The white lines in \fgr{bounded:cls_distf}d mark the Bohm velocity
\eqrp{bounded:bohm} in terms of the ion thermal velocity.  Note that
the ion distribution self-consistently evolves so they reach the Bohm
velocity, $u_B$, a few Debye lengths from the wall.  On the other
hand, the electron distribution function (\fgr{bounded:cls_distf}c) is
repulsed from the wall by the sheath electric field.  There is
negligible electron density right next to the walls and we also see
the temperature decrease due to decompression cooling \citep{Tang2011}
(narrowing of the distribution).  The electron distribution function
is notably affected by oscillations which will be addressed later.

The temporal evolution from the initial conditions to
$t\omega_{pe}=1000$ is in \fgr{bounded:cls_evolution}. From top to
bottom, it shows the electron (a) and ion (b) densities, the sheath
electric field (c), ion bulk velocity (d), and the electron thermal
velocity\footnote{As was mentioned above, the definition of the
  thermal velocity and temperature in general is questionable for
  non-Maxwellian distributions.  Here, the thermal velocity is
  calculated using the moments $n$, $nu$, and $\mathcal{E}$
  (Eq.\thinspace\ref{eq:model:m0}, \ref{eq:model:m1}, and
  \ref{eq:model:m2}) as$$v_{th} =
  \sqrt{(2\mathcal{E}/m-(nu)^2/n)/n}.$$} (d). The figure captures
$28\,\lambda_D$ near the left wall of the simulation; therefore the
minus velocity and electric field are pointing towards the wall.

\begin{figure}[!htb]
  \centering
  \includegraphics[width=0.8\linewidth]{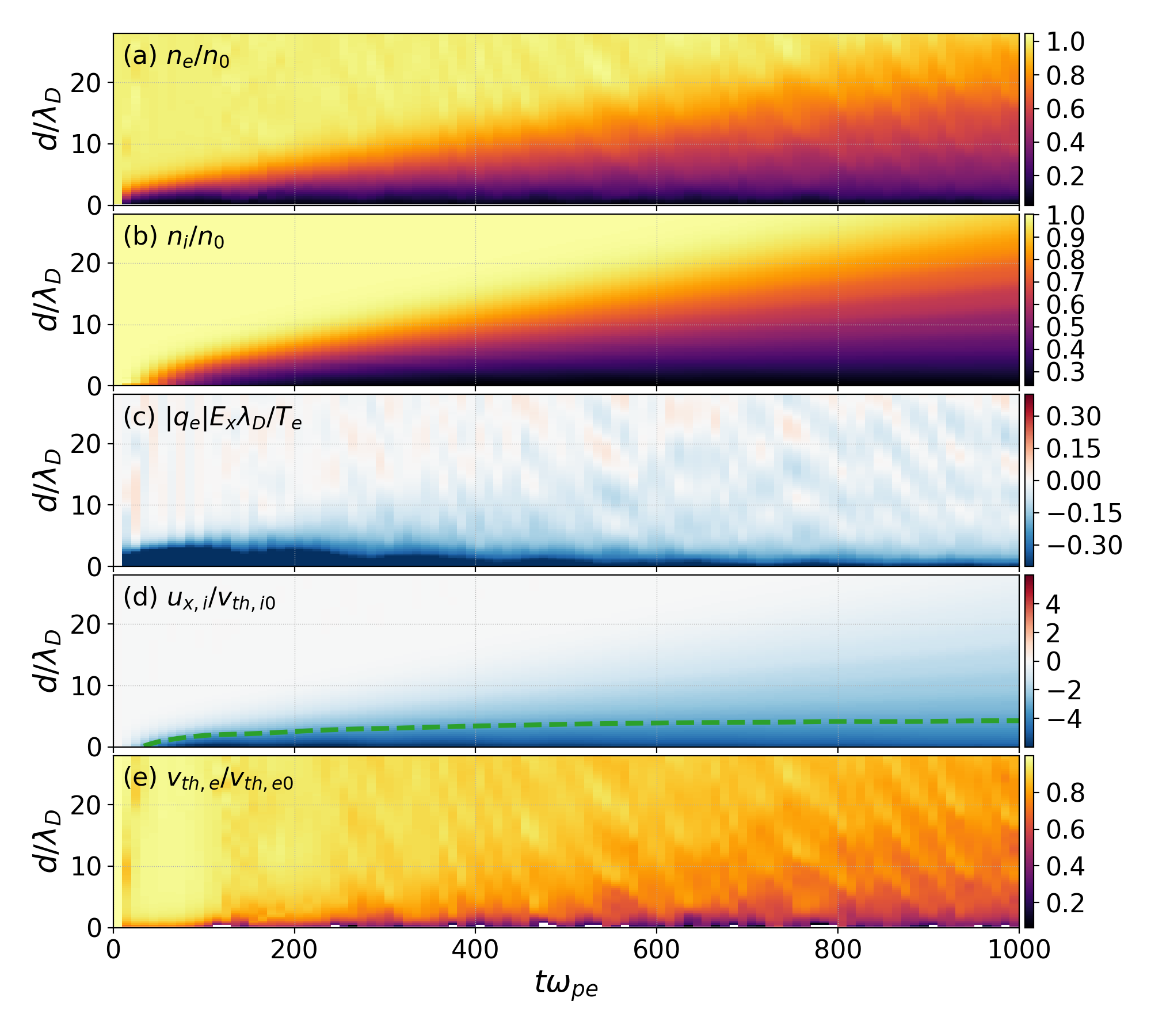}
  \caption[Temporal evolution of the plasma sheath]{Temporal evolution
    of the densities (electrons top and ions bottom), normalized
    electric field (c), ion bulk velocity (d), and the electron
    thermal velocity (e) in the region near the left wall of the
    \fgr{bounded:cls_distf} ($d=0$ is dirrectly at the left wall). The
    green contour in panel (d) marks the Bohm velocity
    \eqrp{bounded:bohm}. [Simulation input file:
      \ref{list:bounded:cls}]}
  \label{fig:bounded:cls_evolution}
\end{figure}

There are a few things to point out in \fgr{bounded:cls_evolution}.
Firstly, the vertical blocks (for each horizontal axis location) do not
correspond to individual time-steps but rather to the output frames.
In this particular case, each block represents $10/\omega_{pe}$; the
first one captures the initial conditions.  The simulation uses a
hydrogen plasma, i.e., ions are 1836$\times$ heavier than electrons.
The ion temperature is also set to be 10 times lower than electron
temperature.  Therefore, the electrons react much faster than ions and
their density drops to almost zero in the first few Debye lengths
within the first recorded frame.  The currents then act as a source of
electric field (\fgr{bounded:cls_evolution}c) limiting the outflow of
electrons.  The ion population naturally reacts to this field as well
but due to higher inertia, it takes a few tenths of plasma oscillation
periods to develop ion velocity directed to the wall
(\fgr{bounded:cls_evolution}d) to eventually reach the Bohm velocity
\eqrp{bounded:bohm}, marked by the green dashed contour in
\fgr{bounded:cls_evolution}d.  Note that as the ion supersonic flow
develops, the electron and ion fluxes start equalizing and the
electric field magnitude decreases resulting in a narrower region of
non-zero field.  As the electrons leave the domain, they undergo the
decompression cooling \citep{Tang2011} which is captured in
\fgr{bounded:cls_evolution}e.

Finally, there is the question of the oscillations noticeable in all
the electron variables and in the electric field.  Such oscillations
are reported in the literature; for example, \cite{Lieberman2005}
report oscillations of the electrostatic potential during numerical
simulations.  In order to diagnose the signal, we can construct a
dispersion diagram, i.e., the relation between the frequencies and
wavelengths.  It is done by performing the Fourier transformation
twice.  The spatio-temporal $E(t,x)$ is first transformed into
$E'(\omega,x)$ and then then $E''(\omega,k)$.  The result, normalized
to $\omega_{pe}$ and $\lambda_D$, is in \fgr{bounded:cls_dispersion}.
The figure shows the logarithm of the data which exaggerates the
noise.  Over-plotted are theoretical dispersion relations for Langmuir
waves \eqrp{benchmark:langmuir}, $\omega^2 = \omega_{pe}^2 +
\frac{3}{2}v_{th}^2 k^2$, for the initial thermal velocity and half of
it.  The green line, corresponding to the lower temperature near the
wall, fits the data well.\footnote{Sadly the lines are making the
  profile harder to see.}  It suggests that the observed oscillations
are Langmuir waves launched into the system by the system rapidly
adjusting to physical state from the initial uniform conditions.
There are two ways to minimize the effects of these waves.  Additional
damping mechanisms (see \ser{bounded:col}) can be included to diminish
the waves in time and the simulation can be initialized with
approximate initial conditions to reduce the excitation.

\begin{figure}[!htb]
  \centering
  \includegraphics[width=0.8\linewidth]{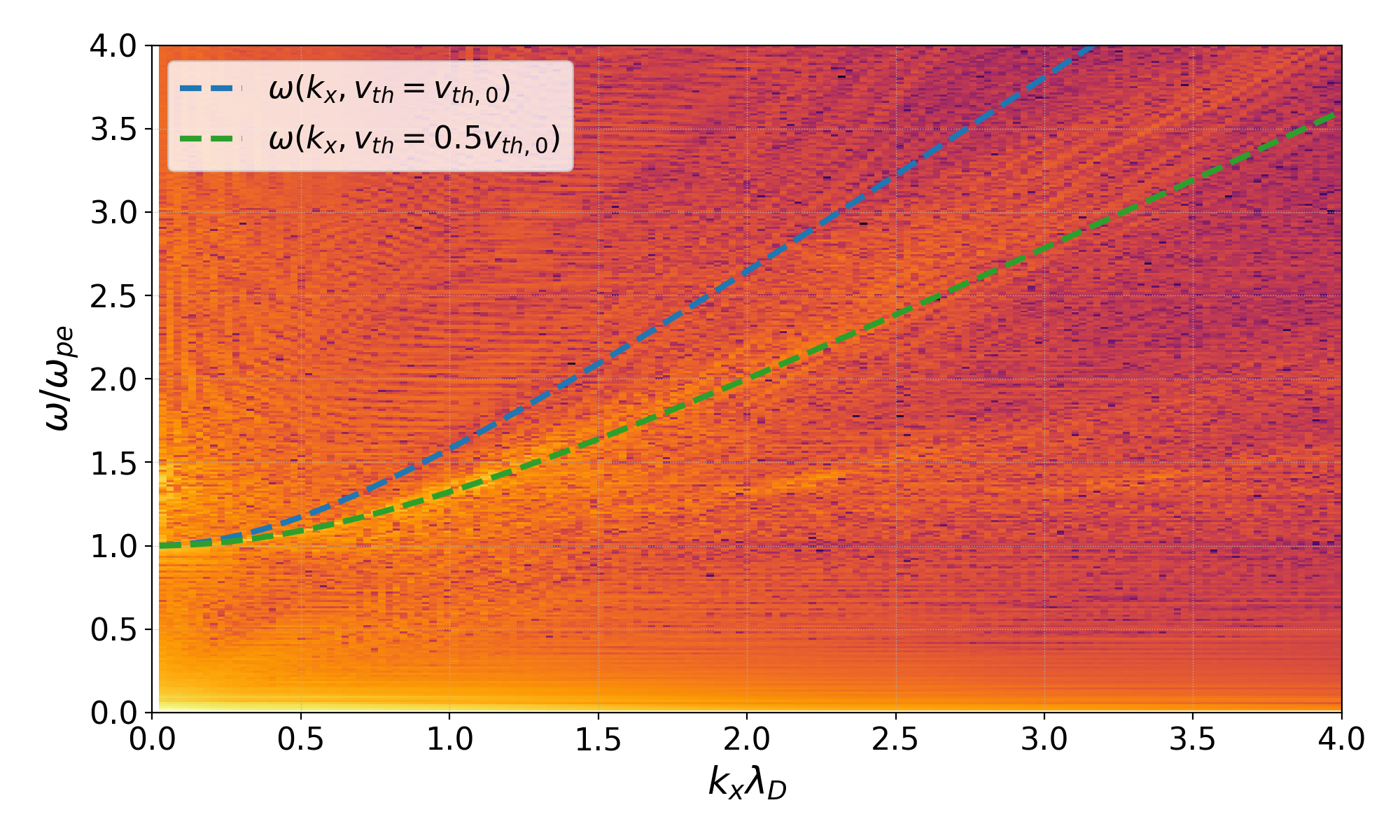}
  \caption[Dispersion diagram of the electric field]{Dispersion
    diagram of the electric field in the sheath simulation with
    uniform initial conditions.  Also included are the dispersion
    relations line plots for Langmuir waves \eqrp{benchmark:langmuir},
    $\omega^2 = \omega_{pe}^2 + \frac{3}{2}v_{th}^2 k^2$, for the
    initial thermal velocity (blue line) and half of it (green line).}
  \label{fig:bounded:cls_dispersion}
\end{figure}

%---------------------------------------------------------------------
%\FloatBarrier
\subsection{Approximate Initialization}
\label{sec:bounded:sheath:init}

Initializing the simulation with an approximate solution decreases the
rapid excitation of the electron Langmuir waves at the beginning of a
run.  \cite{Robertson2013} describes a simplified ODE model based on
the assumptions of mono-energetic ions, Boltzmann electrons, and
uniform ionization source, $S$, over the whole domain.  The validity
of the last assumption is arguable, because the ionization rate should
depend on the electron number density \citep{Meier2012}, which changes
significantly in the sheath region and is generally decreasing in the
presheath; however, that is not important for this purpose.  The model
here only serves to provide an initial guess and the simulation then
relaxes using the full kinetic equations.

The \cite{Robertson2013} model consists of the three ordinary
differential equations (ODE) and uses the following normalization,
$\tilde{x} = x/\lambda_D$, $\tilde{\phi} = |q_e|\phi/T_e$,
$\tilde{u}_i = u_i/u_B$, $\tilde{E}=|q_e|E\lambda_D/T_e$,
$\tilde{S}=S\lambda_D/(n_0u_B)$, $\tilde{n}_i=n_i/n_0$,
$\tilde{n}_e=n_e/n_0$, and the collisional momentum transfer
$\tilde{\nu}_c = \lambda_D\nu_c/u_B$.  The momentum equation then
becomes
\begin{align}
 \frac{d\tilde{u}_i(\tilde{x})}{d\tilde{x}} =
 \frac{\tilde{E}(\tilde{x})}{\tilde{u}_i(\tilde{x})} -
 \frac{\tilde{S}}{\tilde{n}_i(\tilde{x})} -\tilde{\nu}_c =
 \frac{\tilde{E}(\tilde{x})}{\tilde{u}_i(\tilde{x})} -
 \frac{\tilde{u}(\tilde{x})}{\tilde{x}} -\tilde{\nu}_c,
\end{align}
where we use $\tilde{j} = \tilde{S}\tilde{x}$.\footnote{Coming from
  the continuity equation for uniform source,$$\pfraca{x}(nu)=S \quad
  \Rightarrow\quad nu = Sx.$$} The Poisson equation is
\begin{align}
  \frac{d\tilde{E}(x)}{d\tilde{x}} = \tilde{n}_i(\tilde{x}) -
  \tilde{n}_e(\tilde{x}) =
  \frac{\tilde{S}\tilde{x}}{\tilde{u}_i(\tilde{x})} -
  \exp\big(\tilde{\phi}(\tilde{x})\big),
\end{align}
and, to close the system, we need
\begin{align}
  \frac{d\tilde{\phi}(\tilde{x})}{d\tilde{x}} = -\tilde{E}(\tilde{x}).
\end{align}
The system diverges for $x\rightarrow0$, therefore, the integration
starts at $\Delta x$.  The initial conditions are then $\tilde{u}_{i0}
= \tilde{S}\Delta\tilde{x}$, $\tilde{E}_0 =
2\tilde{S}^2\Delta\tilde{x}$, and $\tilde{\phi}_0 =
-\tilde{S}^2(\Delta\tilde{x})^2$.

Disregarding the collisional transfer, the system can be numerically
solved, for example, using Python's \texttt{odeint}
module:\footnote{\url{https://docs.scipy.org/doc/scipy/reference/generated/scipy.integrate.odeint.html}}
\begin{lstlisting}[language=Python]
import numpy as np from scipy.integrate import odeint

L = 128
numX = 128
dx = L/numX
S = 0.54 / L
def robertson(y, x, S):
    phi, E, u = y
    dydx = [-E, S*x/u - np.exp(phi), E/u - u/x]
    return dydx
y0 = [-S**2*dx**2, 2*S**2*dx, S*dx/(np.exp(-S**2*dx**2)+2*S**2)]
x = np.linspace(dx, L, numX)

sol = odeint(robertson, y0, x, args=(S,))
\end{lstlisting}

Profiles obtained from the model are shown in
\fgr{bounded:cls_robertson}.  Note that since only $u$, $\phi$, and
$E$ are evolved, the densities are calculated using
$\tilde{n}_i=\tilde{S}\tilde{x}/\tilde{u}(\tilde{x})$ and $\tilde{n}_e =
\exp(\tilde{\phi})$.

\begin{figure}[!htb]
  \centering
  \includegraphics[width=0.8\linewidth]{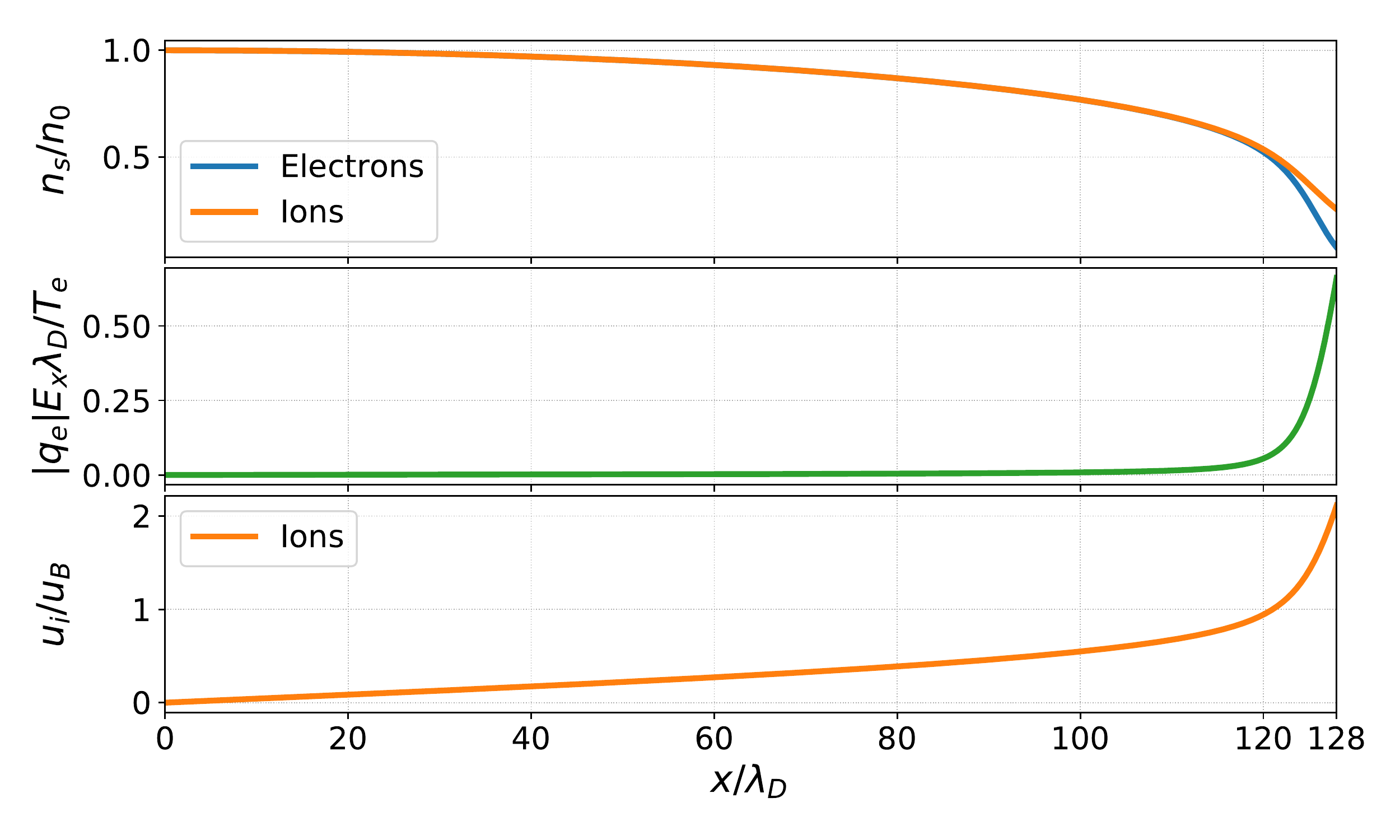}
  \caption[Sheath profiles from the Robertson model]{Electron and ion
    densities, resulting electric field, and ion bulk velocity
    profiles calculated using the \cite{Robertson2013} ODE sheath
    model.  Note the velocity is reaching the Bohm velocity around
    $8\,\lambda_D$ from the wall.}
  \label{fig:bounded:cls_robertson}
\end{figure}

After the profiles are precomputed, they can be stored and used for
the initialization of the electron and ion distribution functions and
the electric field.  In \fgr{bounded:cls_comparison}, we compare two
electron distribution functions after $t\omega_{pe}=100$ initialized
with uniform and approximate initial conditions.  Note that even
though the wave excited at the beginning of the simulation is still
present in both cases, it is more pronounced in the case initialized
with the uniform densities and zero fields (see the highlighted part
of the distribution functions in \fgr{bounded:cls_comparison}). Note
that as the simulations are captured early in time in order to clearly
capture the waves, density and temperature profiles are quite
different.  While the left plot is initialized with $n_e =
\SI{1e17}{m^{-3}}$ everywhere, the density in the right plot has this
value only in the middle and is decreasing towards both walls, as it
is shown in \fgr{bounded:cls_robertson}.

\begin{figure}[!htb]
  \centering
  \includegraphics[width=0.9\linewidth]{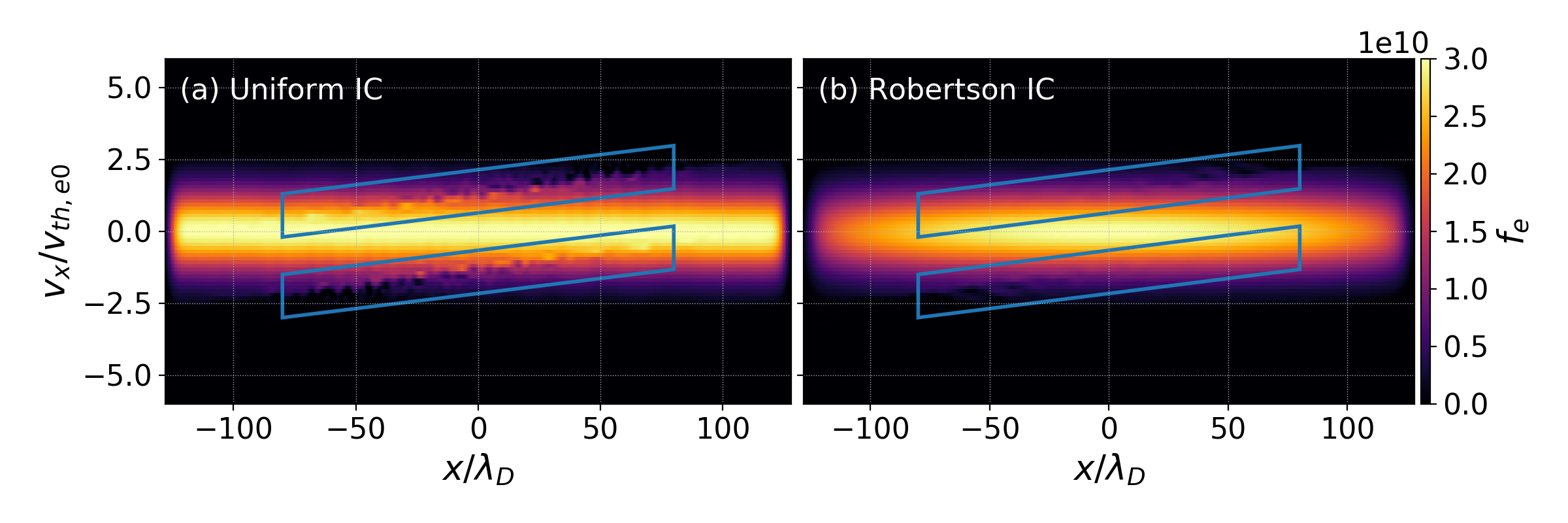}
  \caption[Comparison of the distribution functions with uniform and
    precomputed IC]{Comparison of the electron distribution functions
    with uniform initial conditions (left) and the initial conditions
    precomputed using the \cite{Robertson2013} model (right) early in
    time ($t\omega_{pe}=100$). Highlighted are the perturbations of
    the distribution functions caused by the propagating Langmuir
    waves.}
  \label{fig:bounded:cls_comparison}
\end{figure}

%=====================================================================
\FloatBarrier
\section{Collisions and Ionization}
\label{sec:bounded:col}

The discussion in \ser{bounded:class} focused on the sheath edge
resulting in derivation of the Bohm criterion \eqrp{bounded:bohm}.
The baseline simulation (\fgr{bounded:cls_evolution}) confirms the
concept that ions are accelerated to the Bohm velocity several Debye
lengths from the wall.  However, as there are no sources, plasma is
depleting and a steady-state is never reached.  In order to reach a
steady-state, collisions and particle sources need to be added.

This is confirmed by analysis done by \cite{Riemann1990}.  He uses the
same normalized equations as in \ser{bounded:sheath:theory} but
applies them to the presheath instead of the sheath edge.  This
results in the following inequality,
\begin{align}
  \pfrac{e}{\xi}-\pfrac{\chi}{\xi} <
  \frac{1}{\tilde{j}_i}\pfrac{\tilde{j}_i}{\xi},
\end{align}
where $\tilde{j}_i$ is the normalized ion current, $\tilde{j}_i =
\sqrt{m_i/2T_e}/n_0 j_i$, and $e$ is the normalized energy, $e =
\frac{1}{2}m_iu_i^2/T_e$.  This inequality is satisfied if either ion
density increases approaching the wall, $\pfracb{\tilde{j}_i}{\xi}>0$,
or ions experience friction, $\pfracb{e}{\xi}<\pfracb{\chi}{\xi}$.  In
other words, for a steady-state sheath, collisions and/or ionization
are required in the presheath.

%---------------------------------------------------------------------
%\FloatBarrier
\subsection{Collisions}
\label{sec:bounded:col:col}

To balance the loss of high-energy electrons to the walls, collisions
must be included to replenish the electron tails if steady-state is to
be achieved. These collisions, however, should be infrequent enough
that the collisional mean-free-path is much longer than the sheath
width, allowing for proper simulation of collisionless sheaths.  This
work uses a simple Bhatnagar-Gross-Krook (BGK) operator
\citep{Bhatnagar1954}
\begin{align}\label{eq:bounded:bgk}
  S_{coll,s} = \nu_{\text{coll},s}\left(f_{M,s}-f_s\right),
\end{align}
where $f_{M,s}$ is a Maxwellian distribution function constructed
using the first three moments of $f_s$ and $\nu_{coll,s}$ is the
collision frequency.

The form of the BGK operator is the direct consequence of the
discussion in \ser{model:distf} where we show that a collection of
particles naturally relaxes towards the Maxwellian distribution. Note
that because the Maxwellian distribution is constructed from the first
three moments of $f_s$, total density, momentum, and energy are
conserved.

\subsubsection{Benchmarking the BGK Operator}

A simple benchmark can be performed by allowing a non-Maxwellian
distribution to relax to a Maxwellian using a collision operator.  In
the absence of fields and for distributions that are uniform in $x$,
the Boltzmann equation \eqrp{model:boltzmann} simplifies to
\begin{align*}
  \pfrac{f_s}{t} =  \nu_{\text{coll},s}\left(f_{M,s}-f_s\right).
\end{align*}
\fgr{bounded:bgk_relax} plots the particle distribution as a function
of velocity and time. The initial distribution is defined as
\begin{align*}
  f(v_x) = \begin{cases}
    1/3, & |v_x| \leq 1.5\,v_{th} \\
    0, & |v_x| > 1.5\,v_{th}
  \end{cases}
\end{align*}
The initial number density is therefore 1.0, bulk velocity is zero,
and the square of the thermal velocity is
0.75.\footnote{$\int_{-1.5}^{1.5} v_x^2/3\,dv_x = 0.75$} In the bottom
two panels, there are evolutions of integrated first moment and the
thermal part of the second moment, i.e., number density and
$nv_{th}^2$; the latter is a proxy for thermal energy.  Both moments
are steadily dropping even though they are supposed to be constant.
To understand why, we need to take a closer look at how the operator
is implemented in the code.

\begin{figure}[!htb]
  \centering
  \includegraphics[width=0.8\linewidth]{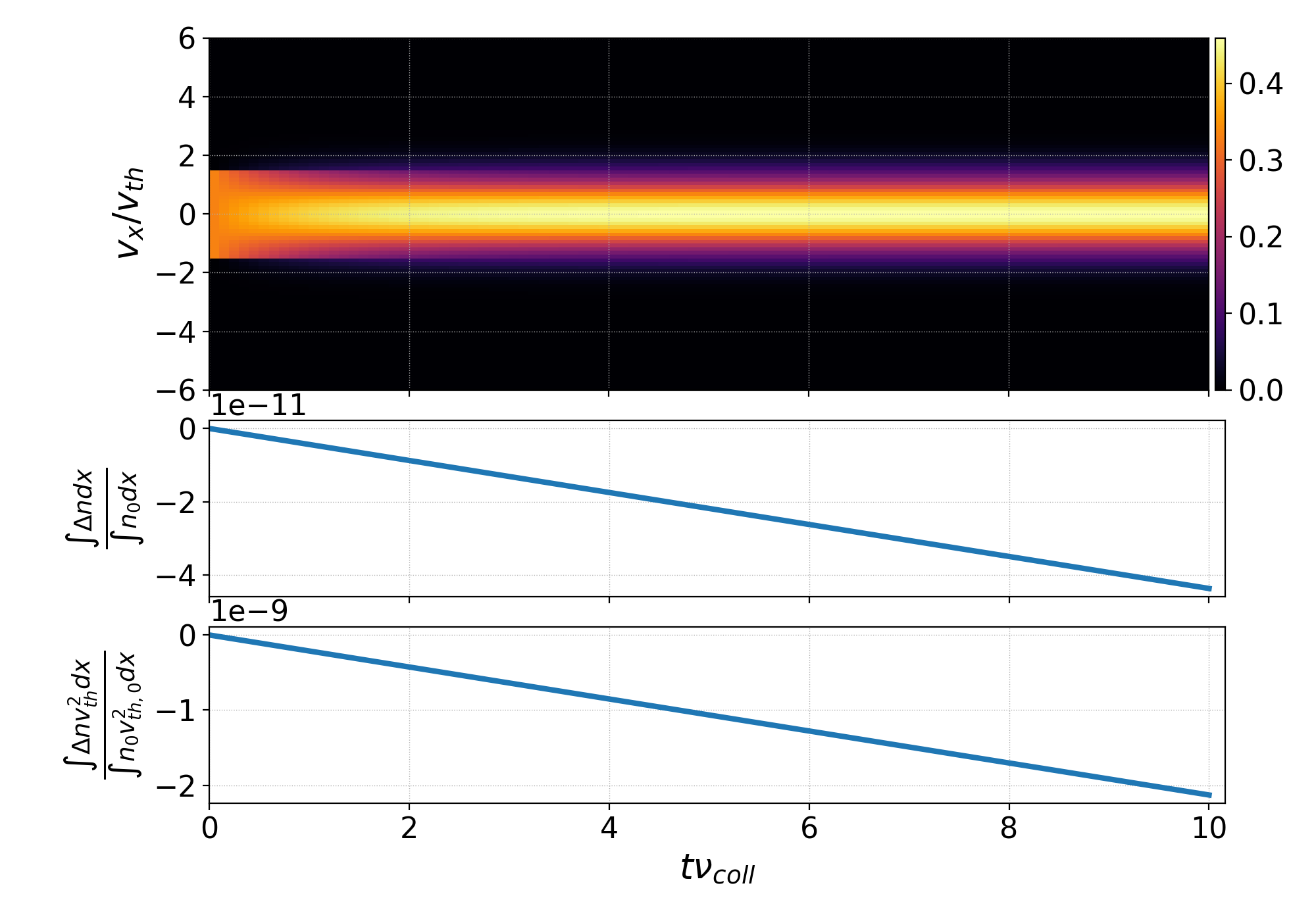}
  \caption[Relaxation due to the BGK operator]{Evolution of initially
    square distribution function to a the Maxwellian distribution due
    to the BGK collision operator (top panel).  Middle and bottom
    panels show corresponding development of total integrated density
    and thermal energy.}
  \label{fig:bounded:bgk_relax}
\end{figure}

Each time step, the distribution function moments are calculated using
the process described in \ser{model:moments}.  This calculation is
exact.  The error associated with the moment conservation is because
the Maxwellian distribution \eqrp{model:maxwellian}, which is
constructed from the calculated moments, must be expanded in terms of
the basis function.  As the distribution is an exponential function,
the expansion into a polynomial basis always results in an
error.\footnote{It is possible to use exponential functions as a basis
  but they are not used in \texttt{Gkeyll} and will not be discussed in
  this work. See, for example, \cite{Stolz1998} or
  \cite{Weniger1983}.}  In other words, if we construct a Maxwellian,
expand it into a polynomial basis, and recalculate the moments, the
final moments do not match the original values.  There is a
possibility to adjust the constructed Maxwellian distribution so its
moments match the original ones.  This is a current topic of research
in the \texttt{Gkeyll} collaboration and will be published in the near
future.

\fgr{bounded:bgk_relax_compare} provides a closer look into the
initial (blue) and final (green) lineouts of the distribution
function, i.e., distribution as a function of only velocity for fixed
$x$.  To better assess the effect of the BGK operator, Maxwellian
distribution \eqrp{model:maxwellian} constructed from the initial
parameters ($n=1.0$, $u=0.0$, and $v_{th}^2=0.75$) is included as well
(orange line below the green one).

\begin{figure}[!htb]
  \centering
  \includegraphics[width=0.8\linewidth]{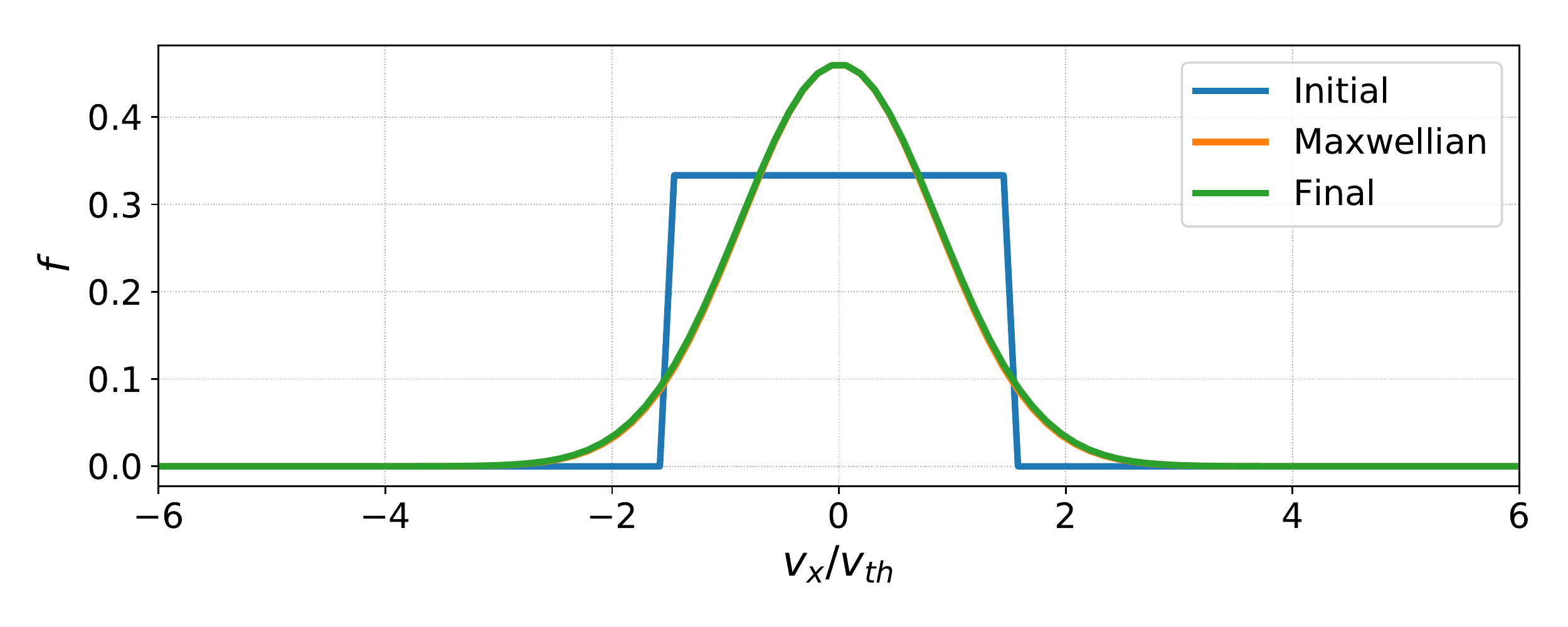}
  \caption[Comparison of initial and BGK relaxed
    distributions]{Comparison of lineouts of initially square and
    relaxed distribution functions (blue and green respectively).
    Maxwellian distribution \eqrp{model:maxwellian} constructed from
    the initial parameters ($n=1.0$, $u=0.0$, and $v_{th}^2=0.75$) is
    included as well (orange).}
  \label{fig:bounded:bgk_relax_compare}
\end{figure}

The Vlasov code coupled with the BGK operator is also benchmarked to
the \cite{Sod1978} shock tube problem.  Simulations are run with
classical parameters on the left and right side of the discontinuity
(Tab.\thinspace\ref{tab:bounded:sod}).
\begin{table}[!htb]
  \centering
    \caption[Initial parameters for the \cite{Sod1978} shock
      tube]{Initial parameters for the \cite{Sod1978} shock tube.}
  \begin{tabular}{ c|ccc }
    \toprule
    & $n$ & $u$ & $p=nv_{th}^2$ \\ \midrule
    Left & 1.0 & 0.0 & 1.0 \\
    Right & 0.125 & 0.0 & 0.1 \\\bottomrule
  \end{tabular}
  \label{tab:bounded:sod}
\end{table}

Note that for a 1X1V Vlasov simulation, pressure is given as
$p=nv_{th}^2$; therefore, effective $\gamma=c_p/c_v$ is
3.\footnote{$p/(\gamma-1) = \frac{1}{2}nv_{th}^2$} A fluid description
intrinsically assumes a Maxwellian distribution of particles, which
corresponds to a BGK operator with $\nu_{\text{coll}} \rightarrow
\infty$; in a kinetic code, the collisionality can be set arbitrarily.
In this case, the collision frequency is defined through Knudsen
number, $\mathit{Kn} = \lambda_{\text{mfp}}/L$, where
$\lambda_{\text{mfp}}$ is the mean-free-path.  Collision frequency is
then $\nu_{\text{coll}} = v_{\text{th},L}/\mathit{Kn}$.  A set of
simulations with different $\mathit{Kn}$ is shown in
\fgr{bounded:bgk_sod_comparison} together with the exact solutions of
the Euler equations in black.  For relatively high Knudsen number,
i.e., low collisionality, the solution is closer to a combination of
two rarefaction waves.  For high collisionality, the kinetic solution
matches the Euler prediction very well. Furthermore, fluid simulations
of a shock tube typically suffer from Gibbs phenomena (oscillations on
both sides of the shock) resulting in the need to use artificial
viscosity, filters, or limiters.  Interestingly, natural damping in
the kinetic model removes these features automatically and we see only
a minor undershoot in bulk velocity
(\fgr{bounded:bgk_sod_comparison}b) and temperature
(\fgr{bounded:bgk_sod_comparison}d).

\begin{figure}[!htb]
  \centering
  \includegraphics[width=0.8\linewidth]{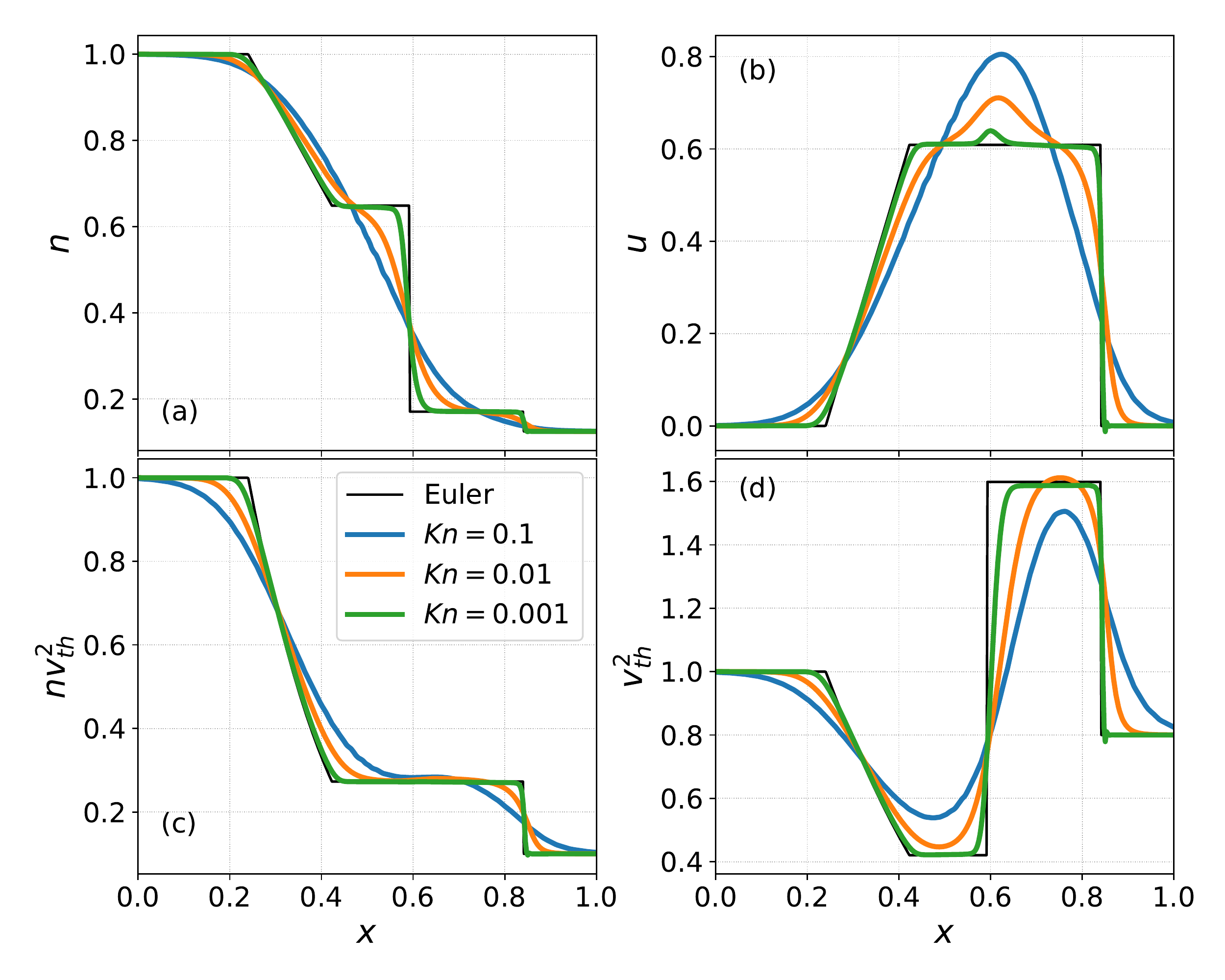}
  \caption[\cite{Sod1978} shock tube profiles for different
    collisionality]{Density (a), bulk velocity (b), pressure (c), and
    temperature (d) profiles for \cite{Sod1978} shock tube at
    $t=0.15$.  Plotted are three cases with varying Knudsen number,
    $Kn=\lambda_{\text{mfp}}/L$ together with exact Euler solution
    (black).}
  \label{fig:bounded:bgk_sod_comparison}
\end{figure}

The kinetic model also provides an interesting insight into shock
behavior through phase space plots.  The evolution is captured in
\fgr{bounded:bgk_sod}.  The top left frame shows the initial
conditions; it is clear that the left part of the domain contains a
population which is both much denser and hotter.  Without any
collisions, the evolution would look similar to the bounce example
(\fgr{model:distf}) in \ser{model:distf}; due to different speeds the
whole population would start ``tilting''.  In this case, however, the
BGK operator is continuously pushing the distribution towards a
Maxwellian and the shock and contact discontinuity form on the
right-hand-side of the domain.  The rarefaction wave on the
left-hand-side can be clearly explained as a ``lack'' of high speed
left-propagating particles as they are not replenished from the colder
right-hand-side.  Full listing of this case is available in the
appendix [\ref{list:bounded:sod}].

\begin{figure}[!htb]
  \centering
  \includegraphics[width=0.9\linewidth]{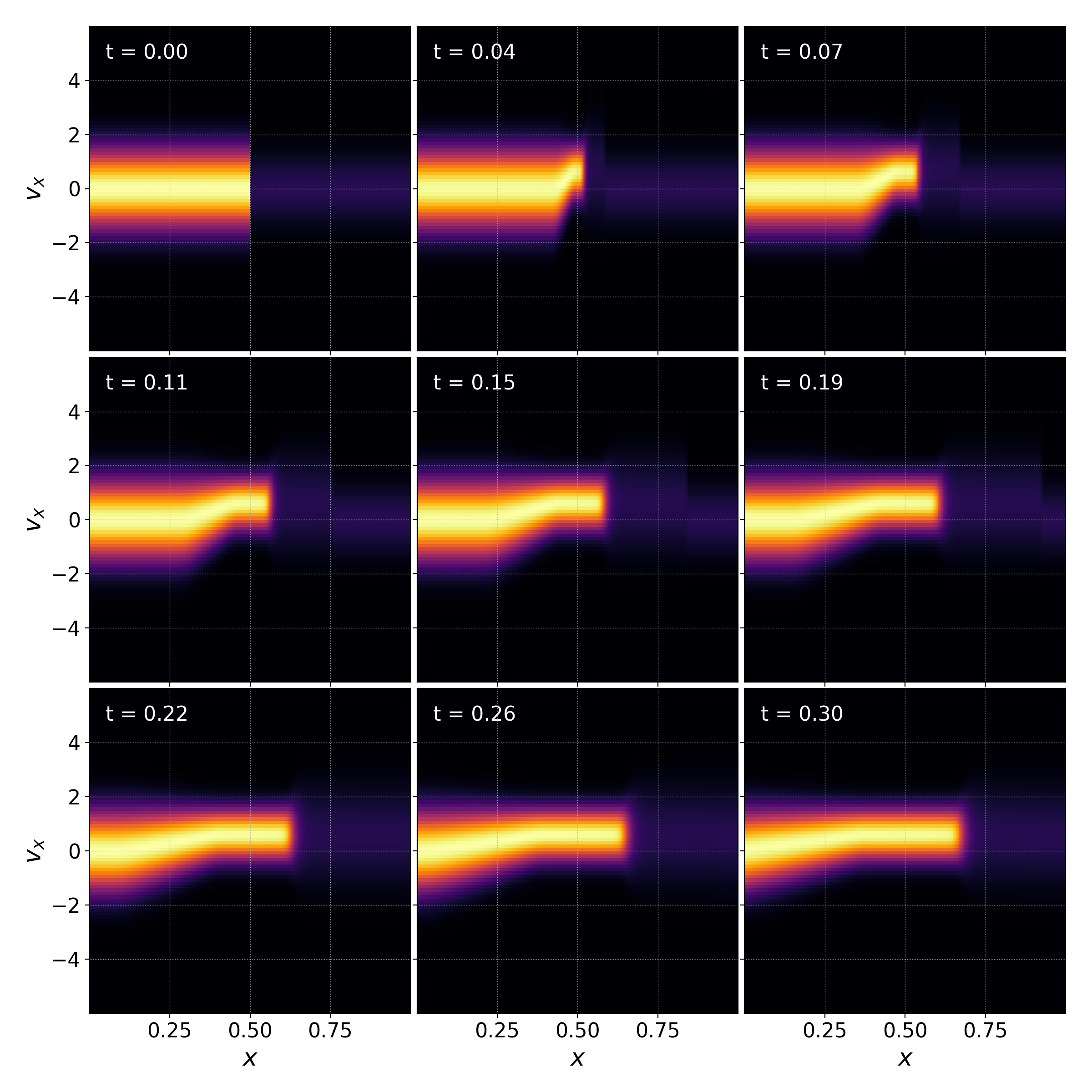}
  \caption[Phase space evolution of \cite{Sod1978} shock]{Phase space
    evolution of particle distribution function in \cite{Sod1978}
    shock tube.  Knudsen number is set to 0.001.  Full input file
    listing for this case is available in the appendix
    [\ref{list:bounded:sod}].}
  \label{fig:bounded:bgk_sod}
\end{figure}

\subsubsection{Collision Frequency}

In the previous examples, the collision frequency,
$\nu_{\text{coll}}$, is set manually\footnote{By setting the
  \texttt{collFreq} keyword of the collision object for each species.}
and is constant in both time and space.  This is useful for
benchmarking, however, for physics-relevant simulations, the collision
frequency must change together with plasma.  For that, \texttt{Gkeyll}
implements the \cite{Shi2017} formula
\begin{align}\label{eq:bounded:nu}
  \nu_{\text{coll}, s} =
  \frac{q_s^4}{6\sqrt{2}\pi^{3/2}\varepsilon_0^2m_s^2}\frac{n_s}{v_{th,s}^3}\mathrm{ln}(\Lambda),
\end{align}
where $\mathrm{ln}(\Lambda)$ is the Coulomb logarithm
\citep{Braginskii1965},\footnote{The formula comes from
  \cite{Braginskii1965} which is in cgs units. For that reason, number
  density must be converted from m$^{-3}$ to cm$^{-3}$.}
\begin{align*}
  \mathrm{ln}(\Lambda) = \begin{cases}
    23.4-1.15\,\mathrm{log}(n\times10^{-6}) +3.45\,\mathrm{log} \,T, & T <
    \SI{50}{eV} \\
     25.3-2.3\,\mathrm{log}(n\times10^{-6}) +3.45\,\mathrm{log} \,T, & T >
     \SI{50}{eV}
  \end{cases}
\end{align*}

\subsubsection{Cross-species Collisions}

Until this point, the BGK operator included collisions between the
same species only.  For collisions between different species, which
would introduce drag into system, the operator needs to be extended.
\cite{Greene1973} provides the following system for electrons and
ions,
\begin{align}\begin{aligned}
    S_e &= \nu_{\text{coll},ee} (f_{M,e}-f_e) +
    \nu_{\text{coll},ei}(f_{M,te}-f_e), \\
    S_i &= \nu_{\text{coll},ii} (f_{M,i}-f_i) +
    \nu_{\text{coll},ie}(f_{M,ti}-f_i),
\end{aligned}\end{align}
where $\nu_{\text{coll},ee} \sim \nu_{\text{coll},ei}$ and
$\nu_{\text{coll},ei}$ and $\nu_{\text{coll},ie}$ differ by mass ratio
to capture the fact that ions are only weakly affected by collisions
with electrons.

The ``cross-Maxwellians'' $f_{M,te}$ and $f_{M,ti}$ are defined using
the following combined moments,
\begin{align*}
  n_{te} =& n_e, \quad n_{ti} = n_i, \\
  \bm{u}_{te} =& \frac{1}{2}(\bm{u}_e+\bm{u}_i) -
  \frac{1}{2}\beta(\bm{u}_e-\bm{u}_i),\\
  \bm{u}_{ti} =& \frac{1}{2}(\bm{u}_i+\bm{u}_e) -
  \frac{1}{2}\beta(\bm{u}_i-\bm{u}_e),\\
  T_{te} =& \frac{m_eT_i+m_iTe}{m_i+m_e} -
  \beta\frac{m_e}{m_i+m_e}(T_e-T_i) +
  \frac{1}{6}(1-\beta^2)\frac{m_em_i}{m_i+m_e}(\bm{u}_e-\bm{u}_i)^2 +
  \\
  &\frac{1}{12}(1+\beta)^2\frac{m_i-m_e}{m_e+m_i}m_e(\bm{u}_e-\bm{u}_i)^2,\\
  T_{ti} =& \frac{m_eT_i+m_iTe}{m_i+m_e} -
  \beta\frac{m_e}{m_i+m_e}(T_i-T_e) +
  \frac{1}{6}(1-\beta^2)\frac{m_em_i}{m_i+m_e}(\bm{u}_e-\bm{u}_i)^2 +
  \\
  &\frac{1}{12}(1+\beta)^2\frac{m_e-m_i}{m_e+m_i}m_e(\bm{u}_e-\bm{u}_i)^2,
\end{align*}
where $\beta$ is arbitrary \citep{Greene1973}.  It should be pointed
out that the ``cross-moments'' are symmetric and, therefore, could be
used for arbitrary species.  However, the calculation of
``cross-temperatures'' does not guarantee positive values and can
easily result in a crash of the simulation.  This typically happens
when applying the contribution of lighter species onto heavier ones,
e.g., contribution from electron to ions. While using the
cross-species collision terms with \texttt{Gkeyll}, these
contributions are usually ignored due to the assumption that there is
negligible impact of these collisions on the heavier species.

A demonstration of the BGK operator with cross collision terms is in
\fgr{bounded:bgk_cross}, which shows interaction of relatively cold
particle population with non-zero bulk velocity reacting with a warmer
population.  Within a few collision periods, the populations reach an
equilibrium state.

\begin{figure}[!htb]
  \centering
  \includegraphics[width=0.8\linewidth]{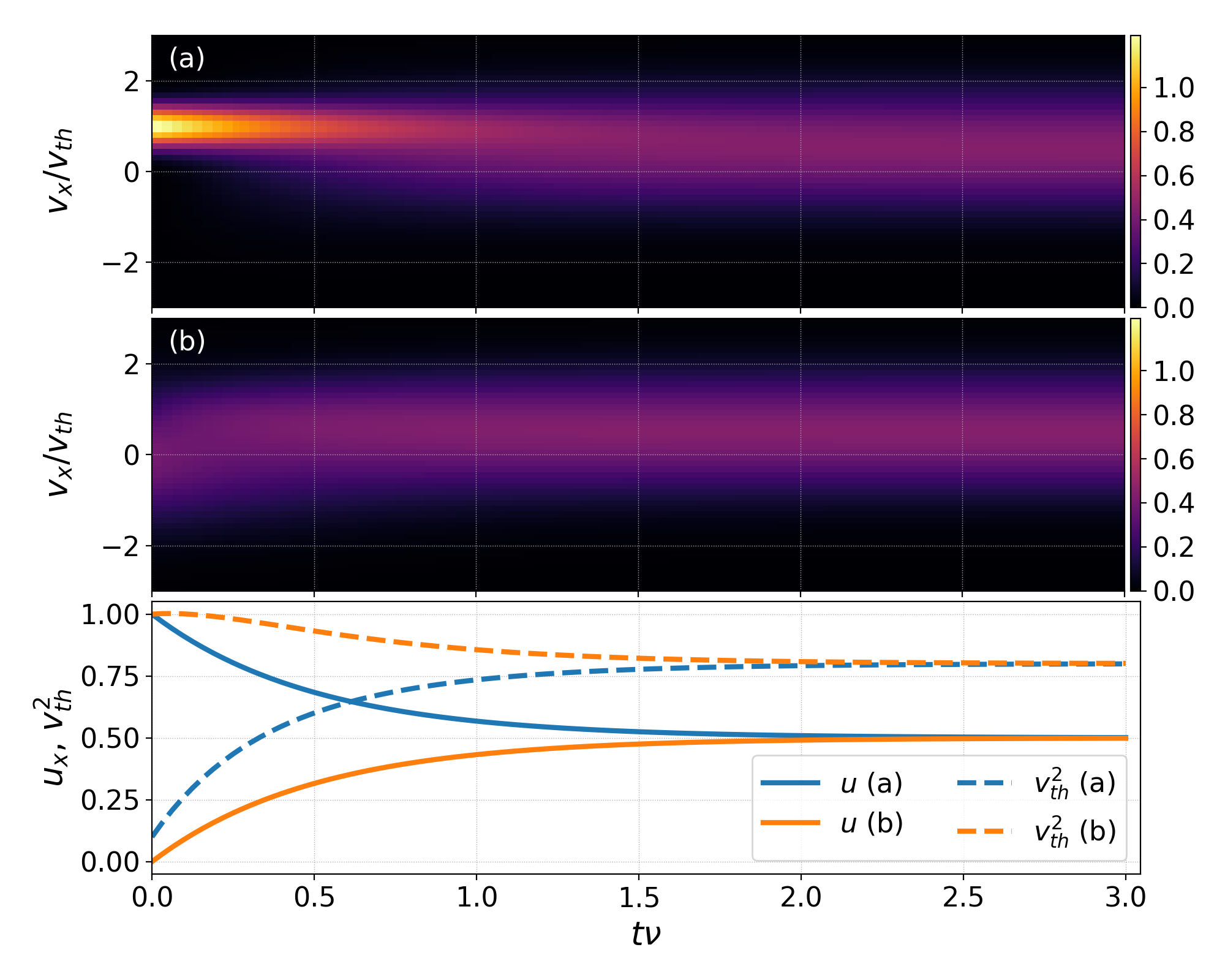}
  \caption[Mutual interaction of two populations]{Mutual interaction
    of two distinct neutral populations caused by cross-species BGK
    collisions implemented bases on \cite{Greene1973}.  The two top
    panels show individual distribution functions while the bottom
    panel captures the time evolution of bulk velocities and
    temperatures.}
  \label{fig:bounded:bgk_cross}
\end{figure}

\subsubsection{Collisional Sheath Simulations}

Classical sheath theory (\ser{bounded:sheath:theory}) assumes the
sheath itself is collision-less; however, as it was mentioned above,
collisions are required to replenish high energy tail as particles are
lost to wall.  Therefore, it is common to apply collisions only in the
presheath.  Here instead, the BGK operator is applied on the full
domain with a collision frequency from \eqr{bounded:bgk}.  This
approach naturally decreases the collisionality in the sheath as the
number density drops significantly.

A common argument against this approach is that collisions in the
sheath thermalize electrons which should be non-Maxwellian there; the
next example shows that it is not the case.  The simulation uses the
same set of Hall thruster relevant SI parameters as in
\ser{bounded:sheath:sims}.  That, however, results in a presheath
collision frequency on the order of \SI{1e6}{s^{-1}}. The frequency
corresponds to a mean free path of \SI{30}{cm} \citep{Boeuf2017},
which is much bigger than the simulation domain.  Therefore the domain
would need to be significantly increased in order to capture the
presheath thermalization, resulting in much higher computational cost.
Alternatively, the collision frequency can be artificially increased
to demonstrate the effect.  Interestingly, with the collision
frequency increased by a factor of 1000 with respect to the
\cite{Braginskii1965} formula \eqrp{bounded:bgk}, the sheath
distribution function retains the typical non-Maxwellian profile.  The
results are captured in \fgr{bounded:sheath_coll}, which shows
velocity profiles of electron distribution function $\lambda_D/6$ from
the right wall.  Solid lines mark the case with the BGK operator while
dashed lines capture simulation results from
\ser{bounded:sheath:sims}, i.e., without any collisions.  For
distribution functions in the sheath, we see the depletion of
electrons with velocities $|v_x|>\sqrt{q_e\phi/m_e}$ as these are not
reflected by the potential, $\phi$, and are lost to the wall.  Near
the right wall, the depletion is in negative (leftward) velocities.
We clearly see this effect in both collisional and collisionless cases
at $t\omega_{pe}=50$ (blue lines) with the green line marking the
critical velocities.\footnote{Note that the plot is semilogarithmic,
  i.e., exponential Maxwellian distribution looks like a quadratic
  function.}  At $t\omega_{pe} = 100$ (orange lines), which is higher
than the transition time in this simulations, both cases show
significant discrepancies.  In the collisionless case (dashed) the
depleted tail ``propagates'' through the domain until it reaches the
other side.  The depletion of positive velocities for orange dashed
line is, therefore, an effect of the opposite wall.  In the
collisional case, the tail gets replenished in the presheath and right
side of the distribution function remains Maxwellian, with minimal
differences with respect to the solution at $t\omega_{pe} = 50$.

\begin{figure}[!htb]
  \centering
  \includegraphics[width=0.9\linewidth]{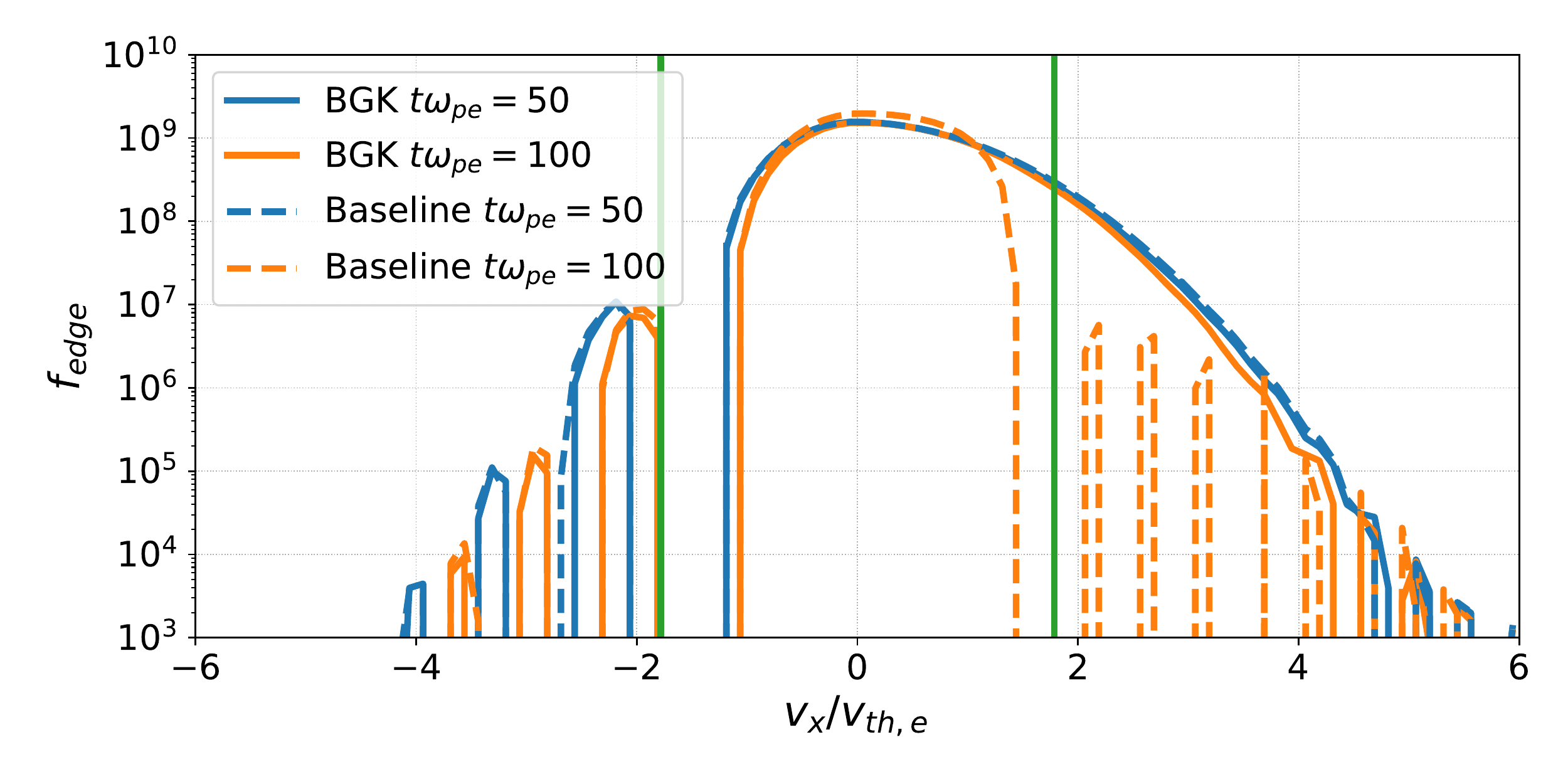}
  \caption[Sheath distribution tail in collisional and collisionless
    cases]{Comparison of electron distribution function lineouts at
    $\lambda_D/6$ from the right wall.  Figure captures both
    collisional (solid lines) and collisionless (dashed lines) cases
    together with critical velocities $\sqrt{q_e\phi/m_e}$
    corresponding to the electric field from the collisional case at
    $t\omega_{pe} = 50$.  While leftward propagating particles are
    depleted for both cases, distribution tail gets replenished in the
    presheath with the BGK operator.}
  \label{fig:bounded:sheath_coll}
\end{figure}

%---------------------------------------------------------------------
%\FloatBarrier
\subsection{Particle Source}
\label{sec:bounded:coll:ion}

Collision terms, discussed in the previous section, are not sources as
they conserve number of particles and energy.  Therefore, additional
particle sources need to be included to replenish those lost to the
wall.

Our first attempts to replenish the particles focused on forcing
inflow Maxwellian distribution directly tied to the outflow from the
domain.  This resulted in a source sheath at the inflow and positive
feedback caused simulations to crash.  Hence, we switched to
volumetric source terms, more specifically electron impact ionization,
\begin{align*}
  e^{-} + n \rightarrow i^{+} + 2e^{-} - E_{\text{ion}},
\end{align*}
which can easily be written in terms of particle distributions,
\begin{align}\begin{aligned}\label{eq:bounded:ion}
  S_{\text{ion},e} &=
  f_n(\bm{x}, \bm{v}) \int_{\mathcal{V}} \sigma(|\bm{v}-\bm{v}'|)
  |\bm{v}-\bm{v}'| f_e(\bm{x}, \bm{v}') \,d\bm{v}',\\
    S_{\text{ion},i} &=
  f_n(\bm{x}, \bm{v}) \int_{\mathcal{V}} \sigma(|\bm{v}-\bm{v}'|)
  |\bm{v}-\bm{v}'| f_e(\bm{x}, \bm{v}') \,d\bm{v}',\\
    S_{\text{ion},n} &=
  -f_n(\bm{x}, \bm{v}) \int_{\mathcal{V}} \sigma(|\bm{v}-\bm{v}'|)
   |\bm{v}-\bm{v}'| f_e(\bm{x}, \bm{v}') \,d\bm{v}',
\end{aligned}\end{align}
where $\sigma(|\textbf{v}-\textbf{v}'|)$ is the ionization
differential cross-section with units of m$^2$. Typically,
the electron thermal velocity is much higher than the
neutral thermal velocity.  With this assumption, $|\bm{v}'| \gg |\bm{v}|$,
\eqr{bounded:ion} can be simplified to,
\begin{align*}
  S_{\text{ion},ei} \approx
  f_n(\bm{x}, \bm{v}) \int_{\mathcal{V}} \sigma(|\bm{v}'|)
  |\bm{v}'| f_e(\bm{x}, \bm{v}') \,d\bm{v}',
\end{align*}
which makes the implementation much simpler.

Differential ionization cross-sections can be calculated, for example,
from the Binary-Encounter-Bethe (BEB) model \citep{Kim1994},
\begin{align}\label{eq:bounded:beb}
  \sigma_{\text{BEB}}(n,T) = \frac{4\pi a_0^2 N (R/B)^2}{t+ (t+1)/n}
  \left[\frac{Q \mathrm{ln}\,t}{2} \left(1-\frac{1}{t^2}\right) +
    (2-Q)\left(1 - \frac{1}{t} - \frac{\mathrm{ln}\,t}{t+1} \right)
    \right]
\end{align}
where $t=T/B$, $a_0=\SI{0.52918e-10}{m}$ is the Bohr radius, and
$R=\SI{13.6057}{eV}$ is the hydrogen ionization energy.  $B$ [eV], $U$
[eV], $N$, and $Q$ are constants varying for each atom or molecule and
can be found at the National Institute of Standards and Technology
(NIST) web
page.\footnote{\url{https://www.nist.gov/pml/electron-impact-cross-sections-ionization-and-excitation-database}}
Note that temperature is given in eV as in the rest of this work.
\cite{Kim1994} show a good agreement between the BEB model and
experimental data form units to thousands of eV.

At this point, everything is available to compute the integral.
However, calculation of the integrals during run-time could be
computationally expensive.  This can be alleviated by precomputing the
integrals in the same manner as boundary conditions discussed in
detail later in \ser{bounded:pmi:r}.  Alternatively, as the ionization
happens predominately in the presheath where particle distributions
are close to Maxwellian, a fluid approach can be used,\footnote{This
  is not the case for boundary conditions discussed later as the
  distributions at the wall can be significantly non-Maxwellian.}
\begin{align*}
  S_{\text{ion},ei} \approx
  f_n(\bm{x}, \bm{v}) \int_{\mathcal{V}} \sigma(|\bm{v}'|)
  |\bm{v}'| f_e(\bm{x}, \bm{v}') \,d\bm{v}' = f_n(\bm{x}, \bm{v}) n_e
  \langle \sigma v_e\rangle.
\end{align*}
\cite{Cagas2017s} uses a formula from \cite{Stangeby2000},
\begin{align*}
  \left\langle \sigma v_e\right\rangle =
  \frac{2u_B}{L}\left(\frac{\pi}{2}-1\right),
\end{align*}
which sets the ionization term to approximately match outflow from the
domain.  Another option is to use a semi-empirical model
to calculate $\langle \sigma v_e\rangle$.  \cite{Voronov1997} provides
a fitting formula for the average cross-section parameters for
elements up to $Z=28$,\footnote{Factor $10^{-6}$, which is not
  included in \cite{Voronov1997}, is due to conversion from
  centimeters to meters.}
\begin{align}\label{eq:bounded:voronov}
  \langle \sigma v_e\rangle =
  A\frac{1+P\sqrt{E_{\text{ion}}/T}}{X+E_{\text{ion}}/T}(E_{\text{ion}}/T)^K
  \exp(-E_{\text{ion}}/T)\times10^{-6}\,\mathrm{m}^{3}\mathrm{s}^{-1}.
\end{align}
Together with the fitting parameters, \cite{Voronov1997} give a usable
energy range for all the atoms and ions.  For example for hydrogen,
\eqr{bounded:voronov} is usable for electron temperatures between
\SI{1}{eV} to \SI{2e4}{eV}.

Comparison of simulations from \ser{bounded:sheath:sims} and
simulations including \cite{Voronov1997} ionization are presented in
\fgr{bounded:sheath_ion}.  The depletion is improved by the ionization
(after $1000/\omega_{pe}$, density drops by 4\% instead of previous
8\%) but the simulation does not reach a steady-state.  It is probably
caused by the fine balance of the ionization term; the density change
due to ionization is essentially $\pfracb{n}{t} \sim n$.  Therefore,
if at any point during the simulation the ionization does not
replenish the outflow, the density drops, which in turn decreases
``efficiency'' of the ionization.  On the other hand, if the
ionization produces more particles than those that leave the domain,
the number density starts growing exponentially (assuming ample supply
of neutrals).  This balance might require an additional feedback loop
and will be addressed in future work.

\begin{figure}[!htb]
  \centering
  \includegraphics[width=0.9\linewidth]{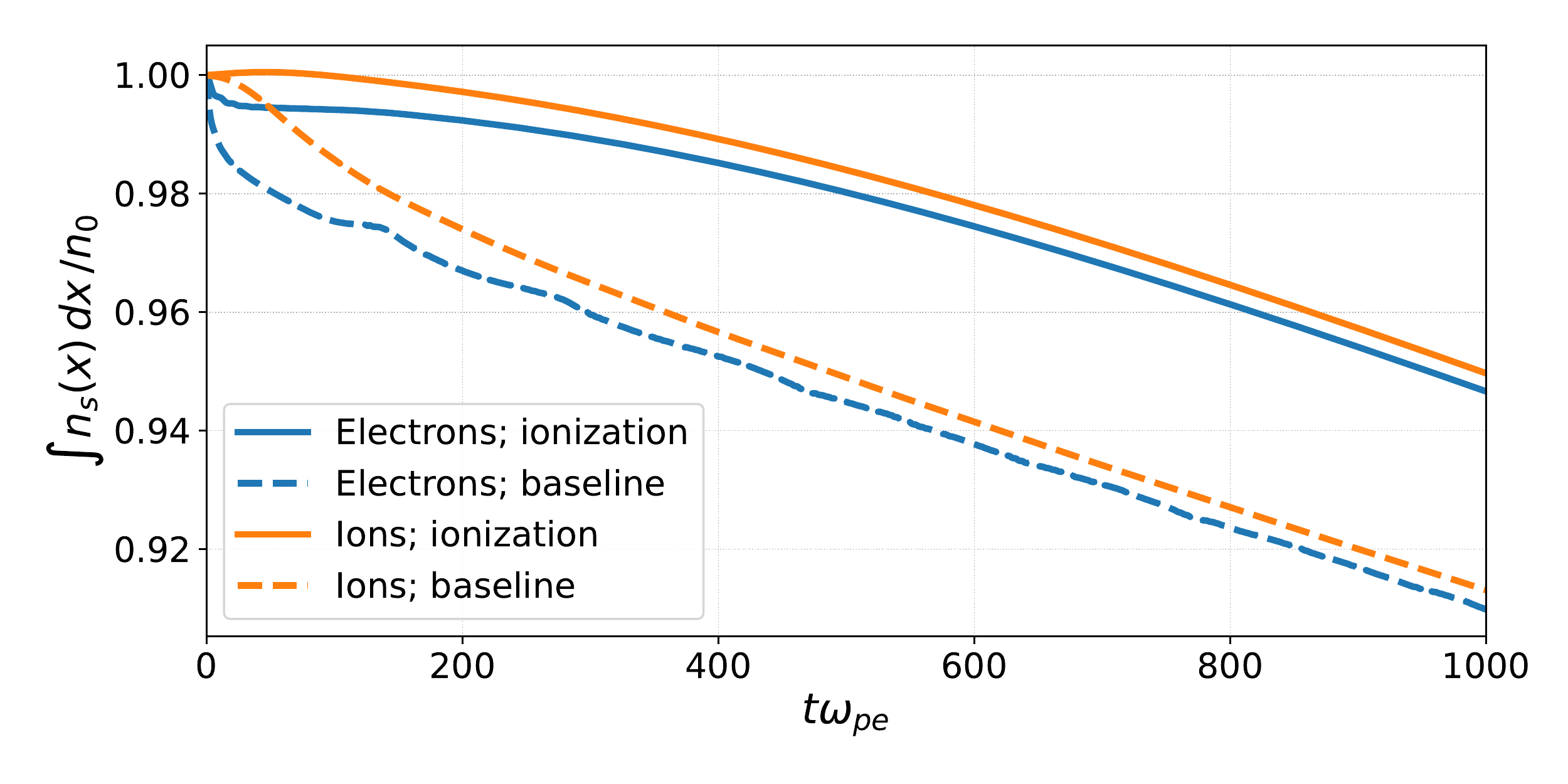}
  \caption[Integrated number densities with and without
    \cite{Voronov1997} ionization]{Comparison of relative electron and
    ion number densities with (solid lines) and without (dashed lines)
    \cite{Voronov1997} ionization term.}
  \label{fig:bounded:sheath_ion}
\end{figure}

Finally, it should be pointed out that equations in the same form as
\eqr{bounded:ion} can be used to capture recombination and charge
exchange \citep{Meier2012}; the only difference is in respective
cross-section functions.

%=====================================================================
\FloatBarrier
\section{Comparison of the Models and the Temperature Anisotropy}
\label{sec:bounded:temp}

A significant portion of \cite{Cagas2017s} is dedicated to a
comparison of the kinetic sheath model, described above, and the
five-moment two-fluid model from \ser{model:fluids}.  The kinetic
simulations are extended to 1X2V (similar to Weibel simulations in
\ser{weibel:sims}), where one direction is parallel to the wall and
the other is perpendicular, in order to capture the different
evolution of parallel and perpendicular temperatures.  An alternative
approach to capture the temperature anisotropy is to solve a
perpendicular temperature evolution equation,
\begin{align}\label{eq:bounded:Tperp}
  \frac{\partial}{\partial t}\left( nT_\perp\right) +
  \frac{\partial}{\partial x}\left(u_x nT_\perp\right)=\nu
  n\left(T-T_\perp\right),
\end{align}
along with the Vlasov equation
\eqrp{model:vlasov}. \eqr{bounded:Tperp} describes the advection of
the perpendicular temperature and its isotropization to the parallel
temperature due to collisions.  However, this approach is not used in
this work.  The particle source from \ser{bounded:coll:ion} is
implemented into the fluid model using the approach of
\cite{Meier2012}.

To simulate ideally absorbing walls in the fluid model,
\cite{Cagas2017s} use vacuum boundary conditions.  This approach is
analogous to kinetic simulations where the outgoing particle
distribution function is set to zero.  A comparisons of density, electric
field, and bulk velocity profiles are in \fgr{bounded:comp_profiles}.
With a proper Riemann solver, the fluid simulation with the vacuum boundary condition closely reproduces the kinetic solution.  The set of simulations presented in this section
is run with dimensionless units.  The electron and ion populations are initialized with $T_e/T_i =
1.0$.  \fgr{bounded:comp_profiles} shows the solution at $t\omega_{pe}
= 200$.

\begin{figure}[!htb]
  \centering
  \includegraphics[width=0.8\linewidth]{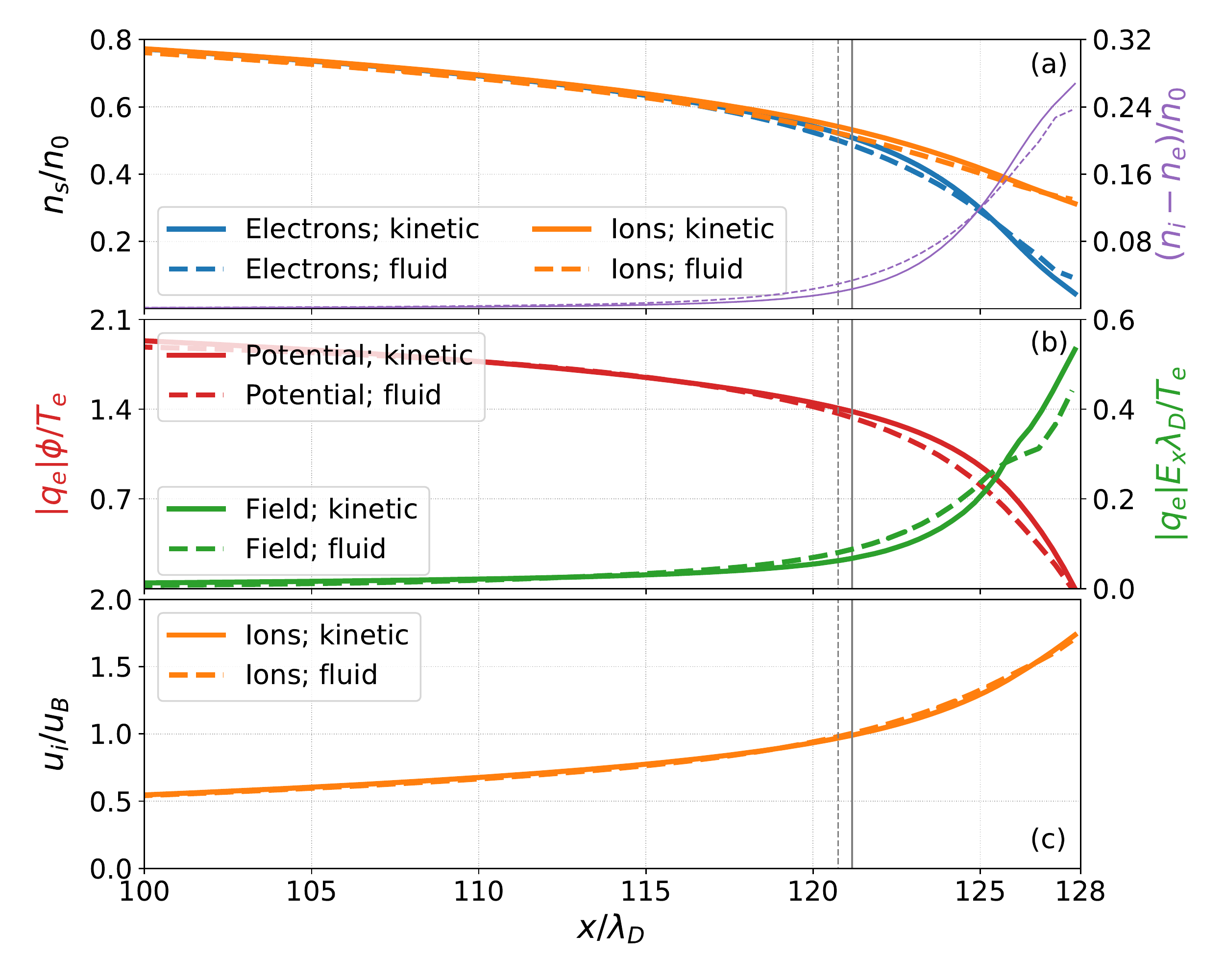}
  \caption[Comparison of sheath profiles between kinetic and fluid
    simulations]{Comparison of electron and ion number densities (a),
    electric field (b), and ion bulk velocity (c) between kinetic
    (solid lines) and fluid (dashed lines) simulations in the region
    near an ideally absorbing wall.  Vertical gray lines mark the
    crossing of the Bohm velocity \eqrp{bounded:bohm2}.  Violet lines
    in the panel (a) mark the difference between electron and ion
    densities.  Simulations are run with dimensionless units and
    initially the temperatures of the electrons and ions are set to $T_e/T_i=1$.  Solution is shown at $t\omega_{pe} = 200$.}
  \label{fig:bounded:comp_profiles}
\end{figure}

\fgr{bounded:comp_profiles} captures the crossing of the Bohm velocity
\eqrp{bounded:bohm2}, which is marked with vertical gray lines (solid
for kinetic and dashed for fluid results).  The panel (a) also shows
the absolute difference between the electron and ion number densities
normalized to the initial uniform density.  It is interesting to note
that as the electric field extends into the presheath and so does the
difference in the number densities, which is proportional to the
charge density.  There is no clear boundary between a quasi-neutral
presheath and non-neutral sheath as it is described in the two-scale
theory.  Therefore, practical use of the sheath edge definition
\eqr{bounded:edge} for noise-free continuum simulations requires
setting an arbitrary threshold, e.g., to 1\% or 10\% as in
\cite{Cagas2017s}.  For reference, at the distance where solutions
cross the Bohm velocity, the charge inequality is equal to 2.3\% for
the kinetic code and 2.9\% for the fluid code.

The density, velocity, and electric field profiles in
\fgr{bounded:comp_profiles} agree remarkably well between the two models.  However, there are discrepancies in temperatures; see
\fgr{bounded:comp_Tprofiles}.  As the five-moment fluid model uses a
scalar pressure closure, it captures only a single scalar
temperature\footnote{Temperature is calculated as $$v_{th,s}^2 =
  (\gamma-1)\left(\frac{\mathcal{E}_s}{m_sn_s} -
  \frac{1}{2}m_su_s^2\right).$$ Note that \texttt{Gkeyll} also
  includes a ten-moment fluid model with full pressure tensor but it
  is not used for this work.}  while the 1X2V kinetic code evolves the
parallel ($v_{thx}$) and perpendicular ($v_{thy}$) velocities and, as a result, both parallel and perpendicular temperatures.  As mentioned in \ser{bounded:sheath:sims}, particles leaving the
domain through the wall undergo decompression cooling in the direction
parallel to the wall \citep{Tang2011}.  The parallel temperature is
then equalized with the perpendicular temperature through collisions.
It is observed that this effect is more apparent for electrons
(\fgr{bounded:comp_Tprofiles}a) due to their higher collision
frequency with respect to ions (\fgr{bounded:comp_Tprofiles}b). The
ion temperature (\fgr{bounded:comp_Tprofiles}b) is in good agreement
between continuum kinetic and fluid simulations (isotropic fluid
temperature lies in between the ion parallel and perpendicular
temperatures). However, the electron temperature has more significant
differences between the kinetic and fluid results.  In the presheath, the kinetic parallel and perpendicular temperatures are equal to each other but lower in comparison to the fluid result.

\begin{figure}[!htb]
  \centering
  \includegraphics[width=0.8\linewidth]{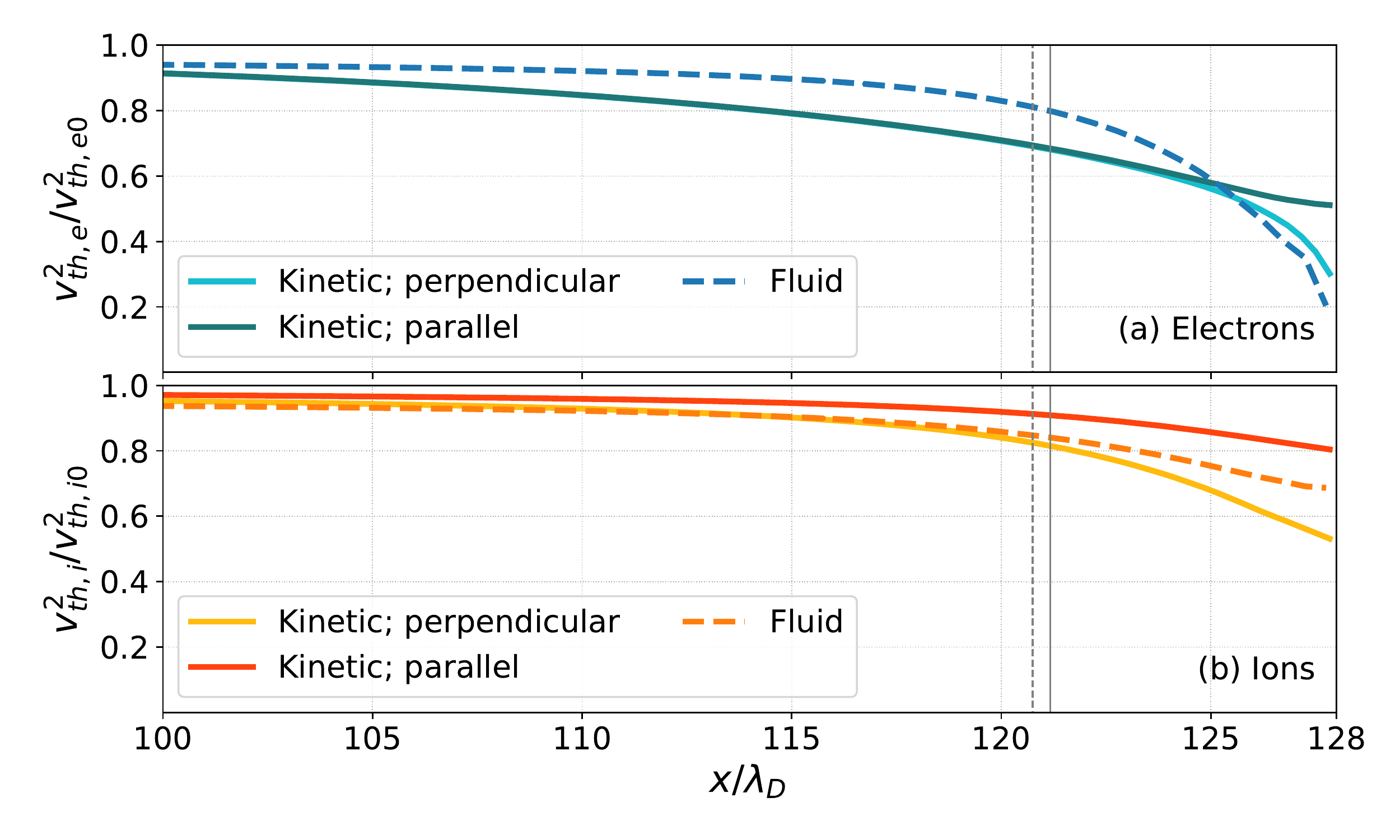}
  \caption[Sheath temperature comparison between kinetic and fluid
    simulations]{Continuation of \fgr{bounded:comp_profiles} showing a
    comparison of temperature profiles in the sheath region between
    the 1X2V kinetic model (solid and dot-dashed lines) and the
    five-moment two-fluid model (dashed lines).  The upper panel (a)
    captures electron temperatures while the bottom panel (b) shows
    ion temperatures.  Vertical gray lines mark the crossing of the
    Bohm velocity \eqrp{bounded:bohm2}. Simulations are run with
    dimensionless units and initially $T_e/T_i=1$.  Solution is shown
    at $t\omega_{pe} = 200$.}
  \label{fig:bounded:comp_Tprofiles}
\end{figure}

In \ser{model:distf}, the first three distribution function moments
are discussed (density, momentum, and energy).  However, in order to
explain the discrepancy in the electron temperatures
(\fgr{bounded:comp_Tprofiles}b), higher moments are required.  The
second moment of the Vlasov equation \eqrp{model:vlasov} leads to the
energy conservation equation,
\begin{align}
  \frac{\partial \mathcal{E}}{\partial t} + \frac{1}{2}\frac{\partial
    \mathcal{Q}_{iik}}{\partial x_k} = nq\bm{u}\cdot\bm{E},
\end{align}
where 
\begin{align*}
  \mathcal{E} = \frac{3}{2}p + \frac{1}{2} mn\bm{u}^2
\end{align*}
is the particle energy and the third moment of the particle distribution function
\begin{align*}
  \mathcal{Q}_{ijk} = m \int_{\mathcal{V}} v_i v_j v_k f \,d\bm{v}.
\end{align*}
is the heat flux tensor.  A contraction of $\mathcal{Q}_{ijk}$ gives
the particle energy-flux density and can be expanded as follows,
\begin{align}\label{eq:bounded:Q}
  \frac{1}{2} \mathcal{Q}_{iix} = \underbracket{q_x +
    u_x\Pi_{xx}}_{\text{non-ideal}} +
  \underbracket{\frac{5}{2}pu_x+\frac{1}{2}mnu_x^3}_{\text{ideal}},
\end{align}
where $\Pi_{xx}$ is the parallel component of the stress tensor, $p$
is the pressure, and
\begin{align}
  q_x =
  \frac{1}{2}m\int_{-\infty}^\infty\int_{0}^\infty\left(w_x^2 +
  v_\perp^2\right)w_xf(v_x,v_\perp)2\pi v_\perp \,dv_\perp dv_x
\end{align}
is the heat flux vector in the plasma frame ($w_x = v_x-u_x$).
Individual terms of \eqr{bounded:Q} are plotted in
\fgr{bounded:temp_Q} for the electrons. The parts of the tensor responsible for the decompression cooling (red lines) are the dominant terms and are in good agreement between the kinetic and fluid models.  The five-moment fluid model used
here does not capture the kinetic physics of the heat flux vector and
the stress tensor.  The stress tensor (green line only visible at the wall) is, in
this case, negligible.  The heat flux vector (orange line) is also negligible in the
bulk plasma where the distribution function is thermalized by
collisions; however, becomes significant within $50\,\lambda_D$ from
the wall and explains the differences in electron temperature between
the kinetic and fluid results \citep{Cagas2017s}.

\begin{figure}[!htb]
  \centering
  \includegraphics[width=0.8\linewidth]{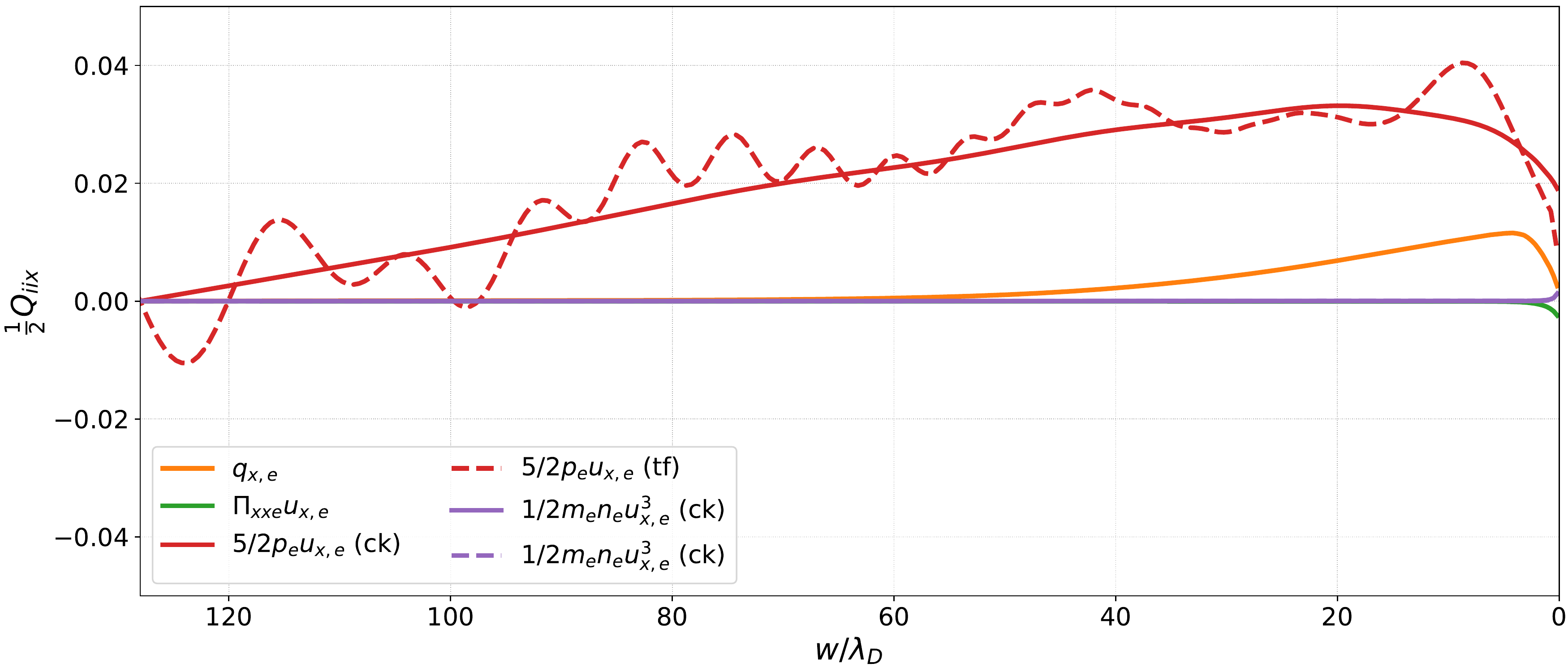}
  \caption[Comparison of heat flux terms]{Individual terms of the
    expanded heat flux \eqrp{bounded:Q}. Heat flux vector in the
    plasma frame, $q$, and stress tensor are only available in the
    continuum kinetic simulations (marked as ck). Simulations are
    evolved from initial conditions for 100$/\omega_{pe}$.}
  \label{fig:bounded:temp_Q}
\end{figure}

\cite{Cagas2017s} also extend the comparison of the kinetic and fluid
models to several different temperatures ratios.
\fgr{bounded:comp_scan} captures the second moment of the ion
distribution (flux) at the wall (a), sheath width determined based on
reaching of the Bohm velocity \eqrp{bounded:bohm2} (b), and the
potential drop over the sheath region for the initial electron to ion
temperature ratios of 0.1, 0.2, 0.5, 1.0, 2.0, 5.0, and 10.0.

\begin{figure}[!htb]
  \centering
  \includegraphics[width=0.8\linewidth]{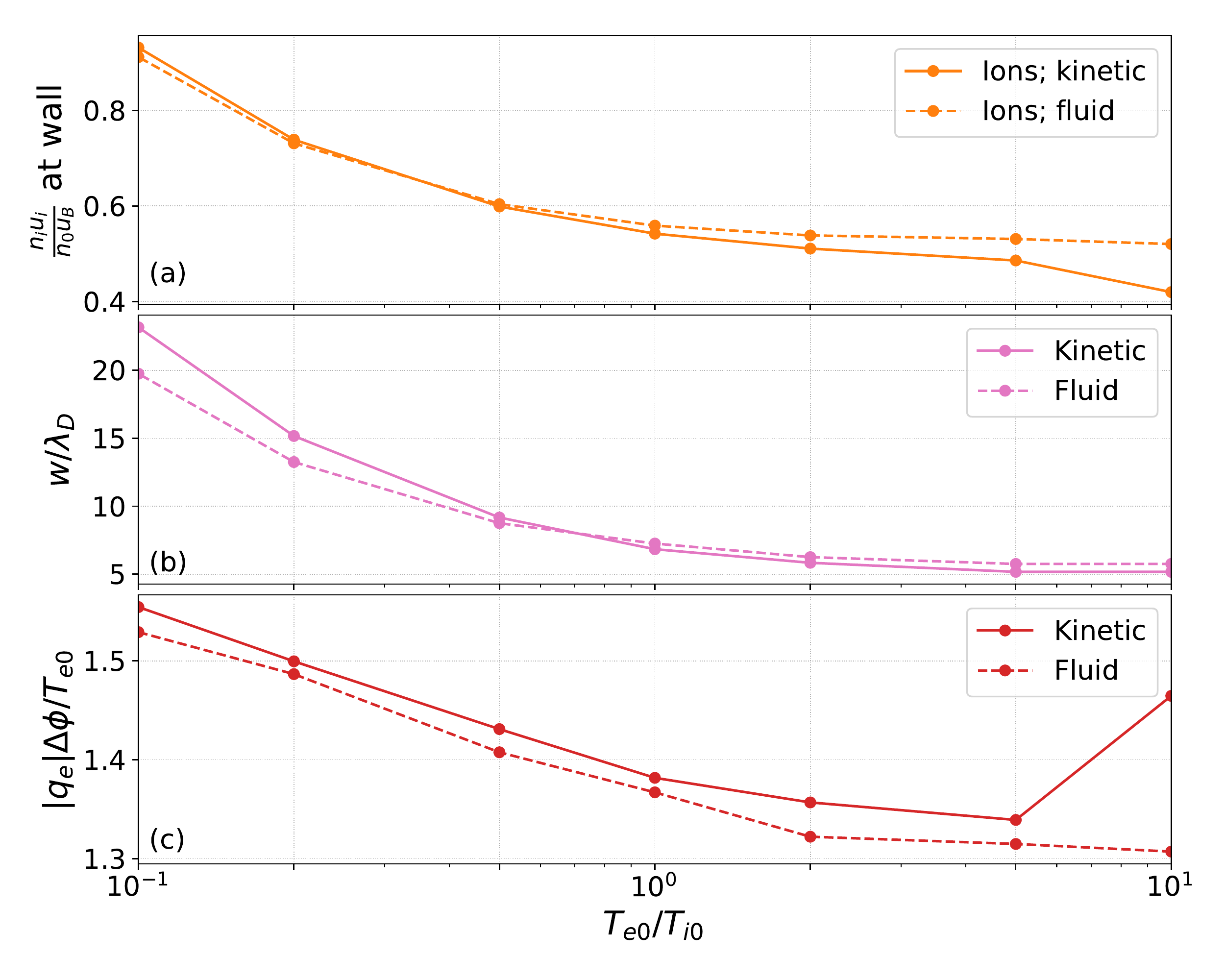}
  \caption[Comparison of the kinetic and fluid models for different
    temperature ratios]{Comparison of the kinetic and fluid models
    based on the electron to ion temperature ratios.  From top to
    bottom, the figure captures the second moment of the ion
    distribution (flux) at the wall (a), sheath width determined based
    on reaching of the Bohm velocity \eqrp{bounded:bohm2} (b), and the
    potential drop over the sheath region.  Solution is shown at
    $t\omega_{pe} = 200$.}
  \label{fig:bounded:comp_scan}
\end{figure}

As discussed in Chapter \ref{sec:weibel}, temperature anisotropy can
lead to a growth of the Weibel instability.  This growth was reported by \cite{Tang2011} using a particle-in-cell
(PIC) code.  The results are reproduced here using the continuum
kinetic code; see \fgr{bounded:weibel}.  Similar to the PIC code, the
simulation is initialized with uniform initial conditions and uses no
collision operators.  \fgr{bounded:weibel} captures the growth of the
integrated magnetic field energy (violet), which grows by roughly 8
orders of magnitude before the saturation, and the temperatures in the
last cell next to the wall.  In the 1X2V simulation, the two resolved
velocity components are perpendicular to the wall ($v_x$) and parallel
to the wall ($v_y$).  The magnetic field grows in the $z$-direction,
which is the other direction parallel to the wall.  Note that the saturation of
the instability, which in this case occurs around
$t\omega_{pe}=1\,300$, results in a decrease of the temperature
anisotropy.  

\begin{figure}[!htb]
  \centering
  \includegraphics[width=0.8\linewidth]{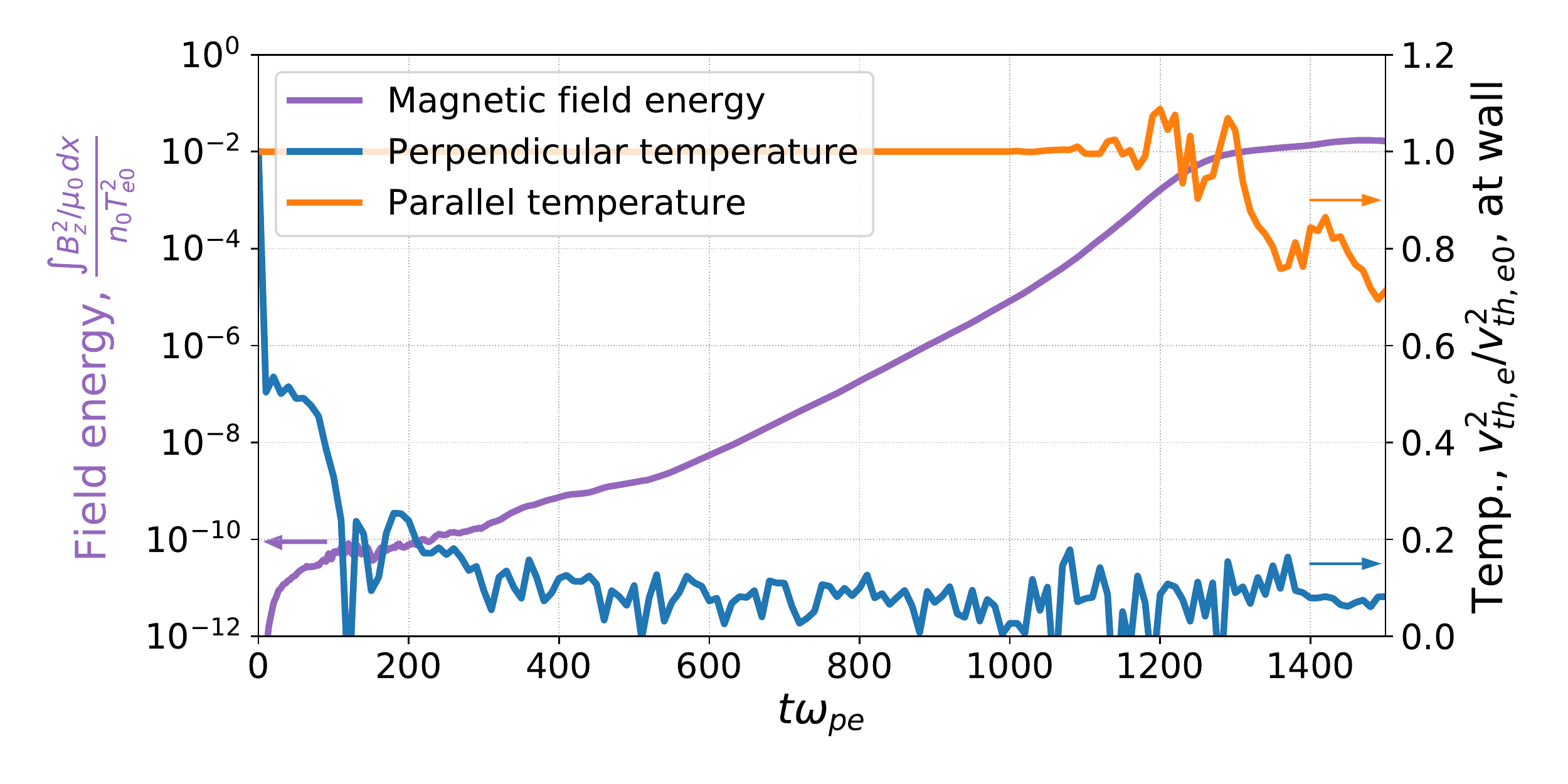}
  \caption[Magnetic field energy growing inside a sheath region]{The
    growth of the integrated magnetic field energy (violet) due to the
    Weibel instability originating from temperate anisotropy in the
    plasma sheath (blue and orange lines).  Note that the magnetic
    field energy grows by approximately eight orders of magnitude.}
  \label{fig:bounded:weibel}
\end{figure}

\begin{figure}[!htb]
  \centering
  \includegraphics[width=0.8\linewidth]{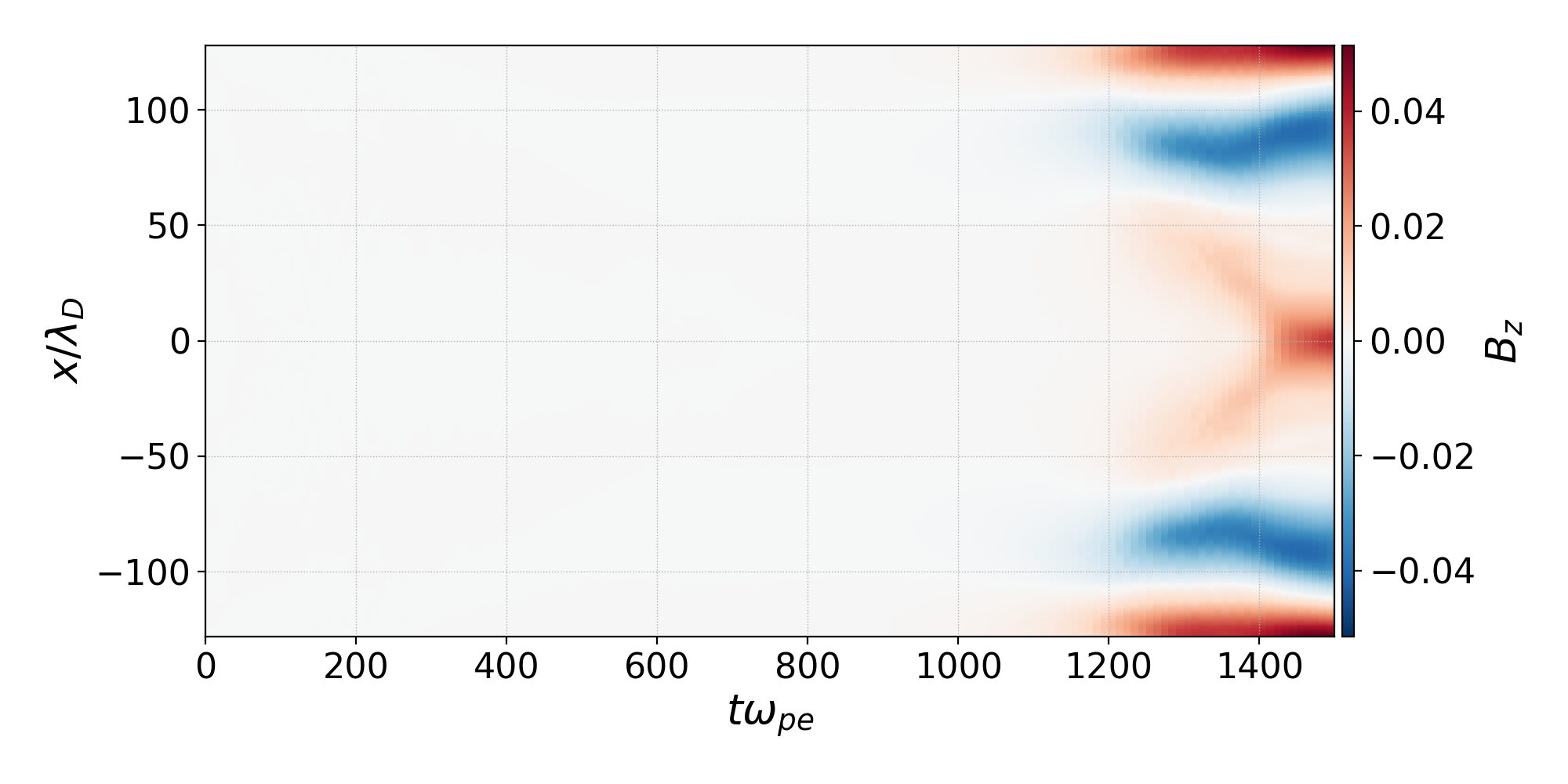}
  \caption[Self-consistent magnetic field profile from a sheath
    simulation]{Evolution of the magnetic field profile, $B_z$,
    growing due to the Weibel instability originating from the temperature
    anisotropy in the plasma sheath.}
  \label{fig:bounded:weibel_Bz}
\end{figure}

Growth of the Weibel instability and effects of initial temperature,
collisions, and preexisting magnetic fields will be the topics of a
future study.

%=====================================================================
\FloatBarrier
\section{Magnetized Sheaths}

Magnetic field can significantly alter plasma behavior near a wall.
When the field is parallel to the wall, which is relevant for tokamaks
but also for the magnetic field that self-consistently develops from a
temperature anisotropy as discussed in \ser{bounded:temp}, the
cross-field mobility of particles decreases and the plasma is
confined.

\fgr{bounded:mag_evolution} shows temporal evolution of a simulation
similar to the one depicted in \fgr{bounded:cls_evolution}.  The
difference is in the inclusion of magnetic field,
$B_z=\SI{0.02}{T}$,\footnote{Similar to the other plasma parameters,
  the magnitude of the magnetic field is relevant for Hall thrusters
  \citep{Robertson2013}.} parallel to the wall. The initial parameters correspond to
plasma $\beta = 0.001$, where
\begin{align}
  \beta = \frac{n_eT_e}{B^2/(2\mu_0)}.
\end{align}
That means that the plasma is strongly magnetized.  Consequently, the
simulation needs to be extended from 1X1V to 1X2V in order to capture
effects of the Lorentz force.

\begin{figure}[!htb]
  \centering
  \includegraphics[width=0.8\linewidth]{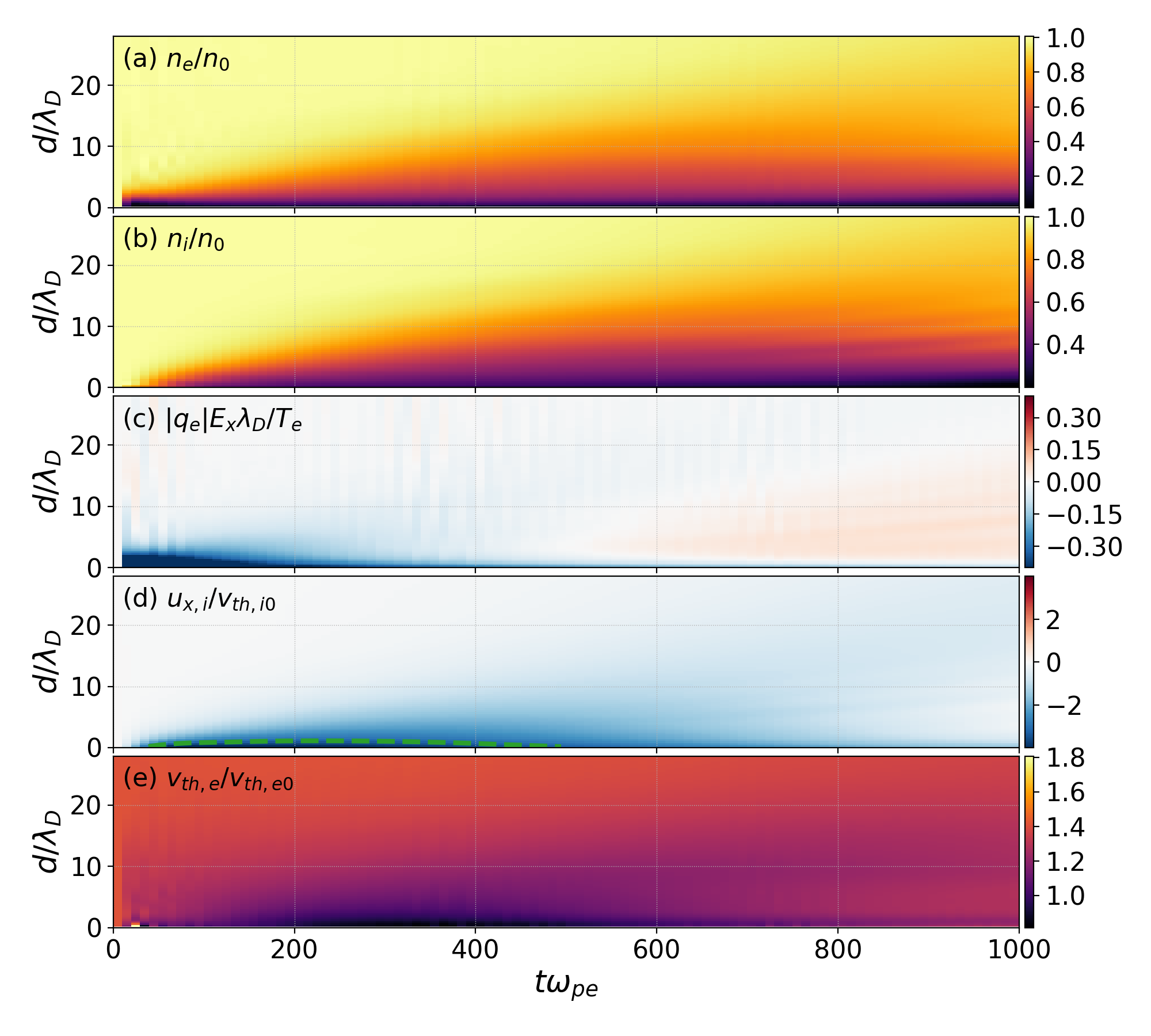}
  \caption[Temporal evolution of magnetized plasma sheath]{Temporal
    evolution of the densities (electrons top, a, and ions bottom, b),
    normalized electric field (c), ion bulk velocity (d), and the
    electron thermal velocity (e) in the region near the left wall of
    a magnetized sheath simulation ($B_z=\SI{0.02}{T}$; $d=0$ is
    directly at the left wall). The green contour in velocity panel (d) marks
    the Bohm velocity \eqrp{bounded:bohm}.}
  \label{fig:bounded:mag_evolution}
\end{figure}

The early time plasma sheath formation in  \fgr{bounded:mag_evolution}
is similar to the unmagnetized case in \fgr{bounded:cls_evolution}.
However, as the thermal flux to the absorbing wall increases, the Lorentz force
starts confining the plasma, thus limiting the electron outflow. As a
result the sheath electric field and potential decrease. Without the
field accelerating the ions from the center of the domain, the fast
electrons leave the domain and are not replaced.  Around
$t\omega_{pe} = 500$, the ion bulk velocity is below the Bohm velocity
everywhere in the domain.

The absence of the classical sheath for $\beta=0.001$ in
\fgr{bounded:mag_evolution} raises a question of the critical magnetic
field.  The previous simulation is, therefore, repeated with magnetic
fields $B_z = \SI{0.01}{T}$ and \SI{0.005}{T}, which correspond to
$\beta = 0.004$ and $0.016$.  The results are in
\fgr{bounded:mag_profiles}; the case with $\beta = 0.004$ behaves
similar to the $\beta=0.001$ case with the sheath, in this case,
disappearing for $t\omega_{pe} > 1000$.  For $\beta = 0.016$, plasma
sheath forms and does not disappear as the simulation stabilizes;
however, the sheath width is still much narrower in comparison to the
unmagnetized case \fgr{bounded:cls_evolution}d.

The oscillations of potential in \fgr{bounded:mag_evolution}b are caused by
the Langmuir waves discussed in \ser{bounded:sheath:sims}.

\begin{figure}[!htb]
  \centering
  \includegraphics[width=0.8\linewidth]{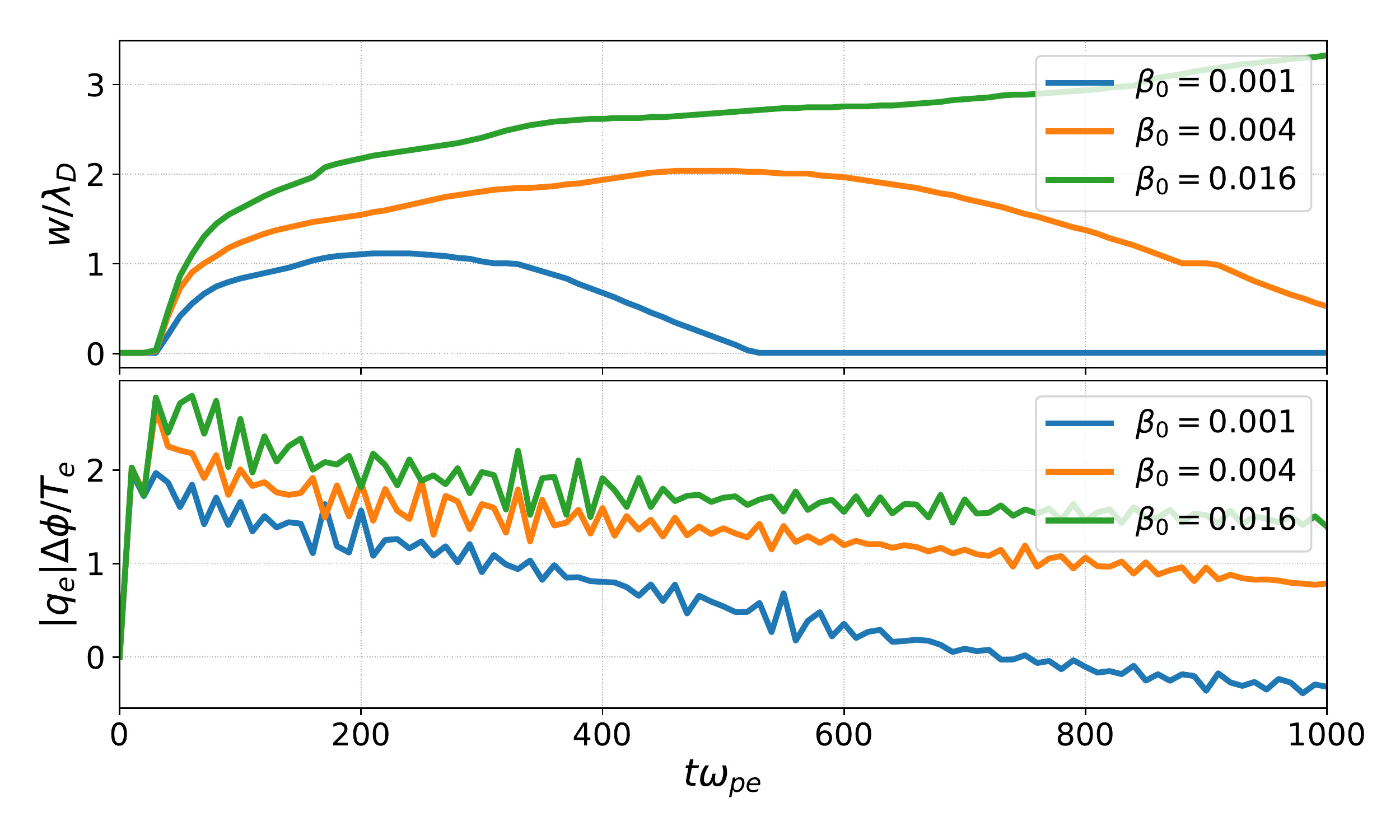}
  \caption[Comparison of sheath width and potential based on plasma
    beta]{Comparison of sheath widths (a), determined based on the
    crossing of the Bohm velocity, and potential differences between
    the wall and the center of the domain (b) based on plasma beta.}
  \label{fig:bounded:mag_profiles}
\end{figure}

%=====================================================================
\FloatBarrier
\section{Plasma-Material Interaction}
\label{sec:bounded:pmi}

In the previous sections, the wall is treated as an ideal absorber.
In reality, the situation is more complicated as the incident
electrons can be reflected back, they can penetrate the material and
then being rediffused with lower energy, or the electrons originally
in the material can gain energy and be released into the plasma.  The
latter process, known as secondary electron emission (SEE), can occur
either through direct transfer of the kinetic energy of the impacting
particle \citep{Furman2002} or through a change in its internal energy
state \citep{Bronold2018}.  While the source of the first type of SEE
can be both electrons and ions, the second type requires distinct
internal energy levels.  \cite{Bronold2018} provide an example of
helium ions colliding with a dielectric wall.  As an ion approaches
the wall, an electron from the surface can tunnel into ion shell
energy levels directly neutralizing it and releasing Auger
electron.\footnote{Auger electrons are electrons coming from a
  material that gains energy from an additional electron transferring
  to a lower energy level.  This energy might be enough for the first
  electron to leave the material.  Interestingly, energy of Auger
  electrons is, therefore, dependent on the internal structure of the
  material and is not a function of incoming energy.  For this reason,
  they typically appear at the same energy in spectra.} Alternatively,
the electron from the wall can tunnel into a metastable state,
releasing an Auger electron later through deexcitation.

SEE is critical for devices like Hall thrusters \citep{Dunaevsky2003}
and tokamak walls \citep{Takamura2004} and needs to be rigorously
modeled.  Previous work of \cite{Sydorenko2004} and
\cite{Sydorenko2006} show that the electron distribution function in
Hall thrusters is, due to the SEE, strongly anisotropic and depletes
at high energies, warranting the kinetic approach.  Further discussion
of effects of the shape of the distribution function on the wall electron flux and the electron temperature are
presented by \cite{Kaganovich2007}.  The authors also report that SEE may carry a considerable portion of the cross-field electron current. \cite{Campanell2012} study
conditions for sheath instability due to SEE and ``weakly confined
electrons'' at the boundary of the loss cone. Recently,
\cite{Campanell2017} report fundamental changes, for example reversal
of the plasma sheath potential, in cases where the emission gain
exceeds unity.

Typically, kinetic simulations of SEE couple the Particle-in-Cell
model of plasmas with Monte Carlo description of solutions with a
wall, which can be computationally expensive.  Alternatively,
\cite{Campanell2017} use a simplified continuum kinetic model.  The
goal of this section is to design a plasma-material interaction (PMI)
model with both computationally efficient physics-based approach.

%---------------------------------------------------------------------
\FloatBarrier
\subsection{General Boundary Conditions}
\label{sec:bounded:pmi:bc}

For a general case, the distribution function coming out of the wall,
$f_{\text{out}}$ is defined as integral of the incoming distribution
function, $f_{\text{in}}$ and the reflection function, $R$,
\begin{align}\label{eq:bounded:bc}
  f_{\text{out}}(t,\, \bm{x}=\bm{x}_{\text{wall}},\, \bm{v}) =
  \int_{\mathcal{V}_{\text{in}}} R(\bm{v},\,\bm{v}') f_{\text{in}}(t,\,
  \bm{x}=\bm{x}_{\text{wall}},\,\bm{v}') \,d\bm{v}', \quad \forall \bm{v} \in \mathcal{V}_{\text{out}}
\end{align}
where the integration is over half of the velocity space limited to
the incoming velocities, $\mathcal{V}_{\text{in}}$, and the relation
is defined only for the outgoing velocities from
$\mathcal{V}_{\text{out}}$.\footnote{Note that with upwind fluxes,
  only the outgoing velocities are needed to construct a boundary
  condition.}  For clarity, the $t$ dependence will be dropped from
now on.  Even though \eqr{bounded:bc} is defined only at the edge of
the domain, it is implemented in \texttt{Gkeyll} by setting the
distribution function, $f_{\text{out}}$, in the ghost cell layer based
on the distribution function, $f_{\text{in}}$, in the skin cell layer.
Therefore, the superscript cell index $j$, used in the
\ser{model:dgck}, is switched for $g$ and $s$ indexing the ghost layer
and skin layer, respectively. \eqr{bounded:bc} is then discretized as
follows,
\begin{align*}
  f_h^g(\bm{x}_{\text{wall}},\,\bm{v}) = \sum_{s} \int_{\mathcal{V}_{\text{in}}^s}
  R^{gs}(\bm{v},\,\bm{v}') f_h^{s}(\bm{x}_{\text{wall}},\,\bm{v}')
  \,d\bm{v}',
\end{align*}
where the summation is only over cells $s$ and the integration is
limited to incoming velocities within them.  Therefore, the summation
typically does not include all the cells.

Next, distribution functions are expanded onto basis functions.
However, unlike in \eqr{model:modal}, basis functions are reduced by
one dimension to surface basis functions $\varsigma$.  Assuming
without loss of generality that the boundary lies in the
$x$-direction, the distributions are expressed as
\begin{align*}
  f_h^g(\bm{x},\,\bm{v})|_{x=x_{\text{wall}}} = \sum_k \widehat{f}_k^g
  \varsigma_k(y,z,\bm{v}), \quad
  f_h^{s}(\bm{x},\,\bm{v}')|_{x=x_{\text{wall}}} = \sum_l \widehat{f}_l^{s}
  \varsigma_l(y,z,\bm{v}'),
\end{align*}
which gives the equality \eqr{bounded:bc} in the discrete weak sense,
\begin{align*}
  \sum_k \widehat{f}_k^g \varsigma_k(y,z,\bm{v}) \circeq
  \sum_{s} \sum_l \widehat{f}_l^s \int_{\mathcal{V}_{\text{in}}^{s}}
  R_x^{gs}(\bm{v},\,\bm{v}') \varsigma_l(y,z,\bm{v}')
  \,d\bm{v}'.
\end{align*}
The full equality is
\begin{multline}\label{eq:bounded:general}
  \sum_k \widehat{f}_k^g \int_{\partial_{x} K^{g}}
  \varsigma_k(y,z,\bm{v}) \varsigma_t(y,z,\bm{v}) \,dy dz d\bm{v} =\\=
  \sum_{s} \sum_l \widehat{f}_l^{s} \int_{\partial_{x} K^{s}}
  \int_{\mathcal{V}_{\text{in}}^{s}}
  R_x^{gs}(\bm{v},\,\bm{v}')\varsigma_l(y,z,\bm{v}')
  \varsigma_t(y,z,\bm{v}) \,d\bm{v}' dy dz d\bm{v}.
\end{multline}
As before, \eqr{bounded:general} needs to transformed into logical
space,
\begin{multline*}
  \sum_k \widehat{f}_k^g \int_{\partial_{x} I_p}
  \hat{\varsigma}_k(\eta_y,\eta_z,\bm{\eta}_{\bm{v}})
  \hat{\varsigma}_t(\eta_y,\eta_z,\bm{\eta}_{\bm{v}}) \,d\eta_y
  d\eta_z d\bm{\eta}_{\bm{v}} =\\= \frac{\prod_{i=1}^{d_v}\Delta
    v_i}{2^{d_v}} \sum_{s} \sum_l \widehat{f}_l^{s} \int_{\partial_{x}
    I_p} \int_{I_v}
  R_x^{gs}\big(\bm{v}^g(\bm{\eta}_{\bm{v}}),\,\bm{v}^s(\bm{\eta}_{\bm{v}}')\big)
  \hat{\varsigma}_l(\eta_y,\eta_z,\bm{\eta}_{\bm{v}}')
  \hat{\varsigma}_t(\eta_y,\eta_z,\bm{\eta}_{\bm{v}})
  \,d\bm{\eta}_{\bm{v}}'d\eta_y d\eta_z d\bm{\eta}_{\bm{v}}.
\end{multline*}
Similar to the volume basis functions, orthonormal surface basis can be
constructed and the relation simplifies to
\begin{align}\label{eq:bounded:general2}
 \widehat{f}_k^g  = \frac{\prod_{i=1}^{d_v}\Delta
    v_i}{2^{d_v}} \sum_{s,l} \widehat{f}_l^{s} \int_{\partial_{x}
    I_p} \int_{I_v}
  R_x^{gs}\big(\bm{v}^g(\bm{\eta}_{\bm{v}}),\,\bm{v}^s(\bm{\eta}_{\bm{v}}')\big)
  \hat{\varsigma}_l(\eta_y,\eta_z,\bm{\eta}_{\bm{v}}')
  \hat{\varsigma}_k(\eta_y,\eta_z,\bm{\eta}_{\bm{v}})
  \,d\bm{\eta}_{\bm{v}}'d\eta_y d\eta_z d\bm{\eta}_{\bm{v}}.
\end{align}
However, since $R^{gs}\big(\bm{v}^g(\bm{\eta}_{\bm{v}}),
\,\bm{v}^s(\bm{\eta}_{\bm{v}}')\big)$ can have a complex dependence on
$\bm{v}$ and $\bm{v}'$, the integral on the right-hand-side of
\eqr{bounded:general2} cannot be generally precomputed in the logical
space.  Instead, there are three possible options:
\begin{enumerate}
  \item The integral can be solved analytically.
  \item Tensor $\mathcal{R}_{x,lk}^{gs}$,
    \begin{align*}
      \mathcal{R}_{x,lk}^{gs} = \int_{\partial_{x} I_p} \int_{I_v}
      R_x^{gs}\big(\bm{v}^g(\bm{\eta}_{\bm{v}}),
      \,\bm{v}^s(\bm{\eta}_{\bm{v}}')\big)
      \hat{\varsigma}_l(\eta_y,\eta_z,\bm{\eta}_{\bm{v}}')
      \hat{\varsigma}_k(\eta_y,\eta_z,\bm{\eta}_{\bm{v}})
      \,d\bm{\eta}_{\bm{v}}'d\eta_y d\eta_z d\bm{\eta}_{\bm{v}},
    \end{align*}
    is precomputed and stored.
  \item The integral is solved directly during run-time using
    Gauss-Legendre quadrature.
\end{enumerate}
Naturally, the first option is the best case scenario but is also
rare.  The second option reduces the operations performed during
run-time to
\begin{align*}
  \widehat{f}_k^g = \frac{\prod_{i=1}^{d_v}\Delta v_i}{2^{d_v}}
  \sum_{s,l} R_{x,lk}^{gs}\widehat{f}_l^{s},
\end{align*}
however, it requires storing significant amounts of data; $N_p\times
N_p$ for all the $gs$ combinations.  The third option does not require
any storage but computing the quadrature might easily become the
bottleneck of the whole simulation.

Finally, it should be mentioned that \eqr{bounded:general} does not
represent the boundary condition implementation in \texttt{Gkeyll}.
We use the fact that the distribution function in the ghost layer is
used only to calculate the numerical flux.  Therefore, we can use
previously defined basis functions $\widehat{\psi}$, with $\eta_x$ in
the ghost cell and $-\eta_x$ in the skin cell, instead of
$\widehat{\varsigma}$.  In other words,
\begin{multline}\label{eq:bounded:general3}
 \widehat{f}_k^g = \frac{\prod_{i=1}^{d_v}\Delta v_i}{2^{d_v}}
 \sum_{s,l} \widehat{f}_l^{s} \int_{I_p} \int_{I_v}
 R_x^{gs}\big(\bm{v}^g(\bm{\eta}_{\bm{v}}),\,\bm{v}^s(\bm{\eta}_{\bm{v}}')\big)
 \\\widehat{\psi}_l(-\eta_x,\eta_y,\eta_z,\bm{\eta}_{\bm{v}}')
 \widehat{\psi}_k(\eta_x,\eta_y,\eta_z,\bm{\eta}_{\bm{v}})
 \,d\bm{\eta}_{\bm{v}}'d\eta_x d\eta_y d\eta_z d\bm{\eta}_{\bm{v}}.
\end{multline}

%---------------------------------------------------------------------
\FloatBarrier
\subsection{Special Cases of the Reflection Function}
\label{sec:bounded:pmi:r}

In this part we explore several special cases of the reflection
function, $R$, spanning from simple black hole boundary conditions to
complex models based on quantum mechanics.

\subsubsection{Special Case: Black Hole}

While the name might sound like a joke, it actually very well
describes what this boundary condition does.  Setting the reflection
function to zero, $R(\bm{v},\,\bm{v}') := 0$, simulates a perfectly
absorbing wall.  This simple boundary condition is successfully used
throughout \ser{bounded:sheath:sims} and by \cite{Cagas2017s} to
replicate classical sheath physics.

\subsubsection{Special Case: Specular Reflection}

The next step is specular reflection, i.e., the reflection conserving
both energy and momentum.  The reflection function is given simply as
\begin{align}
  R_x(\bm{v},\bm{v}') =
  \delta(v_x+v_x')\delta(v_y-v_y')\delta(v_z-v_z').
\end{align}
Applying this to \eqr{bounded:bc} gives the expected result,
\begin{align*}
  f_{\text{out}}(v_x,v_y,v_z) &=
  \iiint_{\mathcal{V}^{\text{in}}}
  \delta(v_x+v_x')\delta(v_y-v_y')\delta(v_z-v_z')
  f_{\text{in}}(v_x',v_y',v_z') \,dv_x'dv_y'dv_z',\\
  &= f_{\text{in}}(-v_x,v_y,v_z).
\end{align*}

In the discrete case, a specular reflection function is limited only
to the cell with ``opposite $x$-velocity'', symbolically denoted with
Kronecker delta,\footnote{Note there is a difference between the
  Kronecker delta $\delta_{ij}$ (with indices) and the Dirac delta
  function, $\delta(x)$.}
\begin{align}
  R_x^{gs}(\bm{v},\bm{v}') = \delta_{g(-s)}
  \delta(v_x+v_x')\delta(v_y-v_y')\delta(v_z-v_z').
\end{align}
Substituting this into \eqr{bounded:general3} yields
\begin{align}\label{eq:bounded:general4}
  \widehat{f}_k^g = \sum_{l} \widehat{f}_l^{-s} \int_{I_p}
  \widehat{\psi}_l(-\eta_x,\eta_y,\eta_z,-\eta_{v_x},\eta_{v_y},
  \eta_{v_z})
  \widehat{\psi}_k(\eta_x,\eta_y,\eta_z,\eta_{v_x},\eta_{v_y},
  \eta_{v_z}) \,d\bm{\eta}_{\bm{x}} d\bm{\eta}_{\bm{v}}.
\end{align}

Specifically, using the 1X1V basis \eqr{model:1x1v},
\begin{gather*}
  \mathcal{R}_{kl} = \int_{I_p}
  \widehat{\psi}_l(-\eta_x,-\eta_{v_x})
  \widehat{\psi}_k(\eta_x,\eta_{v_x}) \,d\eta_{x} d\eta_{v_x},\\
  = \begin{pmatrix}
    1 & 0 & 0 & 0 & 0 & 0 & 0 & 0 \\
    0 & -1 & 0 & 0 & 0 & 0 & 0 & 0 \\
    0 & 0 & -1 & 0 & 0 & 0 & 0 & 0 \\
    0 & 0 & 0 & 1 & 0 & 0 & 0 & 0 \\
    0 & 0 & 0 & 0 & 1 & 0 & 0 & 0 \\
    0 & 0 & 0 & 0 & 0 & 1 & 0 & 0 \\
    0 & 0 & 0 & 0 & 0 & 0 & -1 & 0 \\
    0 & 0 & 0 & 0 & 0 & 0 & 0 & -1
  \end{pmatrix}.
\end{gather*}
Precisely this boundary condition is implemented in the example at
the beginning of this work (\fgr{model:distf}), where a
``gas bouncing between two walls'' is used to demonstrate phase space
figures.  There, the $\mathcal{R}_{kl}$ listed above is used to
construct ghost cell distribution function from the skin cell.

The billiard ball boundary condition can be used to save computation
time for symmetric problems, for example plasma sheaths in
\ser{bounded:sheath:sims}.  Note that unlike for the neutral gas
simulation, the distribution function boundary condition discussed
above must be complemented with boundary conditions for fields.

For field boundary conditions the same
trick is applied as before, i.e.,
\begin{align*}
  \widehat{E}_{i,k}^g = \sum_l \widehat{E}_{i,l}^s \int_{I_c}
  \widehat{\varphi}_l(-\eta_x,\eta_y,\eta_z)
  \widehat{\varphi}_k(\eta_x,\eta_y,\eta_z)\, d\bm{\eta}_{\bm{x}},
\end{align*}
together with the appropriate physics-based boundary
conditions. Specifically, for sheath simulations that use symmetry so
that one boundary represents the center of the presheath, we require
zero normal for the electric field and zero tangent for the magnetic
field.  This gives:\footnote{Still assuming the wall in in the
  $x$-direction.}
\begin{align*}
  \widehat{E}_{x,k}^g &=- \sum_l \widehat{E}_{x,l}^s \int_{I_c}
  \widehat{\varphi}_l(-\eta_x,\eta_y,\eta_z)
  \widehat{\varphi}_k(\eta_x,\eta_y,\eta_z)\, d\bm{\eta}_{\bm{x}},\\
  \widehat{E}_{yz,k}^g &= \sum_l \widehat{E}_{yz,l}^s \int_{I_c}
  \widehat{\varphi}_l(-\eta_x,\eta_y,\eta_z)
  \widehat{\varphi}_k(\eta_x,\eta_y,\eta_z)\, d\bm{\eta}_{\bm{x}},\\
  \widehat{B}_{x,k}^g &= \sum_l \widehat{B}_{x,l}^s \int_{I_c}
  \widehat{\varphi}_l(-\eta_x,\eta_y,\eta_z)
  \widehat{\varphi}_k(\eta_x,\eta_y,\eta_z)\, d\bm{\eta}_{\bm{x}},\\
  \widehat{B}_{yz,k}^g &= - \sum_l \widehat{B}_{yz,l}^s \int_{I_c}
  \widehat{\varphi}_l(-\eta_x,\eta_y,\eta_z)
  \widehat{\varphi}_k(\eta_x,\eta_y,\eta_z)\, d\bm{\eta}_{\bm{x}}.
\end{align*}

A comparison of simulations using two absorbing wall boundaries from
\ser{bounded:sheath:sims} with half domain simulations that use a
specular boundary condition at the left edge to capture the symmetries
is in \fgr{bounded:bc_reflect}.  Values of the distribution functions
are directly subtracted.  The figure shows only the right half of the
full domain simulation to allow direct calculation of the difference.
Since there are regions where the distribution function is close to
zero, the difference is normalized to the maximal value of the
distribution.  The relative difference on the order of $10^{-13}$
gives a confidence in the implementation of the boundary condition.

\begin{figure}[!htb]
  \centering
  \includegraphics[width=0.8\linewidth]{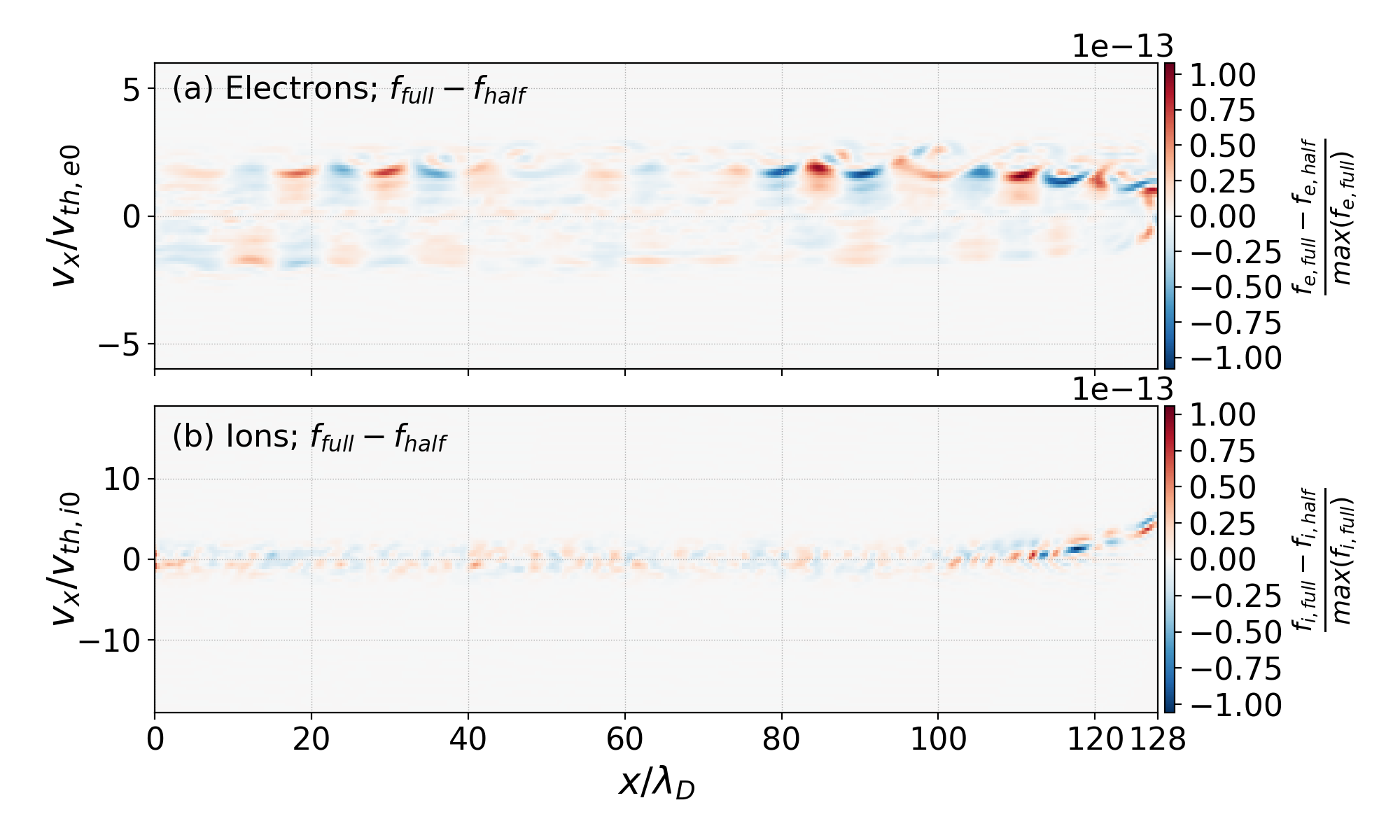}
  \caption[Difference between full domain and half domain
    simulations]{Normalized difference between distribution functions
    from the full domain (only right half is plotted) simulation using
    absorbing walls on both sides [\ref{list:bounded:cls}] and the
    half domain simulation with the specular boundary condition at the
    left edge replicating the symmetric behavior.}
  \label{fig:bounded:bc_reflect}
\end{figure}

\subsubsection{Special Case: \cite{Furman2002} Model}

While the specular reflection has many uses, it does not accurately
describe the plasma wall interaction and we need to look to literature
for more complex models.  One such example is a phenomenological model
by \cite{Furman2002}, which uses analytical descriptions for three
distinct populations of electron emission -- elastically reflected
electrons, rediffused electrons, and true-secondary
electrons.\footnote{In this case, only the true-secondary electrons
  excited with kinetic energy of incoming population are taken into
  account.}  It must be pointed out that the model assumes that the
secondary electrons are produced by a mono-energetic (cold) beam of
incoming electrons.  For each incident beam with current
$I_{\text{in}}$, the model defines energetic distribution of electron
yield, $\gamma = I_{\text{out}}/I_{\text{in}}$,
\begin{align}
  \pfrac{\gamma}{E} = \pfrac{\gamma_e}{E} + \pfrac{\gamma_r}{E} +
  \pfrac{\gamma_{ts}}{E},
\end{align}
where $\gamma_e$, $\gamma_r$, and $\gamma_{ts}$ correspond to the
three aforementioned populations.  For each population, an analytical
profile is determined based on underlining physical properties and
experimental data.

The first described group consists of primary electrons
semi-elastically reflected from the material surface.  Since they are
assumed not to lose any energy or only a small amount, the model
describes this population by a narrow half-Gaussian centered around
the incoming energy.  Note that since the secondary electrons cannot
have higher energy than the incident ones (unless the material gains
additional energy, for example by heating), the distribution is
limited by the incoming energy.  Contribution of the reflected
electrons is given as,
\begin{gather}\label{eq:bounded:furman_e}
  \pfrac{\gamma_e}{E}(E,E'\mu') =
  \theta(E)\theta(E'-E)\,\gamma_{e0}(E') \left[1 +
    e_1\left(1-\mu'^{e_2}\right)\right]
  \frac{2\exp\big(-(E-E')^2/2\sigma_e^2\big)}{\sqrt{2\pi}\sigma_e\mathrm{erf}\left(E'/\sqrt{2}\sigma_e\right)},
  \\ \gamma_{e0}(E') = P_{1,e}(\infty) + \left[\hat{P}_{1,e} -
    P_{1,e}(\infty)\right]
  \exp\left[\left(|E'-\hat{E}_e|/W\right)^p/p\right],\nonumber
\end{gather}
where $\theta()$ is the Heaviside step function ensuring that the
incoming energy is higher than the outgoing.  $e_1$, $e_2$, $\sigma_e$,
$P_{1,e}(\infty)$, $\hat{P}_{1,e}$, $W$, $\hat{E}_e$, and $p$ are
fitting parameters.  $\mu$ and $\mu'$ are direction cosines for the
outgoing and incoming angles, respectively.

The rest of the incident electrons are assumed to penetrate the
material.  As they interact with the material, they lose energy.  Part
of them eventually go through the material potential barrier and
return to the plasma.  These so-called rediffused electrons can have a
range of energies between zero and the incident energy.
\cite{Furman2002} describe them with
\begin{gather}\label{eq:bounded:furman_r}
  \pfrac{\gamma_r}{E}(E,E,\mu') = \theta(E)\theta(E'-E) \gamma_{r0}(E')
  \left[1 + r_1\left(1-\mu'^{r_2}\right)\right]
  \frac{(q+1)E^q}{E'^{q+1}}, \\ \gamma_{r0}(E') = P_{1,r}(\infty)
  \left[1 - \exp\big(-(E'/E_r)^r\big) \right], \nonumber
\end{gather}
where $r_1$, $r_2$, $q$, $P_{1,r}(\infty)$, and $E_r$ are fitting
constants.

Finally, the last part consists of the true-secondary electrons from
the material.  Since the energy of the primary beam is transferred to
the secondary electrons through a cascade, their distribution peaks at
lower energy.  However, unlike the back-scattered and rediffused
electrons, single incoming electron can produce many secondaries.
Therefore, their description is the most complicated one,
\begin{gather}\label{eq:bounded:furman_ts}
  \pfrac{\gamma_{ts}}{E}(E,E,\mu') = \sum_{n=1}^M
  \frac{nP_{n,ts}(E',\mu')(E/\epsilon_n)^{p_n-1}
    \exp(-E/\epsilon_n)}{\epsilon_n\Gamma(p_n)P(np_n,
    E'/\epsilon_n)}P\big((n-1)p_n,(E'-E)\epsilon_n\big),
  \\ P_{n,ts}(E',\mu') = \binom{M}{n} \left(\frac{ \hat{\gamma}(\mu')
    D\left[ E' / \hat{E}(\mu') \right]}{M}\right)^n \left(1- \frac{
    \hat{\gamma}(\mu') D\left[ E' / \hat{E}(\mu')
      \right]}{M}\right)^{M-n},\nonumber\\
  \hat{\gamma}(\mu') = \hat{\gamma}_{ts} \left[1 + t_1\left(1 -
    \mu'^{t_2}\right) \right], \quad \hat{E}(\mu') = \hat{E}_{ts}
  \left[1 + t_3\left(1 - \mu'^{t_4}\right) \right],\nonumber\\
  D(x) = \frac{sx}{s-1+x^s},\nonumber
\end{gather}
where $t_1$, $t_2$, $t_3$, $t_4$, $p_n$, $\epsilon_n$,
$\hat{\gamma}_{ts}$, $\hat{E}_{ts}$ and $s$ are fitting
variables. $\Gamma(\cdot)$ is the gamma function and $P(\cdot,\cdot)$
is the normalized incomplete gamma function.\footnote{$P(0,x)=1$} Note
that the summation in \eqr{bounded:furman_ts} should theoretically go
to infinity, but error from limiting it to $M=10$ is negligible
\citep{Furman2002}.

Profiles of \eqr{bounded:furman_e}, \eqr{bounded:furman_r}, and
\eqr{bounded:furman_ts} with the parameters from Tab.~I and Tab.~II of
\cite{Furman2002} are in \fgr{bounded:furman}.  This figure is
constructed for a single \SI{200}{eV} electron beam.  Integrating the
area under the curves of individual populations, we get the total
gains $\gamma_e=0.1241$ for back-scattered electrons,
$\gamma_r=0.7350$ for rediffused, and $\gamma_{ts}=1.1283$ for
true-secondary electrons. Note that $\gamma_e+\gamma_r <1$ is
required. For this case, $\gamma_{ts} > 1$ but the total kinetic
energy of particles is decreased.
\begin{figure}[!htb]
  \centering
  \includegraphics[width=0.8\linewidth]{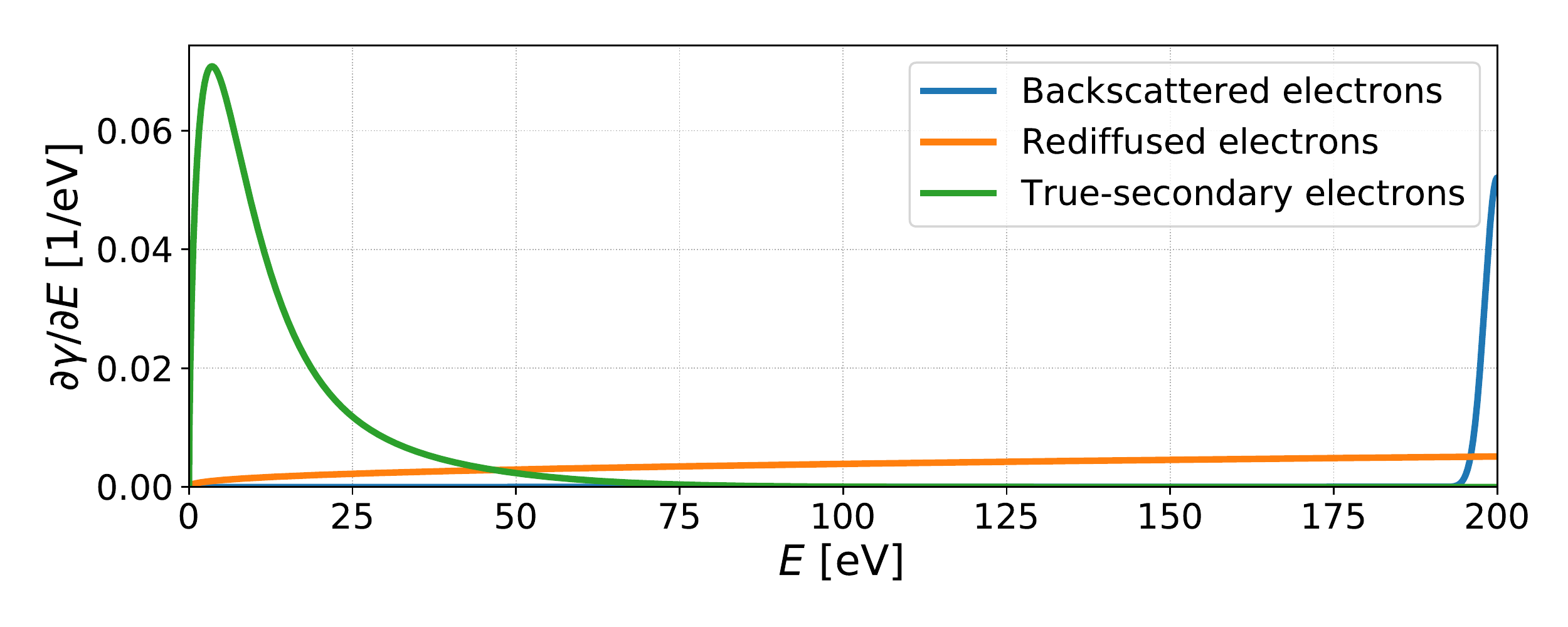}
  \caption[SEE energetic distribution from \cite{Furman2002}]{The
    energetic distribution of the three particle populations emitted
    by single \SI{200}{eV} electron mono-energetic beam with normal
    incidence.  Based on the phenomenological model fits by
    \cite{Furman2002}.}
  \label{fig:bounded:furman}
\end{figure}

In the \cite{Furman2002} model, the $\pfracb{\gamma}{E}$ is used as a
step to obtain emission probabilities for Monte-Carlo SEE codes.
However, we note that since
\begin{align*}
  \int_0^\infty \pfrac{\gamma}{E}\,dE = \gamma(E'),
\end{align*}
$\pfracb{\gamma}{E'}(E,E',\mu)$ resembles the reflection function for
the continuum kinetic code. The dependence on the outgoing angle is
the only part missing.  Experimental measurements show that the
dependence is a cosine function for the true-secondary electrons
\citep{Bruining1954}, i.e., the incoming and outgoing angles are
completely uncorrelated.  While this is not quite true for the other
two populations, \cite{Furman2002} make this assumption as well.  With
this, the model can be used as a reflection function described above,
\begin{align*}
  f_{\text{out}}(E,\mu) = \int_0^1\int_0^\infty
  \mu \pfrac{\gamma}{E}(E,E',\mu')f_{\text{in}}(E',\mu') \,dE'd\mu'.
\end{align*}
The integral can be seen as a ``summation'' over all the incoming cold
beams to extend the mono-energetic formulation to a thermal
population.  Finally, for this distribution function to be useful for
\texttt{Gkeyll} simulations, it needs to be correctly transformed from
energetic units, typical for surface physics, to phase space velocity
coordinates.  Noting that
\begin{align*}
  \pfrac{\gamma}{v_x} = \pfrac{\gamma}{E}\pfrac{E}{v_x} =
  \pfrac{\gamma}{E}mv_x,
\end{align*}
we can write in 1D
\begin{align}
  f_{\text{out}}(v_x) = \int \mu(v_x)
  \pfrac{\gamma}{E}\big(E(v_x),E(v_x'),\mu(v')\big) mv_x
  f_{\text{in}}(v_x')\,dv_x'.
\end{align}

The reflection function can be tested on a Maxwellian distribution
function.  \fgr{bounded:furman_bc} shows the results with colors of
the populations corresponding to \fgr{bounded:furman}.  At first
glance, it might be surprising that the reflected electrons contribute
the most, even though their gain in \fgr{bounded:furman} is the
smallest.  The reason for this is, that \fgr{bounded:furman} describes
a case where incoming particles have enough energy to penetrate the
material.  However, as the energy decreases, back-scattered electrons
become dominant.  This is exactly the case for electron populations
with temperatures on the order of an electron volt with bulk velocity
comparable to thermal velocity.
\begin{figure}[!htb]
  \centering
  \includegraphics[width=0.8\linewidth]{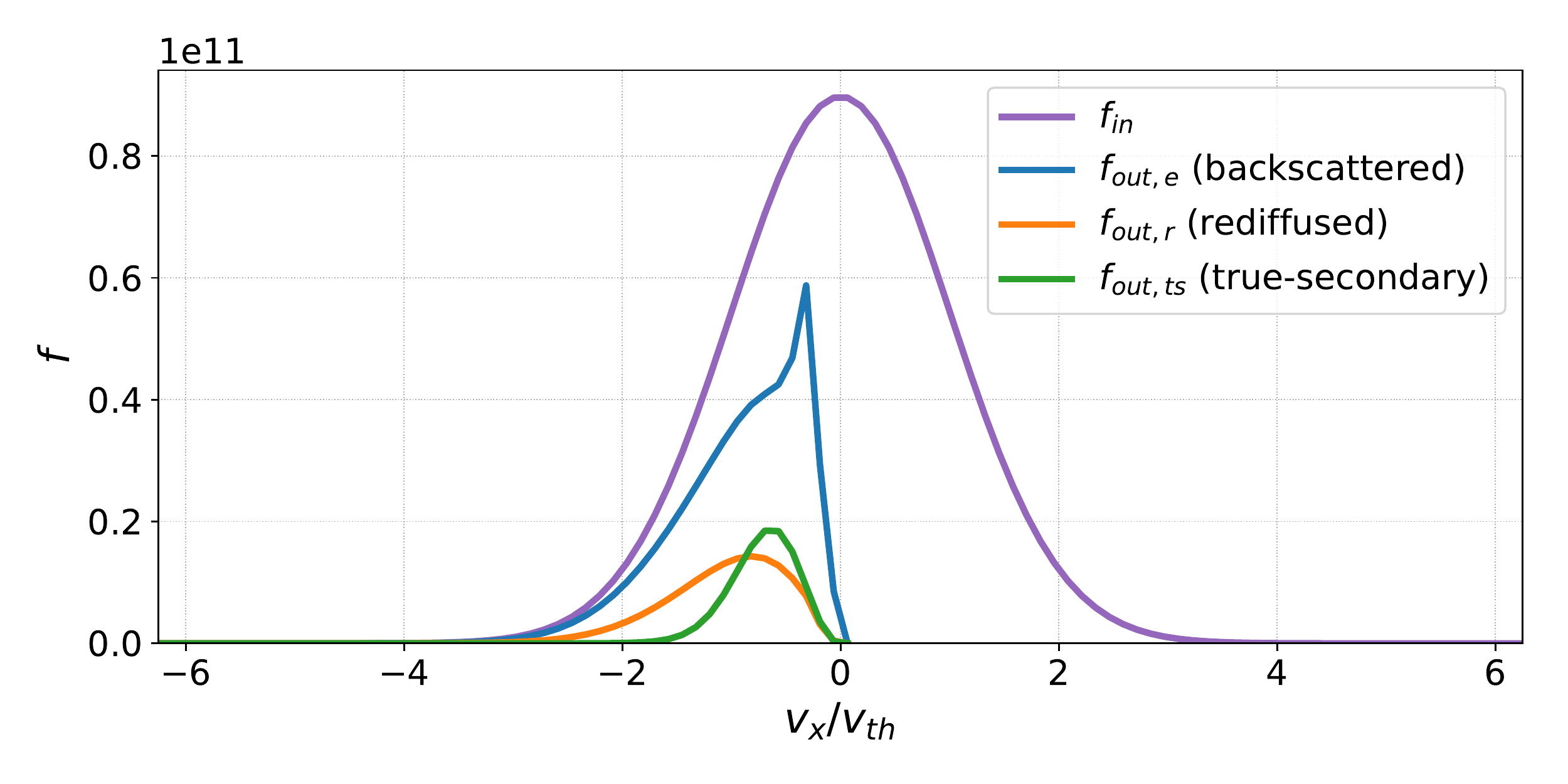}
  \caption[Application of \cite{Furman2002} on Maxwellian
    distribution]{Application of the reflection function from
    \cite{Furman2002} on Maxwellian distribution function.  Violet
    line represents simulated incoming distribution function at the
    right wall and blue, orange, and green are distributions of the
    reflected populations (colors correspond to
    \fgr{bounded:furman}).}
  \label{fig:bounded:furman_bc}
\end{figure}

\fgr{bounded:furman_scan} provides further insight into the individual
secondary populations based on the incoming beam energy.
\fgr{bounded:furman_scan} extends \fgr{bounded:furman} to include
multiple incoming beam energies, i.e., the $y$-axis of
\fgr{bounded:furman_scan} corresponds to $x$-axis of
\fgr{bounded:furman}.  However, since the outgoing energies are
limited by the incoming energy, the $y$-axis of
\fgr{bounded:furman_scan} is normalized to the incoming energy for
better visualization.  Analogously, the values of $\pfracb{\gamma}{E}$
are multiplied by $E'$ to allow for comparison of
magnitudes.\footnote{Since
  $\int_0^{E'}(\pfracb{\gamma}{E})\,dE=\gamma(E')$, normalization
  $(\pfracb{\gamma}{E})E'$ allows to compare the individual energy
  distributions.  Note that theoretically
  $\pfracb{\gamma}{E}\rightarrow\infty$ for $E'\rightarrow0$.}  This
reveals a gradually decreasing contribution of the true-secondary
emission, while rediffused electrons remain steady for a larger range
of energies before they drop for $E'<\SI{20}{eV}$.  On the other hand,
as the incoming energy decreases, the backscattered electron
population becomes more significant which corresponds to
\fgr{bounded:furman_bc}
\begin{figure}[!htb]
  \centering
  \includegraphics[width=0.8\linewidth]{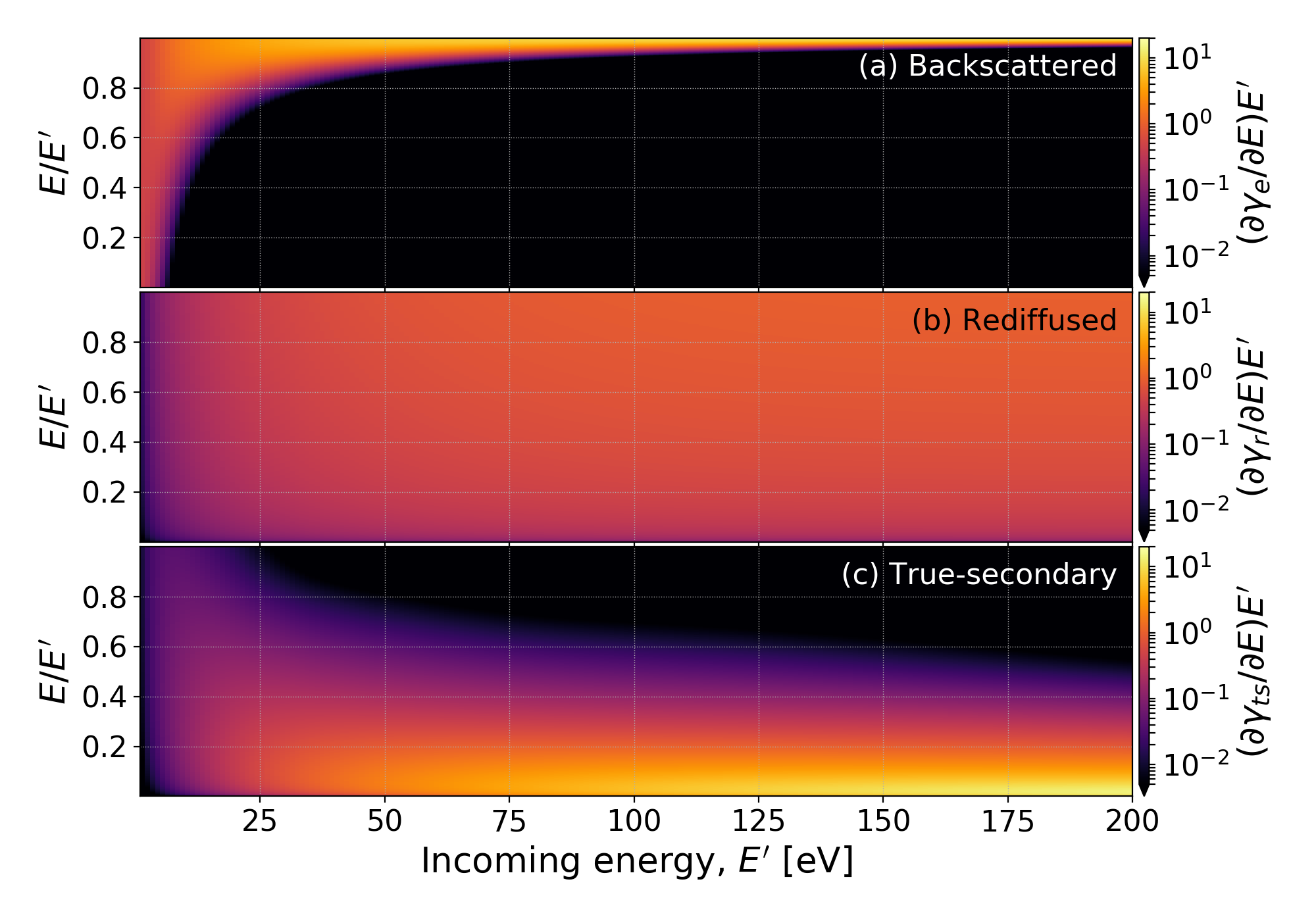}
  \caption[Relative contributions of secondary populations from
    \cite{Furman2002}]{Contributions of the secondary populations from
    the \cite{Furman2002} model, based on the incoming energy.  Both
    the values and the $y$-axis are normalized to the incoming energy
    to allow for better comparison of the relative contributions.
    From top to bottom the figure captures backscattered (elastically
    reflected) electrons, rediffused electron, and true-secondary
    electrons.}
  \label{fig:bounded:furman_scan}
\end{figure}

Finally, it should be pointed out that even though the model is
mathematically sound for incoming energies all the way to zero, the
values at the lower energy range, which are crucial as described
above, are from an extrapolation of higher energy beam data.
Therefore, for simulating $\sim \SI{10}{eV}$ electron distributions in
contact with a wall, a different model specifically tailored for these
energies might be preferred.  Another obstacle of the model is its
dependency on a significant number of fitting parameters which do not
necessarily correspond to physical quantities.  Authors provide the
values of these parameters only for copper and stainless steel, which
are not particularly useful materials for Hall thrusters nor
plasma-facing parts of fusion devices.

\subsubsection{Special Case: \cite{Bronold2015} Model}

\cite{Bronold2015} present a model for electron absorption by a
dielectric wall.  It has several advantages over the \cite{Furman2002}
model.  It is tailored for dielectrics which are more relevant for
Hall thrusters, it includes fewer parameters that are physical like
electron affinity, and it is based on first principles from quantum
mechanics.  On the other hand, \cite{Bronold2015} discuss their
model's relevance only up to incoming energies comparable to the
electron band gap $E_g\sim\SI{10}{eV}$ ($E_g=\SI{7.8}{eV}$ for MgO
used for examples here).

\cite{Bronold2015} directly define the reflection function,
\begin{align}\label{eq:bounded:bronold}
  R(E,\mu,E',\mu') =
  \underbracket{R(E',\mu')\delta(E-E')\delta(\mu-\mu')}_{\text{backscattered}}
  + \underbracket{\delta R(E,\mu,E',\mu')}_{\text{rediffused}}.
\end{align}
Note that the model assumes specular reflection for the back-scattered
electrons, i.e., the energy and angles are conserved with variable
probability $R(E',\mu')$, which is a function only of the incoming
properties.  It is given as $R(E',\mu') = 1 - \mathcal{T}(E',\mu')$,
where $\mathcal{T}(E',\mu')$ is the probability of a
quantum-mechanical reflection,
\begin{align*}
  \mathcal{T}(E',\mu') =
  \frac{4\overline{m}_ekp}{(\overline{m}_ek+p)^2}, \quad k =
  \sqrt{E'-\chi}\mu', \quad p=\sqrt{\overline{m}_eE'}\nu',
\end{align*}
where $\overline{m}_e$ is the relative mass of a conduction band
electron and $\chi$ is the electron affinity of the dielectric.  $k$
and $p$ are components of momentum perpendicular to the wall where
$\nu$ is the cosine angle inside the wall.  $\nu$ is connected with
$\mu$ through conservation of energy and lateral momentum,
\begin{align}\label{eq:bounded:bronold_conservation}
  1-\nu'^2 = \frac{E'-\chi}{\overline{m}_eE'}(1-\mu'^2).
\end{align}

The probability of reflection, $R(E',\mu')$ is captured in
\fgr{bounded:bc_R_2D}, showing several interesting regions.  First of
all, there is a region of $R(E',\mu')= 1$ in the left part.  Electrons
there have lower energy than the electron affinity of the material,
cannot penetrate the potential barrier and are all reflected.  The
second interesting region is in the bottom right.  As a direct
consequence of the conservation of energy and lateral momentum
\eqrp{bounded:bronold_conservation}, there is a critical angle given
as $\mu_c = \sqrt{1-\overline{m}_eE'/(E'-\chi)}$.  Particles entering
under this angle have the momentum vector perpendicular to the surface
after penetrating the material; particles that hit the wall with
$\mu'<\mu_c$ are reflected.\footnote{This is very similar to critical
  angle coming from the Snell's law of light refraction.}  Note that
particles with $\mu'>\mu_c$ and $E'>\SI{2}{eV}$ generally do penetrate
the material and would be lost from the plasma if back-scattering was
the only effect taken into account. They can, however, return to the
plasma through rediffusion.

\begin{figure}[!htb]
  \centering
  \includegraphics[width=0.8\linewidth]{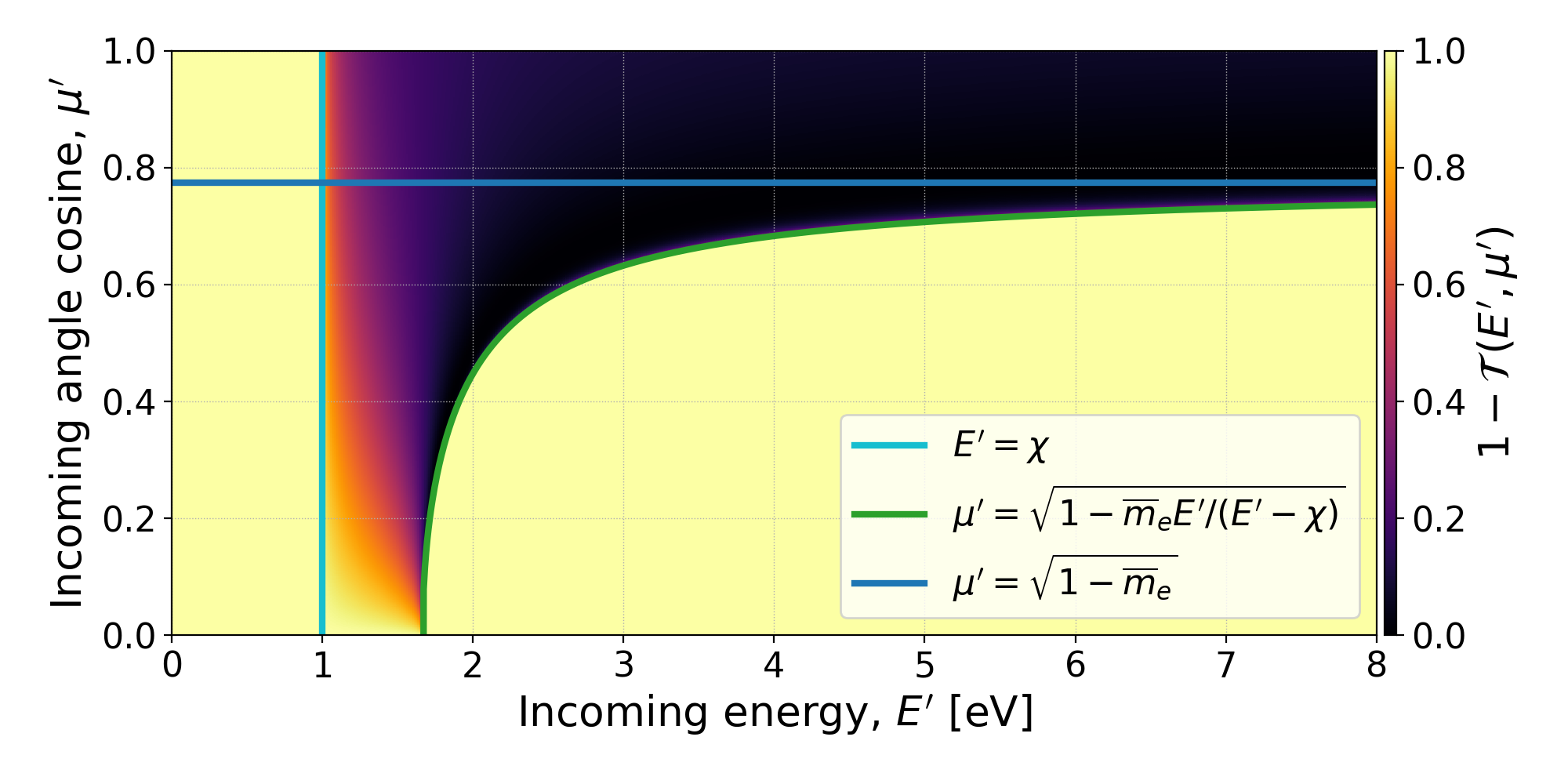}
  \caption[Probability of back-scattering from \cite{Bronold2015}
    model]{Probability of back-scattering, $R(E',\mu')$, from the
    \cite{Bronold2015} model as a function of incoming angle and
    energy. Highlighted are $E'=\chi$ (cyan line) below which are all
    particles reflected, and the critical angle $\mu_c$ (green line)
    given by the conservation laws,
    \eqrp{bounded:bronold_conservation}.  Blue line marks the angle
    above which is possible the rediffusion,
    \eqrp{bounded:bronold_rediffusion}.  Used parameters are for MgO,
    $\chi=\SI{1}{eV}$ and $\overline{m}_e=0.4$.}
  \label{fig:bounded:bc_R_2D}
\end{figure}

Description of rediffusion in \cite{Bronold2018} is much more
complicated in comparison to back-scattered electrons,
\begin{align}\label{eq:bounded:bronold_rediffusion}
  \delta R(E,\mu,E',\mu')=\pfrac{\nu}{\mu} \mathcal{T}(E',\mu')
  \rho(E) \mathcal{B}(E,\mu,E',\mu') \mathcal{T}(E,\mu)
  \theta\big(\mu-\sqrt{1-\overline{m}_e}\big),
\end{align}
where $\rho(E) = \sqrt{\overline{m}_e^3E}/2(2\pi)^3$ is the conduction
band density of states and
\begin{align*}
  \mathcal{B}(E,\mu,E',\mu') =
  \frac{Q(E,\mu,E',\mu')}{\int_0^1\int_0^{E'}\rho(E)Q(E,\mu,E',\mu')\,dEd\mu}
\end{align*}
is the probability of rediffusion.  $Q(E,\mu,E',\mu')$ is given by a
recursive relation summed over the back-scattering events inside the
material.  Note the Heaviside step function in
\eqr{bounded:bronold_rediffusion}; the limiting $\mu$ is marked by the
blue line in \fgr{bounded:bc_R_2D}.  The population with cosine angles
above this line can return to the domain after penetrating the
material, significantly influencing \eqr{bounded:bronold}.
True-secondary electrons excited by incoming electrons with energies
considered here ($<\SI{10}{eV}$) are neglected in this
model.\footnote{\cite{Bronold2018} discuss true-secondary electrons
  excited with energy coming change of internal energy levels of
  incoming ions; these effects are neglected in this work.}

The \cite{Bronold2015} model can be implemented into the simulation in
the same manner as the \cite{Furman2002} model. However,
\cite{Bronold2015} provide an interesting discussion later in the
paper.  All the relations above are derived for ideally flat walls
without any defects.  To address effects of real walls, the authors
modify the relations based on \cite{Smith1998} by adding terms with
parameter $C$, which is proportional to the density scattering
centers.  With $C=1$ and $C=2$ the results match experimental data
very well (see Fig.\thinspace3 in \cite{Bronold2015}; results are much
better than for $C=0$).  What is more, with increasing $C$, the
effects of $\delta R(E,\mu,E',\mu')$ become less important.  This
presents an interesting opportunity to develop reasonably accurate and
computational inexpensive boundary conditions by neglecting the
rediffusion and using the roughness-modified formula for the
probability of a quantum-mechanical reflection (Eq.\thinspace(13) in
\cite{Bronold2015}),
\begin{align}\label{eq:bounded:bronold_C}
  \overline{\mathcal{T}}(E',\mu') =
  \frac{\mathcal{T}(E',\mu')}{1+C/\mu'} -
  \frac{C/\mu'}{1+C/\mu'}\int_{\mu_c}^1\mathcal{T}(E',\mu'')\,d\mu''.
\end{align}
Calculating these integrals for the reflection function of each
particle would still be quite expensive.  However, as emphasized
before, the energies and angles need to be treated as coordinates and
the integrals can be precomputed.

The whole process can be performed as follows.  We define the
reflection function as
\begin{align}\label{eq:bounded:bronold_R}
  R(E,\mu,E',\mu') = \left(1 - \frac{\mathcal{T}(E',\mu')}{1+C/\mu'} -
  \frac{C/\mu'}{1+C/\mu'}\int_{\mu_c}^1\mathcal{T}(E',\mu'')\,d\mu''\right)
  \delta(E-E')\delta(\mu-\mu'),
\end{align}
and describe $E$ and $\mu$ in terms of $\bm{v}$ and
$\bm{v}'$. \eqr{bounded:bronold_C} is then substituted into the
general formula in \eqr{bounded:general} and the integration over
$\bm{v}'$ is performed, which is made simple by the Dirac delta
functions.  The rest of the integrals are precomputed numerically,
\begin{multline}
  \mathcal{R}_{x,kl}^g = \int_{I_p} \left(1 -
  \frac{\mathcal{T}\big(E^g(\bm{\eta}_{\bm{v}}),
    \mu^g(\bm{\eta}_{\bm{v}})\big)}{1+C/\mu^g(\bm{\eta}_{\bm{v}})} -
  \frac{C/\mu^g(\bm{\eta}_{\bm{v}})}{1+C/\mu^g(\bm{\eta}_{\bm{v}})}
  \int_{\mu_c^g(\bm{\eta}_{\bm{v}})}^1\mathcal{T}(E^g(\bm{\eta}_{\bm{v}}),
  \mu'')\,d\mu''\right)\times
  \\\widehat{\psi}_l(-\eta_x,\eta_y,\eta_z,-\eta_{v_x},\eta_{v_y},\eta_{v_z})
  \widehat{\psi}_k(\eta_x,\eta_y,\eta_z,\eta_{v_x},\eta_{v_y},\eta_{v_z})
  \,d\eta_x d\eta_y d\eta_zd\eta_{v_x} d\eta_{v_y} d\eta_{v_z}.
\end{multline}

The reflection function, $R$, calculated with $\overline{\mathcal{T}}$
then significantly alters \fgr{bounded:bc_R_2D}.  The modified version
is in \fgr{bounded:bc_Rt_2D}.  Particularly noticeable is the absence
of regions with absolute reflection in the bottom-right sector (higher
energies and oblique angles).
\begin{figure}[!htb]
  \centering
  \includegraphics[width=0.8\linewidth]{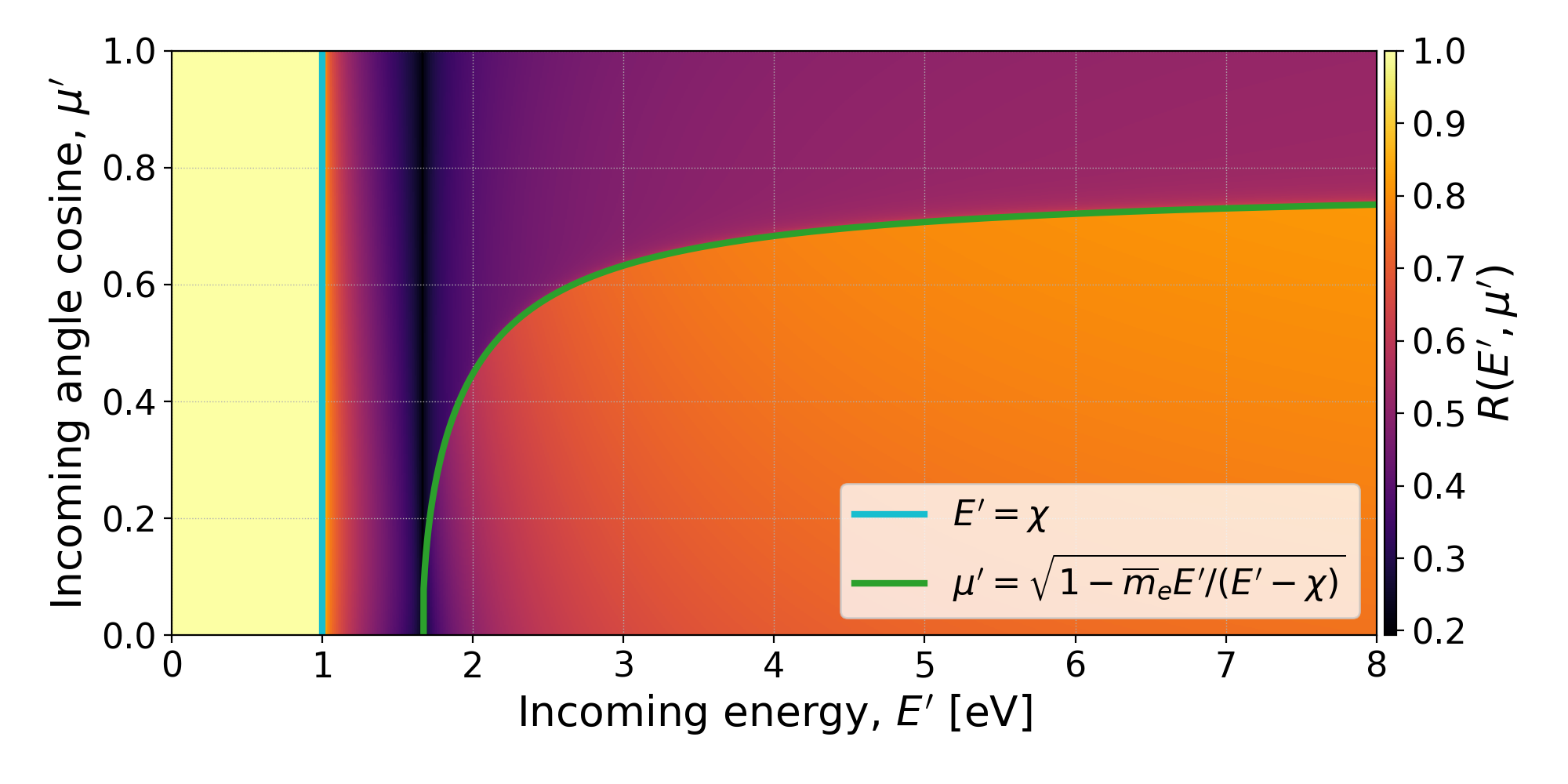}
  \caption[Modified probability of back-scattering from
    \cite{Bronold2015} model]{Probability of back-scattering,
    $R(E',\mu')$, from the \cite{Bronold2015} model modified with the
    roughness coefficient $C$ \eqrp{bounded:bronold_C}.  Using $C=2$
    and material parameters for MgO ($\chi=\SI{1}{eV}$ and
    $\overline{m}_e=0.4$).}
  \label{fig:bounded:bc_Rt_2D}
\end{figure}

Due to the complexity of $R$, the advances used to create DG kernels
and calculate moments cannot be used here and the boundary condition
needs to be precomputed for each cell.  What is more, as seen in
\fgr{bounded:bc_R_2D}, $R = 1$ for low energies and then quickly
drops.  Therefore, it is important to be careful with constructing the
velocity mesh.  The electron mesh used for previous simulations
extends from $-6\,v_{th,e}$ to $6\,v_{th,e}$ and uses 32 cells. This
puts the sharp transition at $E'=\chi$ inside the second cell
(counting from center).  As the polynomial approximation is not suited
for such sharp transitions, projection of $R$ onto this mesh results
in significant overshoot; see blue line in \fgr{bounded:bc_R}.
However, noting the strong ability of th DG method to handle
discontinuities and sharp gradients between the cells, the velocity
mesh can be tailored for the purposes of the boundary condition.  As
seen by the orange line in \fgr{bounded:bc_R_2D}, tailoring the mesh
eliminates the overshoot at $v_x\approx 0.5\,v_{th}$.

\begin{figure}[!htb]
  \centering
  \includegraphics[width=0.8\linewidth]{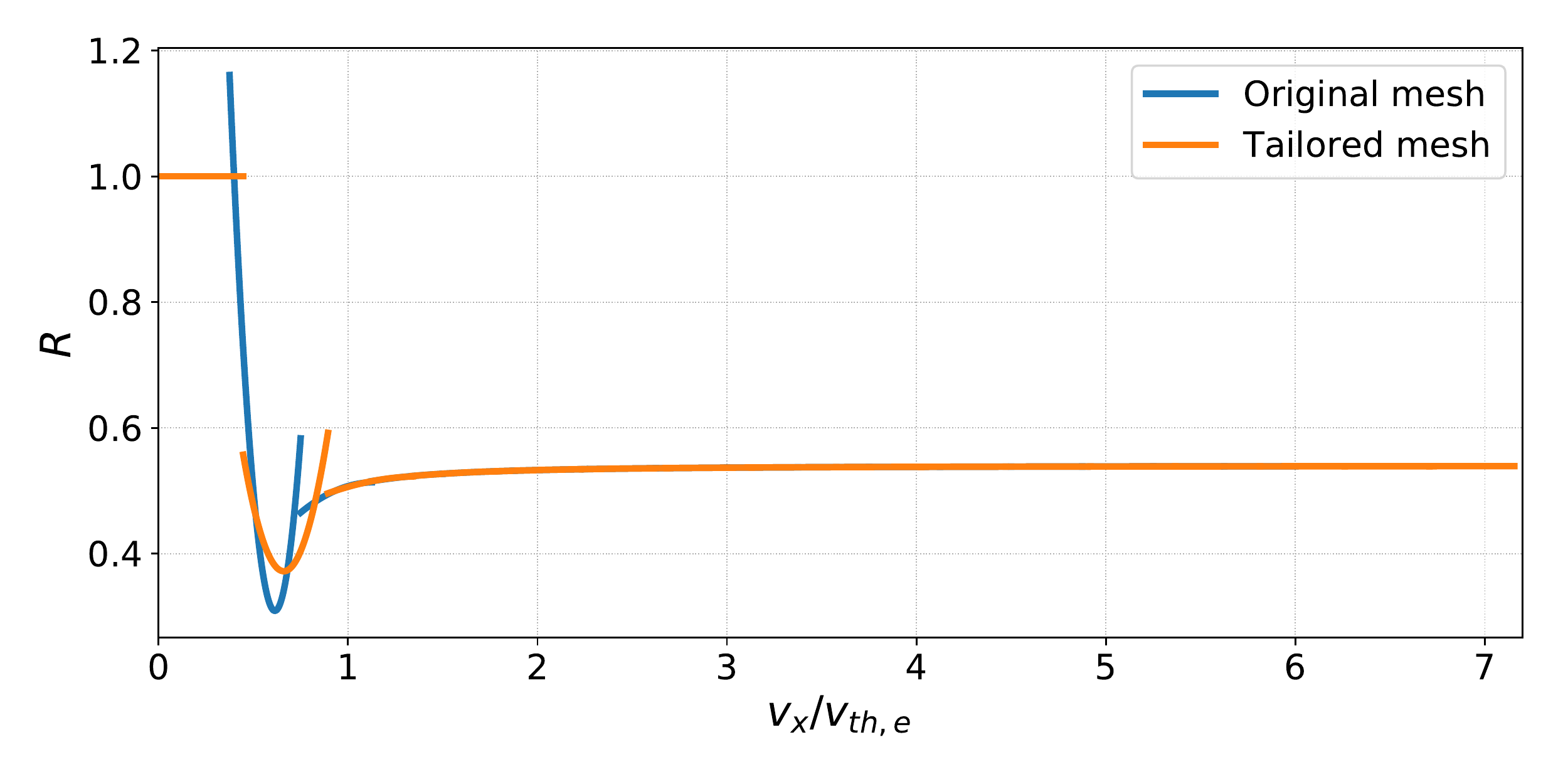}
  \caption[Projection of reflection function on DG basis]{Example of
    projecting the reflection function given by
    \eqr{bounded:bronold_R} onto the simulation mesh.  The same mesh
    is used as with previous simulations (blue line) resulting in an
    overshoot at $E=\chi$.  Orange line shows the result for mesh
    specifically tailored for material-based boundary conditions
    eliminating the overshoot at $v_x\approx 0.5\,v_{th}$.}
  \label{fig:bounded:bc_R}
\end{figure}

Similar to other key components of \texttt{Gkeyll 2.0} this boundary
condition can be precomputed and written as automatically generated
code with expanded matrix multiplications.  However, because it
changes based on the wall material and needs to be calculated for each
cell, it is stored as an external \texttt{Lua} file.  Following is an
example of a file for a second-order 1X1V simulation, i.e., with 8
basis functions in each cell.  The snipped defines each mode of the
outgoing distribution function, \texttt{fout}, in the cell with the
index 1 as a linear combination of up to eight incoming modes,
\texttt{fin}.  Note that since the coefficients are precomputed and
the matrix multiplication is expanded, the actual multiplication can
be limited only to the non-zero terms, saving computational time.
\begin{lstlisting}[language={[5.1]Lua}]
if idx[1] == 1 then
   fout[1] = 0.539057*fin[1] + 0.0000199555*fin[3] + 0.000000000000000194289*fin[5] - 0.000000501891*fin[6] - 0.00000000000000907781*fin[7]
   fout[2] = -0.539057*fin[2] - 0.0000199555*fin[4] + 0.000000501865*fin[8]
   fout[3] = -0.0000199555*fin[1] - 0.539057*fin[3] + 0.00000000000000909169*fin[5] - 0.0000178584*fin[6] - 0.000000000000401656*fin[7] + 0.00000000000000000245809*fin[8]
   fout[4] = 0.0000199555*fin[2] + 0.539057*fin[4] + 0.0000178573*fin[8]
   fout[5] = 0.000000000000000194289*fin[1] - 0.00000000000000909169*fin[3] + 0.539057*fin[5] + 0.000000000000448913*fin[6] + 0.0000199555*fin[7]
   fout[6] = -0.000000501891*fin[1] + 0.0000178584*fin[3] + 0.000000000000448913*fin[5] + 0.539057*fin[6] - 0.0000000000191165*fin[7]
   fout[7] = 0.00000000000000907781*fin[1] - 0.000000000000401656*fin[3] - 0.0000199555*fin[5] + 0.0000000000191165*fin[6] - 0.539057*fin[7] - 0.0000000000000000020854*fin[8]
   fout[8] = 0.000000501865*fin[2] + 0.00000000000000000245809*fin[3] - 0.0000178573*fin[4] - 0.0000000000000000020854*fin[7] - 0.539057*fin[8]
elseif ...
\end{lstlisting}
The \textit{Mathematica} script to create this whole file is listed in
\ref{list:scripts:bronold}.\footnote{Typically, \textit{Maxima} is the
  tool of choice in the \texttt{Gkeyll} team; however, I was unable to
  get it compute higher dimensions integrals which need to be
  calculated numerically and do not have analytical
  solution. \textit{Mathematica} handles it without much trouble.}

A possibly unexpected consequence of the dielectric boundary condition
implementation of \eqr{bounded:bronold_R} is additional cleaning of
the initial Langmuir wave.  This is caused by a smaller electron flux
to the wall as part of the electrons directly returns to the domain.
As a result, the initial relaxation of the system is less abrupt which
decreases the amplitude of the waves.  \fgr{bounded:bc_bronold_distf}
shows clear profiles of both electron and ion distribution functions
(snapshot at $t\omega_{pe} = 500$).  Note that no collisions were used
in this simulation.

\begin{figure}[!htb]
  \centering
  \includegraphics[width=0.8\linewidth]{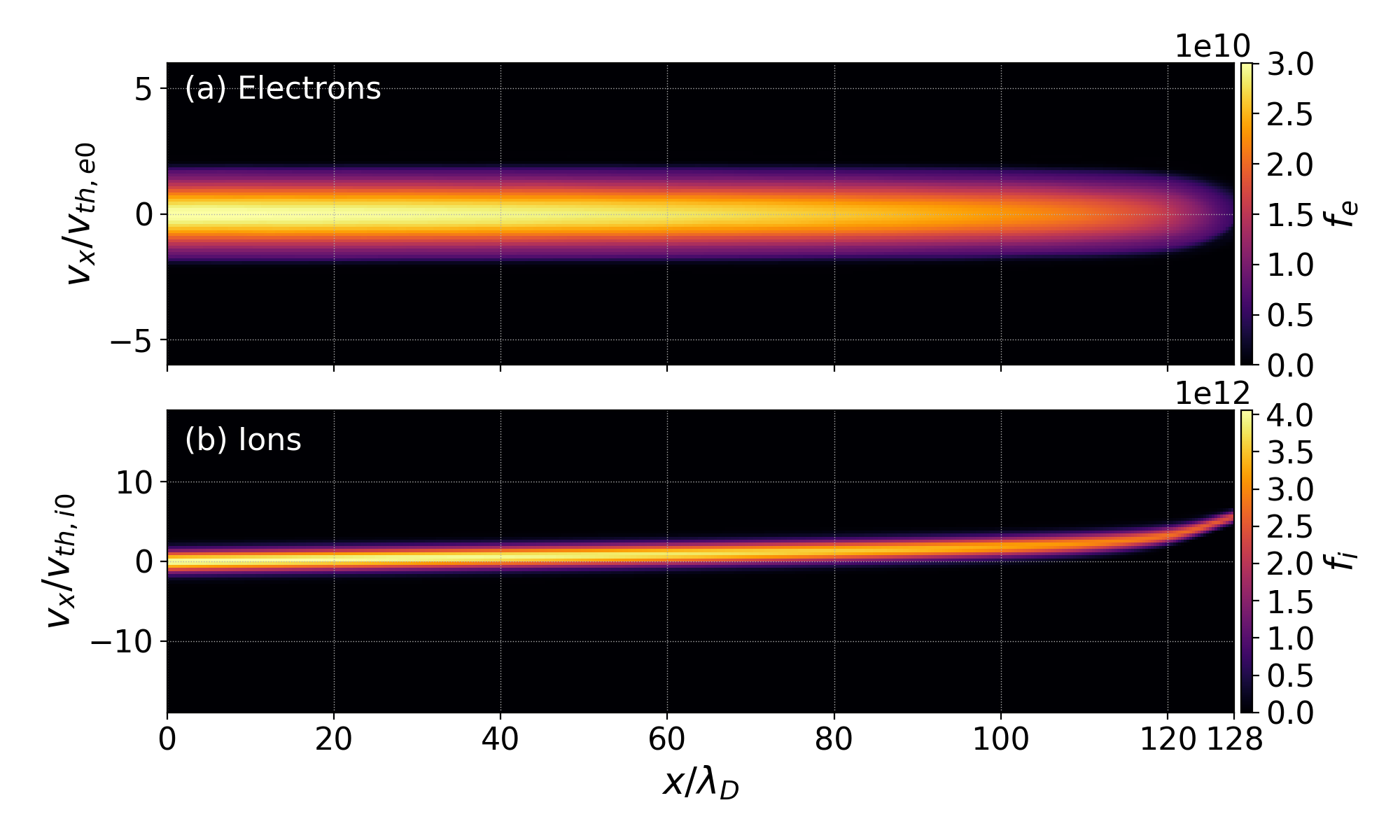}
  \caption[Electron and ion sheath distribution functions with
    dielectric wall]{Example of electron and ion distribution
    functions from a sheath simulation with dielectric wall
    \eqrp{bounded:bronold_R}.  Situation is captured at $t\omega_{pe}
    = 500$.  Note that initial Langmuir wave almost entirely
    disappeared even though no collisions were used for this run.}
  \label{fig:bounded:bc_bronold_distf}
\end{figure}

It is difficult to observe additional boundary condition features from
\fgr{bounded:bc_bronold_distf}.  Therefore,
\fgr{bounded:bc_bronold_diff} shows direct comparison (absolute
difference in the electron and ion distribution functions) of the
simulation with the dielectric boundary condition with the case that
uses ideally absorbing walls, i.e., the black hole boundary condition.
Analogous to \fgr{bounded:bc_bronold_distf}, the solution is captured
at $t\omega_{pe} = 500$ giving the simulations reasonable time to
evolve from the same initial conditions (\ser{bounded:sheath:init}).
Immediately noticeable is the periodic sign-changing structure
resulting from the absence of Langmuir waves in the case with ideally
absorbing wall.  What is more important, is the higher electron
density at the wall.  In the $v_x<0$ half of the velocity domain, we
even see the acceleration of emitted particles from the sheath
electric field.  The ion distribution (\fgr{bounded:bc_bronold_diff}b)
shows that ions reach lower velocities at the same distance from the
wall in comparison to the case with absorbing wall.

\begin{figure}[!htb]
  \centering
  \includegraphics[width=0.8\linewidth]{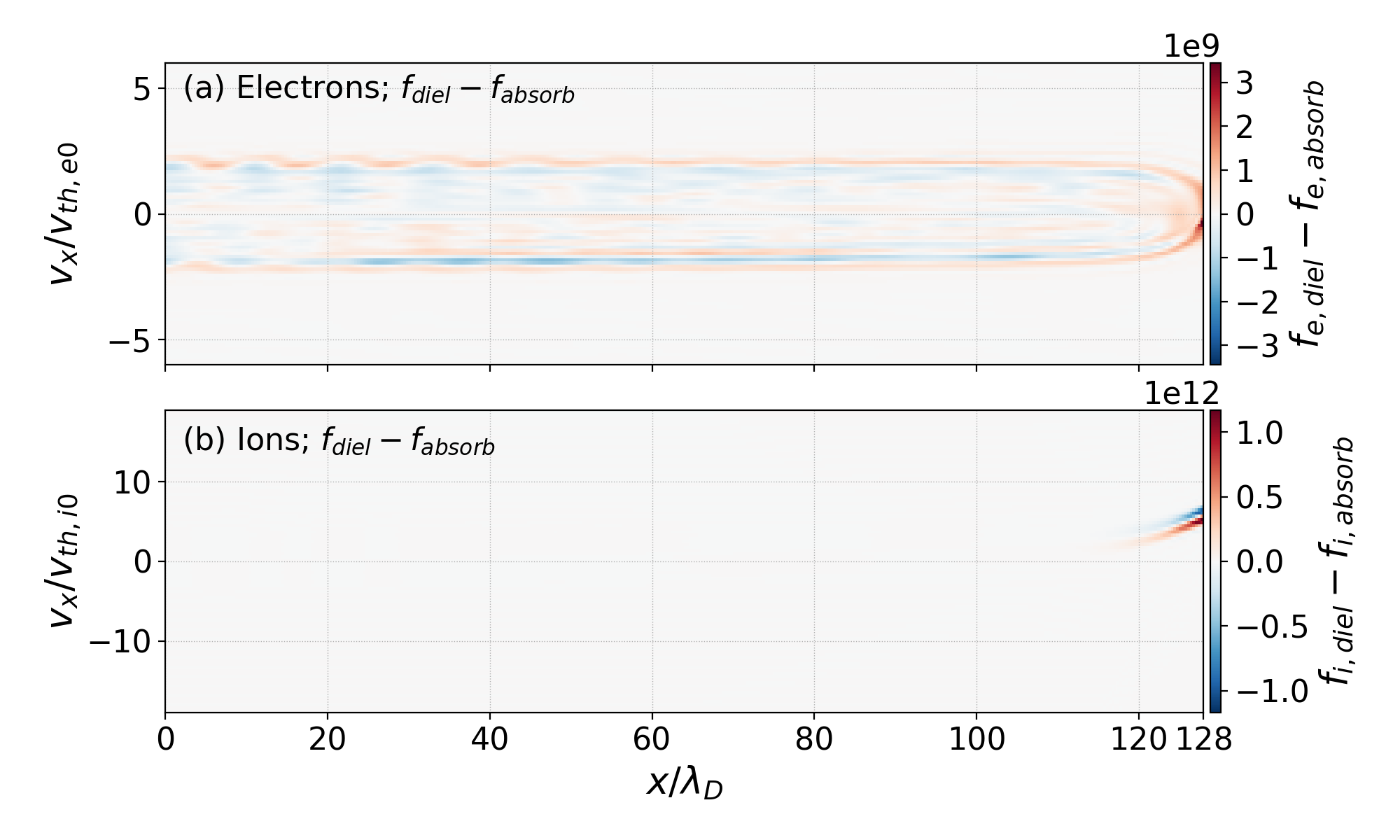}
  \caption[Direct comparison of distribution functions with absorbing
    and dielectric BCs]{Direct comparison of electron and ion
    distribution functions from sheath simulations with absorbing and
    dielectric boundary conditions ($f_{diel}-f_{absorb}$).  Red color
    denotes regions with higher particle phase space density in the
    case with dielectric wall boundary condition.  The periodic
    structure caused by the absence of Langmuir waves in the
    simulation with ideally absorbing wall.  Data are captured at
    $t\omega_{pe}=500$.}
  \label{fig:bounded:bc_bronold_diff}
\end{figure}

Plots of electron and ion densities, ion bulk velocity, electron and
ion temperatures, and electric fields are provided in
\fgr{bounded:bc_bronold_profiles}.  Simulation with the dielectric
boundary condition shows roughly doubled electron density right next
to the wall.  Returning electrons are also decreasing the overall
outflow from the domain resulting in significantly smaller electric
field needed to equalize the electron and ion fluxes.  The vertical
dashed line in \fgr{bounded:bc_bronold_profiles} marks the Bohm
velocity crossing for both cases which can be considered as the sheath
edge.  Note that the differences between the solutions for the
dielectric boundary condition and the ideally absorbing boundary
condition are localized inside the sheath region.  An exception are
small differences in the presheath electric field are caused by
Langmuir waves in the later case.  As a result, ions have the same
presheath acceleration profiles and reach the Bohm velocity at the
same distance from the wall.  The most significant difference is in
the electron temperature (\fgr{bounded:bc_bronold_profiles}d); in the
case with dielectric wall, the electron thermal velocity decrease in
the sheath region is significantly smaller.

\begin{figure}[!htb]
  \centering
  \includegraphics[width=0.8\linewidth]{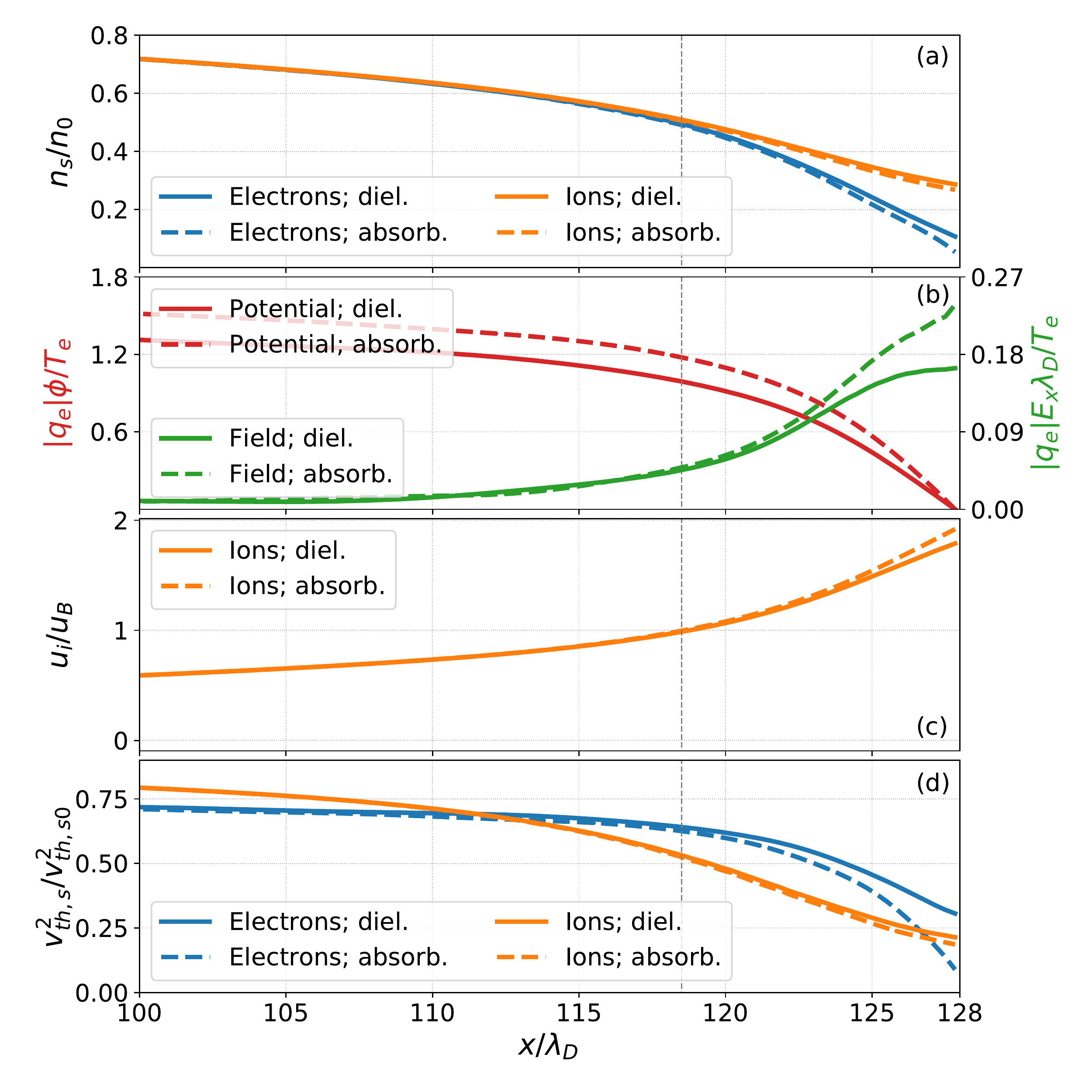}
  \caption[Comparison of sheath profiles for absorbing and dielectric
    BC]{Comparison of profiles from sheath simulations with absorbing
    and dielectric boundary conditions. From top to bottom, panels
    show density, ion bulk velocity, temperature, and electric field.
    In all the panels, solid line marks simulations with dielectric
    boundary condition based on \eqr{bounded:bronold_R} while the
    dashed lines correspond to simulation with ideally absorbing wall.
    Vertical dashed line marks crossing of the Bohm velocity
    \eqrp{bounded:bohm}.  Data are captured at $t\omega_{pe}=500$. No
    collisions or ionization are used for these runs.}
  \label{fig:bounded:bc_bronold_profiles}
\end{figure}

Similar to the discussion in \ser{bounded:temp}, explanation of the
temperature discrepancy requires higher moments of the distribution
function.  As the simulation used for
\fgr{bounded:bc_bronold_profiles} is limited to 1X1V, the third moment
gives only a scalar value, instead of the full heat flux tensor,
\begin{align*}
  q_e(x) = \frac{1}{2}m_e \int_{-\infty}^\infty v_x^3 f_e(x,v_x) \, dv_x.
\end{align*}
Normalized profile of $q_e$ in the region near the wall is shown in
\fgr{bounded:bc_bronold_qprofiles}a.  Due to the $v_x^3$ term, the
third moment is particularly sensitive to oscillations of the
distribution function like the Langmuir waves discussed in
\ser{bounded:sheath:sims}.  Therefore, the results in
\fgr{bounded:bc_bronold_qprofiles} are averaged over the full duration
of the simulation, $\Delta t\omega_{pe} = 1000$. 

\fgr{bounded:bc_bronold_qprofiles}a shows that the heat flux to the
wall is higher for the case with the dielectric wall BC, which might
seem to contradict the higher temperature shown in
\fgr{bounded:bc_bronold_profiles}d.  However, one needs to keep in
mind that $q_e$ describes an energy flux, i.e., it includes the local
particle density which is much higher for the case with the dielectric
wall.  The quantity plotted in \fgr{bounded:bc_bronold_qprofiles}a is
normalized to the initial number density in the center of the domain
so the result is dimensionless.  Alternatively, the third moment can
be normalized to the local number density, $q_e(x)/n_e(x)$, thus
removing the dependence; results then provide information about
``temperature flux''.  \fgr{bounded:bc_bronold_qprofiles}b shows the
comparison of the ``temperature fluxes'' for both of the dielectric
and absorbing cases.  The lower flux in the dielectric case is in
agreement with the higher electron temperature inside the sheath (see
\fgr{bounded:bc_bronold_profiles}d).

\begin{figure}[!htb]
  \centering
  \includegraphics[width=0.8\linewidth]{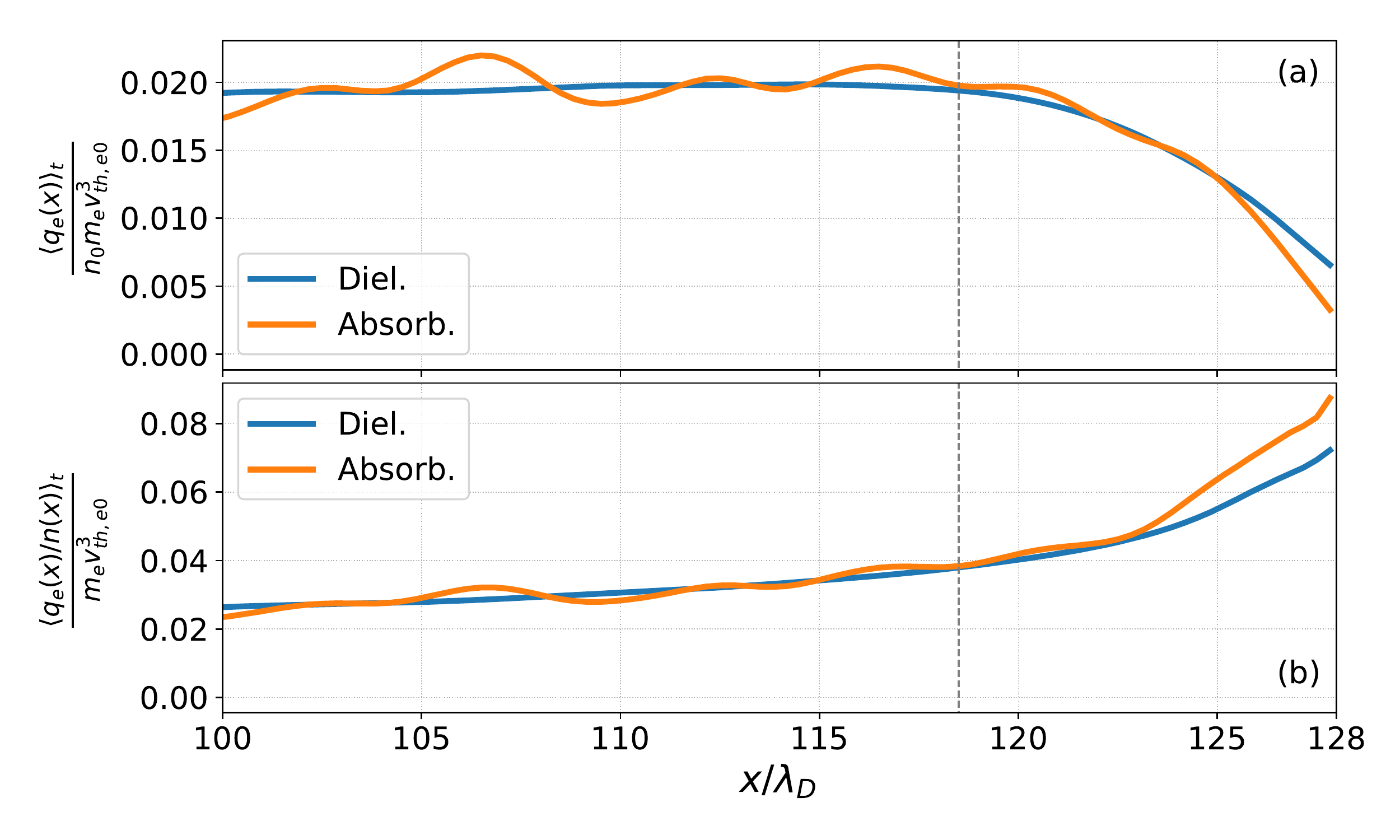}
  \caption[Comparison of heat flux profiles for absorbing and
    dielectric BC]{Comparison of heat flux profiles from sheath
    simulations with absorbing and dielectric boundary conditions.
    The top panel (a) shows the third moment of the distribution
    function, $q_e=\frac{1}{2}m_e\int v_x^3 f_e\,dv_x$, normalized to
    initial temperature and density, while the bottom panel (b)
    captures $q_e$ normalized to local density, $n_e(x)$. The profiles
    are averaged over the whole course of the simulation, $\Delta
    t\omega_{pe} = 1000$.}
  \label{fig:bounded:bc_bronold_qprofiles}
\end{figure}

\chapter{Conclusions}

Numerical algorithms are an important part of this work even though
the main motivation of this work is physics, because the understanding
of the underlying math is important for proper implementation of various
physics modules.  Therefore, a significant emphasis is placed on the
description of the discontinuous Galerkin method and its
implementation into the \texttt{Gkeyll}
framework.\footnote{\url{http://gkeyll.readthedocs.io/en/latest/}}
More precisely, this work uses and contributes to a brand new code
\texttt{Gkeyll 2.0}.  The new version of the code is based on a modal DG implementation that features highly optimized machine-generated kernels producing several orders of magnitude speed-up compared to the previous version of the code. \texttt{Gkeyll 2.0} also features an overhaul of input
files using the new \texttt{App} system which allows for compact and
easy-to-understand input files as seen in
Appendix\thinspace\ref{app:inputs}.  Therefore, results presented in
this work are easy to reproduce.

In the spirit of the discussion from the Introduction, the numerical
model is built from bottom up.  Discontinuous Galerkin implementation
\citep{Cockburn2001} of the Vlasov equation \eqrp{model:vlasov} is 
first tested on simple collisionless plasma simulations studying
Landau damping of electron Langmuir waves and growth of two-stream
instability as these text book problems are well suited for
benchmarking.  Growth and damping rates are extracted from the
simulations using reproducible ``sweeping fitting'' and have good agreement
to theoretical values predicted by linear theory.  For the set
parameters, growth rates of the two-stream instability are within
0.3\% of the predicted values.  The two-stream instability is also used to
demonstrate that the numerical method can converge even for relatively
coarse velocity resolutions.

The Weibel instability, discussed in Chapter 3, was originally meant as an
electromagnetic benchmark as both Langmuir waves and two-stream
instability are electrostatic in nature.  However, discrepancies
between the observed growth rates and the kinetic linear theory
dispersion relation, which was derived for this work, warranted a more
careful study.  The discrepancy is explained by non-negligible
increase in temperature even during the ``linear growth'' phase which
is not captured in the linear theory.  The analysis also confirmed
magnetic trapping as the mechanism of nonlinear saturation in case of
relatively warm electron beams.  What is more, we have reported
previously unpublished results on the role electric fields plays in saturation of the
instability for colder beams \citep{Cagas2017c}.  Our research of the
Weibel instability also resulted in a follow-up study of the interplay the
instability can have with the two-stream instability, which is being
pursued within the \texttt{Gkeyll} collaboration.

Study of plasma-material interactions is focused on plasma sheaths.
First of all, it is shown that even baseline collisionless
simulations, initialized with uniform density, and with ideally
absorbing walls self-consistently evolve the sheath profiles
\citep{Cagas2017s}. Within a few plasma oscillation periods, ion
population is accelerated to the Bohm velocity several Debye lengths
from the wall as predicted by the sheath theory.  Uniform
initialization, however, results in excitation of Langmuir waves as
shown by Fourier analysis.  Excitation of these waves can be decreased
by initializing simulations with approximate profiles obtained from a
semi-empirical model.  BGK collisions are added to repopulate
high-energy electrons from the tail of the distribution that are lost
to the wall. It is demonstrated that with local collision frequencies
the collisional term does not thermalize the distribution inside a
sheath region.  Significant temperature anisotropy exists in a sheath,
which can trigger the growth of the Weibel instability for certain
plasma regimes.

Finally, a new way to form boundary conditions through general
reflection functions is derived and implemented.  The concept is proven
by implementing a specular reflection boundary condition and then using
it to reproduce symmetry of a sheath simulation with absorbing walls
on both sides.  After letting the simulations evolve for
$1000/\omega_{pe}$, maximal relative differences between the
full-domain case with two walls and the half-domain case with
reflecting boundary conditions is on the order or $10^{-13}$.  With
confidence in the process, more complex reflection functions based on
\cite{Furman2002} and \cite{Bronold2015} models are discussed.  The
\cite{Bronold2015} absorption model is then self-consistently
implemented in a slightly approximated but efficient manner, showing
significant impact even on the simplest case of a 1X1V sheath.  With the
\cite{Bronold2015} based boundary condition, electron density next to
the wall is doubled and electric field magnitude is roughly 60\% in
comparison to the case with ideally absorbing walls.

%---------------------------------------------------------------------
%-- Reference --------------------------------------------------------
\bibliographystyle{plainnat}
\bibliography{reference}
\addcontentsline{toc}{chapter}{Bibliography}

%---------------------------------------------------------------------
%-- Appendix  --------------------------------------------------------
\appendix

\chapter{Postgkyl: Gkeyll Postprocessing Suite}

\epigraph{There are two ways of constructing a software design. One
  way is to make it so simple that there are obviously no
  deficiencies. And the other way is to make it so complicated that
  there are no obvious deficiencies}{\textit{C. A. R. Hoare}}

Having a good postprocessing and visualization package can make a life
of a researcher much easier.  While most people write specialized
scripts for publication level figures, a tool to quickly probe
simulation data, ideally directly from a command line terminal of a
super-computer, is very valuable.  \texttt{Gkeyll 1.0} had a
Python script called \texttt{gkeplot} which allowed for easy plotting
together with features like DG interpolation.  Its downside was that
it was designed primary as a plotting script and, therefore, adding new
features became a bit difficult. For example, one
of the often used functions was to load time-evolution data from
several \texttt{hdf5} files, merge them, and save them as an ASCII
file.  Even though this feature was very useful, one can argue that it ideally should not be a part a plotting script.  For those and other reasons, a
decision has been made to create a new tool for \texttt{Gkeyll 2.0}
and \texttt{Postgkyl} has been started.  This new tool has been redesigned from bottom up and is now based on a system of modular pieces which can be arbitrarily chain together for various postprocessing tasks. \texttt{Postgkyl} also adapts current trends, for example the perceptually uniform color maps.\footnote{See \url{https://bids.github.io/colormap/} for more details.}  Full documentation of \texttt{Postgkyl} is now part of the \texttt{Gkeyll} project website
\url{http://gkeyll.readthedocs.io/en/latest/}.

\section{Installation}

\texttt{Postgkyl} is a Python package and can be cloned from its
repository \url{http://bitbucket.org/ammarhakim/postgkyl} or can be
directly installed from the Anaconda cloud with its \texttt{conda}
package manager
\begin{lstlisting}
conda install -c gkyl postgkyl
\end{lstlisting}
Note that the first case requires from user to manually install all the
dependencies (\texttt{numpy}, \texttt{scipy}, \texttt{matplotlib},
\texttt{click}, and \texttt{sympy}) and correctly set
\texttt{\$PYTHONPATH}.\footnote{The \texttt{\$PYTHONPATH} should point
  to the upper \texttt{postgkyl} directory.}  Installation through the
\texttt{conda} package manager performs the required setup automatically.

\section{Basic Terminal Functionality}

\texttt{Postgkyl}'s terminal mode can be quickly
used to probe simulation outputs and perform basic diagnostics.  The
call consists of the baseline script, \texttt{pgkyl}, followed by
various commands.  The baseline script also takes various flags, for
example \texttt{-f} for specifying a \texttt{Gkeyll} output file, common \texttt{-v}
for verbosity, but also flags for partial load of the files, which can be useful for high-dimensional distribution function data.
At any point, \texttt{--help} can be used to produce a output similar to the following:
\begin{lstlisting}
$ pgkyl --help
Usage: pgkyl [OPTIONS] COMMAND1 [ARGS]... [COMMAND2 [ARGS]...]...

Options:
  -f, --filename TEXT   Specify one or more files to work with.
  -s, --savechain       Save command chain for quick repetition.
  --stack / --no-stack  Turn the Postgkyl stack capabilities ON/OFF
  -v, --verbose         Turn on verbosity.
  --version             Print the version information.
  --c0 TEXT             Partial file load: 0th coord (either int or slice)
  --c1 TEXT             Partial file load: 1st coord (either int or slice)
  --c2 TEXT             Partial file load: 2nd coord (either int or slice)
  --c3 TEXT             Partial file load: 3rd coord (either int or slice)
  --c4 TEXT             Partial file load: 4th coord (either int or slice)
  --c5 TEXT             Partial file load: 5th coord (either int or slice)
  -c, --comp TEXT       Partial file load: comps (either int or slice)
  --help                Show this message and exit.

Commands:
  abs          Calculate absolute values of data
  agyro        Compute a measure of agyrotropy.
  collect      Collect data from the active datasets
  dataset      Select data sets(s)
  euler        Extract Euler (five-moment) primitive...
  fft          Calculate the Fast Fourier Transformartion
  growth       Fit e^(2x) to the data
  info         Print info of the current top of stack
  integrate    Integrate data over a specified axis or axes
  interpolate  Interpolate DG data on a uniform mesh
  log          Calculate natural log of data
  mult         Multiply data by a factor
  norm         Normalize data
  plot         Plot the data
  pop          Pop the data stack
  pow          Calculate power of data
  runchain     Run the saved command chain
  select       Subselect data set(s)
  tenmoment    Extract ten-moment primitive variables from...
  write        Write data into a file
\end{lstlisting}

One of the simpler but useful tasks is to load a
data file and follow it up with the \texttt{info} command for printing
information about data.  The following examples uses two-stream instability data from
\ser{benchmark:two-stream} [\ref{list:benchmark:two-stream}],
\begin{lstlisting}
$ pgkyl -f two-stream64_elc_0.bp info
Dataset #0
- Time: 0.000000e+00
- Frame: 0
- Number of components: 8
- Number of dimensions: 2
- Grid type: uniform
  - Dim 0: Num. cells: 64; Lower: -6.283185e+00; Upper: 6.283185e+00
  - Dim 1: Num. cells: 64; Lower: -6.000000e+00; Upper: 6.000000e+00
- Maximum: 1.902326e+00 at (31, 26) component 0
- Minimum: -3.119872e-01 at (31, 38) component 2
\end{lstlisting}
Note the information about the number of \texttt{components}.  That is how \texttt{Postgkyl} refers to the one extra dimension in \texttt{Gkeyll} data, which can represent many
things like components of an electromagnetic field, expansion coefficients of DG
data, or both.  In this case, the simulation uses second order 1X1V modal
Serendipity basis \eqrp{model:1x1v} thus the eight components.  It
should be also pointed out that the \texttt{--help} can be called for
a command rather than for the base script,
\begin{lstlisting}
$ pgkyl info --help
Usage: pgkyl info [OPTIONS]

  Print info of the current top of stack

Options:
  -a, --allsets  All data sets
  --help         Show this message and exit.
\end{lstlisting}

While getting information about output data is useful, probably the
most common use of \texttt{Postgkyl} is to quickly plot simulation results.  The following
command produces a figure result shown in \fgr{postgkyl:plot}
\begin{lstlisting}
$ pgkyl -f two-stream64_elc_100.bp plot
\end{lstlisting}
\begin{figure}[!htb]
  \centering
  \includegraphics[width=0.7\linewidth]{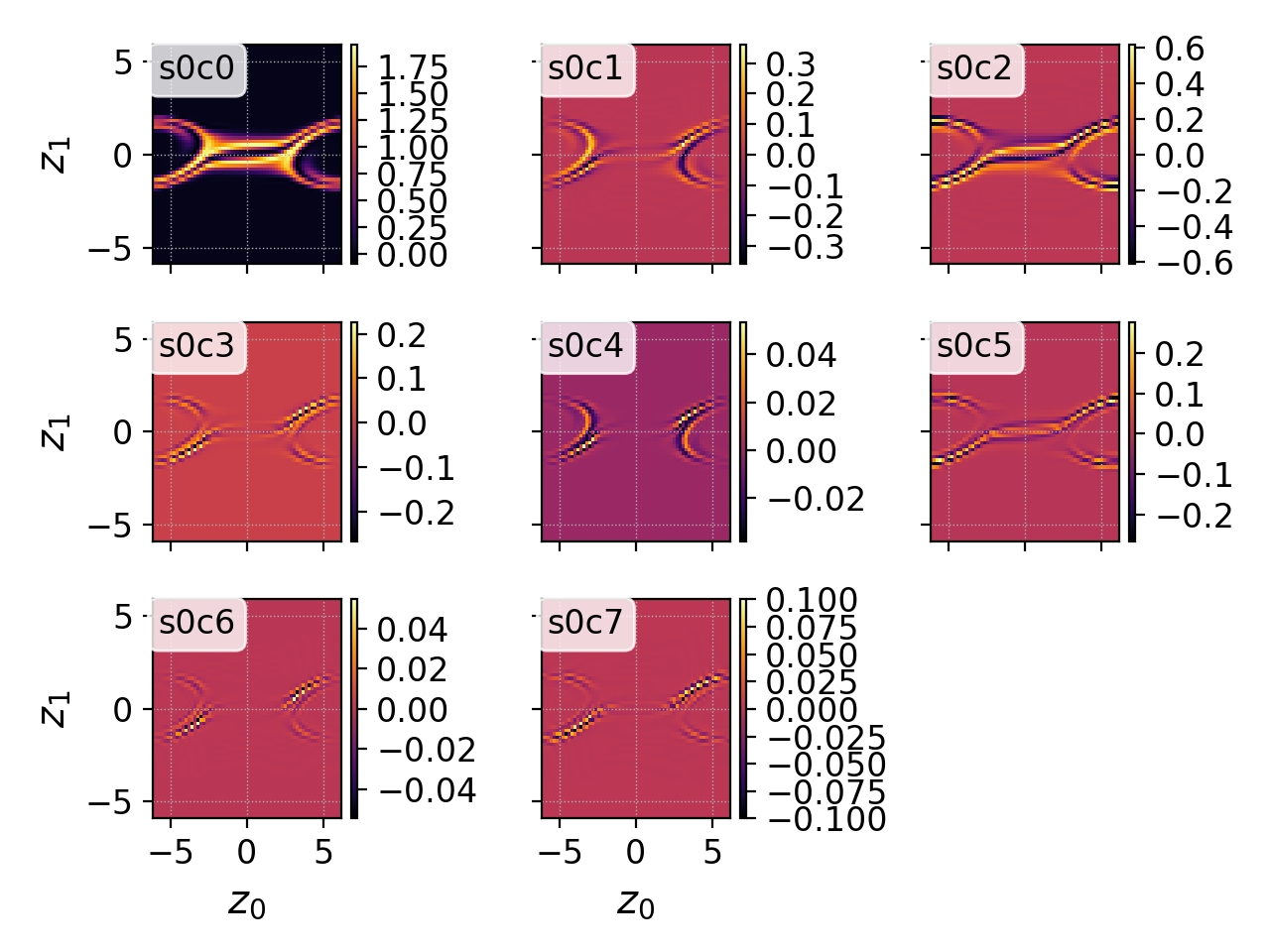}
  \caption{\texttt{pgkyl -f two-stream64\_elc\_100.bp plot}}
  \label{fig:postgkyl:plot}
\end{figure}

By default, \texttt{Postgkyl} assumes that when plotting data with
multiple components, user wants to compare them.  Therefore, in this
case, it outputs 2D plots for eight expansion coefficients for the basis from
\eqr{model:1x1v}. Plotting just a single component can be achieved
with the \texttt{select} command.  Alternatively, \texttt{Postgkyl}
can use the DG expansion coefficients to create uniform data with finer
resolution using the \texttt{interpolate} command
\begin{lstlisting}
$ pgkyl -f two-stream64_elc_100.bp interpolate -p2 -b ms plot
\end{lstlisting}
The result is in \fgr{postgkyl:plot_int}.  Note that the
\texttt{interpolate} command is included before \texttt{plot} (more on
that later) and has its own flags: \texttt{-p} for polynomial order
and \texttt{-b} for basis.
\begin{figure}[!htb]
  \centering
  \includegraphics[width=0.7\linewidth]{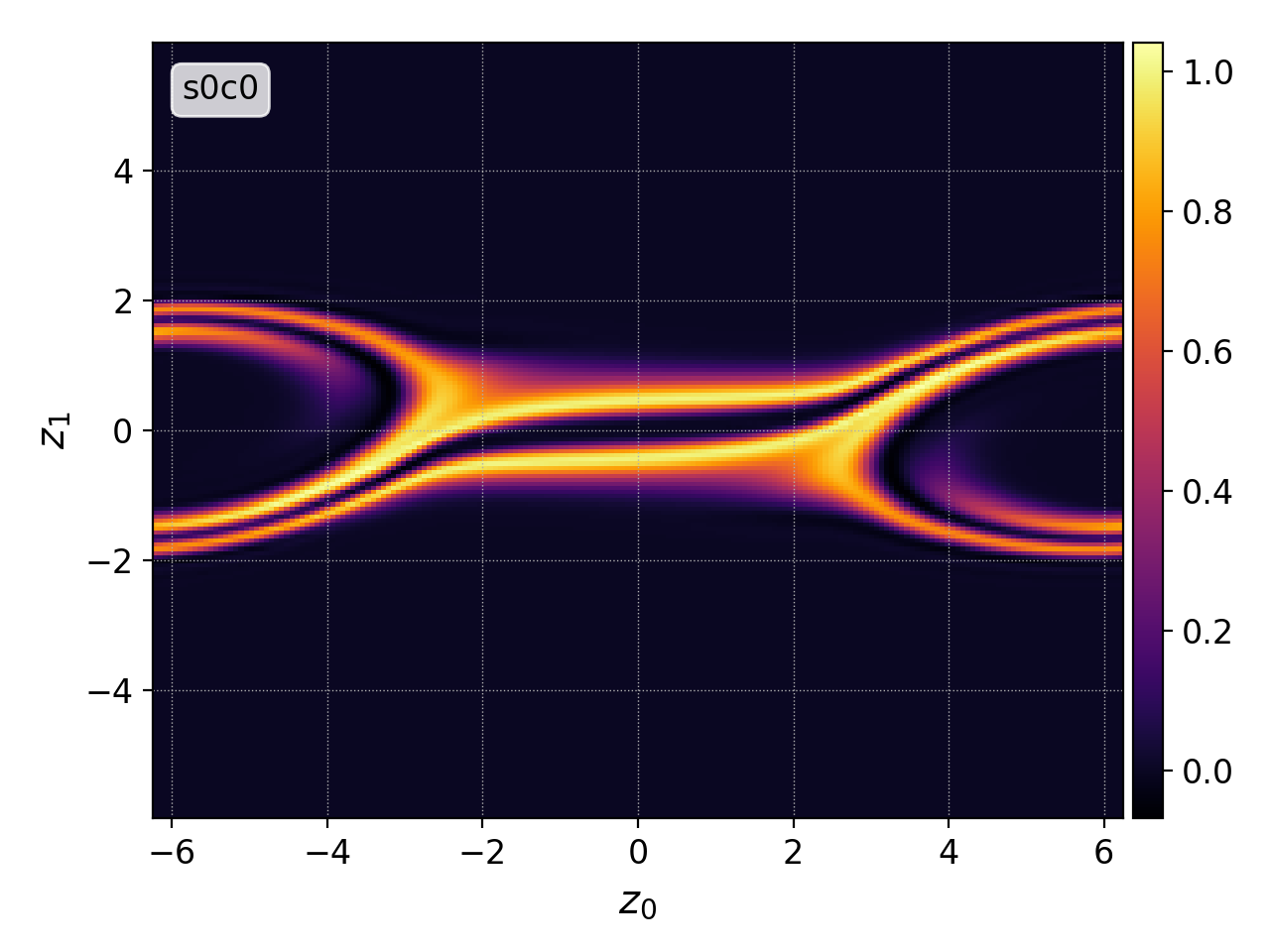}
  \caption{\texttt{pgkyl -f two-stream64\_elc\_100.bp interpolate -p2 -b
      ms plot}}
  \label{fig:postgkyl:plot_int}
\end{figure}

Note that the axis are by default labeled with neutral phase space
coordinate $z_i$.  These labels are set at the beginning and are
persistent through the \texttt{Postgkyl} command; therefore, if the
initial data are 1X2V and the second component is integrated out for
plotting (as it is the case for Weible instability in
\ser{weibel:phase}), \texttt{Postgkyl} will plot $z_0$ versus $z_2$.
Alternatively, labels can be manually set by \texttt{plot} flags
\begin{lstlisting}
$ pgkyl -f two-stream64_elc_100.bp interpolate -p2 -b ms plot -x'$x$' -y'$v_x$'
\end{lstlisting}

\FloatBarrier
\section{Terminal/Script Duality}

\texttt{Postgkyl} is designed with the
terminal/script duality in mind.  This is important because even
thought the majority of \texttt{Postgkyl} usage probably comes from
terminal mode, it is common to carefully craft plotting scripts for
publications\footnote{For example, all the plotting scripts for this
  work are stored in Jupyter notebooks for each of the chapters.} and these
scripts are required to load data as well.  Therefore, most of the
\texttt{Postgkyl} functionality comes from Python functions, which are
accessible from a Python script as well.  The terminal commands are only
\texttt{click}\footnote{\texttt{click} is a useful Python package for
  handling command line inputs; \url{http://click.pocoo.org/5/}}
wrappers for these functions.  For example, data file loading
\begin{lstlisting}
$ pgkyl -f two-stream64_elc_100.bp
\end{lstlisting}
can be exactly reproduced in a script with
\begin{lstlisting}[language=Python]
import postgkyl as pg
data = pg.GData('two-stream64_elc_100.bp')
\end{lstlisting}

\FloatBarrier
\section{Chaining of Commands}

The true strength of \texttt{Postgkyl} comes from the almost
unlimited command chaining options, which is enabled by the \texttt{click}
package mentioned above.  In essence, the command chains are  similar to Unix pipes, As data are
pushed through series of commands, result of one command being feed to the
next one.  This feature has been already shown in this section, when the
\texttt{interpolate} command was inserted before \texttt{plot}.  This allows
to build potentially complex diagnostics out of simple pieces,
enhancing usefulness of each command.

For example, \texttt{Postgkyl} can be used to probe an evolution of the velocity
profiles in the middle of the domain during the course of instability.
A single lineout can be created with the
\texttt{select}\footnote{\texttt{select} is used to specify a
  coordinate and/or a component for lineout.  For example, selecting
  component 1 and creating 1D data for $x=0$ (zeroth coordinate since
  Python is zero-index language)) is done with \texttt{select --c0 0.0
    --component 1}. Note that when the coordinate flag is given
  integer instead of float-point number, lineout is done based on
  coordinate index rather than value.}  command.  An example of this
is in \fgr{postgkyl:lineout}.
\begin{figure}[!htb]
  \centering
  \includegraphics[width=0.7\linewidth]{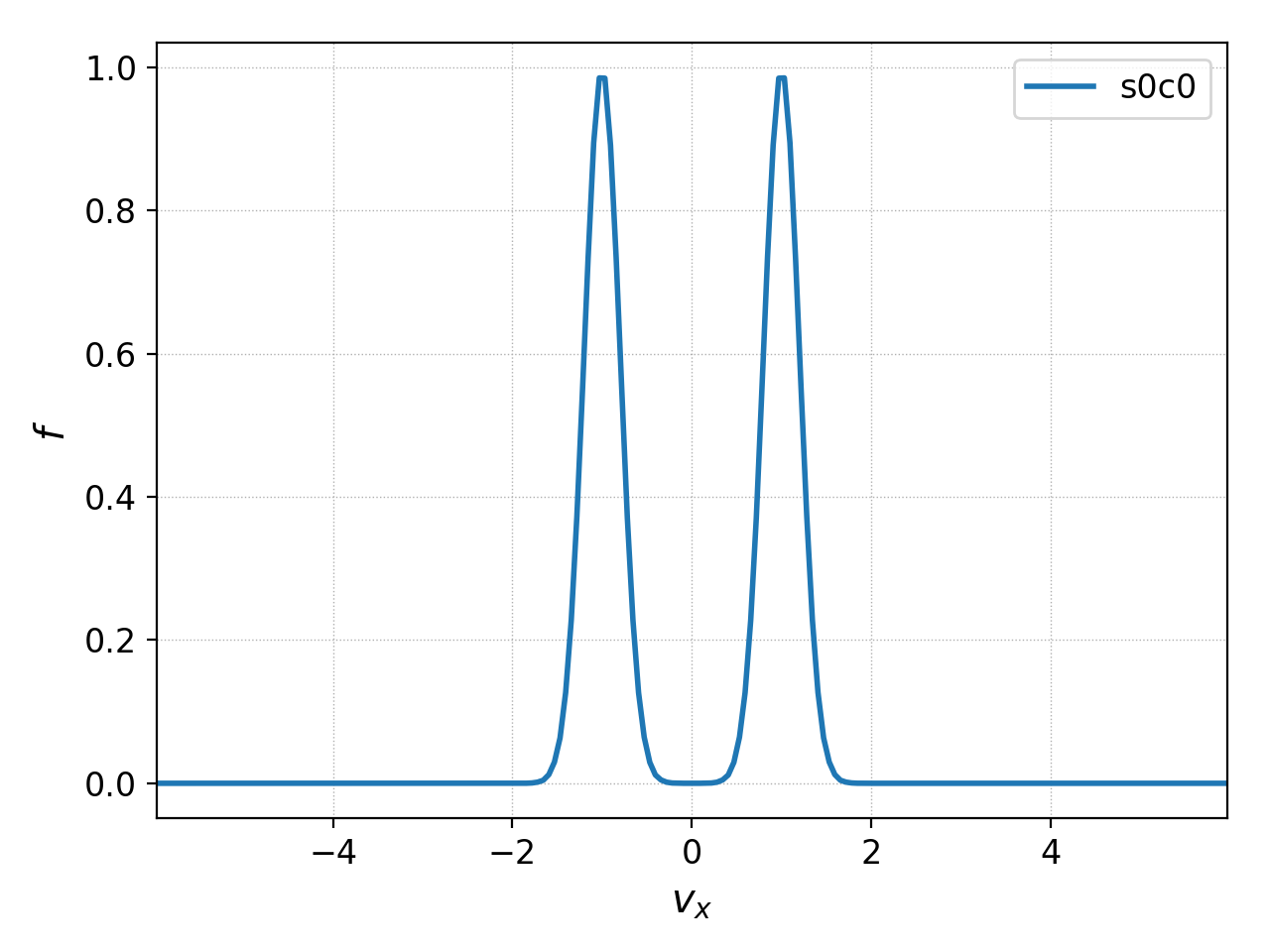}
  \caption{\texttt{pgkyl -f two-stream64\_elc\_0.bp interpolate -p2 -b ms select --c0 0.0 plot -x'\$v\_x\$' -y'\$f\$'}}
  \label{fig:postgkyl:lineout}
\end{figure}

Chaining of commands can become even more useful by loading multiple
files at once with wild-card characters and then using the
\texttt{collect} command, which stacks multiple data one after
another, e.g., creates 2D data out of a set of 1D lineouts:
\begin{lstlisting}
pgkyl -f 'two-stream64_elc_[0-9]*.bp' interpolate -p2 -b ms select --c0 0.0 collect plot -x'$t$' -y'$v_x$'
\end{lstlisting}
Note that for wild-card loading, the file name must be in quotes.  The
\texttt{[0-9]} wild-card stands for any single number; this is required
to exclude diagnostic moment data like
\texttt{two-stream64\_elc\_M0\_0.bp}.  The result of this command is
in \fgr{postgkyl:evolution}, which provides quite a lot of insight
into the instability.  It captures the decrease of kinetic energy of
the beams, as it is transformed into the electric field energy, and
the nonlinear phase space mixing later in time.

\begin{figure}[!htb]
  \centering
  \includegraphics[width=0.7\linewidth]{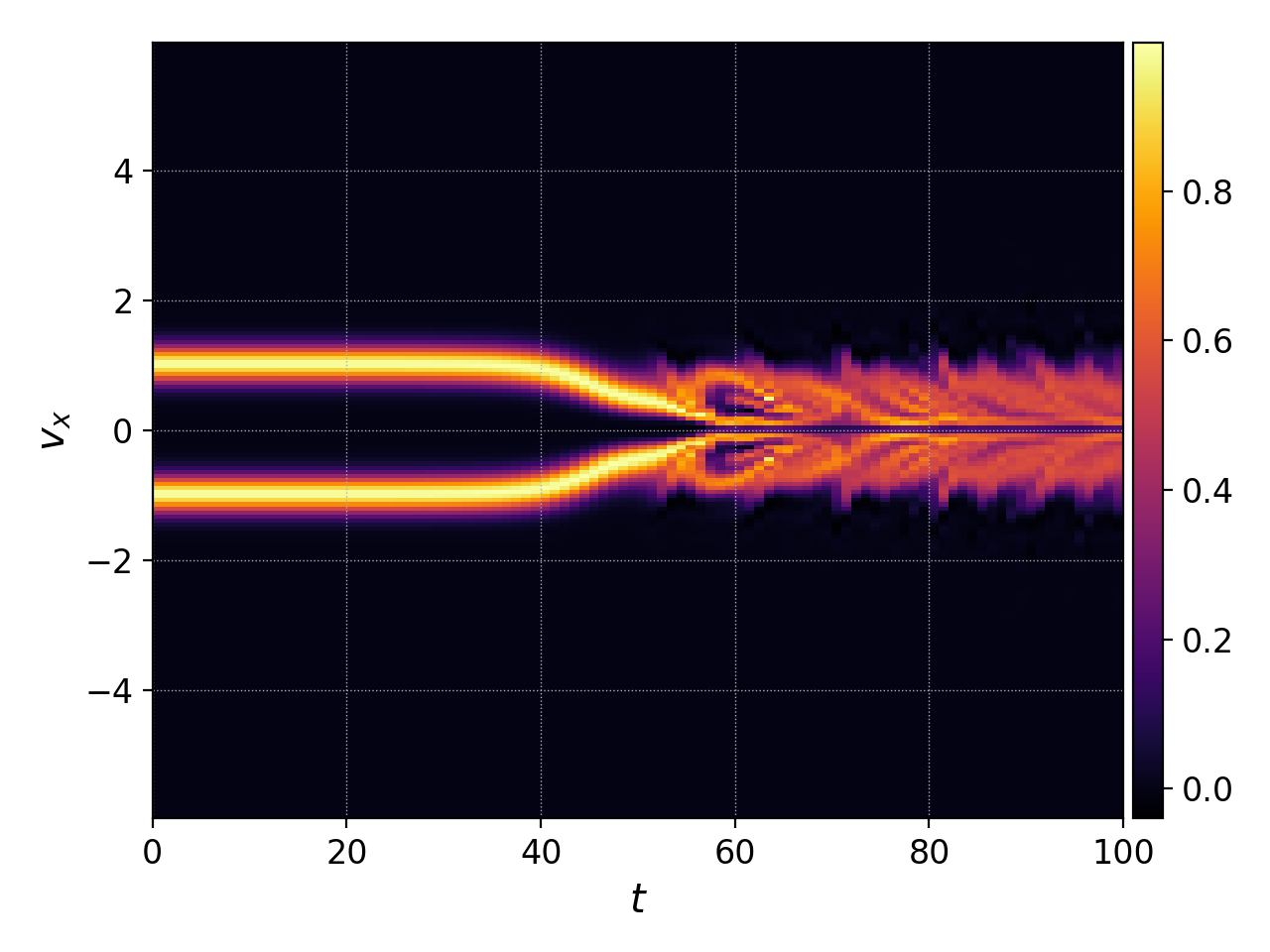}
  \caption{\texttt{pgkyl -f 'two-stream64\_elc\_[0-9]*.bp' interpolate -p2 -b ms select --c0 0.0 collect plot -x'\$t\$' -y'\$v\_x\$'}}
  \label{fig:postgkyl:evolution}
\end{figure}

With a simple change, \texttt{Postgkyl} can also integrate data along
$x$ instead of selecting $x=0$ before stacking.  As seen in
\fgr{postgkyl:evolution2}, this produces quite a different view.
\begin{figure}[!htb]
  \centering
  \includegraphics[width=0.7\linewidth]{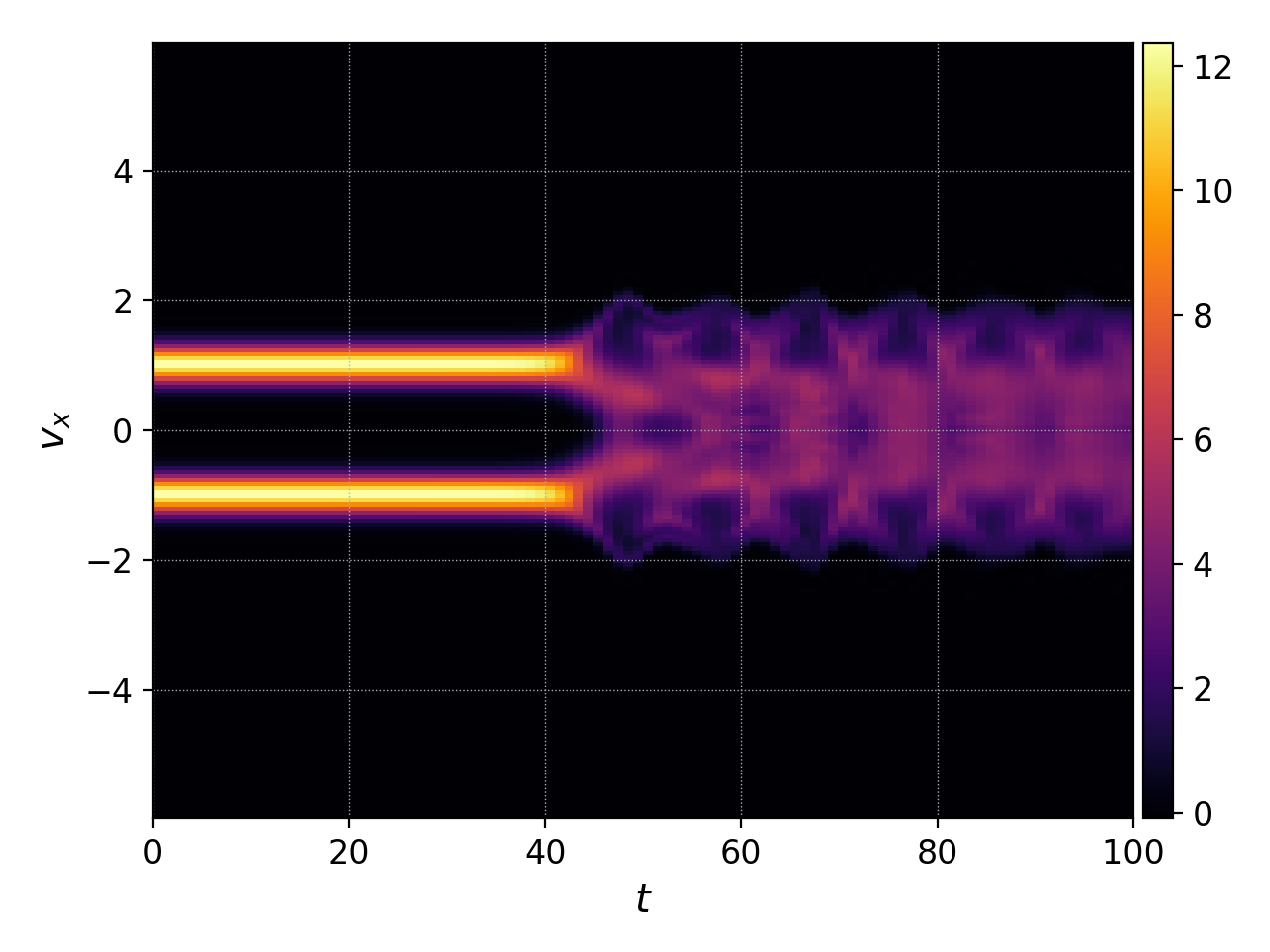}
  \caption{\texttt{pgkyl -f 'two-stream64\_elc\_[0-9]*.bp' interpolate -p2 -b ms integrate 0 collect plot -x'\$t\$' -y'\$v\_x\$'}}
  \label{fig:postgkyl:evolution2}
\end{figure}

With the same ease, \texttt{Postgkyl} can, for example, combine
\texttt{collect} with \texttt{fft} to create spectrograms.
There are no enforced limits for the chaining. This puts
a responsibility on the user to make sure that the command chain is
meaningful.  For example, repeating the \texttt{interpolate} command results in an
error because after the first command, data no longer have the valid form of
DG expansion coefficients.  As a final note, command chains can be
made more compact and less readable by using only the shortest unique
combination of starting letters for each command
\begin{lstlisting}
pgkyl -f 'two-stream64_elc_[0-9]*.bp' inter -p2 -b ms s --c0 0.0 c pl -x'$t$' -y'$v_x$'
\end{lstlisting}

\chapter{Derivation of the Kinetic Weibel Instability Dispersion Relation}\label{app:weibel}

\epigraph{`Obvious' is the most dangerous word in
  mathematics.}{\textit{E. T. Bell}}

This Appendix provides the comprehensive derivation of the kinetic
Weibel instability dispersion relation \eqr{weibel:disp}.

\section{Linearized Equations}
 
Similar to the other instabilities, the derivation starts with the
linearizion of the Vlasov equation \eqrp{model:vlasov}. However, the
full Lorentz force \eqrp{model:motion2} must be taken into the
account. Since $\bm{B}_1=(0,0,B_{z,1})$ and $\bm{k} = (k_x,0,0)$, we
get
\begin{align*}
  -i\omega f_{1}^\pm + iv_xk_xf_{1}^\pm +
  \frac{q}{m}\left[(E_{x,1}+v_yB_{z,1})\partial_{v_x}
    f_{0}^\pm + (E_{y,1}-v_xB_{z,1})\partial_{v_y} f_{0}^\pm\right] = 0.
\end{align*}
Then,\footnote{Minus sign
  coming from moving $i$ to the numerator.}
\begin{align}\label{eq:app_weibel:perturbf}
  f_{1}^\pm = -\frac{iq}{m(\omega-v_xk_x)}
  \left[(E_{x,1}+v_yB_{z,1})\partial_{v_x} f_{0}^\pm +
    (E_{y,1}-v_xB_{z,1})\partial_{v_y} f_{0}^\pm\right] .
\end{align}

The Faraday's law \eqrp{model:faraday},
\begin{align*}
  i\begin{pmatrix}
    k_yE_{z,1}-k_zE_{y,1}\\
    k_zE_{x,1}-k_xE_{z,1}\\
    k_xE_{y,1}-k_yE_{x,1}
  \end{pmatrix} = i\omega \begin{pmatrix}
    B_{x,1}\\
    B_{y,1}\\
    B_{z,1}
  \end{pmatrix},
\end{align*}
simplifies to
\begin{align}\label{eq:app_weibel:linfaraday}
  k_xE_{y,1} = \omega B_{z,1}.
\end{align}

Finally, as it is discussed in \ser{weibel:lintheory},
linearizing the Ampere's law \eqrp{model:ampere} gives
\begin{align}\label{eq:app_weibel:linampere}
  -ik_xB_{z,1} = \mu_0q \sum_{\pm}\left(\int
  v_yf_{1}^{\pm}\,d\bm{v}\right) - \frac{i\omega}{c^2} E_{y,1}.
\end{align}

%---------------------------------------------------------------------
\FloatBarrier
\section{Dispersion Relation}

Substituting the perturbed distribution function
\eqrp{app_weibel:perturbf} into \eqr{app_weibel:linampere} gives
\begin{align*}
  -ik_x B_{z,1} = \mu_0\frac{-iq^2}{m}
  \sum_\pm\int v_y \frac{(E_{x,1}+v_yB_{z,1})\partial_{v_x}f_{0}^\pm +
      (E_{y,1}-v_xB_{z,1})\partial_{v_y}f_{0}^\pm}{\omega-v_xk_x}
  d\bm{v}- \frac{i\omega}{c^2} E_{y,1}.
\end{align*}
The equation then can be simplified\footnote{Using
  $\omega_{pe}^2=nq^2/(\varepsilon_0m)$ and
$c^2=1/(\varepsilon_0\mu_0)$.} and slightly rearranged,
\begin{align}\label{eq:app_weibel:step1}
   B_{z,1} = \frac{\omega_{pe}^2}{c^2k_x^2 n} \sum_\pm\int
   \Big[\underbracket{\frac{v_yE_{x,1}\partial_{v_x}f_{0}^\pm}{\frac{\omega}{k_x}-v_x}}_{\text{I}}
     +
     \underbracket{\frac{v_y^2B_{z,1}\partial_{v_x}f_{0}^\pm}{\frac{\omega}{k_x}-v_x}}_{\text{II}}
     +
     \underbracket{\frac{v_yE_{y,1}\partial_{v_y}f_{0}^\pm}{\frac{\omega}{k_x}-v_x}}_{\text{III}}
     -
     \underbracket{\frac{v_yv_xB_{z,1}\partial_{v_y}f_{0}^\pm}{\frac{\omega}{k_x}-v_x}}_{\text{IV}}
     \Big]
   d\bm{v}+ \frac{\omega}{c^2k_x} E_{y,1}.
\end{align}

Similar to \ser{benchmark:landau:lintheory}, the distribution function
can be factorized,
\begin{align*}
  \int_{\mathcal{V}} f_{0}^\pm\,d\bm{v} = n
  \int_{-\infty}^\infty\int_{-\infty}^\infty f_{xy,0}^\pm
  \,dv_xdv_y \underbracket{\int_{-\infty}^\infty f_{z,0}\,dv_z}_{=1},
\end{align*}
where $f_{xy,0}^\pm$ are assumed to be 2D Maxwellians,
\begin{align*}
  f_{xy,0}^\pm = \frac{1}{2\pi v_{th}^2}\exp\left(-\frac{v_x^2+(v_y\pm
      u_y)^2}{2v_{th}^2}\right).
\end{align*}
The derivatives of the Maxwellians can be readily evaluated
\begin{align}
  \partial_{v_x}f_{xy,0}^\pm &= -\frac{v_x}{v_{th}^2}\frac{1}{2\pi
    v_{th}^2}\exp\left(-\frac{v_x^2+(v_y\pm
      u_y)^2}{2v_{th}^2}\right),\\ \partial_{v_y}f_{xy,0}^\pm &=
  -\frac{v_y\pm u_y}{v_{th}^2}\frac{1}{2\pi
    v_{th}^2}\exp\left(-\frac{v_x^2+(v_y\pm u_y)^2}{2v_{th}^2}\right).
\end{align}

The derivatives are then substituted into \eqr{app_weibel:step1} and
each term is assessed individually.

\subsubsection{First Term}
Splitting the integrals gives
\begin{align*}\begin{aligned}
  \sum_\pm \int_{-\infty}^\infty\int_{-\infty}^\infty
  \frac{v_yE_{x,1}\partial_{v_x}f_{xy,0}^\pm}{\frac{\omega}{k_x}-v_x}
  \,dv_xdv_y =& -\sum_\pm \int_{-\infty}^\infty\int_{-\infty}^\infty
  \frac{v_yE_{x,1}\frac{v_x}{v_{th}^2} \frac{1}{2\pi
      v_{th}^2}\exp\left(-\frac{v_x^2+(v_y\pm
      u_y)^2}{2v_{th}^2}\right)}{\frac{\omega}{k_x}-v_x}\,dv_xdv_y,
  \\ =& -\frac{E_{x,1}}{\pi}\sum_\pm\int_{-\infty}^\infty
  \frac{\frac{v_x}{v_{th}^2}\exp\left(-\frac{v_x^2}{2v_{th}^2}\right)}{\frac{\omega}{k_x}-v_x}
  \,\frac{dv_x}{\sqrt{2v_{th}^2}} \times\\ &\int_{-\infty}^\infty
  v_y\exp\left(-\frac{(v_y\pm
    u_y)^2}{2v_{th}^2}\right)\,\frac{dv_y}{\sqrt{2v_{th}^2}}.
\end{aligned}\end{align*}

Now the $v_y$ before the exponential function in the second integral
can be substituted with $v_y\pm u_y$.\footnote{The added terms cancel
  out because of the summation over the both
  populations,
\begin{align*}
\sum_\pm\int_{-\infty}^\infty (v_y\pm
  u_y)\exp\left(-\frac{(v_y\pm u_y)^2}{2v_{th}^2}\right)\,dv_y =&
  \sum_\pm\int_{-\infty}^\infty v_y\exp\left(-\frac{(v_y\pm
      u_y)^2}{2v_{th}^2}\right)\,dv_y +\\&
  \underbracket{u_y\int_{-\infty}^\infty \exp\left(-\frac{(v_y+
        u_y)^2}{2v_{th}^2}\right)\,dv_y - u_y\int_{-\infty}^\infty
    \exp\left(-\frac{(v_y- u_y)^2}{2v_{th}^2}\right)\,dv_y}_{=0}.
\end{align*}}
The integral
is then calculated from an odd function of $v_y\pm u_y$, which is defined for both $\pm \infty$,
\begin{align*}
  \int_{-\infty}^\infty (v_y\pm u_y)\exp\left(-\frac{(v_y\pm
      u_y)^2}{2v_{th}^2}\right)\,d(v_y\pm u_y) = 0,
\end{align*}
and, therefore, the entire first term is zero.

\subsubsection{Second Term}
Splitting the second term gives
\begin{align*}\begin{aligned}
  \sum_\pm \int_{-\infty}^\infty\int_{-\infty}^\infty
  \frac{v_y^2B_{z,1}\partial_{v_x}f_{xy,0}^\pm}{\frac{\omega}{k_x}-v_x}\,dv_xdv_y
  =& -\sum_\pm \int_{-\infty}^\infty\int_{-\infty}^\infty
  \frac{v_y^2B_{z,1}\frac{v_x}{v_{th}^2}\frac{1}{2\pi v_{th}^2}
    \exp\left(-\frac{v_x^2+(v_y\pm
      u_y)^2}{2v_{th}^2}\right)}{\frac{\omega}{k_x}-v_x}
  \,dv_xdv_y,\\ =& -\frac{B_{z,1}}{\pi}\sum_\pm\int_{-\infty}^\infty
  \frac{\frac{v_x}{v_{th}^2}
    \exp\left(-\frac{v_x^2}{2v_{th}^2}\right)}{\frac{\omega}{k_x}-v_x}\frac{dv_x}{\sqrt{2v_{th}^2}}
  \times \\ &\int_{-\infty}^\infty v_y^2\exp\left(-\frac{(v_y\pm
    u_y)^2}{2v_{th}^2}\right)\frac{dv_y}{\sqrt{2v_{th}^2}},\\ =&
  -\frac{B_{z,1}}{\pi}\sum_\pm\int_{-\infty}^\infty
  \frac{\sqrt{\frac{2}{v_{th}^2}}\frac{v_x}{\sqrt{2v_{th}^2}}
    \exp\left(-\frac{v_x^2}{2v_{th}^2}\right)}{\frac{\sqrt{2v_{th}^2}}{\sqrt{2v_{th}^2}}
    \left(\frac{\omega}{k_x}-v_x\right)}\frac{dv_x}{\sqrt{2v_{th}^2}}
  \times \\ &\int_{-\infty}^\infty v_y^2\exp\left(-\frac{(v_y\pm
      u_y)^2}{2v_{th}^2}\right)\frac{dv_y}{\sqrt{2v_{th}^2}}.
\end{aligned}\end{align*}
Now using the substitution,
\begin{align*}
  \chi_x := \frac{v_x}{\sqrt{2v_{th}^2}}, \quad
    \zeta := \frac{\omega/k_x}{\sqrt{2v_{th}^2}}, \quad
    \chi_{y}^\pm := \frac{v_y\pm u_y}{\sqrt{2v_{th}^2}}.
\end{align*}
the equation simplifies to\footnote{Note that this substitution leads to
  $d\chi_x = dv_x/\sqrt{2v_{th}^2}$ and $d\chi_y =
  dv_y/\sqrt{2v_{th}^2}$.}
\begin{align*}
  \sum_\pm \int_{-\infty}^\infty\int_{-\infty}^\infty
  \frac{v_y^2B_{z,1}\partial_{v_x}f_{xy,0}^\pm}{\frac{\omega}{k_x}-v_x}\,dv_xdv_y
  &= -\frac{B_{z,1}}{\pi v_{th}^2}\sum_\pm \int_{-\infty}^\infty
  \frac{\chi_x\exp(-\chi_x^2)}{\zeta-\chi_x}\,d\chi_x
  \int_{-\infty}^\infty v_y^2\exp(-(\chi_y^\pm)^2)\,d\chi_{y}^\pm,\\
  &= -\frac{B_{z,1}}{2\sqrt{\pi} v_{th}^2}\sum_\pm Z'(\zeta)
   \int_{-\infty}^\infty
   v_y^2\exp(-(\chi_{y}^\pm)^2)\,d\chi_{y}^\pm.
\end{align*}

Finally, $v_y^2$ needs to be expressed in terms of  $\chi_y^2$ in
order to perform the integration,
\begin{align*}
  v_y^2 &= (v_y\pm u_y)^2 \mp2v_yu_y-u_y^2,\\
  &= (v_y\pm u_y)^2 \mp 2(v_y\pm u_y)u_y + 2u_y^2-u_y^2,\\
  &= 2v_{th}^2(\chi_{y}^\pm)^2 \mp 2\sqrt{(2v_{th}^2)}\chi_{y}^\pm u_y+u_y^2.
\end{align*}
The middle term, $2\sqrt(2v_{th}^2)\chi_{y}^\pm u_y$, will have no
effect on the result, because it integrates to zero as an odd
function.  Substituting the rest and using Gauss
integrals\footnote{$\int_{-\infty}^\infty\exp(-x^2)=\sqrt{\pi}$ and
  $\int_{-\infty}^\infty x^2\exp(-x^2)=\sqrt{\pi}/2$.} gives
\begin{align*}\begin{aligned}
    \sum_\pm \int_{-\infty}^\infty\int_{-\infty}^\infty
    \frac{v_y^2B_{z,1}\partial_{v_x}f_{xy,0}^\pm}{\frac{\omega}{k_x}-v_x}\,dv_xdv_y
    &=  -\frac{B_{z,1}}{2\sqrt{\pi} v_{th}^2}\sum_\pm Z'(\zeta)
    \int_{-\infty}^\infty
   \left[(2v_{th}^2(\chi_{y}^\pm)^2+u_y^2\right] \exp(-(\chi_{y}^\pm)^2)\,d\chi_{y}^\pm,\\
    &=- \sum_\pm\frac{B_{z,1}}{2}Z'(\zeta)\left(\frac{u_y^2}{v_{th}^2}+1\right),\\
    &=- B_{z,1}Z'(\zeta)\left(\frac{u_y^2}{v_{th}^2}+1\right).
\end{aligned}\end{align*}

\subsubsection{Third Term}
Splitting the third term gives
\begin{align*}\begin{aligned}
  \sum_\pm \int_{-\infty}^\infty\int_{-\infty}^\infty \frac{v_y
    E_{y,1}\partial_{v_y}f_{xy,0}^\pm}{\frac{\omega}{k_x}-v_x}
  \,dv_xdv_y =& -\sum_\pm \int_{-\infty}^\infty\int_{-\infty}^\infty \frac{v_yE_{y,1}\frac{v_y\pm
      u_y}{v_{th}^2}\frac{1}{2\pi v_{th}^2}\exp\left(-\frac{v_x^2+(v_y\pm
        u_y)^2}{2v_{th}^2}\right)}{\frac{\omega}{k_x}-v_x}
  \,dv_xdv_y,\\
  =& -\frac{E_{y,1}}{\pi}\sum_\pm\int_{-\infty}^\infty
  \frac{\exp\left(-\frac{v_x^2}{2v_{th}^2}\right)}{\frac{\omega}{k_x}-v_x}
  \frac{dv_x}{\sqrt{2v_{th}^2}} \times \\ &\int_{-\infty}^\infty v_y\frac{v_y\pm
    u_y}{v_{th}^2}\exp\left(-\frac{(v_y\pm
      u_y)^2}{2v_{th}^2}\right)\frac{dv_y}{\sqrt{2v_{th}^2}}
\end{aligned}\end{align*}
Similar to the first term, we substitute $v_y\pm u_y$ for $v_y$ in the second integral,
\begin{align*}\begin{aligned}
  \sum_\pm \int_{-\infty}^\infty\int_{-\infty}^\infty \frac{v_y
    E_{y,1}\partial_{v_y}f_{xy,0}^\pm}{\frac{\omega}{k_x}-v_x}
  \,dv_xdv_y &= -\frac{E_{y,1}}{\pi}\sum_\pm\int_{-\infty}^\infty
  \frac{\exp\left(-\frac{v_x^2}{2v_{th}^2}\right)}{\frac{\omega}{k_x}-v_x}
  \frac{dv_x}{\sqrt{2v_{th}^2}} \times \\ &\int_{-\infty}^\infty 2\frac{(v_y\pm
    u_y)^2}{2v_{th}^2}\exp\left(-\frac{(v_y\pm
      u_y)^2}{2v_{th}^2}\right)\frac{dv_y}{\sqrt{2v_{th}^2}},\\
  =&  -\frac{E_{y,1}}{\sqrt{2v_{th}^2}\pi}\sum_\pm \int_{-\infty}^\infty 
  \frac{\exp(-\chi_x^2)}{\zeta-\chi_x} \,d\chi_x
  \int_{-\infty}^\infty 2(\chi_{y}^\pm)^2
  \exp(-(\chi_{y}^\pm)^2)\,d\chi_{y}^\pm,\\
  =& \sum_\pm\frac{E_{y,1}}{\sqrt{2v_{th}^2}}Z(\zeta),\\
  =& \frac{2E_{y,1}}{\sqrt{2v_{th}^2}}Z(\zeta).
\end{aligned}\end{align*}

\subsubsection{Fourth Term}
Finally, splitting the fourth term
\begin{align*}\begin{aligned}
  \sum_\pm \int_{-\infty}^\infty\int_{-\infty}^\infty \frac{-v_yv_xB_{z,1}\partial_{v_y}f_{xy,0}^\pm}{\frac{\omega}{k_x}-v_x}
  \,dv_xdv_y
  =& \sum_\pm \int_{-\infty}^\infty\int_{-\infty}^\infty \frac{v_yv_xB_{z,1}\frac{v_y\pm u_y}{v_{th}^2}
    \frac{1}{2\pi v_{th}^2}\exp\left(-\frac{v_x^2+(v_y\pm 
      u_y)^2}{2v_{th}^2}\right)}{\frac{\omega}{k_x}-v_x} \,dv_xdv_y,\\
  =& \frac{B_{z,1}}{\pi}\sum_\pm \int_{-\infty}^\infty 
  \frac{v_x\exp\left(-\frac{v_x^2}{2v_{th}^2}\right)}{\frac{\omega}{k_x}-v_x}
  \frac{dv_x}{\sqrt{2v_{th}^2}} \times \\ &
  \int_{-\infty}^\infty
  v_y\frac{v_y\pm u_y}{v_{th}^2}\exp\left(-\frac{(v_y\pm
      u_y)^2}{2v_{th}^2}\right)\frac{dv_y}{\sqrt{2v_{th}^2}},\\
   =& \frac{B_{z,1}}{\pi}\sum_\pm \int_{-\infty}^\infty 
  \frac{v_x\exp\left(-\frac{v_x^2}{2v_{th}^2}\right)}{\frac{\omega}{k_x}-v_x}
  \frac{dv_x}{\sqrt{2v_{th}^2}} \times\\&
  \int_{-\infty}^\infty
  2\frac{(v_y\pm u_y)^2}{2v_{th}^2}\exp\left(-\frac{(v_y\pm
      u_y)^2}{2v_{th}^2}\right)\frac{dv_y}{\sqrt{2v_{th}^2}},\\
  =&  \frac{B_{z,1}}{\pi}\sum_\pm \int_{-\infty}^\infty 
  \frac{\chi_x \exp(-\chi_x^2)}{\zeta-\chi_x} \,d\chi_x
  \int_{-\infty}^\infty 2\chi_{y}^\pm
  \exp(-(\chi_{y}^\pm)^2) \,d\chi_{y}^\pm,\\
  =& \sum_\pm \frac{B_{z,1}}{2}Z'(\zeta).\\
  =& B_{z,1}Z'(\zeta).
\end{aligned}\end{align*}

\subsubsection{Combining the Terms}
Putting everything together gives
\begin{align*}
  B_{z,1} = \frac{\omega_{pe}^2}{c^2k_x^2}
  \left[-B_{z,1}Z'(\zeta)\left(\frac{u_y^2}{v_{th}^2}+1\right)+
    \frac{2E_{y,1}}{\sqrt{2v_{th}^2}}Z(\zeta)+B_{z,1}Z'(\zeta)\right]+
  \frac{\omega}{c^2k_x}E_{y,1}.
\end{align*}

As a final step, $E_{y,1}$ is substituted for from Faraday's law
\eqrp{app_weibel:linfaraday},
\begin{align*}\begin{aligned}
  B_{z,1} &= \frac{\omega_{pe}^2}{c^2k_x^2}
  \left[-B_{z,1}Z'(\zeta)\frac{u_y^2}{v_{th}^2}+ 2\frac{\omega
      B_{z,1}}{\sqrt{2v_{th}^2}k_x} Z(\zeta)\right]+
  \frac{\omega^2}{c^2k_x^2}B_{z,1},\\ 1 &=
  \frac{\omega_{pe}^2}{c^2k_x^2}
  \left[-Z'(\zeta)\frac{u_d^2}{v_{th}^2}+ 2\zeta
    Z(\zeta)\right]+\frac{\omega^2}{c^2k_x^2},
\end{aligned}\end{align*}
which is a single equation with one unknown $\omega$.
\eqr{benchmark:derdispf} can be used for the final rearrangement to
obtain \eqr{weibel:disp},
\begin{align*}
 \frac{1}{2} = \frac{\omega_{pe}^2}{c^2k_x^2} \left[\zeta
   Z(\zeta)\left(1+\frac{u_y^2}{v_{th}^2}\right) +
   \frac{u_y^2}{v_{th}^2}\right] + \frac{v_{th}^2}{c^2}\zeta^2
\end{align*}

\chapter{Gkeyll Input Files}\label{app:inputs}

This appendix provides a list of the \texttt{Gkeyll 2.0} input files
for the simulations mentioned in the text.  For testing purposes,
\texttt{Gkeyll} executable is readily available through Anaconda
package manager \texttt{conda} and can be installed with just a
single command:
\begin{lstlisting}[language=Bash]
conda install -c gkyl gkyl
\end{lstlisting}

\pagebreak
\section{Reflection of the Neutral Gas}\label{list:input:bounce}
\lstinputlisting[language={[5.1]Lua}]{bounce.lua}

\pagebreak
\section{Landau Damping of Langmuir Wave}\label{list:benchmark:landau}
\lstinputlisting[language={[5.1]Lua}]{landau.lua}

\pagebreak
\section{Electron Two-stream Instability}\label{list:benchmark:two-stream}
\lstinputlisting[language={[5.1]Lua}]{two-stream.lua}

\pagebreak
\section{Weibel Instability}\label{list:weibel:weibel}
\lstinputlisting[language={[5.1]Lua}]{weibel_l.lua}

\pagebreak
\section{Collisionless Sheath}\label{list:bounded:cls}
\lstinputlisting[language={[5.1]Lua}]{sh_baseline.lua}

\pagebreak
\section{\cite{Sod1978} Shock Tube}\label{list:bounded:sod}
\lstinputlisting[language={[5.1]Lua}]{bgk_sod_1000.lua}

\chapter{Scripts and Algorithms}

This appendix provides listings of miscellaneous scripts and algorithms
used through this work.

\section{Field Particle Correlation (\textit{Python})}\label{list:scripts:fpc}
\begin{lstlisting}[language=Python]
import numpy as np
import matplotlib.pyplot as plt
import matplotlib.cm as cm
import postgkyl as pg

xIdx = 48
vIdx1 = 280
vIdx2 = 224
numFrame = 101
numV = 384
E = np.zeros(numFrame)
f = np.zeros((numFrame, numV))
C0 = np.zeros((numFrame, numV))
t = np.linspace(0., 20., numFrame)
v = np.linspace(-6., 6., numV)
q = -1.0
dt = 20/(numFrame-1)

for fIdx in range(numFrame):
    data = pg.GData('landau_field_{:d}.bp'.format(fIdx))
    dg = pg.GInterpModal(data, 2, 'ms')
    grid, tmp = dg.interpolate(0)
    E[fIdx] = tmp[xIdx, 0]
    
    data = pg.GData('landau_elc_{:d}.bp'.format(fIdx))
    dg = pg.GInterpModal(data, 2, 'ms')
    grid, tmp = dg.differentiate(1)
    df[fIdx, :] = tmp[xIdx, :, 0]
    
for vIdx in range(numV):
    C0[:, vIdx] = -q * v[vIdx]**2 / 2 * (df[:, vIdx] - df[0, vIdx]) * E

fig, ax = plt.subplots(2, 2, sharex=True, sharey="row", figsize=(10,8))

minN = 10
maxN = 70
for N in range(minN, maxN):
    tmp00 = np.full(numFrame, np.nan)
    tmp01 = np.full(numFrame, np.nan)
    tmp10 = np.zeros(numFrame)
    tmp11 = np.zeros(numFrame)
    for i in range(numFrame - N):
        tmp00[i] = np.sum(C0[i : i+N, vIdx1]) / (N*dt)
        tmp01[i] = np.sum(C0[i : i+N, vIdx2]) / (N*dt)
    for i in range(1, numFrame):
        tmp10[i] = np.sum(tmp00[:i])
        tmp11[i] = np.sum(tmp01[:i])
    ax[0, 0].plot(t, tmp00, color=cm.inferno((N-minN)/float(maxN-minN)))
    ax[0, 1].plot(t, tmp01, color=cm.inferno((N-minN)/float(maxN-minN)))
    tmp10 = tmp10 * dt
    tmp11 = tmp11 * dt
    ax[1, 0].plot(t, tmp10, color=cm.inferno((N-minN)/float(maxN-minN)))
    ax[1, 1].plot(t, tmp11, color=cm.inferno((N-minN)/float(maxN-minN)))
\end{lstlisting}

\section{Precomputation of Material-Based Boundary Condition File (\textit{Mathematica})}\label{list:scripts:bronold}
\begin{lstlisting}[language=Mathematica]
basis[x_, vx_] := {1/2, (Sqrt[3] x)/2, (Sqrt[3] vx)/2, (3 vx x)/2, (
  3 Sqrt[5] (x^2 - 1/3))/4, (3 Sqrt[5] (vx^2 - 1/3))/4, (
  3 Sqrt[15] (vx x^2 - vx/3))/4, (3 Sqrt[15] (vx^2 x - x/3))/4}
numBasis = 8;
\[Chi] = 1.0;
mc = 0.4;
\[Eta][E_, \[Xi]_] := Sqrt[1 - (E - \[Chi])/(mc E) (1 - \[Xi]^2)]
\[Xi]c[E_] := 
 If[E < \[Chi]/(1 - mc), 0.0, Sqrt[1 - (mc E)/(E - \[Chi])]]
T[E_, \[Xi]_] := (
 4 mc Sqrt[E - \[Chi]] \[Xi] Sqrt[
  mc E] \[Eta][E, \[Xi]])/(mc Sqrt[E - \[Chi]] \[Xi] + 
   Sqrt[mc E] \[Eta][E, \[Xi]])^2
R[E_, \[Xi]_, C_] := 
 If[E < \[Chi], 1., 
  If[\[Xi] > \[Xi]c[E], 
   1. - T[E, \[Xi]]/(
    1 + C/\[Xi]) - (C/\[Xi])/(1 + C/\[Xi])
      NIntegrate[T[E, t], {t, \[Xi]c[E], 1.0}], 
   1. - (C/\[Xi])/(1 + C/\[Xi])
      NIntegrate[T[E, t], {t, \[Xi]c[E], 1.0}]]]
elemCharge = 1.6021766*^-19;
me = 9.109383*^-31;
roughness = 2.0;
vth = Sqrt[(10*elemCharge)/me];
Nv = 32;
dv = Sqrt[(2*\[Chi]*elemCharge)/me]
vc = Table[(-(Nv/2) + 1/2 + i)*dv, {i, 0, Nv - 1}]
getE[vx_, vc_, dv_] := 1/2 me (vc + vx dv/2)^2/elemCharge;
fh = OpenWrite[NotebookDirectory[] <> "wall_1X1V.lua"];
WriteLine[fh, "local _M = {}"];
WriteLine[fh, "_M[1] = function (idx, f, out)"];
For[j = 1, j <= Nv, j++,
 If[j == 1, 
  WriteLine[fh, "   if idx[1] == " <> ToString[j] <> " then"], 
  WriteLine[fh, "   elseif idx[1] == " <> ToString[j] <> " then"]];
 For[k = 1, k <= numBasis, k++,
  str = "";
  For[l = 1, l <= numBasis, l++,
   temp = 
    NIntegrate[
     R[getE[vx, vc[[j]], dv], 1.0, roughness] * basis[-x, -vx][[l]]*
      basis[x, vx][[k]], {x, -1, 1}, {vx, -1, 1}, MaxPoints -> 9];
   If[temp != 0.0, 
    str = str <> " + " <> TextString[temp] <> "*f[" <> TextString[l] <>
       "]"];
   ];
  WriteLine[fh, "      out[" <> TextString[k] <> "] = 0.0" <> str];
  ]
 ]
WriteLine[fh, "   end"];
WriteLine[fh, "end"];
WriteLine[fh, "return _M"];
Close[fh];
\end{lstlisting}

%\newpage
%\listoftodos[Notes]
\end{document}